\pdfoutput=1

\documentclass[aps,prx,onecolumn,superscriptaddress,nofootinbib,10pt]{revtex4-2}

\usepackage[T1]{fontenc}
\usepackage[utf8]{inputenc}
\usepackage{lmodern}

\usepackage{amsmath,amssymb,amsthm,mathtools,bm}
\usepackage{graphicx}
\usepackage{microtype}
\usepackage[hidelinks]{hyperref}
\usepackage[nameinlink,noabbrev]{cleveref}
\hypersetup{pdftitle={Quantum Natural Gradient on Quotient Spaces},pdfauthor={Zeyu Chen}}

\DeclareMathOperator{\Tr}{Tr}
\DeclareMathOperator{\rank}{rank}
\DeclareMathOperator{\Var}{Var}
\DeclareMathOperator{\Cov}{Cov}
\DeclareMathOperator{\diag}{diag}
\DeclareMathOperator{\spanop}{span}
\DeclareMathOperator{\End}{End}

\DeclareMathOperator{\grad}{grad}

\theoremstyle{plain}
\newtheorem{theorem}{Theorem}[section]
\newtheorem{proposition}[theorem]{Proposition}
\newtheorem{lemma}[theorem]{Lemma}
\newtheorem{corollary}[theorem]{Corollary}

\theoremstyle{definition}
\newtheorem{definition}[theorem]{Definition}

\newcommand{\cH}{\mathcal{H}}
\newcommand{\cK}{\mathcal{K}}
\newcommand{\cL}{\mathcal{L}}

\newcommand{\cV}{\mathcal{V}}

\newcommand{\cO}{\mathcal{O}}
\newcommand{\cQ}{\mathcal{Q}}
\newcommand{\bbC}{\mathbb{C}}
\newcommand{\bbR}{\mathbb{R}}

\newcommand{\bbN}{\mathbb{N}}

\newcommand{\ket}[1]{\left|#1\right\rangle}
\newcommand{\bra}[1]{\left\langle#1\right|}
\newcommand{\braket}[2]{\left\langle#1\,\middle|\,#2\right\rangle}

\begin{document}

\title{Quantum Natural Gradient on Quotient Spaces}

\author{Zeyu Chen}
\noaffiliation

\begin{abstract}
A parametrized quantum circuit reports its state geometry through a quantum Fisher information matrix (QFIM), often singular. A small Fisher value can reflect exact state-preserving redundancy, compression by the circuit chart, or weak intrinsic distinguishability, and these mechanisms call for different numerical treatments. We show that the circuit metric factors as $F=B^{*}MB$, where $B$ is the state-level circuit-to-orbit differential and $M$ is the intrinsic Fisher operator on the reachable orbit. The factorization identifies the exact kernel as $\ker B$, separates coordinate transfer from intrinsic geometry, and yields the condition for a circuit to realize an orbit-level quantum natural-gradient (QNG) direction. When the prescribed redundancy exhausts the Fisher kernel, the Moore--Penrose update is the minimum-norm horizontal lift of the quotient Riemannian gradient. At critical points with a locally diffeomorphic quotient-to-orbit map, chart singular values cancel from the linearized QNG operator while intrinsic anisotropy remains; in the trace-orthonormal full-control generator frame, excitation-gap anisotropy gives $\kappa_{\mathrm{QNG}}=\kappa_{\mathrm{Eucl}}$. Representation theory makes $M$ explicit on highest-weight, Slater, and fermionic-Gaussian orbits, and cominuscule fidelity flow becomes integrable, with conserved principal-defect ratios, cubic Lie-retracted convergence at $\eta=2$, and stability boundary $\eta=4$. Finite data impose a second boundary: an estimated QFIM and its confidence radius alone cannot distinguish an exact zero from a small physical mode, so the estimated spectrum alone cannot license hard projection. Under depolarization, inverse-Fisher scaling amplifies mean updates and fluctuations together and cannot restore update signal-to-noise. A redundant Slater/Givens circuit confirms exact transfer identities and illustrates finite-shot tradeoffs.
\end{abstract}

\maketitle

\section{Introduction}

Variational quantum algorithms encode physical states in gate parameters and optimize a state-dependent objective through a classical outer loop \cite{Cerezo2021VQA}.
Quantum natural gradient (QNG) replaces Euclidean parameter distance by the pullback of quantum statistical distinguishability, making the infinitesimal physical update invariant under regular reparameterizations \cite{Amari1998Natural,ProvostVallee1980,BraunsteinCaves1994,Stokes2020QNG,Martens2020NaturalGradient,Yamamoto2019NaturalGradient}.
The pullback is often singular because several parameter velocities can generate the same state velocity.
An exact state-preserving redundancy, a rank-changing family, and a physically weak but nonzero direction all produce small or zero Fisher eigenvalues, yet they require different algorithmic treatments.
Pseudoinversion is intrinsic in the first case only after the redundant representatives have been identified as a quotient; hard truncation in the third case can erase genuine descent directions.

Existing results characterize several parts of this picture separately.
QFIM rank measures effective model dimension and overparametrization, and singular quantum Fisher information governs nonregular estimation problems \cite{HaugBhartiKim2021,Larocca2023Overparametrization,Goldberg2021SingularQFI,Mihailescu2026MetrologicalSymmetries}.
Dynamical Lie algebras organize reachable sectors and also influence gradient concentration, although barren plateaus concern the objective gradient rather than coordinate redundancy itself \cite{McClean2018,Larocca2022DLA,Ragone2024,Larocca2025Barren}.
Wilson et al. established orbitwise QFIM spectral invariance and the fixed nonzero spectrum of the full-observable pure-state case \cite{Wilson2026GeometricInvariants}, while quotient-manifold optimization explains how exact group redundancies should be removed \cite{Lee2013,AbsilMahonySepulchre2008,Boumal2023}.
These results leave a coordinate-level gap: an intrinsic orbit spectrum does not determine the QFIM seen by a gate circuit, because the circuit differential can restrict and distort the orbit tangent space.
The same gap prevents orbit-level dynamics from being transferred automatically to a circuit and makes finite-shot null-space decisions ambiguous.

The practical problem is sharper than a request for unification. A small eigenvalue of a measured QFIM does not itself tell whether one should remove a state-preserving parameter direction, compensate distortion introduced by the circuit chart, or retain a weak physical direction. These interventions act on different mathematical objects, so the missing ingredient is a decomposition that separates where degeneracy enters.

Three structures must therefore be kept distinct. The physical symmetry group $K$ decomposes the Hilbert space into sectors, the parameter-redundancy group $\Gamma$ identifies coordinates that prepare the same state, and the sector dynamical group $G_\lambda$ generates the reachable orbit. On a $\Gamma$-invariant regular stratum for which the quotient is smooth, they form the chain
\begin{equation}
\Theta \xrightarrow{\ \pi\ } \cQ=\Theta/\Gamma
\xrightarrow{\ \bar\rho\ } \cO_\lambda=G_\lambda/H_\rho
\hookrightarrow \mathcal D(\cK_\lambda),
\label{eq:structural_chain}
\end{equation}
where $H_\rho$ is the state stabilizer. The first arrow removes redundant representatives, and the second maps quotient velocities to physical orbit tangents. We prove the operator factorization $F_\theta=B_\theta^{*}M_\rho B_\theta$, with $B_\theta$ the state-level circuit-to-orbit differential and $M_\rho$ the intrinsic Fisher operator on the reachable orbit. This is the organizing mechanism of the paper: $\ker B_\theta$ gives exactly the state-invisible circuit velocities, $\operatorname{im}B_\theta$ determines which orbit tangents the circuit can realize, the singular values of $B_\theta$ quantify coordinate distortion on those directions, and $M_\rho$ contains the intrinsic distinguishability. Because $B_\theta$ and $M_\rho$ need not share singular directions, the separation is operator-level rather than a unique assignment of each small matrix eigenvalue to one cause. The same factorization gives the pointwise transfer criterion: an orbit-level QNG direction is realized without projection exactly when it lies in the circuit-reachable tangent space; surjectivity of $B_\theta$ makes this automatic for every orbit objective.

Two consequences turn the factorization into a capability boundary. At a critical point where the quotient-to-orbit map is locally diffeomorphic, the QNG linearization is similar to the intrinsic operator $M_*^{-1}H_{\cO,*}$, so the singular values of a faithful circuit chart disappear while physical anisotropy remains. In the trace-orthonormal full-control generator frame, this residual anisotropy is exactly the excitation-gap spread and gives $\kappa_{\mathrm{QNG}}=\kappa_{\mathrm{Eucl}}$. Finite data impose a different boundary: an estimated QFIM and an operator-norm confidence radius alone cannot distinguish an exact zero from a small physical mode inside the same confidence ball. The estimated spectrum alone therefore cannot license hard projection. Known state-preserving structure can do so directly, while independently supplied rank-and-gap information can certify the local Fisher kernel without by itself identifying a global gauge action; unresolved directions are retained through soft regularization.

Representation theory makes the intrinsic factor explicit after the physical sector has been selected \cite{Larocca2022DLA,Ragone2024,Nguyen2024}. Under irreducible partial control, the moment-map identity identifies the total Fisher weight with the quadratic generalized-entanglement purity deficit; highest-weight roots then resolve the individual Fisher scales, while minuscule representations produce isotropic Slater and fermionic-Gaussian manifolds. For cominuscule projective embeddings, strongly orthogonal root directions factor the fidelity into principal defects and an isotropic maximal-flat Fisher block forces all active defects to share one contraction factor. The continuous flow is therefore integrable, whereas finite Lie-retracted steps break the defect-ratio invariants at second order; the embedding index controls the finite map, and tangent-space coverage determines whether a circuit inherits the intrinsic flow.

The statistical analysis converts the identifiability boundary into an implementation rule. Known exact gauges are projected before Fisher estimation and regularization; rank-and-gap certificates control kernel recovery when their model assumptions are independently supplied; unresolved low-curvature directions remain active and are softly damped. Perturbation and stationarity bounds show how the Fisher gap governs update amplification, projector leakage, and cumulative gauge drift, while a Schur-complement law and an independent-pilot ridge rule treat coupled soft modes without declaring them redundant. The accompanying 28-parameter Slater/Givens experiment tests the transfer law and these finite-shot tradeoffs at equal total shot budgets on a classically simulable free-fermion model.

\subsection{Contributions}

The paper makes three connected advances.
\begin{enumerate}
\item It establishes an operator-level diagnosis of singular circuit geometry. The factorization $F_\theta=B_\theta^*M_\rho B_\theta$ identifies exact state-level null directions as $\ker B_\theta$, separates circuit-coordinate transfer from intrinsic distinguishability, and gives the pointwise criterion for realizing the full orbit-QNG direction. In the faithful quotient regime, the Moore--Penrose step is the minimum-norm horizontal lift; when the quotient-to-orbit map is locally diffeomorphic, the linearized QNG operator is insensitive to singular-value distortion of the chart.
\item It makes the intrinsic factor explicit and solves the resulting fidelity dynamics on broad symmetry-reduced families. Under irreducible partial control, the identity $\Tr F=4E_{\mathfrak g_\lambda}$ links total Fisher weight to the quadratic generalized-entanglement purity deficit; highest-weight roots resolve the spectrum, and minuscule Slater and fermionic-Gaussian orbits collapse to one intrinsic scale. On cominuscule embeddings, fidelity QNG is completely integrable, while finite Lie-retracted steps have second-order defect-ratio drift, cubic local convergence at $\eta=2$, and exact stability boundary $\eta=4$.
\item It identifies limits that determine the finite-shot algorithm. In the trace-orthonormal full-control generator frame, QNG leaves excitation-gap conditioning unchanged, $\kappa_{\mathrm{QNG}}=\kappa_{\mathrm{Eucl}}$; a QFIM estimate within a confidence ball cannot by itself distinguish an exact zero from a small physical mode; and under depolarization inverse-Fisher scaling cannot recover update signal-to-noise. These boundaries motivate structural gauge projection, soft regularization of unresolved modes, and conditional recovery of the local Fisher kernel when independent rank-and-gap information is available. The Slater/Givens experiment tests the transfer mechanism and the resulting resource--accuracy tradeoffs.
\end{enumerate}

\subsection{Related work and scope}

Metric-aware optimizers, stochastic Fisher estimators, and randomized measurement protocols reduce the estimation or implementation cost of QNG \cite{Wierichs2020Avoiding,vanStraaten2021Measurement,Gacon2021SPSA,Kolotouros2024RandomNG,Halla2025QNGGeodesic,Halla2025SteinQFI,Rath2021RandomizedQFI}.
Rank-based analyses identify effective model dimension \cite{HaugBhartiKim2021,Larocca2023Overparametrization}, and horizontal-gate constructions remove symmetry directions at the ansatz level through homogeneous-space decompositions \cite{Wiersema2025Horizontal}.
The transfer theorem complements these approaches by determining, for any smooth state map into an orbit, how intrinsic Fisher scales and tangent directions appear in a redundant gate chart.

Manifold and Grassmann optimization provide computational primitives for the orbit examples studied here \cite{AbsilMahonySepulchre2008,Absil2002GrassmannRQI,Absil2004CubicGrassmann,Bendokat2024Grassmann}.
The cubic rank-one retraction at $\eta=2$ shares an order of convergence with Grassmann Rayleigh-quotient iterations, while arising from an embedded fidelity loss rather than a matrix eigenproblem.
The coherent-state and Hermitian-symmetric-space geometry used in the pure-state analysis is classical \cite{Perelomov1986,Berceanu1997GeometryCoherentStates,Berceanu2004GeometricalPhases}; compatible Fisher tensors on mixed-state unitary coadjoint orbits provide the corresponding geometric setting beyond pure states \cite{ContrerasSchiavina2022}.
Our results use this structure to connect orbit geometry to circuit pullbacks, optimization dynamics, and finite-shot stability.

Symmetry testing and learning can propose structural candidates upstream \cite{LaBorde2023TestingSymmetry,Lu2024LearningSymmetries,Sauvage2024SymmetricShadows,Meyer2023,Nguyen2024}.
The present framework begins with a specified candidate action or subspace and determines whether it is a state-level gauge, how the resulting quotient is represented by the circuit, and how uncertainty should be handled when exact structure cannot be certified.
Its conditioning statements remove coordinate and gauge effects while retaining physical excitation gaps, estimator-dependent measurement costs, and decoherence-induced signal loss.

\section{From redundant circuit coordinates to quotient state orbits}
\label{sec:quotient}

A singular circuit QFIM becomes an intrinsic optimization metric only after physical sector reduction is separated from parameter redundancy.
The physical symmetry selects the active Hilbert-space sector, the redundancy action removes equivalent parameter representatives, and the circuit differential transfers quotient velocities to the reachable state orbit.
The pullback mechanism connecting these three levels controls both the Fisher kernel and the pseudoinverse update.
Standard background on quantum Fisher information, natural gradients, smooth group actions, and quotient tensors is collected in Appendix~\ref{app:qfi_geometry} \cite{BraunsteinCaves1994,Petz2008,Lee2013,AbsilMahonySepulchre2008,Boumal2023}.

\subsection{Parametric models and symmetries}

Variational optimization over parametrized quantum circuits separates naturally into two ingredients: a map from parameters to quantum states and a scalar objective defined on those states.
The state map determines the Fisher geometry and the redundancies of the parametrization, while the objective selects the direction of optimization within that geometry.
Thus the quotient is determined by the represented state family rather than by the particular loss optimized on it.

Let $\cH$ be a finite-dimensional Hilbert space.
We work locally on an open parameter set $\Theta\subset\bbR^{p}$ equipped with its standard Euclidean inner product; orthogonality, pseudoinverses, and horizontal complements refer to this background structure unless stated otherwise.
A parameter vector $\theta\in\Theta$ specifies a unitary circuit $U(\theta)\in U(\cH)$ and an induced quantum state
\begin{equation}
\rho(\theta)=U(\theta)\rho_{0}U(\theta)^{\dagger},
\end{equation}
with a fixed reference state $\rho_{0}$.
For concreteness, we write the objective in expectation-value form,
\begin{equation}
\cL(\theta)=\Tr\!\left(O\,\rho(\theta)\right),
\end{equation}
where $O=O^{\dagger}$ is a target observable.
All quotient and pseudoinverse statements apply to any smooth loss that depends on the parameters only through the represented quantum state.

Within this state-space formulation, physical symmetries and parameter redundancies act at different levels.
We use $K$ for physical symmetries acting on the Hilbert space and $\Gamma$ for redundancy groups acting on the parameter manifold.
Physical symmetries such as conserved charges, permutation symmetry, or fixed-momentum constraints first reduce the effective Hilbert-space sector.
Independently, a chosen parametrization of the resulting variational family may contain smooth redundancies: different parameter values can induce the same represented quantum state.
$\Gamma$ describes this second structure and defines the parameter quotient.
Thus $K$ appears in representation-theoretic reductions and commutant ans\"atze, whereas $\Gamma$ appears in the quotient $\Theta/\Gamma$, the vertical tangent space $\cV_{\theta}$, and the gauge-drift discussion.
We use the state-level equivalence $\rho(g\cdot\theta)=\rho(\theta)$; a measurement-level quotient would instead identify only the statistics of a fixed measurement model.

\subsection{QFIM and QNG conventions}

Quantum Fisher information measures infinitesimal state distinguishability before a circuit parametrization is chosen.
For a parametrized family, the quantum Fisher information matrix (QFIM) is the coordinate representation of this state-space geometry after pullback through the state map; the symmetric-logarithmic-derivative (SLD) construction and its statistical interpretation are reviewed in Appendix~\ref{app:sld_qfi} \cite{BraunsteinCaves1994,Petz2008}.
Writing $g^{\mathrm{SLD}}$ for the state-space SLD quantum Fisher form, the pulled-back QFIM tensor is
\begin{equation}
F_{\theta}(v,w)
:=g^{\mathrm{SLD}}_{\rho(\theta)}\!\left(
 d\rho_{\theta}(v),d\rho_{\theta}(w)
\right),
\qquad v,w\in T_{\theta}\Theta.
\label{eq:qfim_pullback}
\end{equation}
We use $F_{\theta}$ for this bilinear form and $F(\theta)=[F_{ij}(\theta)]$ for its matrix in the coordinate basis $\{\partial_i\}$, so that
$F_{\theta}(v,w)=v^{\mathsf T}F(\theta)w$.
The equivalent SLD matrix formula and its behavior under reparametrization are given in Appendix~\ref{app:pullback_geometry}.

Most results concern pure-state sectors, where $\rho(\theta)=\ket{\psi(\theta)}\!\bra{\psi(\theta)}$ and the same convention gives the derivative-overlap and generator-covariance formulas.
Introduce the local Hermitian generators
\begin{equation}
A_{i}(\theta):= i \left(\partial_{i}U(\theta)\right)U(\theta)^{\dagger},
\qquad
\partial_i\ket{\psi}=-iA_i\ket{\psi},
\qquad
\partial_{i}:=\frac{\partial}{\partial \theta_{i}}.
\end{equation}
Then \cite{ProvostVallee1980,Stokes2020QNG}
\begin{align}
F_{ij}(\theta)
&=4\,\mathrm{Re}\!\left(
\braket{\partial_i\psi}{\partial_j\psi}
-\braket{\partial_i\psi}{\psi}\braket{\psi}{\partial_j\psi}
\right) \notag\\
&=4\,\mathrm{Re}\!\left(
\bra{\psi}A_iA_j\ket{\psi}
-\bra{\psi}A_i\ket{\psi}\bra{\psi}A_j\ket{\psi}
\right) \notag\\
&=2\,\bra{\psi}\{\Delta A_i,\Delta A_j\}\ket{\psi},
\label{eq:QFIM_pure}
\end{align}
where $\Delta A_i:=A_i-\bra{\psi}A_i\ket{\psi}I$ and $\{X,Y\}=XY+YX$.
Appendix~\ref{app:pure_qfi} derives these equivalent expressions and records the normalization convention.
Unless stated otherwise, Sections~\ref{sec:quotient}--\ref{sec:landscape_dynamics} and Section~\ref{sec:noise_trainability} use this pure-state specialization, while Section~\ref{sec:mixed_extension} develops the mixed-state SLD geometry on unitary orbits.

Quantum natural gradient (QNG) uses the QFIM to convert the coordinate differential of the loss into a state-geometrically meaningful update.
Where $F_{\theta}$ is nondegenerate, the coordinate step is
\begin{equation}
\Delta \theta_{\mathrm{QNG}}
=-\eta\,F(\theta)^{-1}\nabla \cL(\theta),
\end{equation}
with learning rate $\eta>0$.
When $F(\theta)$ is singular, the standard coordinate-level extension is
\begin{equation}
\Delta \theta_{\mathrm{QNG}}
=-\eta\,F(\theta)^{+}\nabla \cL(\theta),
\end{equation}
where $F(\theta)^{+}$ is the Moore--Penrose pseudoinverse \cite{Stokes2020QNG}.
The natural-gradient and minimum-norm interpretations of these formulas are summarized in Appendix~\ref{app:natural_gradient_background}.
The quotient construction identifies the precise condition under which this pseudoinverse step is the intrinsic quotient natural gradient: its kernel must contain exactly the redundant velocities.

\subsection{Representation-theoretic sector reduction}
\label{sec:rep}

Physical symmetry reduction isolates the Hilbert-space sectors within which state motion and Fisher geometry decouple.
Let $K$ be a compact group with a finite-dimensional unitary representation $V:K\to U(\cH)$.
Complete reducibility yields the isotypic decomposition
\begin{equation}
\cH \cong \bigoplus_{\lambda\in \widehat{K}} \left(V_{\lambda}\otimes \bbC^{m_{\lambda}}\right),
\label{eq:isotypic}
\end{equation}
where $V_{\lambda}$ runs over irreducible representations and $m_{\lambda}$ is its multiplicity.
The summand $\cH_{\lambda}:=V_{\lambda}\otimes \bbC^{m_{\lambda}}$ is the $\lambda$-\emph{isotypic component}; $V_\lambda$ is the irreducible representation space and $\bbC^{m_\lambda}$ its multiplicity space.
An ansatz restricted to the commutant acts trivially on $V_\lambda$ and mixes only the multiplicity space, which therefore becomes the effective training sector at fixed $\lambda$.
The commutant algebra has the block form \cite{FultonHarris1991,KeylWerner2001,BaconChuangHarrow2006}
\begin{equation}
\End_{K}(\cH):=\{X\in \End(\cH):[X,V(k)]=0\ \text{for all }k\in K\}
\cong \bigoplus_{\lambda\in \widehat{K}} \left(I_{V_{\lambda}}\otimes \End(\bbC^{m_{\lambda}})\right).
\label{eq:commutant}
\end{equation}

A $K$-invariant circuit family may consequently be written as
\begin{equation}
U(\theta)=\bigoplus_{\lambda}\left(I_{V_{\lambda}}\otimes U_{\lambda}(\theta^{(\lambda)})\right),
\qquad U_{\lambda}(\theta^{(\lambda)})\in U(m_{\lambda}),
\label{eq:block_ansatz}
\end{equation}
with independent parameter blocks $\theta^{(\lambda)}$.
The QFIM inherits the same separation.

\begin{proposition}
\label{prop:sector_blocks}
Consider the block ansatz \cref{eq:block_ansatz}.
\begin{enumerate}
\item[(a)]
If the pure reference state $\rho_{0}=\ket{\psi_{0}}\!\bra{\psi_{0}}$ is supported on a single isotypic component $\cH_{\lambda_{0}}$, then the pure-state QFIM for $\rho(\theta)=U(\theta)\rho_{0}U(\theta)^{\dagger}$ has only one possibly nonzero block:
\begin{equation}
F(\theta)=\bigoplus_{\lambda} F^{(\lambda)}(\theta^{(\lambda)}),\qquad
F^{(\lambda)} \equiv 0 \text{ for } \lambda\neq \lambda_{0}.
\label{eq:qfim_block}
\end{equation}
\item[(b)]
If $\rho_{0}=\bigoplus_{\lambda}\rho_{0,\lambda}$ is block diagonal with respect to \cref{eq:isotypic}, then the SLD QFIM is block diagonal across $\lambda$, and each block depends only on the corresponding pair $(U_{\lambda},\rho_{0,\lambda})$.
\end{enumerate}
\end{proposition}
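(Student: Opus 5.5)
The plan is to work directly with the local generators $A_i(\theta)=i(\partial_iU)U^\dagger$ and to exploit the block structure of \cref{eq:block_ansatz}. Let $P_\lambda$ denote the orthogonal projector onto $\cH_\lambda$. Since $U(\theta)=\bigoplus_\lambda (I_{V_\lambda}\otimes U_\lambda(\theta^{(\lambda)}))$ and the parameter blocks are independent, a coordinate $\theta_i$ belonging to block $\lambda$ only affects $U_\lambda$. The first step is therefore to show
\begin{equation}
A_i=I_{V_\lambda}\otimes i(\partial_iU_\lambda)U_\lambda^\dagger \quad\text{on }\cH_\lambda,
\qquad A_i=0 \text{ on }\cH_\mu,\ \mu\neq\lambda .
\end{equation}
In particular $A_i=P_\lambda A_iP_\lambda$, and $A_i$ commutes with every $P_\mu$. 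The same statement holds at the level of $\partial_i U$: it is supported on $\cH_\lambda$ and is block-diagonal.

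For part (a), I will use the covariance form \cref{eq:QFIM_pure}. Because $U$ preserves each $\cH_\mu$, the state $\ket{\psi(\theta)}=U(\theta)\ket{\psi_0}$ stays in $\cH_{\lambda_0}$. If $i$ lies in a block $\lambda\neq\lambda_0$, then $A_i\ket{\psi}=A_iP_{\lambda_0}\ket{\psi}=P_\lambda A_i P_{\lambda_0}\ket\psi=0$. Hence $\partial_i\ket\psi=0$, and the derivative-overlap expression gives $F_{ij}=0$ for every $j$. This makes all rows and columns outside block $\lambda_0$ vanish. So the only possibly nonzero block is $F^{(\lambda_0)}$, which depends only on $\theta^{(\lambda_0)}$, and the matrix takes the form \cref{eq:qfim_block}. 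The statement that $F^{(\lambda)}\equiv0$ for $\lambda\ne\lambda_0$ is exactly this vanishing. The off-diagonal cross-blocks between two other sectors also vanish, so the direct-sum form holds trivially.

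For part (b), I will use the SLD pullback \cref{eq:qfim_pullback}. Write $\rho(\theta)=\bigoplus_\lambda\rho_\lambda$ with $\rho_\lambda=(I\otimes U_\lambda)\rho_{0,\lambda}(I\otimes U_\lambda)^\dagger$. Then $\partial_i\rho=-i[A_i,\rho]$ is supported on the single block $\lambda(i)$. The key step is that the SLD equation $\partial_i\rho=\tfrac12\{L_i,\rho\}$ admits a block-diagonal solution: for block-diagonal $\rho$ and $X$, the Lyapunov map $L\mapsto\tfrac12\{L,\rho\}$ acts blockwise. Concretely, $L_i=\bigoplus_\mu L_i^{(\mu)}$ with $L_i^{(\mu)}=0$ for $\mu\neq\lambda(i)$, and $L_i^{(\lambda)}$ is the SLD of $\rho_\lambda$ for the derivative $\partial_i\rho_\lambda$. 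This is seen in an eigenbasis of $\rho$ adapted to the blocks, where $(L_i)_{kl}=2(\partial_i\rho)_{kl}/(p_k+p_l)$ for $p_k+p_l>0$. On the kernel of $\rho$ we fix the standard convention that $L$ is zero there; $F$ is independent of that choice anyway. Then
\begin{equation}
F_{ij}=\tfrac12\Tr\bigl(\rho\{L_i,L_j\}\bigr)=\sum_\mu \tfrac12\Tr\bigl(\rho_\mu\{L_i^{(\mu)},L_j^{(\mu)}\}\bigr),
\end{equation}
which vanishes unless $\lambda(i)=\lambda(j)$, and otherwise depends only on $(U_\lambda,\rho_{0,\lambda})$. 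When $\rho_{0,\lambda}$ is not normalized, the formula above is the unnormalized block weight. It is homogeneous of degree one in $\rho_{0,\lambda}$, so this causes no problem.

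The main obstacle is the SLD step in (b). One must make sure that off-diagonal blocks of the eigenbasis formula really do vanish. In particular, eigenvalues shared between different sectors could in principle let one choose eigenvectors mixing sectors, and the kernel convention must be handled. I would handle this by working directly with the entrywise formula in a sector-adapted eigenbasis and noting that $(\partial_i\rho)_{kl}=0$ whenever $k,l$ lie in different sectors. Everything else is bookkeeping with $P_\lambda$.
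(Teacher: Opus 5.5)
Your proposal is correct and follows essentially the same route as the paper. In (a), the block-supported generators $A_i$ annihilate the state confined to $\cH_{\lambda_0}$, so every entry with an index outside that block vanishes. In (b), the SLD equation is solved blockwise, so the cross-block terms $\Tr(\rho\{L_i,L_j\})$ vanish. Your sector-adapted eigenbasis formula and kernel convention just make explicit the step the paper summarizes as ``the SLD equation decouples blockwise.''
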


Thus a pure state supported in one isotypic component has a single active Fisher block, while a block-diagonal mixed state retains independent sectorwise estimation and inversion.
Schur--Weyl duality gives the canonical example for $\cH=(\bbC^{d})^{\otimes n}$ and $K=SU(d)$.
For partitions $\lambda\vdash n$ with at most $d$ rows,
$\cH\simeq \bigoplus_{\lambda}Q_{\lambda}\otimes P_{\lambda}$, where $Q_\lambda$ and $P_\lambda$ are irreducible $SU(d)$ and symmetric-group representations, respectively.
The commutant is generated by the symmetric-group action, so commutant ans\"atze act on the multiplicity spaces $P_\lambda$; efficient circuits for the associated Schur transform are available \cite{BaconChuangHarrow2006,Krovi2019}.

Fixing an active $\lambda$-sector leaves a smaller state space $\cK_\lambda$ and its restricted dynamical group.
Parameter redundancies inside this sector are governed independently by the action of $\Gamma$.

\subsection{Redundancy actions and quotient metrics}

A QFIM singularity induced by parameter redundancy is the differential signature of a quotient geometry.
State-preserving parameter orbits generate vertical directions in the QFIM kernel; when these directions exhaust the kernel, optimization is governed by a Riemannian metric on the quotient, and the pseudoinverse update is its ambient-coordinate horizontal lift.
Assume that a Lie group $\Gamma$ acts smoothly on the parameter manifold through $\Phi:\Gamma\times\Theta\to\Theta$, written $g\cdot\theta$, and that each orbit represents one quantum state:
\begin{equation}
\rho(g\cdot \theta)=\rho(\theta)\qquad \forall g\in \Gamma,\ \theta\in\Theta.
\label{eq:state_invariant}
\end{equation}
Equivalently, the state map descends from the parameter manifold to the orbit space $\Theta/\Gamma$.
Since the objectives considered here depend on the parameters only through the represented state, they are constant along the same orbits:
\begin{equation}
\cL(g\cdot \theta)=\cL(\theta)\qquad \forall g\in \Gamma,\ \theta\in\Theta.
\label{eq:loss_invariant}
\end{equation}
The orbit through $\theta$ defines an equivalence class of parameters that induce the same state and therefore the same predictions for any observable.
We work on a $\Gamma$-invariant open set $U\subseteq\Theta$ on which the orbit type is constant, and we assume that $U/\Gamma$ is a smooth quotient manifold with canonical projection $\pi:U\to U/\Gamma$ a submersion \cite{Lee2013,AbsilMahonySepulchre2008,Boumal2023}.
When $\ker F(\theta)=\cV_{\theta}$, the descended tensor is positive definite on $U/\Gamma$; equivalently, $F$ is the pullback of the quotient metric along $\pi$.

The infinitesimal action of $\Gamma$ identifies the vertical directions that must lie in the QFIM kernel.
Let $\mathfrak{r}$ be the Lie algebra of $\Gamma$. Each $\xi\in\mathfrak{r}$ induces a fundamental vector field
\begin{equation}
X_{\xi}(\theta):=\left.\frac{d}{dt}\right|_{t=0}\bigl(\exp(t\xi)\cdot \theta \bigr)\in T_{\theta}\Theta.
\end{equation}
The orbit tangent space (vertical space) is
\begin{equation}
\cV_{\theta}:=\spanop\{X_{\xi}(\theta):\xi\in\mathfrak{r}\}\subseteq T_{\theta}\Theta.
\end{equation}
Differentiating \cref{eq:loss_invariant} yields $d\cL_{\theta}(v)=0$ for all $v\in \cV_{\theta}$.

By the standing state-invariance hypothesis, the state map is constant along each orbit.
Differentiating \cref{eq:state_invariant} along $g(t)=\exp(t\xi)$ gives
$d\rho_{\theta}(X_{\xi}(\theta))=0$.
Since the QFIM is the pullback in \cref{eq:qfim_pullback}, every $v\in\cV_{\theta}$ therefore satisfies, for all $w\in T_{\theta}\Theta$,
\begin{equation}
F_{\theta}(v,w)
=g^{\mathrm{SLD}}_{\rho(\theta)}\!\left(0,d\rho_{\theta}(w)\right)
=0.
\label{eq:vertical_qfi_kernel}
\end{equation}
Hence $\cV_{\theta}\subseteq\ker F_{\theta}$.
This is the pullback-kernel property summarized in Appendix~\ref{app:pullback_geometry}.

To formulate the quotient geometry precisely, for each $g\in \Gamma$ let
$\Phi_g:U\to U$ denote the diffeomorphism induced by the action,
$\Phi_g(\theta)=g\cdot\theta$.
The standard descent criterion for covariant tensors on a smooth quotient is reviewed in Appendix~\ref{app:quotient_preliminaries}.
Applied to the action maps $\Phi_g$, it motivates the following definition.

\begin{definition}
\label{def:basic_fisher}
Let $U\subseteq\Theta$ be a $\Gamma$-invariant open set such that $U/\Gamma$ is a
smooth quotient manifold and $\pi:U\to U/\Gamma$ is a submersion, and let $F$
denote the QFIM tensor induced by the state family on $U$.
We call $F$ $\Gamma$-\emph{basic} if the following two conditions hold:
\begin{enumerate}
\item[(i)] $F$ is invariant under the $\Gamma$-action:
\begin{equation}
F_{g\cdot\theta}\!\left((d\Phi_g)_{\theta}u,(d\Phi_g)_{\theta}v\right)
=F_{\theta}(u,v)
\end{equation}
for all $g\in \Gamma$, $\theta\in U$, and $u,v\in T_{\theta}U$, where
$(d\Phi_g)_{\theta}:T_{\theta}U\to T_{g\cdot\theta}U$ is the tangent map.
\item[(ii)] $F$ annihilates vertical directions:
\begin{equation}
\cV_{\theta}\subseteq\ker F_{\theta}
\end{equation}
for all $\theta\in U$, where
$\ker F_{\theta}:=\{u\in T_{\theta}U:F_{\theta}(u,\cdot)=0\}$.
\end{enumerate}
A $\Gamma$-basic QFIM is called \emph{faithful to the quotient} on $U$ if
\begin{equation}
\ker F_{\theta}=\cV_{\theta}\qquad \forall \theta\in U.
\label{eq:faithful_fisher}
\end{equation}
\end{definition}

Under the standing state-invariance hypothesis, $\rho\circ\Phi_g=\rho$.
Since the QFIM is obtained by pulling back the state-space quantum Fisher
tensor along $\rho$, the identity $\rho\circ\Phi_g=\rho$ implies
$\Phi_g^{*}F=F$.
Together with \cref{eq:vertical_qfi_kernel}, this
shows that the QFIM is automatically $\Gamma$-basic on every such set $U$.
Faithfulness is the additional requirement that the prescribed redundancy
group accounts for all null directions of the QFIM.

Because $F(\theta)$ is symmetric positive semidefinite, the Moore--Penrose pseudoinverse is well defined in any local coordinate chart.
Define the QFIM-identifiable complement
\begin{equation}
\mathsf{H}_{\theta}:=(\ker F(\theta))^{\perp},
\end{equation}
where the orthogonal complement is taken with respect to the Euclidean inner product in the chosen coordinates.
When $F$ is faithful to the quotient, $\mathsf{H}_{\theta}=\cV_{\theta}^{\perp}$ is precisely the Euclidean horizontal space.
If $F$ has additional null directions, $\mathsf{H}_{\theta}$ is smaller and contains only the directions detected by the QFIM.
Then $F(\theta)$ is positive definite on $\mathsf{H}_{\theta}$ and $\mathsf{H}_{\theta}=\mathrm{im}\,F(\theta)$.
Let $P_{\mathsf{H}}$ denote the Euclidean orthogonal projection onto $\mathsf{H}_{\theta}$.
With this choice,
\begin{equation}
F(\theta)^{+}=\left(F(\theta)\vert_{\mathsf{H}_{\theta}}\right)^{-1}P_{\mathsf{H}},
\label{eq:pinv_decomp}
\end{equation}
and $F(\theta)^{+}$ vanishes on $\ker F(\theta)$ (hence on $\cV_{\theta}$).

The faithful-to-the-quotient condition \cref{eq:faithful_fisher} becomes a generic rank condition in analytic models: once it holds at one point of a connected stratum, it holds away from a proper analytic exceptional set.

\begin{proposition}
\label{prop:faithful_generic}
Let $U\subseteq\Theta$ be a connected $\Gamma$-invariant open set on which
$s:=\dim\cV_{\theta}$ is constant.
Assume that the coefficient matrix of the $\Gamma$-basic QFIM tensor $F$ is real analytic on $U$ and that $F$ is faithful to the quotient at one point of $U$.
Then there is a proper real-analytic subset $Z\subset U$ such that $F$ is faithful to the quotient on $U\setminus Z$.
In particular, $Z$ has empty interior and Lebesgue measure zero, and $\rank F=p-s$ on $U\setminus Z$.
\end{proposition}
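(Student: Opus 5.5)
Since $F$ is $\Gamma$-basic on $U$, we always have $\cV_\theta\subseteq\ker F_\theta$, so $\dim\ker F_\theta\ge s$ and $\rank F(\theta)\le p-s$ for every $\theta\in U$. Faithfulness at $\theta$ is therefore equivalent to the single rank condition $\rank F(\theta)=p-s$: if the rank equals $p-s$, then $\ker F_\theta$ has dimension $s$ and contains the $s$-dimensional space $\cV_\theta$, so the two coincide; the converse is immediate. The plan is to turn this rank condition into the nonvanishing of a real-analytic function and then apply the identity theorem on the connected set $U$.

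Let $r:=p-s$ and define $\phi:U\to\bbR$ by
\begin{equation}
\phi(\theta):=\sum_{|I|=|J|=r}\bigl(\det F(\theta)_{I,J}\bigr)^{2},
\end{equation}
where the sum ranges over all $r\times r$ minors of the coefficient matrix. Because the entries of $F$ are real analytic, each minor is a polynomial in analytic functions, so $\phi$ is real analytic. Moreover $\phi(\theta)\neq0$ exactly when some $r\times r$ minor is nonzero, that is, when $\rank F(\theta)\ge r$. Combined with the upper bound $\rank F(\theta)\le r$, this shows that $\phi(\theta)\neq0$ holds if and only if $\rank F(\theta)=p-s$, which is faithfulness at $\theta$. (One could instead use the coefficient $e_{r}$ of $t^{s}$ in $\det(tI-F(\theta))$, the $r$-th elementary symmetric polynomial of the eigenvalues, which is nonzero iff the rank is at least $r$ for PSD $F$; the sum of squared minors avoids needing positivity.)

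Set $Z:=\phi^{-1}(0)$, which is a real-analytic subset of $U$. By hypothesis there is a point $\theta_0\in U$ at which $F$ is faithful, so $\phi(\theta_0)\neq0$ and hence $Z\neq U$. Since $U$ is connected and $\phi$ is real analytic and not identically zero, the identity theorem shows that $\phi$ cannot vanish on any nonempty open subset; so $Z$ has empty interior. The measure-zero claim is the standard fact that the zero set of a nontrivial real-analytic function on a connected open subset of $\bbR^p$ is Lebesgue null. I would cite this result, or prove it by induction on $p$ using Fubini: on each line in the $\theta_1$-direction, $\phi$ is either identically zero, which happens only on a null set of the remaining coordinates by the inductive hypothesis applied to some nonvanishing Taylor coefficient, or it has isolated, hence countably many, zeros. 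On $U\setminus Z$ we then have $\phi\neq0$, hence faithfulness and $\rank F=p-s$.

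The only delicate point is this measure-zero fact, together with the use of connectedness of $U$ in the identity theorem. The constancy of $s$ is exactly what lets a single global $r$ work, so that one analytic function $\phi$ controls faithfulness on all of $U$; note that no analyticity of $\cV_\theta$ itself is needed, because the inclusion $\cV_\theta\subseteq\ker F_\theta$ already supplies the rank upper bound.
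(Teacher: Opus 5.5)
Your proposal is correct and follows the paper's proof essentially step for step. The inclusion $\cV_\theta\subseteq\ker F_\theta$ gives $\rank F\le p-s$, with equality exactly at faithful points; the sum of squared $(p-s)\times(p-s)$ minors is then a real-analytic function that is nonzero precisely there; and faithfulness at one point of the connected set $U$ makes its zero set $Z$ a proper analytic subset, hence nowhere dense and Lebesgue null. The only difference is that you sketch the measure-zero fact via Fubini and induction, whereas the paper simply cites it.
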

\begin{proof}
Since $\cV_{\theta}\subseteq\ker F_{\theta}$ with $\dim\cV_{\theta}=s$, one has
$\rank F_{\theta}\le p-s$ on $U$, with equality exactly at the faithful points.
Let $\phi(\theta)$ be the sum of squares of all $(p-s)\times(p-s)$ minors of $F_{\theta}$;
$\phi$ is real analytic, and $\phi(\theta)\neq0$ if and only if $\rank F_{\theta}=p-s$.
Faithfulness at one point gives a parameter $\theta_{0}\in U$ with $\phi(\theta_{0})\neq0$, so $\phi$ does not vanish identically on the connected set $U$.
Set
\begin{equation}
Z:=\{\theta\in U:\ker F_{\theta}\neq\cV_{\theta}\}=\phi^{-1}(0).
\end{equation}
Then $Z$ is a proper real-analytic zero set;
such sets are closed, Lebesgue null, and nowhere dense \cite{KrantzParks2002}.
\end{proof}

For standard finite-depth circuit families built from Pauli rotations and fixed entangling gates, the analyticity hypothesis is local and automatic: the entries of $U(\theta)$, the induced state, and the QFIM entries are trigonometric polynomials in the parameters.
Thus, in these models, checking $\rank F(\theta)=p-\dim\cV_{\theta}$ at one regular parameter point certifies the faithful quotient regime away from a proper analytic exceptional set.
On that open dense set, the locally constant-rank hypothesis in \cref{thm:quotient_geometry_update} holds automatically.

\subsection{Pseudoinverse lift on the quotient}

Basicness makes the QFIM descend to the parameter quotient, while faithfulness determines whether the descended tensor is a Riemannian metric and whether its gradient has the standard pseudoinverse lift.

\begin{theorem}
\label{thm:quotient_geometry_update}
Let $U\subseteq\Theta$ be a $\Gamma$-invariant open set such that $U/\Gamma$ is a smooth quotient manifold and $\pi:U\to U/\Gamma$ is a submersion.
Assume that the QFIM tensor $F$ is smooth, $\Gamma$-basic, and positive semidefinite on $U$, and that $\rank F_{\theta}$ is locally constant.
Then $F$ descends to a smooth positive semidefinite tensor $\bar F$ on $U/\Gamma$.
If $F$ is faithful to the quotient on $U$, then $\bar F$ is a Riemannian metric on $U/\Gamma$, and for every objective $\cL=\bar{\cL}\circ\pi$ the quotient natural-gradient step $-\eta\,\mathrm{grad}_{\bar F}\bar{\cL}([\theta])$ has the unique minimum-Euclidean-norm horizontal lift
\begin{equation}
\Delta\theta=-\eta\,F(\theta)^{+}\nabla \cL(\theta),
\label{eq:quotient_pinv}
\end{equation}
where $F(\theta)^{+}$ is the Moore--Penrose pseudoinverse.
\end{theorem}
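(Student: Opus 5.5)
The proof has three parts: descent of $F$ to the quotient, positivity of $\bar F$, and identification of the pseudoinverse step with the lift of the quotient gradient.

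\textbf{Descent.} Apply the standard descent criterion for covariant tensors on a smooth quotient recalled in Appendix~\ref{app:quotient_preliminaries}. For $x=\pi(\theta)$ and $\bar u,\bar v\in T_x(U/\Gamma)$, choose lifts $u,v\in T_\theta U$ with $d\pi_\theta u=\bar u$ and $d\pi_\theta v=\bar v$; these exist because $\pi$ is a submersion. Set $\bar F_x(\bar u,\bar v):=F_\theta(u,v)$.
\begin{itemize}
\item Two lifts at the same $\theta$ differ by an element of $\ker d\pi_\theta$. The orbit-type hypothesis gives $\ker d\pi_\theta=\cV_\theta$, and condition (ii) of \cref{def:basic_fisher} puts $\cV_\theta\subseteq\ker F_\theta$, so the value does not depend on the lift.
\item A change of representative $\theta\mapsto g\cdot\theta$ is handled by condition (i), because $d\pi_{g\cdot\theta}\circ(d\Phi_g)_\theta=d\pi_\theta$. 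This uses only that $\pi$ has connected-or-not fibres which are exactly $\Gamma$-orbits.
\item Smoothness follows by pulling back along a local smooth section $\sigma$ of $\pi$, which exists for a submersion: $\bar F=\sigma^*F$ locally.
\item Positive semidefiniteness is inherited directly from $F$.
\end{itemize}
The locally constant rank hypothesis is recorded so that $\ker\bar F$ is a smooth distribution. It is not essential for descent itself.

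\textbf{Riemannian metric.} Under faithfulness, take $\bar F_x(\bar u,\bar u)=0$. Its lift $u$ then satisfies $F_\theta(u,u)=0$. Since $F_\theta$ is positive semidefinite, this forces $u\in\ker F_\theta=\cV_\theta=\ker d\pi_\theta$, hence $\bar u=0$. So $\bar F$ is positive definite, and it is smooth by the previous step, hence a Riemannian metric.

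\textbf{Gradient lift.} The quotient gradient $\bar g:=\grad_{\bar F}\bar\cL(x)$ is characterized by $\bar F_x(\bar g,\bar w)=d\bar\cL_x(\bar w)$ for all $\bar w$. Pulling back gives the condition that a lift $h$ with $d\pi_\theta h=\bar g$ satisfies $F(\theta)h=\nabla\cL(\theta)$, the coordinate representation of $d\cL_\theta=d\bar\cL_x\circ d\pi_\theta$. Conversely, any solution of $F(\theta)h=\nabla\cL(\theta)$ projects to $\bar g$, by nondegeneracy of $\bar F$. The solution set is therefore the affine space $h_0+\ker F(\theta)=h_0+\cV_\theta$.

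This system is consistent: $\nabla\cL\perp\cV_\theta$ because $d\cL$ vanishes on vertical vectors, and $\cV_\theta^{\perp}=(\ker F)^{\perp}=\operatorname{im}F$ by symmetry of $F$. Among the solutions, the one of minimal Euclidean norm is the unique solution lying in $\cV_\theta^\perp=\mathsf H_\theta$. By \cref{eq:pinv_decomp}, this is exactly $F(\theta)^+\nabla\cL(\theta)$.

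Its horizontality is explicit, since $\operatorname{im}F^+=\mathsf H_\theta=\cV_\theta^\perp$. Scaling by $-\eta$ gives \cref{eq:quotient_pinv}. Uniqueness holds because $d\pi_\theta$ restricted to $\cV_\theta^\perp$ is an isomorphism onto $T_x(U/\Gamma)$.

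The main point needing care is the equality $\ker d\pi_\theta=\cV_\theta$, which holds because $\pi$ is the quotient submersion with orbits as fibres and $\cV_\theta$ is the orbit tangent space. I would state it explicitly, citing \cite{Lee2013}, since both the well-definedness argument and the uniqueness of the lift rely on it.
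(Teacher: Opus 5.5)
Your proof is correct and follows essentially the same route as the paper. Both define $\bar F$ through lifts and get well-definedness from vertical annihilation and $\Gamma$-invariance. Both obtain positivity from faithfulness and identify the horizontal lift with $F(\theta)^{+}\nabla\cL(\theta)$ via $F^{+}=(F|_{\mathsf H})^{-1}P_{\mathsf H}$. Your local-section argument for smoothness, your explicit use of $\ker d\pi_\theta=\cV_\theta$, and your description of all lifts as the affine solution set $h_0+\cV_\theta$ of $F(\theta)h=\nabla\cL(\theta)$ (which makes the minimum-norm claim immediate) are small refinements of steps the paper states more briefly or leaves implicit.
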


When $\cV_{\theta}\subsetneq\ker F_{\theta}$, the descended QFIM tensor is only a quotient semimetric.
The extra null directions are additional infinitesimal state-preserving directions not generated by the prescribed $\Gamma$-action; equivalently, the induced state map on $U/\Gamma$ fails to be an immersion there.
They therefore require a larger local redundancy quotient before the QFIM defines a Riemannian metric.

In the faithful setting, the horizontal form of \cref{eq:quotient_pinv} is explicit.
With $\mathsf{H}_{\theta}=(\ker F(\theta))^{\perp}$ and $P_{\mathsf{H}}$ the Euclidean orthogonal projector,
\begin{equation}
\Delta\theta=-\eta\left(F(\theta)\vert_{\mathsf{H}_{\theta}}\right)^{-1}P_{\mathsf{H}}\nabla \cL(\theta).
\label{eq:quotient_horizontal}
\end{equation}
This lift projects to the quotient Riemannian gradient:
\begin{equation}
d\pi_{\theta}(\Delta\theta)=-\eta\,\mathrm{grad}_{\bar F}\,\bar{\cL}([\theta]).
\label{eq:quotient_projection_gradient}
\end{equation}
Thus \cref{eq:quotient_pinv} carries no first-order motion along the redundant directions.
This is an infinitesimal statement; finite-step representative drift still requires an explicit gauge choice or retraction, as discussed in \cref{subsec:gauge_fixing}.
For the same reason, isotropic damping $F\leftarrow F+\gamma I$ should be applied only after the known gauge kernel has been projected out: damping the full ambient matrix converts vertical finite-shot noise into representative drift at rate $1/\gamma^2$, as quantified in part~(a) of \cref{thm:finite_shot_amplification}.
Any additional exact null directions reflect state-level non-identifiability beyond the prescribed $\Gamma$-action; apparent null directions created by finite-shot estimation or numerical truncation are separate statistical effects.

\subsection{Circuit-to-orbit Fisher transfer}
\label{subsec:circuit_orbit_transfer}

The parameter quotient identifies redundant velocities but does not yet determine how the remaining circuit coordinates represent the intrinsic orbit geometry.
Fix the active sector $\cK_\lambda$ selected in \cref{sec:rep}, and suppose on the regular stratum that the smooth state map $\rho$ takes values locally in a reachable orbit $\cO_\lambda$.
The state differential itself supplies the required circuit-to-orbit map, so no group-valued unitary lift is needed.
We use the Hermitian-generator realization of the restricted dynamical Lie algebra:
$\mathfrak g_\lambda$ is a real vector space of traceless Hermitian operators, closed under
$[X,Y]_{\mathrm H}:=-i(XY-YX)$, and $G_\lambda$ is the connected group generated by
$\exp(-itX)$ with $X\in\mathfrak g_\lambda$.
Symbols such as $\mathfrak{su}(D)$ refer to this Hermitian realization unless a complexified root decomposition is stated explicitly.
Equip $\mathfrak g_\lambda$ with
\begin{equation}
\langle X,Y\rangle_{\mathrm{tr}}:=\Tr(XY).
\end{equation}
At $\rho=\rho(\theta)$ let
\begin{equation}
\mathfrak h_\rho:=\{X\in\mathfrak g_\lambda:[X,\rho]=0\},
\qquad
\mathfrak p_\rho:=\mathfrak h_\rho^{\perp_{\mathrm{tr}}},
\end{equation}
and write $P_{\mathfrak p_\rho}$ for the trace-orthogonal projector.
The infinitesimal orbit map
\begin{equation}
\mathcal A_\rho:\mathfrak p_\rho\longrightarrow T_\rho\cO_\lambda,
\qquad
\mathcal A_\rho(X)=-i[X,\rho],
\end{equation}
is a linear isomorphism: the kernel of the generator action is $\mathfrak h_\rho$, and restriction to its orthogonal complement removes that kernel.
Because $d\rho_\theta(v)\in T_\rho\cO_\lambda$, the state-level circuit-to-orbit differential is therefore defined without choosing a lift:
\begin{equation}
B_\theta:=\mathcal A_\rho^{-1}\circ d\rho_\theta:
T_\theta\Theta\longrightarrow\mathfrak p_\rho.
\label{eq:B_definition}
\end{equation}
Define the positive-definite intrinsic Fisher operator $M_\rho$ on $\mathfrak p_\rho$ by
\begin{equation}
\langle X,M_\rho Y\rangle_{\mathrm{tr}}
:=g_\rho^{\mathrm{SLD}}(\mathcal A_\rho X,\mathcal A_\rho Y).
\label{eq:intrinsic_fisher_operator}
\end{equation}
In a trace-orthonormal basis of $\mathfrak p_\rho$, $M_\rho$ is the orbit-frame QFIM.

When the circuit admits a unitary lift $U(\theta)$ taking values in the represented group $G_\lambda$, its local generator gives a convenient formula for the same differential:
\begin{equation}
C_\theta v:=i(dU_\theta v)U(\theta)^\dagger\in\mathfrak g_\lambda,
\qquad
d\rho_\theta(v)=-i[C_\theta v,\rho],
\qquad
B_\theta=P_{\mathfrak p_\rho}C_\theta.
\label{eq:circuit_generator_map}
\end{equation}
This group-valued lift is sufficient for direct generator-level computation, but it is not a hypothesis of the factorization.

\begin{theorem}
\label{thm:circuit_orbit_factorization}
Let $\rho$ be any smooth circuit state map whose image lies locally in $\cO_\lambda$.
Then for all $v,w\in T_\theta\Theta$,
\begin{equation}
F_\theta(v,w)=\langle B_\theta v,M_\rho B_\theta w\rangle_{\mathrm{tr}},
\qquad
F(\theta)=B_\theta^*M_\rho B_\theta.
\label{eq:circuit_orbit_factorization}
\end{equation}
Consequently,
\begin{equation}
\ker F(\theta)=\ker B_\theta.
\label{eq:factorization_kernel}
\end{equation}
If $\Gamma$ is a state-preserving redundancy group with vertical space $\cV_\theta$, then $\cV_\theta\subseteq\ker B_\theta$, and the QFIM is faithful to the prescribed quotient exactly when
\begin{equation}
\ker B_\theta=\cV_\theta.
\label{eq:B_faithfulness}
\end{equation}
\end{theorem}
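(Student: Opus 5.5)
The factorization follows by unwinding definitions. By \cref{eq:qfim_pullback}, $F_\theta(v,w)=g^{\mathrm{SLD}}_{\rho}(d\rho_\theta v,d\rho_\theta w)$. Because the image of $\rho$ lies locally in $\cO_\lambda$, we have $d\rho_\theta v\in T_\rho\cO_\lambda$. Since $\mathcal A_\rho$ is an isomorphism onto $T_\rho\cO_\lambda$, we may write $d\rho_\theta v=\mathcal A_\rho(B_\theta v)$ with $B_\theta$ as in \cref{eq:B_definition}. Substituting into \cref{eq:intrinsic_fisher_operator} with $X=B_\theta v$ and $Y=B_\theta w$ gives $F_\theta(v,w)=\langle B_\theta v,M_\rho B_\theta w\rangle_{\mathrm{tr}}$. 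Taking the adjoint $B_\theta^*:\mathfrak p_\rho\to T_\theta\Theta$ with respect to the Euclidean and trace inner products yields the matrix identity $F(\theta)=B_\theta^*M_\rho B_\theta$.

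One point needs a brief check: that $\mathcal A_\rho:\mathfrak p_\rho\to T_\rho\cO_\lambda$ really is a bijection onto the tangent space. Injectivity is immediate, since the kernel of $X\mapsto -i[X,\rho]$ on $\mathfrak g_\lambda$ is $\mathfrak h_\rho$ and $\mathfrak p_\rho\cap\mathfrak h_\rho=0$. For surjectivity, the tangent space of the homogeneous orbit $G_\lambda/H_\rho$ at $\rho$ is the image of $\mathfrak g_\lambda$ under the infinitesimal action, and $\mathfrak g_\lambda=\mathfrak h_\rho\oplus\mathfrak p_\rho$. The paper already asserts this, so I would cite it and add this one line.

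Smoothness of $M_\rho$ is not needed for the pointwise statements. Its positive-definiteness also follows from the definition. Because $g^{\mathrm{SLD}}$ is positive definite on tangent vectors of the state manifold, and $\mathcal A_\rho$ is injective, we get $\langle X,M_\rho X\rangle>0$ for $X\neq0$.

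For the kernel statement, $\ker B_\theta\subseteq\ker F(\theta)$ is clear from the factorization. Conversely, if $F(\theta)v=0$, then $0=v^{\mathsf T}F v=\langle B_\theta v,M_\rho B_\theta v\rangle_{\mathrm{tr}}$, and positive-definiteness of $M_\rho$ forces $B_\theta v=0$. This gives $\ker F(\theta)=\ker B_\theta$.

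For the redundancy claims, \cref{eq:vertical_qfi_kernel} shows that $d\rho_\theta(X_\xi)=0$. Hence $B_\theta X_\xi=\mathcal A_\rho^{-1}(0)=0$, so $\cV_\theta\subseteq\ker B_\theta$. Faithfulness, as defined in \cref{eq:faithful_fisher}, reads $\ker F_\theta=\cV_\theta$, which combined with the kernel identity is exactly \cref{eq:B_faithfulness}.

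The only mildly delicate step is the positivity of $g^{\mathrm{SLD}}$ on $T_\rho\cO_\lambda$ in the mixed case. For pure states it reduces to \cref{eq:QFIM_pure}: the Fubini--Study form vanishes only when the state velocity vanishes. For mixed states I would appeal to the SLD construction in Appendix~\ref{app:sld_qfi}. There, $g(\dot\rho,\dot\rho)=\Tr(\rho L^2)$ with $\dot\rho=\tfrac12\{\rho,L\}$, and this is zero only if $L$ vanishes on the support of $\rho$, which forces $\dot\rho=0$.
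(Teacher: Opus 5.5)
Your proposal is correct and follows the paper's proof: substitute $d\rho_\theta v=\mathcal A_\rho(B_\theta v)$ into the pullback definition to read off the factorization, use positive-definiteness of $M_\rho$ (the paper writes $F_\theta(v,v)=\|M_\rho^{1/2}B_\theta v\|_{\mathrm{tr}}^2$) for $\ker F(\theta)=\ker B_\theta$, and use state invariance along $\Gamma$-orbits for the redundancy and faithfulness claims. Your extra checks are sound and make explicit what the paper builds into its setup: that $\mathcal A_\rho$ maps $\mathfrak p_\rho$ onto $T_\rho\cO_\lambda$, and that $M_\rho$ is positive definite, including the mixed-state SLD argument.
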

\begin{proof}
By definition, $d\rho_\theta(v)=\mathcal A_\rho(B_\theta v)$.
Substitution into the pullback definition \cref{eq:qfim_pullback} and then \cref{eq:intrinsic_fisher_operator} gives \cref{eq:circuit_orbit_factorization}.
Since $M_\rho$ is positive definite on $\mathfrak p_\rho$,
\begin{equation}
F_\theta(v,v)=\|M_\rho^{1/2}B_\theta v\|_{\mathrm{tr}}^2,
\end{equation}
which proves \cref{eq:factorization_kernel}.
State invariance along $\Gamma$-orbits gives the final assertions.
\end{proof}

On a faithful constant-rank stratum, $B_\theta$ descends to an injective map
$\bar B_{[\theta]}:T_{[\theta]}\cQ\to\mathfrak p_\rho$.

\begin{corollary}
\label{cor:quotient_orbit_local_geometry}
The map $\bar\rho:\cQ\to\cO_\lambda$ is an immersion at $[\theta]$ if and only if the QFIM is faithful there.
It is a local diffeomorphism onto an open subset of $\cO_\lambda$ if and only if, in addition,
\begin{equation}
\operatorname{im}B_\theta=\mathfrak p_\rho.
\end{equation}
If the image is proper, quotient QNG is intrinsic to the circuit-reachable submanifold rather than to the entire homogeneous orbit.
\end{corollary}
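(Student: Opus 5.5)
The plan is to reduce both claims to linear algebra on the differential of $\bar\rho$, using the factorization of \cref{thm:circuit_orbit_factorization}. On the faithful constant-rank stratum $\pi$ is a submersion with $\ker d\pi_\theta=\cV_\theta$. Since $\rho=\bar\rho\circ\pi$, the chain rule gives
\begin{equation}
d\rho_\theta=d\bar\rho_{[\theta]}\circ d\pi_\theta .
\end{equation}
Composing with the isomorphism $\mathcal A_\rho^{-1}:T_\rho\cO_\lambda\to\mathfrak p_\rho$ yields $B_\theta=\bar B_{[\theta]}\circ d\pi_\theta$, where $\bar B_{[\theta]}:=\mathcal A_\rho^{-1}\circ d\bar\rho_{[\theta]}$. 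This definition agrees with the descended map stated before the corollary. Because $d\pi_\theta$ is surjective, $\operatorname{im}\bar B_{[\theta]}=\operatorname{im}B_\theta$ and $\ker\bar B_{[\theta]}=d\pi_\theta(\ker B_\theta)$.

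\textbf{Immersion criterion.} $\bar\rho$ is an immersion at $[\theta]$ iff $d\bar\rho_{[\theta]}$ is injective, which holds iff $\bar B_{[\theta]}$ is injective, since $\mathcal A_\rho$ is an isomorphism. We know $\cV_\theta\subseteq\ker B_\theta$ from \cref{thm:circuit_orbit_factorization}. Then $\ker\bar B_{[\theta]}=d\pi_\theta(\ker B_\theta)$ is zero iff $\ker B_\theta\subseteq\ker d\pi_\theta=\cV_\theta$, that is, iff $\ker B_\theta=\cV_\theta$. By \cref{eq:B_faithfulness} and \cref{eq:factorization_kernel}, this is exactly faithfulness at $\theta$. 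Faithfulness is a condition at the point $\theta$, but it is independent of the representative: state invariance makes $\Phi_g$ intertwine both kernels, because $d\Phi_g$ maps $\ker F_\theta$ to $\ker F_{g\cdot\theta}$ and $\cV_\theta$ to $\cV_{g\cdot\theta}$. So the condition is well defined at $[\theta]$.

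\textbf{Local diffeomorphism criterion.} By the inverse function theorem, $\bar\rho$ is a local diffeomorphism onto an open subset of $\cO_\lambda$ iff $d\bar\rho_{[\theta]}$ is bijective onto $T_\rho\cO_\lambda$. This requires the orbit to be an embedded submanifold near $\rho$ and $\bar\rho$ to be smooth as a map into it. I would note that the standing assumption that $\rho$ lands locally in $\cO_\lambda$ is taken to mean smoothness into the orbit with its manifold structure. This is the one point needing care, and I would handle it by working in a neighbourhood where the orbit is embedded; compact $G_\lambda$ gives closed, hence embedded, orbits. Bijectivity means injectivity, which is faithfulness, plus surjectivity. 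Surjectivity is $\operatorname{im}\bar B_{[\theta]}=\mathfrak p_\rho$, which equals $\operatorname{im}B_\theta=\mathfrak p_\rho$.

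\textbf{Proper image.} If $\operatorname{im}B_\theta\subsetneq\mathfrak p_\rho$, the immersion $\bar\rho$ locally embeds $\cQ$ as a submanifold with tangent space $\mathcal A_\rho(\operatorname{im}B_\theta)$. The descended metric $\bar F$ equals $\bar B^*M_\rho\bar B$, which is the restriction of the orbit Fisher metric to this image. The quotient QNG is then the Riemannian gradient for this induced metric, which is the projection of the orbit gradient onto the reachable submanifold rather than the orbit gradient itself.
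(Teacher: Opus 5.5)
Your proof is correct and follows the same route the paper intends. The paper leaves the corollary as an immediate consequence of \cref{thm:circuit_orbit_factorization}, read through $B_\theta=\bar B_{[\theta]}\circ d\pi_\theta$: injectivity of $\bar B_{[\theta]}$ is exactly $\ker B_\theta=\cV_\theta$ (faithfulness), bijectivity additionally requires $\operatorname{im}B_\theta=\mathfrak p_\rho$, and the proper-image remark is the $M_\rho$-orthogonal projection of \cref{prop:induced_orbit_update}. Your aside on compactness of $G_\lambda$ is not needed, and it can fail when $\mathfrak g_\lambda$ has a center; the inverse function theorem only requires $\bar\rho$ to be smooth into $\cO_\lambda=G_\lambda/H_\rho$ with its own manifold structure, which is the standing hypothesis you already invoke.
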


The factorization also determines the state-space motion produced by a parameter-space pseudoinverse.
Let an orbit objective $\mathcal J$ have differential represented by $b_\rho\in\mathfrak p_\rho$,
\begin{equation}
d\mathcal J_\rho(\mathcal A_\rho X)=\langle b_\rho,X\rangle_{\mathrm{tr}},
\qquad
\nabla_\theta\cL=B_\theta^*b_\rho.
\end{equation}

\begin{proposition}
\label{prop:induced_orbit_update}
For
\begin{equation}
\delta\theta=-(B_\theta^*M_\rho B_\theta)^+B_\theta^*b_\rho,
\end{equation}
the induced orbit generator is
\begin{equation}
B_\theta\delta\theta
=-M_\rho^{-1/2}P_{\operatorname{im}(M_\rho^{1/2}B_\theta)}M_\rho^{-1/2}b_\rho.
\label{eq:induced_orbit_update}
\end{equation}
Thus the circuit implements the $M_\rho$-orthogonal projection of the full orbit natural gradient onto the reachable tangent subspace.
It equals the full orbit natural-gradient generator $-M_\rho^{-1}b_\rho$ if and only if
\begin{equation}
M_\rho^{-1}b_\rho\in\operatorname{im}B_\theta.
\label{eq:objective_transfer_condition}
\end{equation}
In particular, if $B_\theta$ is surjective, then
\begin{equation}
B_\theta(B_\theta^*M_\rho B_\theta)^+B_\theta^*=M_\rho^{-1},
\qquad
\delta\theta=-B_\theta^+M_\rho^{-1}b_\rho.
\label{eq:surjective_orbit_update}
\end{equation}
\end{proposition}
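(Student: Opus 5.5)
The plan is to reduce everything to a standard least-squares identity by whitening with $M_\rho^{1/2}$. Set $W:=M_\rho^{1/2}B_\theta:T_\theta\Theta\to\mathfrak p_\rho$. Here the adjoint is taken with respect to the Euclidean inner product on $T_\theta\Theta$ and the trace inner product on $\mathfrak p_\rho$. Since $M_\rho$ is positive definite and self-adjoint on $\mathfrak p_\rho$ (it is a Gram operator by \cref{eq:intrinsic_fisher_operator}), $M_\rho^{\pm1/2}$ are well defined and self-adjoint. Then $F(\theta)=W^*W$ by \cref{thm:circuit_orbit_factorization}, and $B_\theta^*b_\rho=W^*c$ with $c:=M_\rho^{-1/2}b_\rho$. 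The update becomes
\begin{equation}
\delta\theta=-(W^*W)^+W^*c=-W^+c,
\end{equation}
using the Moore--Penrose identity $(W^*W)^+W^*=W^+$. This identity I will justify from the SVD of $W$, or by checking the Penrose conditions.

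Next I compute $W\delta\theta=-WW^+c=-P_{\operatorname{im}W}c$, because $WW^+$ is the orthogonal projector onto $\operatorname{im}W$. Applying $M_\rho^{-1/2}$ to both sides gives $B_\theta\delta\theta=M_\rho^{-1/2}W\delta\theta$, and substituting back $c=M_\rho^{-1/2}b_\rho$ yields \cref{eq:induced_orbit_update}. To interpret the result, note that the map $X\mapsto M_\rho^{1/2}X$ is an isometry from $(\mathfrak p_\rho,\langle\cdot,M_\rho\cdot\rangle_{\mathrm{tr}})$ onto $(\mathfrak p_\rho,\langle\cdot,\cdot\rangle_{\mathrm{tr}})$, and it carries $\operatorname{im}B_\theta$ onto $\operatorname{im}W$. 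Hence $M_\rho^{-1/2}P_{\operatorname{im}W}M_\rho^{1/2}$ is the $M_\rho$-orthogonal projector onto $\operatorname{im}B_\theta$. Writing the right-hand side of \cref{eq:induced_orbit_update} as $M_\rho^{-1/2}P_{\operatorname{im}W}M_\rho^{1/2}\bigl(-M_\rho^{-1}b_\rho\bigr)$ then shows that it is exactly the $M_\rho$-projection of the full orbit natural-gradient generator $-M_\rho^{-1}b_\rho$.

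The equality criterion \cref{eq:objective_transfer_condition} follows because a projection fixes a vector if and only if that vector lies in its range, and the range here is $\operatorname{im}B_\theta$. For the surjective case, $\operatorname{im}B_\theta=\mathfrak p_\rho$ forces $\operatorname{im}W=\mathfrak p_\rho$, so $P_{\operatorname{im}W}=I$. This gives $B_\theta(B_\theta^*M_\rho B_\theta)^+B_\theta^*=M_\rho^{-1}$ as an operator identity, since $b_\rho$ was arbitrary. For the formula $\delta\theta=-B_\theta^+M_\rho^{-1}b_\rho$, surjectivity of $B_\theta$ gives $W^+=B_\theta^+M_\rho^{-1/2}$. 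I will verify this either through the Penrose conditions, or by noting that both sides produce the minimum-norm solution of $B_\theta x=M_\rho^{-1/2}y$. Both sides lie in $(\ker B_\theta)^\perp=(\ker W)^\perp$ and solve the same equation, so they agree. Therefore $\delta\theta=-W^+c=-B_\theta^+M_\rho^{-1}b_\rho$.

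The main obstacle is the bookkeeping of adjoints and pseudoinverses across the two different inner products, together with the identity $W^+=B_\theta^+M_\rho^{-1/2}$, which needs surjectivity. For non-surjective $B_\theta$ this identity fails in general. That is why the surjective case will be handled through the minimum-norm characterization rather than through a naive product rule for pseudoinverses.
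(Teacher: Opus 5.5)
Your proposal is correct and follows the paper's own argument. Both whiten with $W=M_\rho^{1/2}B_\theta$, reduce the induced generator to the orthogonal projector $W(W^*W)^+W^*=P_{\operatorname{im}W}$, read off the transfer condition from $M_\rho^{-1/2}b_\rho\in\operatorname{im}W$, and treat the surjective case through the minimum-Euclidean-norm characterization. Your explicit check that $M_\rho^{-1/2}P_{\operatorname{im}W}M_\rho^{1/2}$ is the $M_\rho$-orthogonal projector onto $\operatorname{im}B_\theta$ also supplies the justification for the interpretive sentence, which the paper states without proof.
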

\begin{proof}
Set $A=M_\rho^{1/2}B_\theta$ and use
$A(A^*A)^+A^*=P_{\operatorname{im}A}$.
Equality in \cref{eq:induced_orbit_update} with the full generator $-M_\rho^{-1}b_\rho$ holds exactly when $M_\rho^{-1/2}b_\rho\in\operatorname{im}(M_\rho^{1/2}B_\theta)$, equivalently when \cref{eq:objective_transfer_condition} holds.
Surjectivity makes this condition automatic for every $b_\rho$; the last expression is then the minimum-Euclidean-norm solution of $B_\theta\delta\theta=-M_\rho^{-1}b_\rho$.
\end{proof}

\begin{corollary}
\label{cor:isotropic_spectral_transfer}
If $M_\rho=\alpha I_{\mathfrak p_\rho}$ with $\alpha>0$, then
\begin{equation}
F(\theta)=\alpha B_\theta^*B_\theta,
\qquad
\lambda_j^+(F)=\alpha\sigma_j(B_\theta)^2,
\qquad
\kappa^+(F)=\kappa(B_\theta)^2.
\label{eq:isotropic_spectral_transfer}
\end{equation}
The parameter-space QNG is a scalar multiple of the Euclidean horizontal gradient for every objective at $\theta$ if and only if
\begin{equation}
B_\theta^*B_\theta=s(\theta)^2P_{\mathsf H_\theta}.
\label{eq:conformal_chart_criterion}
\end{equation}
The rescaling is parameter independent on a region only when $s(\theta)$ is constant there.
\end{corollary}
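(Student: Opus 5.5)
We substitute $M_\rho=\alpha I_{\mathfrak p_\rho}$ into the factorization of \cref{thm:circuit_orbit_factorization}, which gives $F(\theta)=B_\theta^*M_\rho B_\theta=\alpha B_\theta^*B_\theta$ immediately. Here $B_\theta^*$ is the adjoint with respect to the Euclidean inner product on $T_\theta\Theta$ and the trace inner product on $\mathfrak p_\rho$. For the spectral statement we take a singular value decomposition of $B_\theta$ between these two inner-product spaces. Its right singular vectors diagonalize $B_\theta^*B_\theta$ with eigenvalues $\sigma_j(B_\theta)^2$, so the nonzero eigenvalues of $F$ are exactly $\alpha\sigma_j(B_\theta)^2$, with $\sigma_j$ ranging over the nonzero singular values. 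The reduced condition number $\kappa^+(F)=\lambda^+_{\max}/\lambda^+_{\min}$ is then $\sigma_{\max}^2/\sigma_{\min}^2=\kappa(B_\theta)^2$, where $\kappa(B_\theta)$ is read on nonzero singular values. The factor $\alpha$ cancels.

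For the conformal criterion we write the QNG step as $F^+\nabla\cL$. By \cref{eq:pinv_decomp}, $F^+=(F\vert_{\mathsf H_\theta})^{-1}P_{\mathsf H}$, and the Euclidean horizontal gradient is $P_{\mathsf H}\nabla\cL$. The claim is that $F^+g=c\,P_{\mathsf H}g$ for all admissible gradients $g$ if and only if $B_\theta^*B_\theta=s^2P_{\mathsf H}$.

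The \emph{if} direction is direct. If $B_\theta^*B_\theta=s^2P_{\mathsf H}$, then $F=\alpha s^2P_{\mathsf H}$, so $F^+=(\alpha s^2)^{-1}P_{\mathsf H}$ and the QNG step is a scalar multiple of the horizontal gradient.

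For the \emph{only if} direction we need enough objectives. By \cref{prop:induced_orbit_update}, admissible gradients have the form $\nabla\cL=B_\theta^*b_\rho$, whose range is $\operatorname{im}B_\theta^*=(\ker B_\theta)^\perp=\mathsf H_\theta$, using $\ker F=\ker B_\theta$. To realize every $b_\rho\in\mathfrak p_\rho$ we take linear orbit objectives such as expectation values of suitable observables, or more generally smooth functions on the orbit with prescribed differential at $\rho$. The requirement $F^+h=c(h)h$ for all $h\in\mathsf H_\theta$ makes every vector of $\mathsf H_\theta$ an eigenvector of the symmetric operator $F^+\vert_{\mathsf H}$. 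The standard argument then applies: if two independent eigenvectors had different eigenvalues, their sum would not be an eigenvector. So $c$ is constant, $F\vert_{\mathsf H}$ is a scalar, and $F=\alpha s^2P_{\mathsf H}$. Dividing by $\alpha$ gives \cref{eq:conformal_chart_criterion}.

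The final remark follows because the proportionality constant is $(\alpha s(\theta)^2)^{-1}$. It is the same on a region exactly when $s(\theta)$ is constant there, assuming $\alpha$ is fixed; otherwise one reads the statement for the combination $\alpha s^2$.

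The main obstacle is the \emph{only if} direction, specifically justifying that the orbit objectives realize a spanning set of differentials $b_\rho\in\mathfrak p_\rho$. The planned justification uses observables $O=X\in\mathfrak g_\lambda$ with $d\mathcal J(\mathcal A_\rho Y)=\Tr(X(-i[Y,\rho]))$. This pairing is nondegenerate in $Y\in\mathfrak p_\rho$, so the induced $b_\rho$ range over all of $\mathfrak p_\rho$. Interpreting ``for every objective'' as including arbitrary smooth orbit functions avoids this issue entirely.
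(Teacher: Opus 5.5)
Your route is the paper's: substitute $M_\rho=\alpha I$ into $F=B_\theta^*M_\rho B_\theta$, read the positive spectrum off the singular values of $B_\theta$, and use $\mathsf H_\theta=\operatorname{im}B_\theta^*$ to see that $F^+$ acts as a scalar on all horizontal gradients exactly when the positive eigenvalues of $B_\theta^*B_\theta$ coincide. Your eigenvector argument, and your remark that every $h\in\mathsf H_\theta$ must actually occur as an objective gradient, spell out what the paper's two-line proof leaves implicit.

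The one step to correct is your proposed realization of all $b_\rho$ by observables $X\in\mathfrak g_\lambda$. The nondegeneracy you claim for that pairing is false in general. Indeed $\Tr(X\,\mathcal A_\rho Y)=\Tr(\rho\,[X,Y]_{\mathrm H})$ depends on $\rho$ only through its moment-map image $\mu$. It vanishes for every $X\in\mathfrak g_\lambda$ whenever $[Y,\mu]_{\mathrm H}=0$, and such a nonzero $Y\in\mathfrak p_\rho$ exists as soon as the adjoint stabilizer of $\mu$ is strictly larger than $\mathfrak h_\rho$.

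A counterexample satisfying the corollary's hypotheses is $\mathfrak{su}(2)$ in the spin-$1$ representation with $\rho=\ket{m=0}\!\bra{m=0}$. Here $\mathfrak p_\rho=\spanop\{J_x,J_y\}$ and $M_\rho=2I$ is isotropic, yet $\mu(\rho)=0$. Hence every objective $\Tr(X\rho)$ with $X\in\mathfrak g_\lambda$ has zero differential at $\rho$.

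Use arbitrary Hermitian observables instead, which the paper's setup allows with any $O=O^\dagger$. Then $b_\rho=\mathcal A_\rho^*O$, where $^*$ is the trace adjoint. Since $\mathcal A_\rho$ is injective, this adjoint is onto $\mathfrak p_\rho$; for instance $O=\mathcal A_\rho Y$ gives $\Tr(O\,\mathcal A_\rho Y)>0$. Your fallback via smooth orbit functions with prescribed differential also works.

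Finally, your caveat about a varying $\alpha$ is unnecessary. Unitary invariance of the SLD metric and of the trace form gives $M_{g\rho g^\dagger}=\mathrm{Ad}_gM_\rho\mathrm{Ad}_g^{-1}$, so $\alpha$ is constant along $\cO_\lambda$.
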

\begin{proof}
The spectral identities follow from \cref{eq:circuit_orbit_factorization}.
On $\mathsf H_\theta=\operatorname{im}B_\theta^*$, the pseudoinverse acts as a scalar on every horizontal gradient exactly when all positive eigenvalues of $B_\theta^*B_\theta$ coincide, which is \cref{eq:conformal_chart_criterion}.
\end{proof}

Intrinsic isotropy therefore does not imply isotropy in arbitrary gate coordinates.
The constant-rescaling benchmark is recovered in a trace-orthonormal full orbit frame, for which $B_\theta=P_{\mathfrak p_\rho}$.

\begin{theorem}
\label{thm:conditioning_similarity}
Let $q_*\in \cQ$ map to $\rho_*\in\cO_\lambda$, assume that $\bar\rho$ is a local diffeomorphism at $q_*$, and let $d\mathcal J_{\rho_*}=0$.
Write $M_*$ for the intrinsic Fisher operator, $H_{\cO,*}$ for the Hessian bilinear form of $\mathcal J$, and $B_*:T_{q_*}\cQ\to\mathfrak p_{\rho_*}$ for the circuit-to-orbit differential.
Then
\begin{equation}
F_{\cQ,*}=B_*^*M_*B_*,
\qquad
H_{\cQ,*}=B_*^*H_{\cO,*}B_*,
\label{eq:metric_hessian_congruence}
\end{equation}
and
\begin{equation}
F_{\cQ,*}^{-1}H_{\cQ,*}=B_*^{-1}M_*^{-1}H_{\cO,*}B_*.
\label{eq:qng_similarity}
\end{equation}
Hence the linearized quotient-QNG operator is similar to the intrinsic operator $M_*^{-1}H_{\cO,*}$ and is independent of the singular values of the faithful circuit chart.
\end{theorem}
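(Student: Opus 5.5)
The plan is to work entirely at the critical point and reduce both congruences to the chain rule applied to $\mathcal J\circ\bar\rho$.

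\textbf{Metric congruence.} I will first recall from \cref{thm:quotient_geometry_update} that $F$ descends to $\bar F$ on $\cQ$. On the faithful stratum, $B_\theta$ vanishes on $\cV_\theta$ by \cref{thm:circuit_orbit_factorization}. It therefore descends to $B_*:T_{q_*}\cQ\to\mathfrak p_{\rho_*}$ with $B_*\circ d\pi=B_\theta$. Pushing \cref{eq:circuit_orbit_factorization} through $d\pi$ then gives $F_{\cQ,*}=B_*^*M_*B_*$. Here the adjoint of $B_*$ is taken with respect to whatever auxiliary inner product on $T_{q_*}\cQ$ is used to write operators; the final similarity statement is independent of that choice. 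Since $\bar\rho$ is a local diffeomorphism at $q_*$, \cref{cor:quotient_orbit_local_geometry} says $B_*$ is injective with image all of $\mathfrak p_{\rho_*}$. As $\dim T_{q_*}\cQ=\dim\cO_\lambda=\dim\mathfrak p_{\rho_*}$, $B_*$ is a linear isomorphism.

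\textbf{Hessian congruence.} I will write the quotient objective as $\bar\cL=\mathcal J\circ\bar\rho$ and use the identity $d\bar\rho_{q}=\mathcal A_{\rho}\circ B_q$. The Hessian bilinear form $H_{\cO,*}$ is pulled back to $\mathfrak p_{\rho_*}$ via $\mathcal A_{\rho_*}$, so that $H_{\cO,*}(X,Y)=\mathrm{Hess}\,\mathcal J(\mathcal A X,\mathcal A Y)$. In a local chart the second-order chain rule reads
\begin{equation}
\mathrm{Hess}(\mathcal J\circ\bar\rho)(u,w)=\mathrm{Hess}\,\mathcal J(d\bar\rho\,u,d\bar\rho\,w)+d\mathcal J\bigl(D^2\bar\rho(u,w)\bigr).
\end{equation}
At a critical point, $d\mathcal J_{\rho_*}=0$, so the second term vanishes. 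The Hessian is then tensorial, i.e.\ chart independent, and the first term is exactly $\langle B_*u,H_{\cO,*}B_*w\rangle_{\mathrm{tr}}$. This gives $H_{\cQ,*}=B_*^*H_{\cO,*}B_*$. I expect this to be the main point needing care: one must state that both Hessians are the coordinate-free second derivatives available at critical points, so that no connection or curvature terms enter. This is precisely why the hypothesis $d\mathcal J_{\rho_*}=0$ is imposed.

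\textbf{Similarity.} With $B_*$ and $M_*$ invertible, I compute
\begin{equation}
F_{\cQ,*}^{-1}H_{\cQ,*}=B_*^{-1}M_*^{-1}(B_*^*)^{-1}B_*^*H_{\cO,*}B_*=B_*^{-1}M_*^{-1}H_{\cO,*}B_*,
\end{equation}
which is \cref{eq:qng_similarity}. In this expression the auxiliary inner product enters only through $B_*^*$ and cancels. Similar operators share their spectrum, so the linearized QNG eigenvalues, and hence their condition number, coincide with those of $M_*^{-1}H_{\cO,*}$. Replacing $B_*$ by any other isomorphism, for instance one with different singular values, changes only the conjugating map. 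This establishes the claimed independence from the singular values of the faithful chart.
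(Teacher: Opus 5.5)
Your proposal is correct and follows the paper's route. The paper's proof has the same three steps: the metric identity comes from the factorization theorem applied on the quotient, which is equivalent to using $B_*=\mathcal A_{\rho_*}^{-1}\circ d\bar\rho_{q_*}$; the Hessian congruence comes from the second-order chain rule once the $d\mathcal J_{\rho_*}$ term drops; and the similarity follows by direct inversion since $B_*$ is an isomorphism. Your extra remarks are accurate spellings-out of points the paper leaves implicit, namely that the auxiliary inner product on $T_{q_*}\cQ$ cancels through $B_*^*$ and that the Hessian is chart-independent at a critical point.
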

\begin{proof}
The first metric identity is \cref{thm:circuit_orbit_factorization} on the quotient.
At a critical point the second-derivative chain rule loses the term containing $d\mathcal J_{\rho_*}$, which gives the Hessian congruence.
Because $B_*$ is an isomorphism, direct inversion gives \cref{eq:qng_similarity}.
\end{proof}

QNG removes conditioning caused solely by a faithful change of circuit coordinates.
It does not remove the eigenvalue spread of the intrinsic objective operator $M_*^{-1}H_{\cO,*}$; the excitation-gap dependence derived later is one instance of this physical limitation.

\subsection{Slater and Grassmann quotient}

Slater determinants realize the quotient and transfer mechanisms in a familiar matrix model: orbital-frame rotations are exact gauge directions, and the physical state is the occupied subspace.
Let $Q=(q_1,\ldots,q_k)\in\operatorname{St}(k,n)$ and define
\begin{equation}
\ket{\Psi(Q)}=q_1\wedge\cdots\wedge q_k.
\end{equation}
The right action $Q\mapsto QW$, $W\in U(k)$, changes this vector only by $\det W$, so it preserves the quantum ray and gives the exact quotient
\begin{equation}
\operatorname{St}(k,n)/U(k)\simeq\operatorname{Gr}(k,n).
\end{equation}
Write a Stiefel tangent as
\begin{equation}
\dot Q=Q\Omega+Q_\perp K,
\qquad \Omega^\dagger=-\Omega.
\end{equation}
The first term is vertical and the second is Grassmann-horizontal.
For $P=QQ^\dagger$,
\begin{equation}
\dot P=Q_\perp KQ^\dagger+QK^\dagger Q_\perp^\dagger,
\end{equation}
and \cref{eq:QFIM_pure} gives
\begin{equation}
F(\dot Q_1,\dot Q_2)=2\Tr(\dot P_1\dot P_2)
=4\operatorname{Re}\Tr(K_1^\dagger K_2).
\label{eq:slater_grassmann_metric}
\end{equation}
For gate coordinates $\vartheta$ with horizontal Jacobian $J_\vartheta$, the pullback is $F_\vartheta=4J_\vartheta^{\mathsf T}J_\vartheta$ in this normalization.
Slater circuits therefore provide a concrete case in which the intrinsic metric is isotropic while gate-coordinate conditioning is determined entirely by the circuit chart.

\section{Exact Fisher geometry on symmetry-reduced orbits}
\label{sec:exact_spectrum}

The transfer law reduces circuit-level Fisher geometry to the intrinsic operator on a selected sector orbit. Full control supplies a rigid projective-space benchmark. Under irreducible restricted control, a moment-map identity makes the total Fisher weight the quadratic generalized-entanglement purity deficit, while the root decomposition resolves how that weight is distributed among physical tangent directions. Minuscule representations equalize the active root-plane scales; for a simple simply-laced orbit algebra, this gives a single trace-normalized Fisher scale.

\subsection{Full-control orbit-frame benchmark}
\label{subsec:setup_sector}

Fix an effective training sector $\cK_{\lambda}$ associated with a chosen physical symmetry label $\lambda$, and assume
\begin{equation}
\dim\cK_{\lambda}=D,\qquad
\mathfrak g_{\lambda}=\mathfrak{su}(\cK_{\lambda}),\qquad
\Tr(T_aT_b)=\delta_{ab}.
\label{eq:sector_core_assumptions}
\end{equation}
In the commutant setting of \cref{sec:rep}, this sector is typically the multiplicity space $\bbC^{m_{\lambda}}$ inside the $\lambda$-isotypic component.
Let $\mathfrak g_{\lambda}$ be the restricted dynamical Lie algebra generated by the available sector-preserving control Hamiltonians in the Hermitian convention fixed in \cref{subsec:circuit_orbit_transfer}, and let $G_{\lambda}\subseteq U(\cK_{\lambda})$ be its connected dynamical group.
The state is a pure vector $\ket{\psi}\in\cK_{\lambda}$.
The equality $\mathfrak g_{\lambda}=\mathfrak{su}(\cK_{\lambda})\cong \mathfrak{su}(D)$ is the full-control condition.
Under this condition, the restricted dynamics acts transitively on rays in the sector, so the reachable pure-state manifold from any reference ray is the whole projective space $\bbC\mathrm{P}^{D-1}$.
This is the regime in which the entire sector orbit is available, and its real dimension $2D-2$ is also the maximal possible rank of a pure-state QFIM on the sector.

Let $\{T_a\}_{a=1}^{D^2-1}$ be an orthonormal basis of $\mathfrak{su}(D)$ consisting of traceless Hermitian generators and satisfying the normalization in \cref{eq:sector_core_assumptions}.
We regard these generators as acting on $\cK_{\lambda}$ and, by extension, on $\cH$ with support on $\cK_{\lambda}$.
In these coordinates the pure-state QFIM is
\begin{equation}
F_{ab}(\psi)
=4\,\mathrm{Re}\!\left(\bra{\psi}T_aT_b\ket{\psi}-\bra{\psi}T_a\ket{\psi}\,\bra{\psi}T_b\ket{\psi}\right)
=2\,\bra{\psi}\left\{\Delta T_a,\Delta T_b\right\}\ket{\psi}.
\label{eq:qfim_suD}
\end{equation}

Wilson et al. established the orbitwise spectral invariance and the full-observable pure-state fixed-value/zero structure for a QFIM built from a Hilbert--Schmidt-orthonormal Lie-algebra basis \cite{Wilson2026GeometricInvariants}.  In the normalization of \cref{eq:qfim_suD}, their result gives
\begin{equation}
\operatorname{spec}F(\psi)
=\left\{
\underbrace{0,\dots,0}_{(D-1)^{2}\ \mathrm{times}},\,
\underbrace{2,\dots,2}_{(2D-2)\ \mathrm{times}}
\right\},
\qquad
\rank F(\psi)=2D-2,
\qquad
F(\psi)|_{\mathrm H}=2I_{\mathrm H}.
\label{eq:full_control_spectrum}
\end{equation}
Here $\mathrm H$ is the horizontal tangent space of
$\bbC\mathrm{P}^{D-1}\cong SU(D)/S(U(1)\times U(D-1))$.

In this trace-orthonormal orbit frame, the Fisher pseudoinverse is a constant preconditioner on the physical tangent space.

\begin{corollary}
\label{cor:free_ng}
Let $\cL$ be a smooth objective that depends on a pure state $\ket{\psi}\in\cK_{\lambda}$ only through the ray $\bbC\ket{\psi}$,
and let $F(\psi)$ be the pure-state QFIM in the $\mathfrak{su}(D)$ generator coordinates.
In the full-control setting of \cref{eq:full_control_spectrum}, the Euclidean gradient of $\cL$ is orthogonal to $\ker(F(\psi))$ and therefore lies in $\mathrm{range}(F(\psi))=(\ker(F(\psi)))^{\perp}$.
Consequently the quotient-space QNG direction satisfies
\begin{equation}
F(\psi)^{+}\nabla \cL=\frac{1}{2}\,\nabla \cL,
\end{equation}
and the QNG update is $\Delta\theta_{\mathrm{QNG}}=-(\eta/2)\nabla \cL$ in the generator coordinates.
Thus QNG is the constant rescaling $1/2$ in this specific trace-orthonormal Lie-generator frame.
\end{corollary}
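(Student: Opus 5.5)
The plan is to treat the $\mathfrak{su}(D)$ generator coordinates $\theta_a$ as a local chart in which $\partial_a\ket{\psi}=-iT_a\ket{\psi}$. In this chart $\cL(\theta)=\cL(\ket{\psi(\theta)})$, and both the kernel of $F(\psi)$ and the gradient can be read directly from the generator action. The argument has three steps: identify $\ker F(\psi)$, show that $\nabla\cL$ is orthogonal to it, and then apply the spectral statement \cref{eq:full_control_spectrum}.

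\textbf{Step 1: the kernel.} For a vector $v$, set $X_v=\sum_a v_aT_a$. By \cref{eq:qfim_suD}, $v^{\mathsf T}F(\psi)v=4\|\Delta X_v\ket{\psi}\|^2$. Hence $v\in\ker F(\psi)$ exactly when $X_v\ket{\psi}=c\ket{\psi}$ for some real $c$. In that case the one-parameter motion $e^{-itX_v}\ket{\psi}=e^{-itc}\ket{\psi}$ only changes the global phase, so it leaves the ray fixed. This is the state-level kernel $\ker B_\theta$ of \cref{thm:circuit_orbit_factorization}.

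\textbf{Step 2: orthogonality of the gradient.} Since $\cL$ depends only on the ray, we have $\partial_t\cL(e^{-itX_v}\ket{\psi})|_{t=0}=0$ for every $v\in\ker F(\psi)$. This says $v^{\mathsf T}\nabla\cL=0$, so $\nabla\cL\perp\ker F(\psi)$. Because $F(\psi)$ is symmetric, $(\ker F)^{\perp}=\mathrm{range}\,F$, and therefore $\nabla\cL\in\mathrm{range}\,F(\psi)=\mathrm H$.

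\textbf{Step 3: the pseudoinverse.} By \cref{eq:full_control_spectrum}, $F(\psi)|_{\mathrm H}=2I_{\mathrm H}$. The decomposition \cref{eq:pinv_decomp} then gives $F^+=\tfrac12 P_{\mathrm H}$. Applied to the gradient, this yields $F^+\nabla\cL=\tfrac12\nabla\cL$, and consequently $\Delta\theta_{\mathrm{QNG}}=-(\eta/2)\nabla\cL$.

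The step needing the most care is identifying $\mathrm H$ with the orthogonal complement of the kernel as used in \cref{eq:full_control_spectrum}. If I wanted to avoid citing Wilson et al. for this, I would verify it through \cref{cor:isotropic_spectral_transfer}. In the trace-orthonormal full frame we have $B_\theta=P_{\mathfrak p_\rho}$. So it would suffice to check that $M_\rho=2I$ on $\mathfrak p_\rho$, which is the tangent space of $\bbC\mathrm P^{D-1}$. To check this, take $\ket{\psi}=e_1$. Then $\mathfrak p_\rho$ is spanned by the normalized off-diagonal generators $(E_{1j}+E_{j1})/\sqrt2$ and $i(E_{1j}-E_{j1})/\sqrt2$. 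A direct evaluation of $4\langle\Delta T_a\Delta T_b\rangle$ on these generators gives $2\delta_{ab}$. Homogeneity of the orbit under $SU(D)$, together with the fact that the trace inner product is invariant under conjugation, then extends the result from $e_1$ to every $\ket{\psi}$.
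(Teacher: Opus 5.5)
Your proposal is correct and follows the paper's proof essentially step for step. Kernel vectors are exactly those with $X_v\ket{\psi}=c\ket{\psi}$, ray-invariance of $\cL$ forces $v^{\mathsf T}\nabla\cL=0$ on them, and $F(\psi)=2P_{\mathrm H}$ from \cref{eq:full_control_spectrum} gives $F(\psi)^+=\tfrac12P_{\mathrm H}$. Your optional check of the cited spectrum is also sound, but evaluate $4\,\mathrm{Re}\langle\Delta T_a\Delta T_b\rangle$ as in \cref{eq:qfim_suD}: without the real part, the $(X_j,Y_j)$ cross entry is purely imaginary and nonzero rather than $0$.
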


\begin{proof}
Let $V\in\ker(F(\psi))$ and write $V=\sum_{a} v_{a}T_{a}$.
For pure states, $V\in\ker(F(\psi))$ is equivalent to $\Var_{\psi}(V)=0$, which holds if and only if
$V\ket{\psi}=c\ket{\psi}$ for some real $c$.
The unitary curve $e^{-itV}\ket{\psi}$ therefore remains on the same ray $\bbC\ket{\psi}$.
Since $\cL$ depends only on the ray, the function $t\mapsto \cL(e^{-itV}\ket{\psi})$ is constant and its derivative at $t=0$ vanishes.
In generator coordinates, this derivative equals the directional derivative of $\cL$ along $V$, which is $\sum_{a} v_{a}\,\partial_{a}\cL$.
Hence $\sum_{a} v_{a}\,\partial_{a}\cL=0$ for every $V\in\ker(F(\psi))$, proving that $\nabla\cL\perp\ker(F(\psi))$.
By \cref{eq:full_control_spectrum}, $F=2P_{\mathrm H}$ on the horizontal tangent space, so $F^{+}=\frac12P_{\mathrm H}$ and the displayed update follows.
\end{proof}

For an expectation-value objective $\cL(\psi)=\bra{\psi}O_{\lambda}\ket{\psi}$, one has
$\partial_{a}\cL = i\,\bra{\psi}[T_{a},O_{\lambda}]\ket{\psi}$.
The cited spectrum \cref{eq:full_control_spectrum} captures a geometric rigidity of $\bbC\mathrm{P}^{D-1}$ as a rank-one compact Hermitian symmetric space in the full-control sector regime \cref{eq:sector_core_assumptions}.
The normalization in \cref{eq:sector_core_assumptions} fixes the numerical value $2$.
With alternative conventions (for example $\Tr(T_{a}T_{b})=\frac{1}{2}\delta_{ab}$),
the constant rescales accordingly.
For a general gate-coordinate chart, \cref{thm:circuit_orbit_factorization,cor:isotropic_spectral_transfer} instead give
\begin{equation}
F_\theta=2B_\theta^*B_\theta,
\qquad
\lambda_j^+(F_\theta)=2\sigma_j(B_\theta)^2.
\end{equation}
The constant-rescaling statement therefore survives pullback exactly for conformal circuit charts.

\subsection{Orbit invariants and highest-weight flag geometry}
\label{sec:trace-moment}

The exact spectrum in \cref{eq:full_control_spectrum} is special to full control.
Under irreducible partial control, individual eigenvalues may vary while their sum remains an exact function of the moment map on every group orbit.

\begin{proposition}
\label{prop:trace-moment}
Let $\mathfrak g_\lambda$ act irreducibly on $\cK_\lambda$ through Hermitian generators $\{T_a\}_{a=1}^{\dim\mathfrak g_\lambda}$ with $\Tr(T_aT_b)=\delta_{ab}$, and let $c_\lambda$ be the Casimir scalar defined by
\begin{equation}
\sum_a T_a^2=c_\lambda I.
\end{equation}
Define the moment map $\mu\colon \mathbb P(\cK_\lambda)\to\bbR^{\dim\mathfrak g_\lambda}$ by
\begin{equation}
\mu_a(\psi)=\bra{\psi}T_a\ket{\psi}.
\end{equation}
Then for every pure state $\ket{\psi}\in\cK_\lambda$,
\begin{equation}
\Tr F(\psi)=4\bigl(c_\lambda-\|\mu(\psi)\|^2\bigr).
\label{eq:trace-moment}
\end{equation}
\end{proposition}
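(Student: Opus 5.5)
The plan is a direct trace computation from the generator-covariance form \cref{eq:qfim_suD}, which applies verbatim to any Hermitian basis $\{T_a\}$ of $\mathfrak g_\lambda$ (not only $\mathfrak{su}(D)$). Take a normalized $\ket{\psi}$. The diagonal entries are
\begin{equation}
F_{aa}(\psi)=4\bigl(\bra{\psi}T_a^2\ket{\psi}-\bra{\psi}T_a\ket{\psi}^2\bigr)=4\Var_\psi(T_a).
\end{equation}
The real part is harmless here because $\bra{\psi}T_a^2\ket{\psi}$ is real for Hermitian $T_a$.

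Summing over $a$ gives
\begin{equation}
\Tr F(\psi)=4\sum_a\bra{\psi}T_a^2\ket{\psi}-4\sum_a\mu_a(\psi)^2.
\end{equation}
The first sum is $\bra{\psi}\sum_aT_a^2\ket{\psi}=c_\lambda$, using the Casimir hypothesis and $\braket{\psi}{\psi}=1$. The second sum is $\|\mu(\psi)\|^2$ by definition, and \cref{eq:trace-moment} follows.

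The only step needing justification is that $\sum_aT_a^2$ is a scalar, and that this scalar and the trace are basis independent. First, $\sum_aT_a^2$ is the quadratic Casimir for the invariant form $\Tr(XY)$ restricted to $\mathfrak g_\lambda$. Using the $\mathrm{ad}$-invariance of the trace form, which makes the structure constants totally antisymmetric in a trace-orthonormal basis, one checks that $\sum_aT_a^2$ commutes with every $T_b$. Schur's lemma then gives $c_\lambda I$ under irreducibility, though the proposition already states this as the definition of $c_\lambda$. Second, $\Tr F$ is unchanged under orthogonal changes of the trace-orthonormal basis, since $F\mapsto O^{\mathsf T}FO$, and $\|\mu\|^2$ is likewise invariant. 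So the identity is intrinsic to the orbit.

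I expect no real obstacle. The one point to handle carefully is the normalization: the factor $4$ and the trace-orthonormality must match \cref{eq:qfim_suD}. As a consistency check, under full control $\mathfrak{su}(D)$ one has $c_\lambda=(D^2-1)/D$ and $\|\mu\|^2=1-1/D$ for every pure state, because $\sum_a\bra{\psi}T_a\ket{\psi}^2=\Tr(\rho^2)-1/D$. This gives $\Tr F=4(D-1)$, which agrees with $2\cdot(2D-2)$ in \cref{eq:full_control_spectrum}.
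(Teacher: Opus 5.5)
Your proposal is correct and is essentially the paper's proof: the diagonal generator-frame entries are $F_{aa}=4\Var_\psi(T_a)$, and summing them with $\sum_a T_a^2=c_\lambda I$ gives $4\bigl(c_\lambda-\|\mu(\psi)\|^2\bigr)$. Your extra remarks are correct but not needed, since $c_\lambda$ is given by hypothesis. These are the Schur-lemma justification of the Casimir, basis invariance of $\Tr F$ and $\|\mu\|^2$, and the full-control check $\Tr F=4(D-1)$.
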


\begin{proof}
By the convention of \cref{eq:QFIM_pure}, the diagonal entries are variances,
\begin{equation}
F_{aa}(\psi)=4\left(\bra{\psi}T_a^2\ket{\psi}-\bra{\psi}T_a\ket{\psi}^2\right).
\end{equation}
Summing over $a$ and using $\sum_aT_a^2=c_\lambda I$ gives \cref{eq:trace-moment}.
\end{proof}

\begin{corollary}
\label{cor:coherent-min}
Let $G_\lambda$ be the connected group generated by $\{\exp(-itX):X\in\mathfrak g_\lambda,\ t\in\bbR\}$.
Then $\|\mu\|^2$ is constant on every $G_\lambda$-orbit, so $\Tr F$ is an orbit invariant.  On an orbit of positive Fisher rank $r$, the mean nonzero QFIM eigenvalue equals
\begin{equation}
\frac{4(c_\lambda-\|\mu\|^2)}{r}.
\end{equation}
The maximizers of $\|\mu(\psi)\|^2$ over unit rays in $\cK_\lambda$ are exactly the generalized-coherent states, namely the highest-weight orbit \cite{DelbourgoFox1977,Perelomov1986,Barnum2003}.
Consequently that orbit is Fisher-minimal among the $G_\lambda$-orbits.
With
\begin{equation}
E_{\mathfrak g_\lambda}(\psi):=c_\lambda-\|\mu(\psi)\|^2,
\end{equation}
\cref{eq:trace-moment} reads $\Tr F=4E_{\mathfrak g_\lambda}$; up to the normalization convention for $\{T_a\}$, this is the quadratic generalized-entanglement purity deficit of Refs.~\cite{Barnum2003,Somma2004}.
\end{corollary}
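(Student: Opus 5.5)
The corollary has four claims, and I will prove them in order: orbit invariance of $\|\mu\|^2$, the mean-eigenvalue formula, the characterization of the maximizers, and Fisher-minimality.

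\emph{Orbit invariance.} Take $g\in G_\lambda$. Conjugation $T_a\mapsto g^\dagger T_a g$ preserves $\mathfrak g_\lambda$, because $G_\lambda$ is generated by exponentials of elements of $\mathfrak g_\lambda$ and therefore acts on it by the adjoint representation. It also preserves the trace form $\Tr(XY)$, so it is an orthogonal map $R(g)$ on $\bbR^{\dim\mathfrak g_\lambda}$ in the basis $\{T_a\}$. This gives $\mu(g\psi)=R(g)\mu(\psi)$, and hence $\|\mu(g\psi)\|^2=\|\mu(\psi)\|^2$. By \cref{prop:trace-moment}, $\Tr F=4(c_\lambda-\|\mu\|^2)$ is then constant on the orbit.

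\emph{Mean eigenvalue.} $F(\psi)$ is positive semidefinite, so its trace is the sum of its $r$ nonzero eigenvalues. Dividing by $r$ gives $4(c_\lambda-\|\mu\|^2)/r$. The rank $r$ is itself orbit-constant because the spectrum is orbit-invariant: $F(g\psi)=R(g)F(\psi)R(g)^{\mathsf T}$ by the same covariance argument.

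\emph{Maximizers.} This is the substantive step. I will not re-derive it; I will invoke the cited characterization of generalized coherent states as the maximizers of $\|\mu\|^2$ \cite{DelbourgoFox1977,Perelomov1986,Barnum2003}. To make the argument self-contained, I would sketch it in three moves:
\begin{enumerate}
\item Complexify and choose a Cartan subalgebra aligned with $\mu(\psi)$. After conjugating by $G_\lambda$, we may take $\mu(\psi)$ in a closed positive Weyl chamber $\mathfrak t_+$.
\item Use $\|\mu\|^2=\sup_{H}\langle\psi|H|\psi\rangle^2$ over unit $H\in\mathfrak g_\lambda$, and bound $\langle\psi|H|\psi\rangle\le\max\operatorname{spec}(H)$. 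For unit $H\in\mathfrak t_+$ this maximum is $\langle\Lambda,H\rangle$, where $\Lambda$ is the highest weight. Hence $\|\mu\|\le\|\Lambda\|$ in the trace normalization.
\item Check the equality case. Equality forces $\psi$ to be a top eigenvector of the Cartan element dual to $\Lambda$, i.e.\ to lie in the highest-weight space up to $G_\lambda$. Conversely, a highest-weight vector attains the bound, since $\mu$ is orthogonal to all root vectors.
\end{enumerate}
The main obstacle is the equality case in step~3. The top eigenspace of $H$ can be larger than the highest-weight line when $H$ lies on a chamber wall. I would handle this by choosing $H$ proportional to $\mu(\psi)$ itself, so that $\langle\psi|H|\psi\rangle=\|\mu\|$ exactly, and then showing that $\|\mu(\psi)\|=\|\Lambda\|$ forces $\mu(\psi)$ to be the Weyl-conjugate of $\Lambda$. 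This follows from strict convexity of the norm on the weight polytope, whose extreme points of maximal norm are the Weyl images of $\Lambda$. Once $\mu(\psi)$ is identified with $\Lambda$, $\psi$ must be the $\Lambda$-weight vector, because the $\Lambda$-weight space is one-dimensional in an irreducible representation.

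\emph{Fisher-minimality and the identification.} Since $\Tr F=4(c_\lambda-\|\mu\|^2)$ and $\|\mu\|^2$ is orbit-constant and maximal exactly on the highest-weight orbit, $\Tr F$ is minimal exactly there. The final identification $\Tr F=4E_{\mathfrak g_\lambda}$ is simply the definition of $E_{\mathfrak g_\lambda}$ substituted into \cref{eq:trace-moment}. Matching it to the purity deficit of \cite{Barnum2003,Somma2004} is a normalization comparison: their purity is $\|\mu\|^2$ rescaled by $c_\lambda$, so the two deficits agree up to that factor.
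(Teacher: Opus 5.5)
Your proposal is correct and follows the paper's proof. Orbit invariance comes from equivariance of $\mu$ together with orthogonality of the adjoint action in the trace-orthonormal basis, and the characterization of the maximizers is taken from the cited variational result. Your covariance remark $F(g\psi)=R(g)F(\psi)R(g)^{\mathsf T}$, which makes $r$ orbit-constant, and your weight-polytope sketch are sound additions that the paper leaves implicit.

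In that sketch, the chamber-wall worry in step~3 never actually arises. Equality in $\langle\psi|H|\psi\rangle\le\Lambda(H)\le\|\Lambda\|$ already forces $H$ to be the normalized trace-dual of $\Lambda$. Every weight satisfies $\|\nu\|\le\|\Lambda\|$, so each $\nu\neq\Lambda$ has $\langle\nu,\Lambda\rangle<\|\Lambda\|^2$, and the top eigenspace of that element is exactly the highest-weight line.

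One correction to your final sentence: the generalized-entanglement references do not rescale the purity by $c_\lambda$. They scale it so that generalized-coherent states have maximal purity one (zero deficit), which amounts to dividing by $\|\Lambda\|^2$. Since $E_{\mathfrak g_\lambda}=\Tr F/4$ is strictly positive on the coherent orbit, the relation is
$E_{\mathfrak g_\lambda}=(c_\lambda-\|\Lambda\|^2)+\|\Lambda\|^2\bigl(1-\|\mu\|^2/\|\Lambda\|^2\bigr)$.
This is a constant offset plus a rescaling, not a pure factor. The discrepancy affects only the terminological identification, not any of the mathematical claims.
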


\begin{proof}
The moment map is equivariant, $\mu(U\psi)=\mathrm{Ad}^{*}_U\mu(\psi)$ for $U\in G_\lambda$, and the coadjoint action is orthogonal for the trace inner product in which $\{T_a\}$ is orthonormal.
Thus $\|\mu\|$ is preserved along each orbit.
The coherent-state minimal-uncertainty statement follows from the standard variational characterization of highest-weight vectors \cite{DelbourgoFox1977,Perelomov1986,Barnum2003}.
\end{proof}

For full control, $\mathfrak g_\lambda=\mathfrak{su}(D)$ acting on $\cK_\lambda\simeq\bbC^D$, one has $c_\lambda=(D^2-1)/D$ and $\|\mu(\psi)\|^2=1-1/D$ for every pure state.
Then \cref{eq:trace-moment} gives $\Tr F=4(D-1)$, matching the $2D-2$ nonzero eigenvalues equal to $2$ in \cref{eq:full_control_spectrum}.
Under partial control, different orbit strata can have different Fisher trace and rank; \cref{prop:trace-moment,cor:coherent-min} identify the coherent stratum as the minimal-total-Fisher stratum.

This invariant also separates Fisher conditioning from gradient concentration.  For any horizontal objective gradient $g\in\operatorname{range}F$,
\begin{equation}
\|\Delta_{\mathrm{QNG}}\|
=\eta\|F^+g\|
\ge \frac{\eta\|g\|}{\lambda_{\max}(F)}
\ge \frac{\eta\|g\|}{\Tr F}
=\frac{\eta\|g\|}{4E_{\mathfrak g_\lambda}}.
\label{eq:trace_qng_amplification}
\end{equation}
Coherent orbits minimize $E_{\mathfrak g_\lambda}$ and therefore maximize this trace-based lower bound on inverse-Fisher amplification.  The bound remains proportional to $\|g\|$ and supplies no positive lower bound on the gradient itself, so it does not rule out a barren plateau \cite{McClean2018,Larocca2025Barren}.  In the full-control case $E_{\mathfrak{su}(D)}=D-1$, and the bound weakens with sector dimension.

When the restricted dynamical Lie algebra is compact semisimple but smaller than $\mathfrak{su}(D)$, a highest-weight reference ray sweeps out a generalized flag manifold rather than the whole projective space.
The QFIM may then be anisotropic, but its complete horizontal spectrum remains exactly computable in root-space coordinates.

\begin{theorem}
\label{thm:flag_root_spectrum}
Let $\mathfrak g_{\lambda}$ be a compact semisimple restricted dynamical Lie algebra acting irreducibly on $\cK_{\lambda}$ through traceless Hermitian generators, and let $\ket{\Lambda}\in\cK_{\lambda}$ be a highest-weight vector with highest weight $\Lambda$.
Choose a Cartan subalgebra $\mathfrak t\subset\mathfrak g_{\lambda}$, a positive root system $\Delta^{+}$ for the complexified algebra, and root vectors $E_{\pm\alpha}$ normalized by $[E_{\alpha},E_{-\alpha}]=H_{\alpha}$.
For each $\alpha\in\Delta^{+}$ define Hermitian root generators
\begin{equation}
X_{\alpha}:=\frac{1}{\sqrt{2}}(E_{\alpha}+E_{-\alpha}),\qquad
Y_{\alpha}:=\frac{-i}{\sqrt{2}}(E_{\alpha}-E_{-\alpha}),
\end{equation}
and set $n_{\alpha}:=\Tr_{\cK_{\lambda}}(X_{\alpha}^{2})=\Tr_{\cK_{\lambda}}(Y_{\alpha}^{2})$.
In the trace-normalized basis $\widehat X_{\alpha}:=X_{\alpha}/\sqrt{n_{\alpha}}$ and $\widehat Y_{\alpha}:=Y_{\alpha}/\sqrt{n_{\alpha}}$, the pure-state QFIM at $\ket{\Lambda}$ is diagonal in the Cartan-root decomposition.
It vanishes on the Cartan directions and on every root pair satisfying $\Lambda(H_{\alpha})=0$.
For each positive root with $\Lambda(H_{\alpha})>0$, the two horizontal directions carry the common eigenvalue
\begin{equation}
F(\widehat X_{\alpha},\widehat X_{\alpha})
=F(\widehat Y_{\alpha},\widehat Y_{\alpha})
=2\,\frac{\Lambda(H_{\alpha})}{n_{\alpha}}.
\label{eq:flag_root_spectrum}
\end{equation}
Thus the reachable orbit through $[\Lambda]$ is a generalized flag manifold whose horizontal QFIM geometry is resolved root by root.
By equivariance of the group action, the same nonzero Fisher scales hold at every point of the orbit after transporting the root-space tangent frame.
\end{theorem}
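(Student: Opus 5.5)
We compute the QFIM at $\ket{\Lambda}$ directly from the covariance form \cref{eq:QFIM_pure}, using only the highest-weight property $E_\alpha\ket{\Lambda}=0$ for $\alpha\in\Delta^+$ and $H\ket{\Lambda}=\Lambda(H)\ket{\Lambda}$ for $H$ in the complexified Cartan subalgebra. We assume the standard compact-real-form normalization, namely that $E_\alpha^\dagger=E_{-\alpha}$ in the Hermitian realization, so that $X_\alpha,Y_\alpha$ are Hermitian and $H_\alpha$ is Hermitian.

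\textbf{Step 1: means and Cartan directions.}
For a Cartan generator $H$, the vector $\ket{\Lambda}$ is an eigenvector, so $\Delta H\ket{\Lambda}=0$. Hence every entry $F(H,\cdot)$ vanishes by the anticommutator formula.
For root generators, $\bra{\Lambda}E_{\pm\alpha}\ket{\Lambda}=0$: the vector $E_{\pm\alpha}\ket{\Lambda}$ has weight $\Lambda\pm\alpha\neq\Lambda$, and weight spaces are orthogonal. Therefore $\Delta X_\alpha=X_\alpha$ and $\Delta Y_\alpha=Y_\alpha$ on $\ket\Lambda$. This reduces each QFIM entry to $4\,\mathrm{Re}\bra{\Lambda}A B\ket{\Lambda}$ with $A,B$ drawn from the root generators.

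\textbf{Step 2: off-diagonal vanishing.}
Expand each product into terms $\bra{\Lambda}E_{\epsilon\alpha}E_{\epsilon'\beta}\ket{\Lambda}$ with signs $\epsilon,\epsilon'=\pm$. A term is nonzero only if the total weight shift $\epsilon\alpha+\epsilon'\beta$ vanishes. It also requires the rightmost operator to be a lowering operator, since $E_{\beta}\ket\Lambda=0$ for $\beta>0$.
For $\alpha\neq\beta$ positive, this forces zero. Only $\bra{\Lambda}E_\alpha E_{-\alpha}\ket{\Lambda}$ survives, and only on the diagonal $\beta=\alpha$.
For the cross term, a direct expansion gives $\bra{\Lambda}X_\alpha Y_\alpha\ket{\Lambda}=\tfrac{i}{2}\bra\Lambda E_\alpha E_{-\alpha}\ket\Lambda$. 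This is purely imaginary, so its real part vanishes. Hence $F$ is diagonal.

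\textbf{Step 3: diagonal values.}
We have
\[
\bra\Lambda X_\alpha^2\ket\Lambda=\bra\Lambda Y_\alpha^2\ket\Lambda=\tfrac12\bra\Lambda E_\alpha E_{-\alpha}\ket\Lambda=\tfrac12\bra\Lambda [E_\alpha,E_{-\alpha}]\ket\Lambda=\tfrac12\Lambda(H_\alpha).
\]
This gives $F(X_\alpha,X_\alpha)=2\Lambda(H_\alpha)$, and dividing by $n_\alpha$ yields \cref{eq:flag_root_spectrum}.
The value $\Lambda(H_\alpha)=\|E_{-\alpha}\ket\Lambda\|^2$ is nonnegative, so the vanishing cases are exactly the roots with $\Lambda(H_\alpha)=0$.
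We also need $n_\alpha$ to be the same for $X_\alpha$ and $Y_\alpha$. This follows from $\Tr E_{\pm\alpha}^2=0$ by weight counting, which gives $\Tr X_\alpha^2=\Tr Y_\alpha^2=\Tr(E_\alpha E_{-\alpha})$.

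\textbf{Step 4: orbit and transport.}
The stabilizer algebra $\mathfrak h_\rho$ is spanned by $\mathfrak t$ together with the root pairs satisfying $\Lambda(H_\alpha)=0$; this is the parabolic/Levi structure. Hence the orbit is $G_\lambda/H_\rho$, a generalized flag manifold.
For equivariance at $U\ket\Lambda$, the QFIM in the transported basis $\mathrm{Ad}_U T$ coincides with the QFIM at $\ket\Lambda$ in the basis $T$. This holds because $\bra{\psi}U^\dagger(\mathrm{Ad}_UA)\ldots U\ket\psi$ reduces to the original expression, consistent with \cref{cor:coherent-min}.

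\textbf{Main obstacle.}
We expect the main difficulty to be bookkeeping with normalization conventions, in particular the Hermiticity relation $E_\alpha^\dagger=E_{-\alpha}$. This relation is what makes $\Lambda(H_\alpha)$ real and equal to a norm. Without it, the identification of the factor $2$ relative to \cref{eq:full_control_spectrum} would be wrong. We would sanity-check the result against the $\mathfrak{su}(D)$ defining representation, where every scale should equal $2$.
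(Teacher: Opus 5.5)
Your proposal is correct and follows essentially the same route as the paper's proof. Both use the highest-weight annihilation $E_\alpha\ket{\Lambda}=0$ and orthogonality of weight spaces to kill the means, the cross-root terms and the $\pm2\alpha$ terms, and both use $\bra{\Lambda}E_\alpha E_{-\alpha}\ket{\Lambda}=\Lambda(H_\alpha)=\|E_{-\alpha}\ket{\Lambda}\|^2$ for the diagonal values and the kernel. Your explicit $X_\alpha Y_\alpha$ cross-term computation, the check $\Tr X_\alpha^2=\Tr Y_\alpha^2=\Tr(E_\alpha E_{-\alpha})$, and the stabilizer and equivariance remarks are only small elaborations of points the paper states without detail.
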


Theorem~\ref{thm:flag_root_spectrum} gives the correct replacement for the isotropic full-control case: each active root direction has its own Fisher scale, fixed by the highest weight and by the chosen generator normalization.
The full $\mathfrak{su}(D)$ projective-space theorem is recovered when the active roots are precisely the transitions from the reference line to its orthogonal complement and all normalized scales in \cref{eq:flag_root_spectrum} equal $2$.

\subsection{Minuscule root scales and fermionic isotropy}
\label{sec:minuscule}

The root formula isolates a broad partial-control class in which the active root-plane scales equalize.
For a minuscule highest weight, every active root plane has Fisher value $2$ in an $\mathfrak{sl}_2$-normalized frame.
Under trace normalization, a simply-laced semisimple algebra has one constant on each simple factor, and the full horizontal metric is isotropic exactly when these factor constants coincide.
The Slater and fermionic-Gaussian families considered here satisfy this condition while having dynamical Lie algebras of only polynomial dimension.

\begin{corollary}
\label{cor:minuscule}
Assume the setting of \cref{thm:flag_root_spectrum} with $\mathfrak g_\lambda$ simply laced and the highest weight minuscule.
Write $\mathfrak g_\lambda=\bigoplus_{s=1}^{q}\mathfrak g_{\lambda,s}$ as a direct sum of simple ideals, with root systems $\Delta_s$.
Fix for each active positive root $\alpha$ an $\mathfrak{sl}_2$-normalized pair $E_{\pm\alpha}$ with $[E_\alpha,E_{-\alpha}]=H_\alpha$, and take the Hermitian root generators
\begin{equation}
X_\alpha=\frac{E_\alpha+E_{-\alpha}}{\sqrt2},
\qquad
Y_\alpha=\frac{-i(E_\alpha-E_{-\alpha})}{\sqrt2}.
\end{equation}
Then at every point of the orbit $\mathcal O_\lambda$ the QFIM in this root-adapted normalization equals
\begin{equation}
F=2P_{\mathrm H},
\label{eq:minuscule-isotropy}
\end{equation}
where $P_{\mathrm H}$ is the horizontal projector and
\begin{equation}
\rank F=2\,\#\{\alpha>0:\Lambda(H_\alpha)\ne0\}=2\dim_{\bbC}\mathcal O_\lambda.
\end{equation}
For an active root $\alpha\in\Delta_s$, define
\begin{equation}
\iota_{\lambda,s}:=\Tr_{\cK_\lambda}(E_\alpha E_{-\alpha}).
\end{equation}
This value is independent of the choice of root inside the simple factor $\mathfrak g_{\lambda,s}$.
In the trace-orthonormal convention $\Tr(T_aT_b)=\delta_{ab}$, every active root plane in that factor has Fisher scale $2/\iota_{\lambda,s}$.
Hence the trace-normalized intrinsic QFIM is isotropic on the full horizontal space if and only if $\iota_{\lambda,s}$ is the same for all simple factors containing active roots; in particular, this holds when $\mathfrak g_\lambda$ is simple.
For gate coordinates, the actual spectrum is obtained from \cref{cor:isotropic_spectral_transfer} and need not be isotropic.
\end{corollary}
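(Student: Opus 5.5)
The plan is to derive everything from \cref{thm:flag_root_spectrum}, once with $\mathfrak{sl}_2$ normalization and once with trace normalization. The minuscule hypothesis will fix $\Lambda(H_\alpha)$, and simply-lacedness will fix the trace norms $n_\alpha$ within each simple factor.

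\emph{Step 1: the $\mathfrak{sl}_2$-normalized frame.} In the unnormalized root generators $X_\alpha,Y_\alpha$, the computation behind \cref{thm:flag_root_spectrum} gives $F(X_\alpha,X_\alpha)=F(Y_\alpha,Y_\alpha)=2\Lambda(H_\alpha)$ at $\ket{\Lambda}$, with all cross terms vanishing. Equivalently, we multiply \cref{eq:flag_root_spectrum} by $n_\alpha$. Minuscule means every weight is in the Weyl orbit of $\Lambda$. Hence $\Lambda(H_\alpha)\in\{-1,0,1\}$ for all roots $\alpha$, and for positive roots $\Lambda(H_\alpha)\in\{0,1\}$ since $\Lambda$ is dominant. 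Every active root plane therefore carries the value $2$, and inactive planes and Cartan directions carry $0$. This gives $F=2P_{\mathrm H}$ at $\ket{\Lambda}$. The rank is twice the number of active positive roots, which is $2\dim_{\bbC}(G^{\bbC}/P_\Lambda)$, because the nilradical of the parabolic $P_\Lambda$ is spanned by the root spaces with $\Lambda(H_\alpha)\neq0$. Transport to an arbitrary orbit point by $U\in G_\lambda$ uses the conjugated frame $UX_\alpha U^\dagger$ and the identity $F_{U\psi}(UXU^\dagger,UYU^\dagger)=F_\psi(X,Y)$. This is the equivariance clause already stated in \cref{thm:flag_root_spectrum}.

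\emph{Step 2: the trace constant $\iota_{\lambda,s}$.} First, $n_\alpha=\Tr(X_\alpha^2)=\tfrac12\Tr(E_\alpha E_{-\alpha}+E_{-\alpha}E_\alpha)=\Tr(E_\alpha E_{-\alpha})=\iota$, using cyclicity and $E_{-\alpha}=E_\alpha^\dagger$. Next, on a simple factor $\mathfrak g_{\lambda,s}$, the trace form $B_V(X,Y)=\Tr_{\cK_\lambda}(XY)$ is invariant, so it is a scalar multiple $c_s\kappa_s$ of the Killing form. In the $\mathfrak{sl}_2$ normalization $[E_\alpha,E_{-\alpha}]=H_\alpha$, invariance gives $B_V(E_\alpha,E_{-\alpha})=c_s\kappa_s(E_\alpha,E_{-\alpha})$, and $\kappa_s(E_\alpha,E_{-\alpha})=2/\kappa_s(\alpha,\alpha)$ up to the fixed convention. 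Simply-lacedness makes all roots in $\Delta_s$ the same length, so $\iota_{\lambda,s}$ is independent of $\alpha\in\Delta_s$.

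Dividing Step 1 by $n_\alpha=\iota_{\lambda,s}$ yields the scale $2/\iota_{\lambda,s}$ on each active plane of factor $s$. Isotropy of the horizontal metric then holds exactly when these constants agree across the factors containing active roots. If two of them differ, the spectrum has two distinct nonzero values, which proves the converse. The simple case is immediate. The final sentence about gate coordinates is just \cref{cor:isotropic_spectral_transfer} applied with $M_\rho=\alpha I$.

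\emph{Main obstacle.} The hardest step is the $\alpha$-independence of $\Tr(E_\alpha E_{-\alpha})$. It needs the normalization bookkeeping in which the invariant-form argument is done on the complexification, with Hermitian adjoint compatible with $E_{-\alpha}=E_\alpha^\dagger$. One also has to check that the ratio $\kappa(E_\alpha,E_{-\alpha})$ depends only on $|\alpha|^2$. I would verify this by the standard identity $\kappa(H_\alpha,H_\alpha)=\kappa(E_\alpha,E_{-\alpha})\,\alpha(H_\alpha)\cdot$const, derived from $\kappa([E_\alpha,E_{-\alpha}],H)=\kappa(E_\alpha,[E_{-\alpha},H])$.
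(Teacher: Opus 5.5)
Your proposal is correct. Step~1 coincides with the paper's proof: rescale \cref{eq:flag_root_spectrum} by $n_\alpha$, use $\Lambda(H_\alpha)\in\{0,1\}$, and transport by equivariance. Your parabolic-nilradical count of $\dim_{\bbC}\mathcal O_\lambda$ also supplies a step the paper leaves implicit.

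The genuine difference is how you show that $\Tr_{\cK_\lambda}(E_\alpha E_{-\alpha})$ depends only on the simple factor. The paper uses two facts: the Weyl group of an irreducible simply-laced root system acts transitively on its roots, and the represented trace form is $\mathrm{Ad}$-invariant. A lift of $w$ to the torus normalizer sends an $\mathfrak{sl}_2$-normalized pair for $\alpha$ to one for $w\alpha$; the reciprocal rescalings of $E_{\pm\alpha}$ cancel in the product, so the trace is unchanged. You instead write the trace form as $c_s\kappa_s$ on each simple ideal and use $\kappa_s(E_\alpha,E_{-\alpha})=2/(\alpha,\alpha)_{\kappa_s}$. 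This gives the explicit value $\iota_{\lambda,s}=2c_s/(\alpha,\alpha)_{\kappa_s}$.

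The paper's route is shorter but relies on lifting Weyl elements to the group. Yours needs only the elementary identity $\kappa(H_\alpha,H)=\alpha(H)\,\kappa(E_\alpha,E_{-\alpha})$, and it buys two things:
\begin{itemize}
\item It identifies $\iota_{\lambda,s}$, in the long-root normalization, with the Dynkin index of $\cK_\lambda$ as a $\mathfrak g_{\lambda,s}$-module. That is exactly what the constants $\binom{n-2}{k-1}$ and $2^{n-3}$ in \cref{prop:fermionic} are.
\item It shows precisely what simply-lacedness contributes. For a non-simply-laced factor, $\iota$ is inversely proportional to the squared root length. For example, the minuscule defining representation $\bbC^{2n}$ of type $C_n$ has both long and short active roots, and it is anisotropic after trace normalization even though the algebra is simple.
\end{itemize}

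Two minor points to tighten. First, passing from ``all weights form one Weyl orbit'' to $\Lambda(H_\alpha)\in\{-1,0,1\}$ needs a one-line check: for $k=\Lambda(H_\alpha)\ge2$, one has $|\Lambda-\alpha|^2=|\Lambda|^2-(k-1)|\alpha|^2<|\Lambda|^2$, which is impossible for a Weyl conjugate. Alternatively, take the $\{-1,0,1\}$ condition as the definition. Second, $n_\alpha=\Tr(E_\alpha E_{-\alpha})$ also uses $\Tr(E_{\pm\alpha}^2)=0$, which holds because $E_{\pm\alpha}^2$ shifts weights by $\pm2\alpha\neq0$.
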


\begin{proof}
By \cref{thm:flag_root_spectrum}, the nonzero root-plane Fisher scales are $2\Lambda(H_\alpha)$ in the root-adapted normalization.
Minuscule weights satisfy $\Lambda(H_\alpha)\in\{0,1\}$ for all positive roots, giving \cref{eq:minuscule-isotropy}.
The root system of a simply-laced semisimple algebra is a disjoint union of simply-laced irreducible components.
Within each component, the Weyl group acts transitively on roots and the represented trace form is invariant, so $\Tr_{\cK_\lambda}(E_\alpha E_{-\alpha})=\iota_{\lambda,s}$ depends only on the simple factor.
Trace-orthonormal rescaling therefore gives the factorwise scale $2/\iota_{\lambda,s}$, and these scales agree on the full horizontal space exactly when the active factor constants coincide.
Equivariance transports the same spectrum along the orbit.
\end{proof}

\begin{proposition}
\label{prop:fermionic}
Let $n$ denote the number of fermionic modes.
\begin{enumerate}
\item Number-conserving free-fermion circuits have physical dynamical Lie algebra $\mathfrak u(n)$ generated by bilinears $c_j^\dagger c_k$. On the fixed $k$-particle sector, the central $\mathfrak u(1)$ acts only by a global phase on rays, so the effective traceless orbit algebra in our convention is $\mathfrak{su}(n)$ acting in the minuscule representation $\Lambda^k\bbC^n$. The orbit of a Slater determinant is the Grassmannian $\mathrm{Gr}(k,n)$ of complex dimension $k(n-k)$. With hopping generators
\begin{equation}
\frac{c_j^\dagger c_k+c_k^\dagger c_j}{\sqrt2},
\qquad
\frac{i(c_j^\dagger c_k-c_k^\dagger c_j)}{\sqrt2},
\end{equation}
the nonzero QFIM spectrum on the Slater orbit is the constant $2$ with multiplicity $2k(n-k)$. For $1\le k\le n-1$, the trace-orthonormal constant is $2/\binom{n-2}{k-1}$.
\item Matchgate circuits have dynamical Lie algebra $\mathfrak{so}(2n)$ generated by Majorana bilinears $T_{pq}=\frac i2\gamma_p\gamma_q$. On a fixed parity sector this is the minuscule Weyl-spinor representation; the orbit of the Fock vacuum is the manifold of pure fermionic Gaussian states, of complex dimension $n(n-1)/2$. In the complex-fermion frame adapted to the vacuum, the normalized active generators are
\begin{equation}
X_{jk}=\frac{c_j^\dagger c_k^\dagger+c_kc_j}{\sqrt2},
\qquad
Y_{jk}=\frac{i(c_j^\dagger c_k^\dagger-c_kc_j)}{\sqrt2},
\qquad j<k.
\end{equation}
In this root-adapted frame the nonzero QFIM spectrum on the Gaussian orbit is the constant $2$ with multiplicity $n(n-1)$. Equivalently, in the orthogonal Majorana-bilinear coefficient frame the full matrix is $F=2P_{\mathrm H}$ of rank $n(n-1)$. In the trace-orthonormal convention the constant is $2/2^{n-3}$ for $n\ge3$; for $n=2$, the fixed-parity orbit is rank one and the constant is $2$.
\end{enumerate}
\end{proposition}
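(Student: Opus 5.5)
The plan is to reduce both parts to \cref{cor:minuscule} and \cref{thm:flag_root_spectrum}. First I identify a Cartan subalgebra, root vectors and a highest-weight reference state. Then I check the Fisher value $2$ directly by a variance computation at the reference state. Finally I compute the trace constant $\iota_{\lambda}$ by counting occupation configurations. The dynamical-Lie-algebra identifications ($\mathfrak u(n)$ for number-conserving bilinears, $\mathfrak{so}(2n)$ for Majorana bilinears) and the minuscule nature of $\Lambda^k\bbC^n$ and of the half-spin representations are standard, and I will cite them rather than reprove them. The only local point is that the central $\mathfrak u(1)$ generated by $\sum_j c_j^\dagger c_j$ acts as the scalar $k$ on the $k$-particle sector. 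It therefore drops out of $\mathfrak p_\rho$, leaving $\mathfrak{su}(n)$.

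For part 1, take the Slater determinant with modes $1,\dots,k$ occupied as the reference state. Take $E_\alpha=c_j^\dagger c_l$ for $j<l$, so that $H_\alpha=[E_\alpha,E_{-\alpha}]=n_j-n_l$. This operator has eigenvalues in $\{-1,0,1\}$ on every Fock state, which confirms minusculeness. Its value $\Lambda(H_\alpha)$ is $1$ exactly when $j$ is occupied and $l$ is empty, and $0$ otherwise. Hence there are $k(n-k)$ active roots, matching $\dim_{\bbC}\mathrm{Gr}(k,n)$.

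As an independent check of the value $2$, consider an active pair $(j,l)$. The vector $X_{jl}\ket{\psi}=\tfrac1{\sqrt2}c_l^\dagger c_j\ket{\psi}$ has squared norm $1/2$ and zero mean, so \cref{eq:QFIM_pure} gives $F=4\cdot\tfrac12=2$. The same holds for $Y_{jl}$, whose image differs only by the phase $i$. This phase makes the real part of the cross overlap vanish. Distinct active pairs produce orthogonal single-excitation states, so all off-diagonal entries vanish.

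For the trace constant I compute $E_\alpha E_{-\alpha}=c_j^\dagger c_l c_l^\dagger c_j=n_j(1-n_l)$. Its trace over the $k$-particle sector counts the $k$-subsets that contain $j$ but not $l$, which gives $\iota=\binom{n-2}{k-1}$. Since $\mathfrak{su}(n)$ is simple, \cref{cor:minuscule} yields the constant $2/\binom{n-2}{k-1}$.

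For part 2, take the Fock vacuum on the even-parity sector as the reference state. The pairing operators $c_j^\dagger c_l^\dagger$ and $c_lc_j$ serve as the root vectors, with $H=\pm(n_j+n_l-1)$; I orient the roots so that the vacuum is highest weight. The eigenvalues again lie in $\{-1,0,1\}$, and every pair $j<l$ is active, giving $n(n-1)/2$ active roots. The stabilizer is the $\mathfrak u(n)$ of hoppings and number operators, which also have zero variance at the vacuum.

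The direct check is the same as before: $X_{jl}\ket{0}=\tfrac1{\sqrt2}c_j^\dagger c_l^\dagger\ket{0}$ has squared norm $1/2$ and zero mean by parity, so $F=2$. Distinct pairs give orthogonal two-particle states.

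For the trace, $c_j^\dagger c_l^\dagger c_lc_j=n_jn_l$. On the even sector of dimension $2^{n-1}$, it counts the even-parity configurations with $j$ and $l$ occupied. Fixing those two occupations, the remaining $n-2$ modes must have even parity, giving $2^{n-3}$ configurations for $n\ge3$. For $n=2$ the even sector is $\{\ket{00},\ket{11}\}$, so the trace is $1$, the constant is $2$, and the orbit has complex dimension $1$. Simplicity of $\mathfrak{so}(2n)$ for $n\ge3$ gives a single constant. For $n=2$ only one factor of $\mathfrak{so}(4)\cong\mathfrak{su}(2)\oplus\mathfrak{su}(2)$ acts nontrivially on a fixed parity sector, so that case is handled directly.

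The step I expect to be the main obstacle is the Majorana-frame claim $F=2P_{\mathrm H}$. The approach is to show that the map $\{T_{pq}\}\mapsto\{X_{jl},Y_{jl}\}\cup\{\text{hopping and number combinations}\}$, written via $\gamma_{2j-1}=c_j+c_j^\dagger$ and $\gamma_{2j}=i(c_j^\dagger-c_j)$, is orthogonal with respect to the coefficient inner product. Concretely, each complex-fermion Hermitian generator should be a unit-norm combination of $T_{pq}$ coefficients, and generators from different classes should have orthogonal coefficient vectors. Once this holds, the Fisher matrix in the Majorana frame is an orthogonal conjugate of $2P_{\mathrm H}$ and hence has the same form. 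I would verify this by explicit $4\times4$ blocks for each mode pair $(j,l)$. Finally, equivariance as in \cref{thm:flag_root_spectrum} transports the spectrum to every point of both orbits.
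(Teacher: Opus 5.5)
Your proposal is correct and follows the paper's route: cite the standard minuscule identifications, confirm the value $2$ by a direct variance computation at the reference state, count occupation configurations to obtain $\iota=\binom{n-2}{k-1}$ and $2^{n-3}$, and conclude via \cref{cor:minuscule}; your explicit $4\times4$ orthogonal change of frame between the $T_{pq}$ and the pairing/hopping/number generators additionally supplies the Majorana-frame claim $F=2P_{\mathrm H}$, which the paper leaves implicit. One small correction is needed: $\bra{0}X_{jl}\ket{0}=0$ holds by particle number (weight orthogonality), not by parity, because $c_j^\dagger c_l^\dagger\ket{0}$ is itself even.
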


The individual Majorana bilinears are not themselves a root-adapted $\mathfrak{sl}_2$ frame. At the vacuum, a single bilinear can mix an active pair-creation component with a number-conserving stabilizer component and therefore have diagonal QFIM entry $1$ rather than $2$; the spectral statement in \cref{prop:fermionic} concerns the horizontal operator after these components are resolved.

\begin{proof}
The representation-theoretic identifications are standard: $\Lambda^k\bbC^n$ is the fundamental minuscule representation of type $A_{n-1}$, while the fixed-parity fermionic Fock space carries a Weyl-spinor minuscule representation of type $D_n$ \cite{FultonHarris1991}.
The active-root counts give $\dim_{\bbC}\mathrm{Gr}(k,n)=k(n-k)$ and $\dim_{\bbC}\mathcal O_{\mathrm{Gauss}}=n(n-1)/2$.
For the spinor orbit, $X_{jk}\ket{0}=c_j^\dagger c_k^\dagger\ket{0}/\sqrt2$ and $Y_{jk}\ket{0}=i c_j^\dagger c_k^\dagger\ket{0}/\sqrt2$, so each active root-plane variance is $1/2$ and the QFIM value is $2$.  By contrast, a single $T_{pq}$ generally combines an active pair-creation direction with a number-conserving stabilizer direction, which is why its individual variance does not determine the horizontal spectrum.
For $1\le k\le n-1$, the trace factor for $\Lambda^k\bbC^n$ is $\binom{n-2}{k-1}$.  For the fixed-parity spinor representation it is $2^{n-3}$ when $n\ge3$; the $n=2$ value is $1$ by direct evaluation.  These factors give the stated trace-normalized constants through \cref{cor:minuscule}.
\end{proof}

\section{Quotient-QNG landscapes and dynamics}
\label{sec:landscape_dynamics}

The orbit Fisher operator fixes how QNG measures motion, while the objective Hessian fixes which physical directions contract.
Their interaction determines critical conditioning, convergence of parameter representatives, and the exact dynamics available on homogeneous orbits.
Exact redundancies first appear as Hessian zero modes; after they are removed on a slice, maximal-flat and full-control geometries yield closed optimization laws.

\subsection{Gauge-fixed critical geometry}
\label{subsec:gauge_fixing}

The pseudoinverse lift is intrinsically first order, but the same redundancy reappears in the Hessian at a critical point.
The Hessian is flat along every redundancy orbit and therefore has infinite ambient condition number.
A transverse slice removes these zero modes, and retraction to that slice turns convergence on the quotient into convergence of the stored parameter representative.
Let $g$ be a nondegenerate background Riemannian metric used to define the Hessian operator and the orthogonal slice; it is distinct from the generally degenerate QFIM tensor $F$.
The redundancy action is assumed to be isometric with respect to $g$, a condition available for compact $\Gamma$ by averaging.

The Riemannian Hessian below is taken with respect to the background metric $g$, using the standard Levi--Civita definition recalled in Appendix~\ref{app:natural_gradient_background}.

\begin{lemma}
\label{lem:zero_modes_transversality}
Let $(\Theta,g)$ be a Riemannian manifold and let $\Gamma$ act smoothly on $\Theta$ by isometries.
Let $f:\Theta\to\bbR$ satisfy $f(g\cdot\theta)=f(\theta)$ for all $g\in \Gamma$.
\begin{enumerate}
\item[(a)]
At any critical point $\theta_{\star}$ of $f$, the orbit tangent space $\cV_{\theta_{\star}}$ lies in the Hessian kernel:
\begin{equation}
\mathrm{Hess}\,f(\theta_{\star})(v,\cdot)=0\qquad \text{for all } v\in \cV_{\theta_{\star}}.
\end{equation}
Consequently, the Hessian has at least $\dim(\cV_{\theta_{\star}})$ zero eigenvalues.
\item[(b)]
Assume in addition that the quotient Hessian at $[\theta_{\star}]\in \Theta/\Gamma$ is positive definite.
Equivalently, if $H$ denotes the Hessian operator at $\theta_{\star}$ and
$T_{\theta_{\star}}\Theta=\cV_{\theta_{\star}}\oplus \mathsf{H}_{\theta_{\star}}$ is the $g$-orthogonal decomposition, assume that $H\vert_{\mathsf{H}_{\theta_{\star}}}$ is positive definite.
By part~(a), this is equivalent to $H\succeq0$ and $\ker H=\cV_{\theta_{\star}}$.
Then the condition number on the gauge-fixed slice is
\begin{equation}
\kappa_{\mathrm{slice}}=\frac{\lambda_{\max}(H\vert_{\mathsf{H}_{\theta_{\star}}})}{\lambda_{\min}(H\vert_{\mathsf{H}_{\theta_{\star}}})},
\end{equation}
whereas the ambient condition number is infinite.
\end{enumerate}
\end{lemma}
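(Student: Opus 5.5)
For part~(a), I will first show that the gradient vanishes along the whole orbit, and then differentiate that identity along the orbit. Fix $\xi\in\mathfrak r$, write $g_t=\exp(t\xi)$, and let $\Phi_{g_t}$ be the corresponding isometry. Invariance $f\circ\Phi_{g_t}=f$ together with the isometry property gives the equivariance
\begin{equation}
\grad f(g_t\cdot\theta)=(d\Phi_{g_t})_\theta\,\grad f(\theta).
\end{equation}
At the critical point this means $\grad f$ vanishes identically along the curve $\gamma(t)=g_t\cdot\theta_\star$.

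Taking the covariant derivative along $\gamma$ at $t=0$ gives
\begin{equation}
0=\nabla_{\dot\gamma(0)}\grad f=\mathrm{Hess}\,f(\theta_\star)\bigl[X_\xi(\theta_\star)\bigr].
\end{equation}
Here I use that at a zero of a vector field the covariant derivative is tensorial and does not depend on the connection. This shows that $X_\xi(\theta_\star)$ lies in the kernel of the Hessian operator. Since the operator is $g$-self-adjoint, $\mathrm{Hess}\,f(\theta_\star)(v,\cdot)=0$ for every $v\in\cV_{\theta_\star}$. The multiplicity count then follows because $\cV_{\theta_\star}$ is a subspace of that dimension contained in the kernel.

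An alternative route to~(a) works in local coordinates. At a critical point the coordinate Hessian equals the Riemannian Hessian, because the Christoffel term multiplies $df=0$. Differentiating $\partial_i f(\exp(t\xi)\cdot\theta)$ then gives the same conclusion. I would mention this to make clear that the isometry assumption is only needed to define the slice.

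For part~(b), I will use self-adjointness of $H$ with respect to $g$. Since $\cV_{\theta_\star}\subseteq\ker H$, its $g$-orthogonal complement $\mathsf H_{\theta_\star}$ is $H$-invariant, so $H=0\oplus H|_{\mathsf H_{\theta_\star}}$. Hence the stated equivalence holds: $H|_{\mathsf H}\succ0$ if and only if $H\succeq0$ and $\ker H=\cV_{\theta_\star}$. To identify the quotient Hessian with $H|_{\mathsf H}$, I will show that $d\pi$ restricts to an isometry from $\mathsf H$ onto the quotient tangent space (the quotient metric being the one induced by $g$). At a critical point the second-order chain rule for $f=\bar f\circ\pi$ has no first-order term, which gives $H=d\pi^*\,\overline{\mathrm{Hess}}\,\bar f\,d\pi$. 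The slice condition number is then the eigenvalue ratio on $\mathsf H$. The ambient condition number is infinite because $\lambda_{\min}(H)=0$ whenever $\dim\cV_{\theta_\star}>0$.

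I expect the main obstacle to be stating cleanly that the derivative of the vanishing gradient field along the orbit equals the Hessian without extra connection terms. The argument above handles this through the tensoriality of the covariant derivative at a zero. A minor caveat is that the infinite-condition-number claim implicitly assumes $\dim\cV_{\theta_\star}\ge1$, which I will state explicitly.
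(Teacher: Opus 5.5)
Your argument is correct, but part~(a) takes a different route from the paper. The paper differentiates the global identity $df(X_\xi)\equiv0$ in an arbitrary direction: it writes $(\nabla_Y df)(X_\xi)=Y\bigl(df(X_\xi)\bigr)-df(\nabla_Y X_\xi)$. Invariance kills the first term, $df_{\theta_\star}=0$ kills the second, and the paper then invokes symmetry of the Hessian.

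You instead show that the whole orbit through $\theta_\star$ is critical, and differentiate the vanishing field $\grad f$ along that orbit. This gives $H X_\xi(\theta_\star)=0$ directly. With the paper's convention $\mathrm{Hess}\,f(v,w)=\langle\nabla_v\grad f,w\rangle_g$, that already yields $\mathrm{Hess}\,f(\theta_\star)(X_\xi,\cdot)=0$, so your appeal to self-adjointness at that step is unnecessary.

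Neither route actually needs the isometry hypothesis for~(a). In yours, criticality of the orbit already follows from $df_{g\cdot\theta}\circ(d\Phi_g)_\theta=df_\theta$ and invertibility of $(d\Phi_g)_\theta$, as your coordinate remark implicitly recognizes.

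In part~(b) the paper simply reads the quotient Hessian as the form induced by $H$ on $T_{\theta_\star}\Theta/\cV_{\theta_\star}$. That form is well defined by~(a) and needs no smooth structure on $\Theta/\Gamma$. Your identification uses $d\pi|_{\mathsf H_{\theta_\star}}$ being an isometry together with the second-order chain rule at a critical point. This requires a smooth quotient carrying the submersion metric induced by $g$, which is where the isometric action genuinely enters. In return it shows that $H|_{\mathsf H_{\theta_\star}}$ is unitarily equivalent to the quotient Hessian operator, so $\kappa_{\mathrm{slice}}$ is literally the intrinsic quotient condition number.

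Your caveat that the infinite ambient condition number requires $\dim\cV_{\theta_\star}\ge1$ is a correct refinement the paper leaves implicit.
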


The lemma identifies exactly which directions carry no second-order information at a critical point.
A quadratic gauge penalty may lift those directions without changing the physical Hessian spectrum.

A quadratic gauge penalty can lift the vertical Hessian zero modes without changing the physical Hessian spectrum; the exact condition-number formula is recorded in \cref{prop:soft_gauge}.

\begin{theorem}
\label{thm:representative_convergence}
Assume the faithful quotient setting of \cref{thm:quotient_geometry_update}, let $\theta_{\star}\in U$, and choose a smooth local slice $S\subset U$ through $\theta_{\star}$ such that
\[
\pi\vert_{S}:S\to V
\]
is a diffeomorphism onto an open set $V\subset U/\Gamma$ containing $[\theta_{\star}]$.
Write $s:=(\pi\vert_{S})^{-1}:V\to S$.
\begin{enumerate}
\item[(a)]
Let $K\subset V$ be compact, and let $\gamma:[0,\infty)\to K$ be an absolutely continuous quotient trajectory of finite Riemannian length.
Then the gauge-fixed representative $\vartheta(t):=s(\gamma(t))$ has finite Euclidean length, with
\begin{equation}
\operatorname{Len}_{\bbR^{p}}(\vartheta)
\le C_{K}\,\operatorname{Len}_{\bar F}(\gamma)
\end{equation}
for some $C_{K}<\infty$ depending only on $K$ and the chosen slice.
If $\gamma(t)\to [\theta_{\star}]$ in $V$, then $\vartheta(t)\to \theta_{\star}$ in ambient coordinates.
\item[(b)]
Suppose in addition that $V$, the descended metric $\bar F$, and the descended objective $\bar\cL:V\to\bbR$ are real analytic.
Let $\gamma:[0,\infty)\to V$ solve
\begin{equation}
\dot\gamma(t)=-\eta\,\grad_{\bar F}\bar\cL(\gamma(t)),
\qquad \eta>0,
\end{equation}
and assume that $\gamma([0,\infty))$ has compact closure in $V$.
Then $\gamma$ has finite $\bar F$-length and converges to a single critical point $[\theta_{\infty}]$ of $\bar\cL$.
The gauge-fixed representative $\vartheta(t)=s(\gamma(t))$ consequently has finite Euclidean length and converges to $s([\theta_{\infty}])$ in ambient coordinates.
\end{enumerate}
\end{theorem}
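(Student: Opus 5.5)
Part~(a) is a local smoothness comparison of metrics, and part~(b) combines the Łojasiewicz gradient inequality with part~(a).

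\emph{Part~(a).} The section $s:V\to S\subset\bbR^{p}$ is smooth, because $\pi\vert_S$ is a diffeomorphism. So $ds_{[\theta]}:T_{[\theta]}V\to\bbR^{p}$ is a continuous family of linear maps. By \cref{thm:quotient_geometry_update}, $\bar F$ is a smooth Riemannian metric on $V$. The continuous function
\[
([\theta],\xi)\mapsto\|ds_{[\theta]}\xi\|_{\bbR^p},\qquad \bar F_{[\theta]}(\xi,\xi)=1,
\]
is defined on the $\bar F$-unit sphere bundle over $K$, which is compact. It therefore attains a finite maximum $C_K$. This gives the pointwise estimate
\[
\|ds_{\gamma(t)}\dot\gamma(t)\|_{\bbR^p}\le C_K\,\|\dot\gamma(t)\|_{\bar F}
\]
for almost every $t$. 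Because $s$ is smooth and $\gamma$ is absolutely continuous, $\vartheta=s\circ\gamma$ is absolutely continuous with $\dot\vartheta=ds\,\dot\gamma$ a.e. Integrating the pointwise estimate gives the length bound. Convergence of $\vartheta(t)$ to $\theta_\star$ follows from continuity of $s$ and $s([\theta_\star])=\theta_\star$. Finite length alone already yields a Cauchy limit in $\bbR^p$; continuity of $s$ identifies that limit.

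\emph{Part~(b).} Let $\gamma$ be the gradient flow.
\begin{enumerate}
\item Since $\bar\cL$ decreases along the flow, it is bounded below on the compact closure $\overline{\gamma([0,\infty))}$. Hence $\int_0^\infty\|\grad\bar\cL\|^2\,dt<\infty$, and the $\omega$-limit set $\Omega$ is a nonempty compact connected set of critical points on which $\bar\cL$ equals $\cL_\infty:=\lim_t\bar\cL(\gamma(t))$.
\item At each point of $\Omega$, apply the Łojasiewicz gradient inequality for the analytic function $\bar\cL$, written in local analytic charts where $\bar F$ is uniformly equivalent to the chart Euclidean metric:
\[
|\bar\cL-\cL_\infty|^{1-\vartheta_0}\le C\,\|\grad_{\bar F}\bar\cL\|_{\bar F}.
\]
Covering the compact set $\Omega$ by finitely many such neighborhoods gives uniform constants on a neighborhood $W$ of $\Omega$. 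The trajectory eventually stays in $W$, since it accumulates only on $\Omega$ and has compact closure.
\item On $W$, compute $-\tfrac{d}{dt}(\bar\cL-\cL_\infty)^{\vartheta_0}\ge c\,\|\dot\gamma\|_{\bar F}$. This uses $\tfrac{d}{dt}\bar\cL=-\eta\|\grad\bar\cL\|^2$ and $\|\dot\gamma\|=\eta\|\grad\bar\cL\|$. Integrating shows that the $\bar F$-length is finite. The case where $\bar\cL(\gamma(t))=\cL_\infty$ at some finite time is trivial: the gradient then vanishes and the flow is stationary.
\item Finite length on a compact set forces convergence to a single point $[\theta_\infty]\in\Omega$, which is critical.
\item Part~(a), applied with $K=\overline{\gamma([0,\infty))}$, transfers finite length and convergence to $\vartheta=s\circ\gamma$, giving $\vartheta(t)\to s([\theta_\infty])$.
\end{enumerate}

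\emph{Main obstacle.} The delicate step is the uniform Łojasiewicz inequality near $\Omega$. The inequality is local, and the exponent and constants vary from point to point, so they must be made uniform. I would handle this as Absil–Mahony–Andrews do. Because $\bar\cL$ is constant on $\Omega$, each local inequality holds with the same value $\cL_\infty$. Compactness of $\Omega$ then lets me take the minimum exponent and the maximum constant over a finite subcover. One also needs the metric-equivalence constants between $\bar F$ and chart metrics, which are bounded on compacta by smoothness.
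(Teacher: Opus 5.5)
Your proposal is correct and follows the paper's proof. In part~(a) you bound the operator norm of $Ds$, from the $\bar F$-norm to the Euclidean norm, uniformly on the compact set $K$ and then integrate; in part~(b) you use the Lojasiewicz gradient inequality to obtain finite $\bar F$-length and a single critical limit, then apply part~(a) with $K=\overline{\gamma([0,\infty))}$.

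The one difference is cosmetic. You make the Lojasiewicz step uniform on a neighborhood of the whole $\omega$-limit set by a finite subcover, taking the smallest $\vartheta_0$ on a neighborhood where $|\bar\cL-\cL_\infty|\le 1$. The paper instead works at a single accumulation point and invokes the standard trapping argument. Both variants are standard.
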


Finite-depth circuit expectation losses supply the analytic input in part~(b) on the principal stratum, so the conclusion follows from the Lojasiewicz--Simon theory for analytic gradient flows \cite{Simon1983,AbsilMahonyAndrews2005}.
With representative zero modes separated, the remaining convergence laws are intrinsic to the state orbit.

\subsection{Integrable fidelity flow on cominuscule orbits}
\label{sec:cominuscule_fidelity}

For rank $r>1$, target fidelity depends on $r$ principal angles, so a closed scalar evolution is not automatic. On irreducible compact Hermitian symmetric orbits with a cominuscule projective embedding, the maximal-flat Fisher block is isotropic and the strongly orthogonal root rotations commute. These two facts force all active principal defects to share one contraction factor: the higher-rank fidelity flow becomes completely integrable, with the familiar full-control logistic law as its rank-one face.

\begin{lemma}
\label{lem:hc_frame}
Let $\mathcal O_{\lambda}=G_{\lambda}\cdot[\Lambda]\subset\mathbb P(\cK_{\lambda})$ be an irreducible compact Hermitian symmetric orbit of rank $r$, embedded by the irreducible representation with highest weight
\begin{equation}
\Lambda=\ell\omega_{\mathrm c},\qquad \ell\in\bbN,
\label{eq:embedding_index}
\end{equation}
where $\omega_{\mathrm c}$ is the cominuscule fundamental weight, and choose $[\Lambda]$ as the target ray.
There are pairwise strongly orthogonal noncompact roots $\gamma_{1},\dots,\gamma_{r}$, all of the same maximal root length, such that no signed sum with distinct indices
\begin{equation}
\pm\gamma_{a_{1}}\pm\cdots\pm\gamma_{a_{k}},
\qquad k\ge2,
\end{equation}
is a root, and
\begin{equation}
\langle\Lambda,\gamma_{a}^{\vee}\rangle=\ell
\qquad (a=1,\dots,r).
\end{equation}
Moreover every ray in $\mathcal O_{\lambda}$ can be represented, after applying an element of the target stabilizer, in the flat form
\begin{equation}
\psi(\theta)=
\exp\!\left(\sum_{a=1}^{r}\theta_{a}(E_{-\gamma_{a}}-E_{\gamma_{a}})\right)\ket{\Lambda},
\qquad
0\le\theta_{a}\le\frac{\pi}{2},
\label{eq:hc_flat_state}
\end{equation}
and its fidelity with the target factorizes as
\begin{equation}
\braket{\Lambda}{\psi(\theta)}
=\prod_{a=1}^{r}\cos^{\ell}\theta_{a},
\qquad
|\braket{\Lambda}{\psi(\theta)}|^{2}
=\prod_{a=1}^{r}\cos^{2\ell}\theta_{a}.
\label{eq:hc_fidelity_product}
\end{equation}
\end{lemma}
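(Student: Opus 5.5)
The plan has three stages: build the strongly orthogonal family from the Harish-Chandra cascade, reduce an arbitrary ray to the flat form using the Cartan decomposition $G=KAK$, and then compute the fidelity one commuting $\mathfrak{sl}_2$ at a time.

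\emph{Cascade.} Write the Hermitian symmetric pair as $\mathfrak g_\lambda^{\bbC}=\mathfrak k^{\bbC}\oplus\mathfrak p^+\oplus\mathfrak p^-$. Here $\mathfrak p^\pm$ are spanned by the noncompact roots, namely those with coefficient $\pm1$ on the cominuscule simple root. Let $\gamma_1$ be the highest root. Choose $\gamma_{a+1}$ to be the highest noncompact root strongly orthogonal to $\gamma_1,\dots,\gamma_a$. This is the Harish-Chandra cascade; it terminates after exactly $r$ steps, and all $\gamma_a$ are long roots. Strong orthogonality says that $\gamma_a\pm\gamma_b$ is not a root. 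For longer signed sums, observe that any such sum has coefficient in $\{0,\pm2,\dots\}$ or $\pm k$ with $k\ge2$ on the cominuscule simple root, or else is a sum of $\pm$ pairs. Since a root of an irreducible Hermitian symmetric algebra has coefficient in $\{0,\pm1\}$ there, every case with coefficient $\pm k$, $k\ge2$, is excluded. The remaining mixed-sign cases reduce to $\gamma_a-\gamma_b$-type terms plus strongly orthogonal corrections, and these are ruled out because each $\gamma_a$ has the same length and the $\gamma$'s are pairwise orthogonal. Then $\langle\omega_{\mathrm c},\gamma_a^\vee\rangle=1$, because $\gamma_a$ is long and has coefficient $1$ on the cominuscule simple root while $\omega_{\mathrm c}$ pairs only with that root. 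This gives $\langle\Lambda,\gamma_a^\vee\rangle=\ell$.

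\emph{Flat form.} With $K$ the stabilizer of $[\Lambda]$ (the target stabilizer), the Cartan decomposition $G_\lambda=K A K$ holds, where $A=\exp\mathfrak a$ and $\mathfrak a=\spanop\{E_{-\gamma_a}-E_{\gamma_a}\}$ is a maximal abelian subspace of $\mathfrak p$ (Harish-Chandra, Helgason). Abelianness follows from strong orthogonality. Now take any orbit ray $g[\Lambda]=k_1 a k_2[\Lambda]$. Since $k_2$ fixes the ray, it equals $k_1a[\Lambda]$, and applying $k_1^{-1}$ produces the flat form. The range $0\le\theta_a\le\pi/2$ comes from the restricted Weyl group of type $BC_r$ or $C_r$, acting by permutations and sign changes, together with reduction modulo $\pi$ along the compact torus $\exp\mathfrak a$. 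This step is the main obstacle: one has to check carefully that the Weyl-group elements used can be absorbed into the stabilizer. Each $\gamma_a$-rotation by $\pi$ maps $\ket{\Lambda}$ to a vector whose ray is fixed by $K$, so the angles can be folded into $[0,\pi/2]$.

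\emph{Fidelity.} The elements $E_{\pm\gamma_a}$ and $H_{\gamma_a}$ for different $a$ commute, because $\pm\gamma_a\pm\gamma_b$ is not a root and the Cartan parts commute. Hence the exponential factorizes into commuting $\mathfrak{sl}_2$ rotations. Since $E_{\gamma_a}\ket{\Lambda}=0$ and $H_{\gamma_a}\ket{\Lambda}=\ell\ket{\Lambda}$, the vector $\ket{\Lambda}$ is a highest-weight vector of spin $\ell/2$ for each $\mathfrak{sl}_2^{(a)}$. The spin-$j$ matrix element gives $\langle j,j|e^{\theta(E_--E_+)}|j,j\rangle=\cos^{2j}\theta$, that is, $\cos^\ell\theta_a$ for a single factor. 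To obtain the product, apply the rotations sequentially. Each lowering by $E_{-\gamma_b}$ produces a weight $\Lambda-m\gamma_b$, and the subsequent rotations for $a\neq b$ cannot return such a vector to weight $\Lambda$, since no signed combination of distinct $\gamma$'s vanishes or is a root. Therefore only the $\ket{\Lambda}$ component contributes to the overlap, and
\[
\braket{\Lambda}{\psi(\theta)}=\prod_{a=1}^{r}\cos^{\ell}\theta_a .
\]
Squaring gives the stated fidelity.
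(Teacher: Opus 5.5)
Your proposal follows the paper's proof step for step. Both use Harish-Chandra's cascade of strongly orthogonal long noncompact roots, then the polar decomposition $G_\lambda=K_\Lambda\exp(\mathfrak a)K_\Lambda$ to move any ray onto the flat, then a factorization into commuting $\mathfrak{sl}_2$ rotations of spin $\ell/2$, each contributing $\cos^{\ell}\theta_a$. Your weight bookkeeping replaces the paper's remark that the cyclic module is a tensor product of spin-$\ell/2$ irreducibles; both work.

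The one step that does not hold as written is the exclusion of signed sums. Reducing ``mixed-sign cases to $\gamma_a-\gamma_b$-type terms plus strongly orthogonal corrections'' proves nothing, because knowing that partial sums are not roots does not prevent a longer sum from being one. The paper closes this with facts you already have. Pairwise orthogonality and a common maximal length give $\|\sum_{a\in S}\varepsilon_a\gamma_a\|^2=k\|\gamma_1\|^2>\|\gamma_1\|^2$ for $k\ge2$. This excludes every case at once and makes the coefficient analysis on $\alpha_{\mathrm c}$ unnecessary. Similarly, the fidelity step needs linear independence of the $\gamma_a$, which again follows from orthogonality. Then $\Lambda-\sum_a m_a\gamma_a=\Lambda$ with $m_a\ge0$ forces all $m_a=0$; whether signed sums are roots is irrelevant there.

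Two smaller points. First, your pairing argument is more precise than the paper's one-line remark. The coroot pairing $\langle\omega_{\mathrm c},\alpha^\vee\rangle$ equals $1$ only for long noncompact roots; short ones in types $B$ and $C$ give $2$. However, your argument also needs $\alpha_{\mathrm c}$ to be long. This holds because $\alpha_{\mathrm c}$ has coefficient $1$ in the long highest root $\vartheta$, and $\langle\omega_{\mathrm c},\vartheta^\vee\rangle=\|\alpha_{\mathrm c}\|^2/\|\vartheta\|^2$ must be a positive integer.

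Second, the ``main obstacle'' you flag is immediate. For $k\in N_{K_\Lambda}(\mathfrak a)$ one has $k\exp(X)\ket{\Lambda}=\exp(\mathrm{Ad}_kX)\,k\ket{\Lambda}$ with $k\ket{\Lambda}\propto\ket{\Lambda}$. The torus element $\exp(-i\tfrac{\pi}{2}H_{\gamma_a})\in K_\Lambda$ flips the sign of $\theta_a$ alone. Also $\exp\bigl(\pi(E_{-\gamma_a}-E_{\gamma_a})\bigr)\ket{\Lambda}=(-1)^{\ell}\ket{\Lambda}$, which gives period $\pi$. Together these fold each angle into $[0,\pi/2]$.
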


\begin{lemma}
\label{lem:flat_fisher_block}
On the flat \eqref{eq:hc_flat_state}, the differential of
\begin{equation}
\cL_{\mathrm{fid}}(\psi)=1-|\braket{\Lambda}{\psi}|^{2}
\end{equation}
has no component along directions orthogonal to the flat.
In the principal-angle coordinates $\theta_{1},\dots,\theta_{r}$, the pure-state QFIM has block form
\begin{equation}
F = 4\ell I_{r}\oplus F_{\perp},
\label{eq:hc_fisher_block}
\end{equation}
with vanishing flat--offflat cross block.
Consequently the quotient-QNG vector field of $\cL_{\mathrm{fid}}$ is tangent to the flat.
\end{lemma}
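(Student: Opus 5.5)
The proof will reduce everything to computations at the highest-weight vector $\ket{\Lambda}$. The tool is the unitary $g(\theta):=\exp\bigl(\sum_a\theta_aZ_a\bigr)$, with $Z_a:=E_{-\gamma_a}-E_{\gamma_a}$. In the compact real form $E_{-\gamma}=E_\gamma^\dagger$, so $Z_a$ is anti-Hermitian and $g(\theta)$ is unitary. By \cref{lem:hc_frame} the roots $\gamma_a$ are strongly orthogonal, so $[Z_a,Z_b]=0$. Hence $\partial_{\theta_a}\psi(\theta)=g(\theta)Z_a\ket{\Lambda}$.

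For the off-flat directions I will use the transported frame $\psi\mapsto g(\theta)e^{-itY}\ket{\Lambda}$. Here $Y$ ranges over the Hermitian root generators $X_\beta,Y_\beta$ of the remaining active positive noncompact roots $\beta\notin\{\gamma_1,\dots,\gamma_r\}$. Stabilizer directions are vertical and are discarded. The tangent vectors are $-i\,g(\theta)Y\ket{\Lambda}$. Every entry of \cref{eq:QFIM_pure} is built from inner products and expectation values, and these are invariant under the common unitary $g(\theta)$. So the Fisher block at $\psi(\theta)$ equals the one computed at $\ket{\Lambda}$ with generators $iZ_a$ and $Y$.

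\textbf{Step 1: flat block.} Since $E_{\gamma_a}\ket{\Lambda}=0$, we have $Z_a\ket{\Lambda}=E_{-\gamma_a}\ket{\Lambda}$, and this vector has weight $\Lambda-\gamma_a$, so $\bra{\Lambda}Z_a\ket{\Lambda}=0$. Then
$\langle E_{-\gamma_a}\Lambda,E_{-\gamma_b}\Lambda\rangle=\delta_{ab}\bra{\Lambda}[E_{\gamma_a},E_{-\gamma_a}]\ket{\Lambda}=\delta_{ab}\Lambda(H_{\gamma_a})$.
For $a\neq b$ the vectors have distinct weights, so the product vanishes. By \cref{lem:hc_frame}, $\langle\Lambda,\gamma_a^\vee\rangle=\ell$. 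This gives $F_{ab}=4\ell\,\delta_{ab}$.

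\textbf{Step 2: cross block.} For an off-flat $\beta$, the vector $Y\ket{\Lambda}$ is proportional to $E_{-\beta}\ket{\Lambda}$, which has weight $\Lambda-\beta\neq\Lambda-\gamma_a$. Therefore $\langle E_{-\gamma_a}\Lambda,E_{-\beta}\Lambda\rangle=0$ in both real and imaginary parts. Both means also vanish, so the flat--offflat block of \cref{eq:QFIM_pure} is zero. This establishes \cref{eq:hc_fisher_block}.

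\textbf{Step 3: gradient and QNG direction.} The off-flat derivative of $\cL_{\mathrm{fid}}$ is
$-2\operatorname{Re}\bigl(\overline{\braket{\Lambda}{\psi}}\,\bra{\Lambda}g(\theta)(-iY)\ket{\Lambda}\bigr)$.
Write $\bra{\Lambda}g(\theta)=\bigl(g(\theta)^\dagger\ket{\Lambda}\bigr)^\dagger=\psi(-\theta)^\dagger$. Because the $Z_a$ commute, $\psi(-\theta)$ is a product of commuting $\mathfrak{sl}_2$ rotations applied to $\ket{\Lambda}$. It therefore lies in the span of weight vectors of weight $\Lambda-\sum_am_a\gamma_a$ with $m_a\ge0$.

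This step is the one I expect to be the main obstacle. One must show that $\Lambda-\beta$ is never of the form $\Lambda-\sum_am_a\gamma_a$ for an off-flat $\beta$. My plan uses the cominuscule property: every positive noncompact root has coefficient exactly $1$ on the cominuscule simple root. The $\gamma_a$ are noncompact, so this forces $\sum_am_a=1$, hence $\beta=\gamma_a$, a contradiction. The no-signed-sum clause of \cref{lem:hc_frame} serves as a fallback if some multiplicity subtlety arises. It follows that the off-flat gradient components vanish.

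Finally, $F$ is block diagonal and the gradient is supported on the flat block, so $F^+\nabla\cL_{\mathrm{fid}}=(4\ell)^{-1}\nabla_\theta\cL_{\mathrm{fid}}\oplus0$. By \cref{thm:quotient_geometry_update} this is the quotient-QNG vector field, and it is therefore tangent to the flat.
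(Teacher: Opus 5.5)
\textbf{Gap: the off-flat frame is incomplete.} Your weight computations are correct as far as they go, including the cominuscule coefficient argument in Step~3. The problem is that your off-flat frame does not span the orthogonal complement of the flat.

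At $\ket{\Lambda}$ each flat root $\gamma_a$ contributes a two-dimensional real root plane, spanned by $E_{-\gamma_a}\ket{\Lambda}$ and $iE_{-\gamma_a}\ket{\Lambda}$. The flat uses only the first, since $Z_a=-i\sqrt2\,Y_{\gamma_a}$. If $N$ is the number of noncompact positive roots, the complement of the flat has real dimension $2N-r$. Your frame, built only from $\beta\notin\{\gamma_1,\dots,\gamma_r\}$, has $2(N-r)$ vectors. The $r$ omitted directions are $v_a=g(\theta)(-iX_{\gamma_a})\ket{\Lambda}=-\tfrac{i}{\sqrt2}\,g(\theta)E_{-\gamma_a}\ket{\Lambda}$, the complex-structure partners of $\partial_{\theta_a}\psi$. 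At regular flat points they are exactly the directions generated by the Cartan elements $H_{\gamma_a}$, which fix the target ray; this is why the paper's $K_\Lambda$-invariance argument covers them.

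On these directions your key mechanism fails. The vector $\psi(-\theta)$ has a nonzero component of weight $\Lambda-\gamma_a$, so $\braket{\Lambda}{v_a}\neq0$ in general; for spin $1/2$, $\braket{\Lambda}{v_a}=\tfrac{i}{\sqrt2}\sin\theta_a$. In Step~2, the pair $(\partial_{\theta_a}\psi,v_a)$ has the same weight $\Lambda-\gamma_a$ after transport, so weight separation gives nothing there either. As written, Steps~2 and~3 therefore do not prove the lemma.

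\textbf{How to close it.} The missing case needs a phase argument, not a weight argument. In the orthonormal weight basis of the commuting $\mathfrak{sl}_2$ modules generated from $\ket{\Lambda}$, the operators $E_{\pm\gamma_a}$, and hence $Z_a$ and $g(\theta)$, have real matrix entries. So $\braket{\Lambda}{\psi}=\prod_b\cos^\ell\theta_b$ and $\bra{\Lambda}g(\theta)E_{-\gamma_a}\ket{\Lambda}$ are real. Then $d\cL_{\mathrm{fid}}(v_a)=-2\operatorname{Re}\bigl(\overline{\braket{\Lambda}{\psi}}\,\braket{\Lambda}{v_a}\bigr)$ is the real part of a purely imaginary number, hence zero. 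Likewise the cross entry $4\operatorname{Re}\langle E_{-\gamma_a}\Lambda,-\tfrac{i}{\sqrt2}E_{-\gamma_a}\Lambda\rangle=4\operatorname{Re}(-i\ell/\sqrt2)$ vanishes. Entries pairing $\partial_{\theta_a}\psi$ with $v_b$ for $b\neq a$ vanish by weight separation.

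\textbf{Comparison with the paper.} With this addition your explicit route is valid, and it differs from the paper's geometric argument. The paper notes that $\cL_{\mathrm{fid}}$ is $K_\Lambda$-invariant and depends only on the principal angles. The isotropy action of $K_\Lambda$ is polar with the flat as a section, so at regular points the whole orthogonal complement of the flat consists of $K_\Lambda$-orbit directions. This kills the gradient and the cross block at once, and continuity extends the result to singular points. The flat block is made constant by translation invariance under the abelian group $\exp(\mathfrak a)$. Your computation avoids polar-action theory and the continuity step, at the cost of this extra case split.
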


\begin{theorem}
\label{thm:fidelity_integrability}
Let $\mathcal O_{\lambda}$ be as in \cref{lem:hc_frame}, and let the initial ray have positive fidelity with the target, so that $\cos^{2}\theta_{a}(0)>0$ for every $a$.
Set
\begin{equation}
m_{a}:=\sin^{2}\theta_{a},
\qquad
p:=|\braket{\Lambda}{\psi}|^{2}=\prod_{a=1}^{r}(1-m_{a})^{\ell},
\end{equation}
and let $A=\{a:m_{a}(0)>0\}$ be the active defect set.
For the continuous-time quotient-QNG flow
\begin{equation}
\dot\gamma(t)=-\eta\,\grad_{\bar F}\cL_{\mathrm{fid}}(\gamma(t)),
\qquad \eta>0,
\end{equation}
the principal defects obey
\begin{equation}
\dot m_{a}=-\eta\,p\,m_{a},
\qquad a=1,\dots,r.
\label{eq:cominuscule_defect_flow}
\end{equation}
Hence all ratios $m_{a}/m_{b}$ with $a,b\in A$ are first integrals.
If $A$ is nonempty, then
\begin{equation}
m(t)=u(t)\,m(0),
\qquad u(0)=1,
\end{equation}
where $u$ satisfies the scalar equation
\begin{equation}
\dot u=-\eta\,u\prod_{a=1}^{r}(1-\beta_{a}u)^{\ell},
\qquad \beta_{a}:=m_{a}(0).
\label{eq:cominuscule_scalar_reduction}
\end{equation}
The loss is
\begin{equation}
\cL_{\mathrm{fid}}(t)=1-\prod_{a=1}^{r}(1-\beta_{a}u(t))^{\ell}.
\end{equation}
Moreover $u(t)=Ce^{-\eta t}(1+o(1))$, and therefore
\begin{equation}
\cL_{\mathrm{fid}}(t)
=\ell\left(\sum_{a=1}^{r}\beta_{a}\right)Ce^{-\eta t}+o(e^{-\eta t}).
\end{equation}
Thus the asymptotic exponent is independent of rank, ambient dimension, active defect distribution, and embedding index; the leading loss amplitude retains the factor $\ell$.
If some $\cos\theta_{a}=0$, then $p=0$ and the vector field vanishes; this excluded set is the zero-fidelity critical locus.
\end{theorem}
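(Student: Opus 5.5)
The plan is to reduce the quotient-QNG flow to an explicit ODE in the principal angles using \cref{lem:hc_frame,lem:flat_fisher_block}, and then integrate that ODE directly.

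\emph{Step 1: reduction to the flat.} The target stabilizer acts on $\mathcal O_\lambda$ by $\bar F$-isometries that fix $\cL_{\mathrm{fid}}$, so it commutes with the gradient flow. By \cref{lem:hc_frame} we may therefore assume the initial ray has the flat form \eqref{eq:hc_flat_state}. By \cref{lem:flat_fisher_block}, the QNG vector field is tangent to the flat. Hence the solution of the ODE on the flat, lifted through $\theta\mapsto\psi(\theta)$, is the orbit trajectory, by uniqueness of ODE solutions.

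\emph{Step 2: the angle equations.} On the flat, \eqref{eq:hc_fidelity_product} gives
\[
\cL_{\mathrm{fid}}=1-\prod_a\cos^{2\ell}\theta_a,
\qquad
\partial_{\theta_a}\cL_{\mathrm{fid}}=2\ell\,p\tan\theta_a .
\]
The block form \eqref{eq:hc_fisher_block} has vanishing cross block, and the differential has no off-flat component. So the quotient gradient in flat coordinates is $(4\ell)^{-1}\nabla_\theta\cL_{\mathrm{fid}}$, which gives
\[
\dot\theta_a=-\frac{\eta}{2}\,p\tan\theta_a .
\]
Then, with $m_a=\sin^2\theta_a$,
\[
\dot m_a=2\sin\theta_a\cos\theta_a\,\dot\theta_a=-\eta\,p\sin^2\theta_a=-\eta\,p\,m_a,
\]
which is \eqref{eq:cominuscule_defect_flow}.

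\emph{Step 3: first integrals and the scalar reduction.} The system \eqref{eq:cominuscule_defect_flow} is smooth in $m$ on $[0,1)^r$, with $p=\prod_a(1-m_a)^\ell$. Indices with $m_a(0)=0$ therefore stay zero by uniqueness. For active indices,
\[
\frac{d}{dt}\log m_a=-\eta\,p
\]
is the same for every $a\in A$, so each ratio $m_a/m_b$ is conserved. This also gives $m(t)=u(t)\,m(0)$ with $\dot u=-\eta\,u\,p$, and substituting $m_a=\beta_a u$ into $p$ yields \eqref{eq:cominuscule_scalar_reduction} and the stated loss formula.

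\emph{Step 4: asymptotics.} Along the flow $u$ decreases, so $p$ increases and $p\ge p(0)>0$. Hence $u\le e^{-\eta p(0)t}\to0$. Writing
\[
\log u+\eta t=\eta\int_0^t\bigl(1-p(s)\bigr)\,ds
\]
and using $1-p=O(u)$, which is integrable, the right-hand side converges. This gives $u=Ce^{-\eta t}(1+o(1))$ with $C>0$. Expanding
\[
1-p=\ell\Bigl(\sum_a\beta_a\Bigr)u+O(u^2)
\]
gives the loss asymptotics. If some $\cos\theta_a=0$, then $p=0$, the gradient vanishes, and that locus is critical.

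\emph{Main obstacle.} The delicate point is Step 1 together with the chart. The principal-angle flat chart degenerates where $\theta_a=0$ (inactive defects) or where angles coincide, much like polar coordinates at the origin. So identifying the coordinate gradient $(4\ell)^{-1}\nabla_\theta$ with the intrinsic $\grad_{\bar F}$ needs care there. I would first prove the identity on the open dense set where $0<\theta_a<\pi/2$ and the $\theta_a$ are distinct. I would then extend it by continuity, working in the smooth coordinates $m_a$ where the vector field $-\eta\,p\,m_a$ is manifestly smooth, and invoke uniqueness of solutions to cover trajectories that start on the degenerate set.
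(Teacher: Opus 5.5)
Your proposal is correct and follows the paper's proof. Both use \cref{lem:flat_fisher_block} to obtain $\dot\theta_a=-\tfrac{\eta}{2}p\tan\theta_a$, convert this to $\dot m_a=-\eta p m_a$, read off the common logarithmic rate that gives the conserved ratios and the scalar reduction, and expand $1-p=\ell u\sum_a\beta_a+O(u^2)$. Your stabilizer-isometry reduction and your monotonicity argument ($p\ge p(0)>0$, hence $u\to0$ and $\log u+\eta t$ converges with $C>0$) fill in steps the paper only asserts, and the continuity extension in \cref{lem:flat_fisher_block} already handles your chart-degeneracy concern.
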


The transfer criterion determines when a circuit follows this full-orbit vector field.
At a point $\theta$, exact realization occurs precisely when \cref{eq:objective_transfer_condition} holds with $b_\rho$ representing the fidelity differential.
Surjectivity of $B_\theta$ guarantees this condition for every orbit objective.
If \cref{eq:objective_transfer_condition} fails, \cref{eq:induced_orbit_update} gives the metric projection onto the circuit-reachable tangent subspace; the projected vector field need not preserve the defect ratios or the scalar reduction.

\begin{corollary}
\label{cor:cominuscule_discrete}
For the finite-step quotient-QNG update implemented by the Lie-exponential retraction, the principal angles close exactly as
\begin{equation}
\theta_{a}^{+}=
\left|\theta_{a}-\frac{\eta}{2}\sin\theta_a\,
\cos^{2\ell-1}\theta_a
\prod_{b\ne a}\cos^{2\ell}\theta_b\right|.
\label{eq:cominuscule_discrete_map}
\end{equation}
This exactness is relative to the Lie-exponential retraction; a first-order retraction gives the same map up to $O(\|\Delta\theta\|^{2})$. If $0<\eta\le4$ and the initial fidelity is positive, the iteration converges globally to the target; for $0<\eta\le2$ it does so without overshoot in any principal angle.
\end{corollary}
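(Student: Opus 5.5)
The plan is to reduce everything to the maximal flat of \cref{lem:hc_frame} and then analyse a decoupled family of scalar maps.

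\emph{Deriving the map.} By \cref{lem:flat_fisher_block}, at a flat point $\psi(\theta)$ the quotient-QNG direction of $\cL_{\mathrm{fid}}$ is tangent to the flat. Because the Fisher block there is $4\ell I_r$, in principal-angle coordinates the direction is $-(4\ell)^{-1}\nabla_\theta\cL_{\mathrm{fid}}$. Differentiating the product formula \cref{eq:hc_fidelity_product} gives
\[
\partial_a\cL_{\mathrm{fid}}=2\ell\sin\theta_a\cos^{2\ell-1}\theta_a\prod_{b\ne a}\cos^{2\ell}\theta_b ,
\]
so the step is $\Delta\theta_a=-\tfrac{\eta}{2}\,g_a(\theta)$ with $g_a:=\sin\theta_a\cos^{2\ell-1}\theta_a\prod_{b\ne a}\cos^{2\ell}\theta_b$.

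The Lie-exponential retraction applies $\exp\bigl(\sum_a\Delta\theta_a(E_{-\gamma_a}-E_{\gamma_a})\bigr)$. Strong orthogonality of the $\gamma_a$ (no signed sums are roots) makes these generators commute, so the exponentials compose additively and the new point is exactly $\psi(\theta+\Delta\theta)$. A negative angle is returned to the fundamental range by a target-stabilizer element acting on the flat as the reflection $\theta_a\mapsto-\theta_a$, which leaves \cref{eq:hc_fidelity_product} unchanged; this accounts for the absolute value. Any first-order retraction agrees with the exponential to $O(\|\Delta\theta\|^2)$, which gives the stated approximate version.

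\emph{Monotone contraction.} Write $c:=\prod_{b\ne a}\cos^{2\ell}\theta_b\le1$. For $0<\theta_a\le\pi/2$ with positive fidelity,
\[
\tfrac{\eta}{2}g_a\le 2\sin\theta_a\cos\theta_a\, c\le\sin 2\theta_a<2\theta_a\qquad(\eta\le4),
\]
and therefore $|\theta_a^+|<\theta_a$. In particular $\theta_a^+$ stays in $(-\pi/2,\pi/2)$ and the folding is legitimate. Every active angle strictly decreases, inactive angles stay at $0$, and the fidelity $\prod\cos^{2\ell}\theta_b$ is nondecreasing, so positivity of the fidelity persists. For $\eta\le2$ the same chain gives $\tfrac{\eta}{2}g_a\le\sin\theta_a\cos\theta_a\le\theta_a$, hence $\theta_a^+\ge0$: no overshoot.

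\emph{Convergence.} Each angle sequence is monotone and bounded, so $\theta^{(k)}\to\theta^*$ with the fidelity bounded below by its initial value. The map is continuous, so $\theta^*$ is a fixed point. Suppose some $\theta_a^*>0$. Then either $g_a(\theta^*)=0$, which is impossible since $\sin\theta_a^*>0$ and the fidelity is positive, or $\tfrac{\eta}{2}g_a(\theta^*)=2\theta_a^*$, which the strict inequality above excludes. Hence $\theta^*=0$: the iterates converge to the target ray.

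The main obstacle I expect is justifying the exact additivity of the retraction on the flat together with the stabilizer reflection. This needs commutation of the $\mathfrak{sl}_2$-triples for strongly orthogonal roots and the existence of a stabilizer element acting as $\theta_a\mapsto-\theta_a$; I would take both from the structure already established in \cref{lem:hc_frame}. The remaining estimates are elementary scalar inequalities.
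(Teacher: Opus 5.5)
Your proposal is correct and follows the paper's proof essentially step for step. It uses the same exact flat map obtained from the $4\ell I_r$ Fisher block, and the same bound $0<\delta_a\le\tfrac{\eta}{4}\sin 2\theta_a<2\theta_a$, which gives coordinatewise monotone decrease. It closes with the same fixed-point exclusion at a limit whose fidelity is bounded below by its initial value. Your explicit justification of exact additivity via commuting strongly orthogonal root generators, and of the absolute value via a stabilizer reflection, spells out what the paper compresses into one sentence about the totally geodesic flat.
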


For $0<\eta<4$ with $\eta\ne2$, the local convergence ratio in $\|\theta\|$ is $|1-\eta/2|$. At $\eta=2$ the linear term vanishes and
\begin{equation}
\theta_{a}^{+}
=\left(\ell-\frac{1}{3}\right)\theta_{a}^{3}
+\ell\theta_{a}\sum_{b\ne a}\theta_{b}^{2}
+O(\|\theta\|^{5}),
\label{eq:cominuscule_newton_cubic}
\end{equation}
so the convergence is cubic in the principal-angle norm.

The first integrals in \cref{thm:fidelity_integrability} belong to the continuous flow and are generally not preserved by the finite map.  Indeed, before any overshoot and with $p$ evaluated at the current iterate, a small-$\eta$ expansion of \cref{eq:cominuscule_discrete_map} gives
\begin{equation}
\log\frac{m_a^+}{m_b^+}-\log\frac{m_a}{m_b}
=-\frac{\eta^2p^2}{4}
\left(\sec^2\theta_a-\sec^2\theta_b\right)+O(\eta^3).
\label{eq:discrete_defect_ratio_drift}
\end{equation}
Thus complete integrability is a continuous-time property: one-step ratio drift is second order in the step size and accumulates to first order over a fixed continuous-flow time.

For rank one at $\eta=2$, the signed local distance map is
\begin{equation}
\Phi_\ell(d)=d-\sin d\cos^{2\ell-1}d
=\left(\ell-\frac13\right)d^3
-\left(\frac{(2\ell-1)^2}{8}+\frac1{120}\right)d^5+O(d^7).
\label{eq:embedding_rank_one_map}
\end{equation}

At the stability boundary $\eta=4$, the absolute rank-one distance instead obeys
\begin{equation}
d_{t+1}=d_t-2\left(\ell-\frac13\right)d_t^3+O(d_t^5),
\qquad
d_t\sim\frac{1}{\sqrt{4(\ell-1/3)t}},
\label{eq:embedding_rank_one_boundary}
\end{equation}
so convergence is sublinear rather than linear.

A physical family with $\ell>1$ is supplied by $N$ identical qubits restricted to the permutation-symmetric sector $\operatorname{Sym}^N(\bbC^2)$.  The collective $SU(2)$ action has highest weight $\Lambda=N\omega_1$, so $\ell=N$, and the orbit of $\ket{0}^{\otimes N}$ is the spin-coherent copy of $\bbC\mathrm P^1$ under the degree-$N$ Veronese embedding \cite{Perelomov1986}.  Along
\begin{equation}
\ket{\psi(\theta)}=
(\cos\theta\ket0+\sin\theta\ket1)^{\otimes N},
\end{equation}
the target fidelity is $\cos^{2N}\theta$, the flat QFIM coefficient is $4N$, and the $\eta=2$ local map has cubic coefficient $N-1/3$.  Hence the global learning-rate window remains $0<\eta\le4$, while the angular region in which the cubic term is small contracts on the scale $N^{-1/2}$.

\begin{theorem}
\label{thm:inexact_cubic}
Suppose an implemented rank-one step at $\eta=2$ satisfies
\begin{equation}
|d_{t+1}-|\Phi_\ell(d_t)||\le\delta_t.
\end{equation}
For sufficiently small $d_t$ there are constants $C_{5,\ell},C_{7,\ell}>0$ such that
\begin{equation}
d_{t+1}\le\left(\ell-\frac13\right)d_t^3
+C_{5,\ell}d_t^5+C_{7,\ell}d_t^7+\delta_t.
\label{eq:inexact_cubic_bound}
\end{equation}
Consequently cubic convergence persists while $\delta_t=o(d_t^3)$.
With a fixed distance-error floor $\delta$, the cubic regime ends at $d_t=O(\delta^{1/3})$; maintaining it asymptotically requires a growing shot budget that makes $\delta_t/d_t^3\to0$.
\end{theorem}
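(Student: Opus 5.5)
The plan is to combine the exact small-$d$ expansion of $\Phi_\ell$ in \cref{eq:embedding_rank_one_map} with the triangle inequality. First, $d_{t+1}\le|\Phi_\ell(d_t)|+\delta_t$ follows directly from the hypothesis $|d_{t+1}-|\Phi_\ell(d_t)||\le\delta_t$. The remaining work is to bound $|\Phi_\ell(d)|$ for small $d\ge0$ by a polynomial with the stated leading coefficient.

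For that bound we regard $\Phi_\ell(d)=d-\sin d\cos^{2\ell-1}d$ as a real-analytic odd function of $d$. Its Taylor expansion at $0$ has been recorded as $(\ell-\tfrac13)d^3-c_5d^5+O(d^7)$ with $c_5=\tfrac{(2\ell-1)^2}{8}+\tfrac1{120}$. To avoid sign issues we will not rely on the sign of $c_5$. We apply Taylor's theorem with remainder to order seven on a fixed interval $[0,d_0]$, where $d_0<\pi/2$. This gives
\[
|\Phi_\ell(d)|\le\left(\ell-\tfrac13\right)d^3+|c_5|d^5+R_\ell d^7,
\]
with $R_\ell=\sup_{[0,d_0]}|\Phi_\ell^{(7)}|/7!$, which is finite by smoothness. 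On this interval the cubic term dominates, so $\Phi_\ell(d)>0$ for small $d$ and the absolute value is harmless anyway. We then set $C_{5,\ell}:=|c_5|$, or any positive number at least this, and $C_{7,\ell}:=R_\ell$, together with the requirement $d_t\le d_0$ as the meaning of ``sufficiently small''. This yields \cref{eq:inexact_cubic_bound}.

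The consequences then follow in turn.
\begin{enumerate}
\item If $\delta_t=o(d_t^3)$, the right side equals $(\ell-\tfrac13+o(1))d_t^3$. This gives cubic convergence, and the iterates stay in $[0,d_0]$ by induction once $d_t$ is small enough that $(\ell-\tfrac13+1)d_t^2<1$.
\item With a fixed floor $\delta_t\equiv\delta$, the term $\delta$ exceeds $(\ell-\tfrac13)d_t^3$ once $d_t\lesssim\delta^{1/3}$. The recursion then only yields $d_{t+1}=O(\delta)$, and the ratio $d_{t+1}/d_t^3$ is no longer bounded. So the cubic regime ends at $d_t=O(\delta^{1/3})$.
\item Maintaining cubic decay asymptotically therefore requires $\delta_t/d_t^3\to0$. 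Since the shot-noise error scales inversely with the square root of the shot number, this means a growing shot budget.
\end{enumerate}

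The main obstacle is only bookkeeping. We need the remainder bound to be uniform on a fixed neighbourhood, so that the constants depend on $\ell$ alone. We also need to check that the iteration never leaves that neighbourhood, and the induction in the first consequence handles this.
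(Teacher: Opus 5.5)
Your proposal is correct and follows the paper's argument. The hypothesis gives $d_{t+1}\le|\Phi_\ell(d_t)|+\delta_t$ by the triangle inequality, and Taylor's theorem applied to the expansion in \cref{eq:embedding_rank_one_map} then bounds $|\Phi_\ell(d_t)|$ by the stated polynomial, using a uniform seventh-order remainder on a fixed interval $[0,d_0]$. Your extra bookkeeping makes explicit what the paper's one-line proof leaves implicit: $\Phi_\ell>0$ on $(0,\pi/2)$ because $\sin d<d$, the constants are uniform, and the neighbourhood is invariant. In the invariance condition $(\ell-\tfrac13+1)d_t^2<1$ you should also absorb the $C_{5,\ell}$ and $C_{7,\ell}$ terms, which is done by shrinking $d_0$.
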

\begin{proof}
Apply Taylor's theorem to \cref{eq:embedding_rank_one_map} and add the implementation error by the triangle inequality.
\end{proof}

For the principal defect channels, the integrable flow has a direct selection rule.
For Slater targets in number-conserving free-fermion circuits and for fermionic-Gaussian targets in matchgate circuits, quotient QNG rescales all active principal defect channels by the same scalar factor.
Equivalently, the curves $\log m_{a}(t)$ differ only by constants as long as the deterministic cominuscule model applies.
This gives a falsifiable trajectory signature and an algorithmic limitation: quotient QNG contracts the existing defect profile but does not redistribute error between principal channels.

The integrability mechanism differs from the classical double-bracket picture.
Completely integrable gradient flows on orbits are well known for moment-map expectation objectives: double-bracket flows sort spectra, diagonalize matrices, and realize Toda-type dynamics as gradient flows on adjoint orbits \cite{Brockett1991,BlochBrockettRatiu1992}.
The contribution here is the identification of conserved principal-defect ratios, a scalar reduction, and an exact Lie-retracted map for the QNG fidelity vector field on embedded quantum-state orbits.
The underlying coherent-state, principal-angle, and symmetric-space geometry is classical \cite{Perelomov1986,Berceanu1997GeometryCoherentStates,Berceanu2004GeometricalPhases}.
The resulting map also differs from cubically convergent Grassmann invariant-subspace iterations, which use another objective and vector field \cite{Absil2004CubicGrassmann}, and from geodesic corrections to QNG \cite{Halla2025QNGGeodesic}.
The proofs of \cref{lem:hc_frame,lem:flat_fisher_block,thm:fidelity_integrability,cor:cominuscule_discrete}, together with the local expansion underlying \cref{eq:cominuscule_newton_cubic}, are given in Appendix~\ref{app:cominuscule_fidelity}.

\subsection{Physical conditioning limits and Morse--Bott landscapes}
\label{subsec:qng_limit}

Cominuscule fidelity is exactly solvable because its maximal-flat defects share one contraction factor.
A general energy objective in the full-control sector has a different controlling mechanism: physical excitation gaps set the intrinsic Hessian anisotropy.
Exact redundancy is removed by quotienting or pseudoinversion. At a critical point where the quotient-to-orbit map is locally diffeomorphic, quotient QNG removes conditioning caused solely by the faithful circuit chart, while the gap-dependent anisotropy remains in the intrinsic objective operator.

\begin{theorem}
\label{thm:qng_limit}
Let $H_{\mathrm{obj}}$ be a Hermitian operator on a fixed sector $\cK_{\lambda}$ of dimension $D$, and consider the objective
\begin{equation}
\cL_{H}(\psi):=\bra{\psi}H_{\mathrm{obj}}\ket{\psi}
\end{equation}
restricted to rays in $\cK_{\lambda}$.
Assume that the restricted dynamical Lie algebra on $\cK_{\lambda}$ satisfies $\mathfrak g_{\lambda}=\mathfrak{su}(\cK_{\lambda})$, and let $\ket{\psi_{\star}}=\ket{0}$ be a nondegenerate ground state of $H_{\mathrm{obj}}\vert_{\cK_{\lambda}}$ with eigenbasis $\{\ket{k}\}_{k=0}^{D-1}$ and eigenvalues $E_{0}<E_{1}\le \cdots \le E_{D-1}$.
Let $\mathbf{H}_{\star}$ denote the Euclidean Hessian matrix of $\cL_{H}$ in a local generator chart centered at a representative of the minimizing ray $\bbC\ket{\psi_{\star}}$.
Then the following hold.
\begin{enumerate}
\item The full generator-coordinate Hessian has $(D-1)^{2}$ zero eigenvalues coming from stabilizer directions. On the horizontal tangent space, its nonzero spectrum is
\begin{equation}
\{E_{1}-E_{0},E_{1}-E_{0},\dots,E_{D-1}-E_{0},E_{D-1}-E_{0}\}.
\end{equation}
In particular,
\begin{equation}
\kappa_{\mathrm{Eucl}}=\frac{E_{D-1}-E_{0}}{E_{1}-E_{0}}.
\end{equation}
\item The Jacobian of the linearized continuous-time quotient-QNG flow on horizontal directions is $-\mathbf{M}_{\star}$, where
\begin{equation}
\mathbf{M}_{\star}:=F(\psi_{\star})^{+}\mathbf{H}_{\star}.
\end{equation}
The nonzero eigenvalues of $\mathbf{M}_{\star}$ are
\begin{equation}
\frac{1}{2}(E_{1}-E_{0}),\frac{1}{2}(E_{1}-E_{0}),\dots,\frac{1}{2}(E_{D-1}-E_{0}),\frac{1}{2}(E_{D-1}-E_{0}).
\end{equation}
Hence
\begin{equation}
\kappa_{\mathrm{QNG}}=\kappa_{\mathrm{Eucl}}=\frac{E_{D-1}-E_{0}}{E_{1}-E_{0}}.
\end{equation}
\end{enumerate}
\end{theorem}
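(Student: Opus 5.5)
The proof is a direct computation of the generator-chart Hessian at the ground state, followed by combining it with the full-control Fisher structure \cref{eq:full_control_spectrum}. I would work in the local chart $\ket{\psi(x)}=\exp(-i\sum_a x_aT_a)\ket{0}$, where $\{T_a\}$ is a trace-orthonormal basis of $\mathfrak{su}(D)$, and rotate this basis into a convenient adapted orthonormal frame. The frame consists of the stabilizer algebra of the ray $\bbC\ket{0}$, which has dimension $(D-1)^2$: it is made of the traceless-hermitian operators on $\spanop\{\ket{k}\}_{k\ge1}$ together with the traceless diagonal element mixing $\ket0$ with the rest. It also contains the $2(D-1)$ transition generators $X_k=(\ket0\!\bra k+\ket k\!\bra0)/\sqrt2$ and $Y_k=-i(\ket0\!\bra k-\ket k\!\bra0)/\sqrt2$. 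An orthogonal change of trace-orthonormal basis conjugates both $F$ and $\mathbf H_\star$ by the same orthogonal matrix, so spectra may be computed in this frame.

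\textbf{Hessian.} Expanding to second order, $\cL_H(x)=E_0+i\langle0|[X,H_{\mathrm{obj}}]|0\rangle-\tfrac12\langle0|[X,[X,H_{\mathrm{obj}}]]|0\rangle+O(x^3)$ with $X=\sum x_aT_a$. The linear term vanishes because $\ket0$ is an eigenvector. The quadratic form equals $\langle0|XH_{\mathrm{obj}}X|0\rangle-E_0\langle0|X^2|0\rangle=\sum_{k\ge1}(E_k-E_0)|\langle k|X|0\rangle|^2$. A stabilizer generator satisfies $X\ket0\propto\ket0$, so it contributes zero, and it has no cross terms with anything, since the form only sees the components $\langle k|X|0\rangle$ for $k\ge1$. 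For $X=\sum_k(a_kX_k+b_kY_k)$ one gets $\langle k|X|0\rangle=(a_k+ib_k)/\sqrt2$. The quadratic form is therefore $\tfrac12\sum_k(E_k-E_0)(a_k^2+b_k^2)$, so the Hessian matrix is diagonal with entries $E_k-E_0$, each appearing twice, on the horizontal frame and zero on the stabilizer block. This gives part 1, including the $(D-1)^2$ zero modes and $\kappa_{\mathrm{Eucl}}$ (using $E_1>E_0$).

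\textbf{Main obstacle: exponential-chart subtlety.} The second-order Taylor coefficient of $e^{-iX}$ contains the $-X^2/2$ term. One must check that using the exponential chart, rather than a product-of-exponentials chart, does not produce extra Hessian contributions from the stabilizer directions. At a critical point the Hessian is chart-independent up to the linear change of coordinates, because the gradient vanishes (as in the proof of \cref{thm:conditioning_similarity}), so any chart with the same first-order generators gives the same matrix. I would state this explicitly.

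\textbf{QNG operator.} By \cref{eq:full_control_spectrum}, $F(\psi_\star)=2P_{\mathrm H}$ in any trace-orthonormal basis, so $F^+=\tfrac12P_{\mathrm H}$. Since $\mathbf H_\star$ vanishes on and into the stabilizer block, $\mathbf H_\star=P_{\mathrm H}\mathbf H_\star P_{\mathrm H}$. Hence $\mathbf M_\star=\tfrac12\mathbf H_\star$, with nonzero eigenvalues $(E_k-E_0)/2$, each doubled. The linearized continuous QNG flow $\dot x=-F^+\nabla\cL$ has Jacobian $-F^+\mathbf H_\star$ on horizontal directions, because the derivative of $F^+$ multiplies the vanishing gradient at the critical point. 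The condition-number ratio is invariant under the factor $1/2$, which yields $\kappa_{\mathrm{QNG}}=\kappa_{\mathrm{Eucl}}$. This is consistent with \cref{thm:conditioning_similarity} and \cref{cor:free_ng}.
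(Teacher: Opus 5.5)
Your proposal is correct and follows the paper's proof essentially step for step. It uses the same adapted frame of transition generators $X_k,Y_k$ and the same second-order expansion, giving $\bra{0}X(H_{\mathrm{obj}}-E_0)X\ket{0}=\tfrac12\sum_k(E_k-E_0)(a_k^2+b_k^2)$. It then uses $F(\psi_\star)^+=\tfrac12P_{\mathrm H}$ to obtain $\mathbf M_\star=\tfrac12\mathbf H_\star$, exactly as the paper does. Your extra remarks make explicit two points the paper leaves implicit without changing the argument. First, the quadratic form sees only $\bra{k}X\ket{0}$, so there are no stabilizer--horizontal cross terms. Second, the Hessian and the linearized Jacobian are chart-independent at a critical point.
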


QNG still eliminates geometric ill-conditioning from redundant parameterizations, as established in Section~\ref{sec:quotient}, but it cannot remove the physical anisotropy encoded in the excitation gaps $E_{k}-E_{0}$.
The local condition number $\kappa_{\mathrm{QNG}}=\kappa_{\mathrm{Eucl}}$ is the precise expression of this boundary.

The same excitation structure determines the global critical geometry of linear expectation objectives: eigenspaces form the critical manifolds, and energy differences give the transverse Hessian spectrum.

\begin{theorem}
\label{thm:morse_bott_landscapes}
\begin{enumerate}
\item[(a)]
In the full projective sector, let $\mathfrak g_{\lambda}$ be the restricted dynamical Lie algebra on $\cK_{\lambda}$ and assume \cref{eq:sector_core_assumptions}.
Fix any reference ray $[\psi_{0}]\in \mathbb{P}(\cK_{\lambda})$ and let $P_{0}=\ket{\psi_{0}}\!\bra{\psi_{0}}$.
Then the reachable pure-state sector orbit is
\begin{equation}
\mathcal{O}_{\lambda}:=G_{\lambda}\cdot [\psi_{0}]=\mathbb{P}(\cK_{\lambda})
\cong SU(D)/S(U(1)\times U(D-1))\cong \bbC\mathrm{P}^{D-1},
\end{equation}
equivalently the coadjoint orbit of the rank-one projector $P_{0}$.
For a Hermitian operator $H_{\mathrm{obj}}$ on $\cK_{\lambda}$, define the linear expectation objective
\begin{equation}
\cL_{H}([\psi])=\bra{\psi}H_{\mathrm{obj}}\ket{\psi}=\Tr(H_{\mathrm{obj}}P),
\qquad P=\ket{\psi}\!\bra{\psi},
\end{equation}
on rays $[\psi]\in \mathcal{O}_{\lambda}$.
Let
\begin{equation}
\cK_{\lambda}=\mathcal E_{0}\oplus \mathcal E_{1}\oplus \cdots \oplus \mathcal E_{J}
\end{equation}
be the orthogonal decomposition into eigenspaces of $H_{\mathrm{obj}}\vert_{\cK_{\lambda}}$ corresponding to distinct eigenvalues
\begin{equation}
E_{0}<E_{1}<\cdots<E_{J}.
\end{equation}
Then the following hold.
\begin{enumerate}
\item The critical set of $\cL_{H}$ is exactly the disjoint union
\begin{equation}
\mathrm{Crit}(\cL_H)=\bigsqcup_{j=0}^{J} \mathbb{P}(\mathcal E_{j}).
\end{equation}
\item At each point $[\psi]\in \mathbb{P}(\mathcal E_{j})$, the Hessian kernel of $\cL_H$ is precisely the tangent space
\begin{equation}
\ker(\mathrm{Hess}_{[\psi]}\cL_H)=T_{[\psi]}\mathbb{P}(\mathcal E_{j}).
\end{equation}
Hence $\cL_H$ is Morse--Bott on $\mathcal{O}_{\lambda}$.
\item Writing the projective tangent space as
\begin{equation}
T_{[\psi]}\bbC\mathrm{P}^{D-1}
\cong (\mathcal E_{j}\cap \psi^{\perp}) \oplus \bigoplus_{k\neq j} \mathcal E_{k}
\end{equation}
as a complex orthogonal sum, the Hessian vanishes on $\mathcal E_{j}\cap \psi^{\perp}$ and on each complex summand $\mathcal E_{k}$ with $k\neq j$ it acts, in real coordinates, by the scalar
\begin{equation}
2(E_{k}-E_{j})
\end{equation}
with multiplicity $2\dim \mathcal E_{k}$.
\item Consequently, $\mathbb{P}(\mathcal E_{0})$ is the global minimum manifold, $\mathbb{P}(\mathcal E_{J})$ is the global maximum manifold, and every intermediate critical manifold $\mathbb{P}(\mathcal E_{j})$ with $0<j<J$ is a saddle manifold.
\end{enumerate}
In particular, $\cL_H$ has no spurious local minima or local maxima on the quotient sector orbit.
\item[(b)]
In the highest-weight flag-orbit setting of \cref{thm:flag_root_spectrum}, if
\begin{equation}
\mathcal O_{\lambda}=G_{\lambda}\cdot[\Lambda]
\end{equation}
and the traceless part of $H_{\mathrm{obj}}$ belongs to $\mathfrak g_{\lambda}$, then the expectation objective is a Hamiltonian moment-map component on $\mathcal O_{\lambda}$.
It is a perfect Morse--Bott function; every critical manifold has even Morse index, every local minimum or maximum is global, and there are no spurious local extrema on the flag orbit.
\end{enumerate}
\end{theorem}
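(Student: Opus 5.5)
For part (a) I will work directly on the sphere of unit vectors and descend to rays. Write $f(\psi)=\bra{\psi}H_{\mathrm{obj}}\ket{\psi}$ on $S(\cK_\lambda)$, which is $U(1)$-invariant. At a unit vector $\psi$, the horizontal tangent space of $\bbC\mathrm{P}^{D-1}$ is identified with $\psi^{\perp}$ via $\psi\mapsto\psi+\epsilon\xi$. The differential is $df_\psi(\xi)=2\operatorname{Re}\bra{\xi}H_{\mathrm{obj}}\ket{\psi}$ for $\xi\perp\psi$. This vanishes for all $\xi\perp\psi$ exactly when $H_{\mathrm{obj}}\psi\in\bbC\psi$, that is, when $\psi$ is an eigenvector. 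Since the eigenvalue is real and determines $j$, the critical set is the disjoint union of the $\mathbb P(\mathcal E_j)$. This gives item~1.

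For item~2 and item~3, I will take the curve
\begin{equation}
\psi(t)=\cos(t\|\xi\|)\,\psi+\sin(t\|\xi\|)\,\xi/\|\xi\|,
\end{equation}
which is a projective geodesic and therefore computes the Riemannian Hessian at a critical point. Expanding to second order gives
\begin{equation}
\mathrm{Hess}\,f(\xi,\xi)=2\bigl(\bra{\xi}H_{\mathrm{obj}}\ket{\xi}-E_j\|\xi\|^2\bigr).
\end{equation}
Decompose $\xi=\xi_j+\sum_{k\ne j}\xi_k$ along the orthogonal eigenspaces, with $\xi_j\in\mathcal E_j\cap\psi^\perp$. The form then becomes $\sum_{k\ne j}2(E_k-E_j)\|\xi_k\|^2$. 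Because $\|\xi_k\|^2$ is the squared real norm on $\mathcal E_k$, the scalar $2(E_k-E_j)$ appears with real multiplicity $2\dim\mathcal E_k$, and the form is zero on $\xi_j$.

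Since $E_k\ne E_j$ for $k\ne j$, the Hessian kernel is exactly $\mathcal E_j\cap\psi^\perp=T_{[\psi]}\mathbb P(\mathcal E_j)$, which is the Morse--Bott condition. Item~4 is sign counting. The negative directions number $2\sum_{k<j}\dim\mathcal E_k$ and the positive directions number $2\sum_{k>j}\dim\mathcal E_k$. Both counts are nonzero exactly when $0<j<J$, so those manifolds are saddles. The global extremal values $E_0$ and $E_J$ are attained on $\mathbb P(\mathcal E_0)$ and $\mathbb P(\mathcal E_J)$, and no other local extrema can exist.

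The orbit identification in (a) uses the full-control transitivity already noted before \cref{eq:full_control_spectrum}. The stabilizer of a ray is $S(U(1)\times U(D-1))$, and the map $[\psi]\mapsto P$ identifies the orbit with the coadjoint orbit of a rank-one projector.

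For part (b), first write $H_{\mathrm{obj}}=cI+X$ with $X\in\mathfrak g_\lambda$. Then $\cL_H=c+\langle \mu,X\rangle$ is a component of the moment map $\mu$ of \cref{prop:trace-moment}. It is the Hamiltonian for the circle or torus action $e^{-itX}$ with respect to the Kähler form on the flag orbit, which is the restriction of the Fubini--Study form. I will then invoke the Atiyah--Bott/Kirwan theorem (equivalently Frankel's theorem) that a component of a moment map of a torus action on a compact symplectic manifold is a perfect Morse--Bott function with even indices. The critical set is the fixed-point set of the closure of $\{e^{-itX}\}$, which is a torus; its components are symplectic submanifolds, and their normal bundles split into weight spaces with complex structure, which gives even indices. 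Since the flag orbit is connected and compact, the Atiyah--Guillemin--Sternberg connectedness of level sets implies that every local minimum or maximum is global. Equivalently, an even-index perfect Morse--Bott function on a connected manifold has a unique index-0 component and a unique coindex-0 component. This yields no spurious extrema.

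The main obstacle is part (b): making sure the cited equivariant Morse theory applies on the orbit with the induced symplectic form. That requires checking that the flag orbit is a symplectic (indeed Kähler) submanifold of $\mathbb P(\cK_\lambda)$ and that $\langle\mu,X\rangle$ restricts to a genuine Hamiltonian for a compact torus action. I would cite Kirwan and Atiyah for these facts rather than reprove them. Part (a) is routine once the geodesic second-variation formula is written down.
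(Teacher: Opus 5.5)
Your proposal is correct and follows the paper's proof essentially step for step. The shared steps are: critical rays as eigenvectors via the vanishing of the horizontal differential (the paper phrases this as a Lagrange condition), then the second variation along a normalized curve through $\psi$ in a direction $\xi\perp\psi$ (your great-circle geodesic and the paper's $(\psi+t\delta)/\|\psi+t\delta\|$ give the same second derivative at a critical point), then sign counting for item~4, and for (b) identifying $\cL_H$ with a moment-map component and citing Frankel/Atiyah--Guillemin--Sternberg/Kirwan. Your closing argument in (b), that perfection forces a unique index-$0$ and a unique coindex-$0$ critical component, is if anything a tighter justification of ``local extrema are global'' than the paper's remark that such extrema map to endpoints of the image interval.
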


The factor~$2$ in the real-coordinate Hessian is the standard affine-chart convention on projective space; in the trace-normalized chart of \cref{thm:qng_limit} the coordinate rescaling $z=t/\sqrt{2}$ absorbs it, giving eigenvalues $E_k-E_j$.
The landscape guarantee applies to linear expectations on the reachable orbit; nonlinear losses or observables outside $\mathfrak g_{\lambda}$ can have landscape features not covered by the moment-map theorem.

Wiersema and Killoran derived the corresponding Riemannian circuit flow, whose rank-one projector form is the classical double-bracket evolution \cite{Wiersema2023Riemannian,Brockett1991}.  In the QFIM normalization used here, the cited benchmark is
\begin{equation}
\dot P(t)=-\frac{\eta}{2}\bigl[[H_{\mathrm{obj}},P(t)],P(t)\bigr].
\label{eq:double_bracket_qng}
\end{equation}
This known full-control benchmark sorts the objective spectrum, whereas the partial-control fidelity flow in \cref{thm:fidelity_integrability} preserves principal-defect ratios.

\subsection{State-preparation dynamics}
\label{subsec:dynamics}

Full-control state preparation is the rank-one intersection of the projective landscape and the cominuscule fidelity flow.
General sector expectation flows converge to the ground-state manifold from almost every initial ray, while the rank-one loss further admits closed continuous and discrete dynamics.

\begin{lemma}
\label{lem:sector_flow_convergence}
Assume the full-control setting of part~(a) of \cref{thm:morse_bott_landscapes} and consider the continuous-time quotient gradient flow on the sector orbit,
\begin{equation}
\dot\gamma(t)=-\eta\,\grad_{\bar F}\cL_H(\gamma(t)),
\qquad \gamma(0)\in \mathcal O_{\lambda},\quad \eta>0,
\end{equation}
where $\bar F$ is the descended QFIM metric.
Then every trajectory converges to a critical manifold of $\cL_H$.
Moreover, with respect to the Fubini--Study volume measure on $\mathcal O_{\lambda}\cong\bbC\mathrm{P}^{D-1}$, the set of initial rays whose trajectories converge to a saddle manifold $\mathbb P(\mathcal E_j)$ with $0<j<J$ or to the maximum manifold $\mathbb P(\mathcal E_J)$ has measure zero.
Hence the quotient gradient flow converges to the ground-state manifold $\mathbb P(\mathcal E_0)$ for almost every initialization.
\end{lemma}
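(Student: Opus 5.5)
The plan is to treat the flow as a gradient flow of a real-analytic function on a compact real-analytic Riemannian manifold, and then use the Morse--Bott structure from \cref{thm:morse_bott_landscapes} to control where trajectories end up.

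\emph{Step 1: set up the ambient structure.} The orbit $\mathcal O_\lambda\cong\bbC\mathrm P^{D-1}$ is compact. By \cref{eq:full_control_spectrum} and \cref{thm:quotient_geometry_update}, the descended metric $\bar F$ equals $4$ times the Fubini--Study metric. Indeed, in the normalization of \cref{eq:QFIM_pure} it is $2\Tr(\dot P_1\dot P_2)$, which is real analytic and $SU(D)$-invariant. The objective $\cL_H=\Tr(H_{\mathrm{obj}}P)$ is likewise real analytic.

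\emph{Step 2: every trajectory converges.} Trajectories exist for all time by compactness. Since
\[
\frac{d}{dt}\cL_H=-\eta\|\grad\cL_H\|^2,
\]
the Lojasiewicz gradient inequality for analytic functions applies, exactly as in part~(b) of \cref{thm:representative_convergence}. This gives finite length and convergence to a single critical point. By part~(a) of \cref{thm:morse_bott_landscapes}, that critical point lies in some $\mathbb P(\mathcal E_j)$.

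Alternatively, one can avoid Lojasiewicz: the Morse--Bott property lets us linearize near each critical manifold and conclude that an $\omega$-limit point lying in a Morse--Bott critical manifold forces convergence. The analytic route is cleaner, so I will use it.

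\emph{Step 3: the exceptional set has measure zero.} For $j\ge1$, consider the set of initial conditions converging into $C_j=\mathbb P(\mathcal E_j)$. By the stable manifold theorem for Morse--Bott critical manifolds, this set is the union of stable manifolds $W^s(x)$ over $x\in C_j$. It forms an injectively immersed submanifold of dimension
\[
\dim C_j+\#\{\text{positive Hessian directions}\}.
\]
By item (iii) of \cref{thm:morse_bott_landscapes}, the negative normal directions correspond to $\mathcal E_k$ with $k<j$, which has real dimension $2\sum_{k<j}\dim\mathcal E_k\ge2\dim\mathcal E_0\ge2$. Hence the stable set has positive codimension.

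The flow time-$t$ map is a diffeomorphism, so the full basin is a countable union $\bigcup_n\phi_{-n}(W^s_{\mathrm{loc}}(C_j))$ of smooth submanifolds of positive codimension. Such a union has Fubini--Study measure zero. Taking the finite union over $j\ge1$ and combining with Step 2 leaves full measure converging to $\mathbb P(\mathcal E_0)$.

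\emph{Expected main obstacle.} The delicate step is justifying the local stable-manifold structure for a normally hyperbolic critical manifold rather than an isolated point. I would invoke the center-stable manifold theorem for normally hyperbolic invariant manifolds of equilibria (Hirsch--Pugh--Shub). Normal hyperbolicity holds because the linearization of $-\eta\grad_{\bar F}\cL_H$ at $C_j$ has kernel exactly $TC_j$ by item (ii), and its nonzero eigenvalues are proportional to $2(E_k-E_j)\neq0$.

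As a more explicit alternative, I can exploit the double-bracket form \cref{eq:double_bracket_qng}. This is a Riccati-type flow with explicit solution
\[
P(t)\propto e^{-\eta tH/2}Pe^{-\eta tH/2},
\]
normalized, i.e.\ $\psi(t)\propto e^{-\eta t H_{\mathrm{obj}}/2}\psi(0)$. From this formula, convergence to $\mathbb P(\mathcal E_0)$ occurs exactly when the $\mathcal E_0$-component of $\psi(0)$ is nonzero. The complement is the proper projective subspace $\mathbb P(\mathcal E_0^\perp)$, which has measure zero.

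I would actually present this explicit argument as the main proof, since it sidesteps the dynamical-systems machinery, after verifying that the exponent constant matches the QFIM normalization.
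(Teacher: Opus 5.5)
Your proposal is correct. Your Steps 2--3 are essentially the paper's proof: Lojasiewicz convergence on the compact analytic orbit, then the center-stable manifold theorem for the normally hyperbolic manifolds $\mathbb P(\mathcal E_j)$ with $j>0$, whose unstable real dimension $2\sum_{k<j}\dim\mathcal E_k\ge2$ forces null basins. The argument you say you would actually present is genuinely different, because it integrates the flow explicitly.

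The constant checks out. With $\bar F(\dot P_1,\dot P_2)=2\Tr(\dot P_1\dot P_2)$ one finds $\grad_{\bar F}\cL_H=\tfrac12[[H_{\mathrm{obj}},P],P]$. Differentiating the normalized $e^{-\eta tH_{\mathrm{obj}}/2}P_0e^{-\eta tH_{\mathrm{obj}}/2}$ reproduces $-\tfrac{\eta}{2}[[H_{\mathrm{obj}},P],P]$, so ODE uniqueness identifies the trajectory. Only the sign of the constant matters for the conclusion.

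This route proves more than the lemma asserts, and it uses no analyticity, normal hyperbolicity, or countable unions of immersed stable manifolds. Let $\Pi_j$ be the projector onto $\mathcal E_j$ and $j_0$ the lowest index with $\Pi_{j_0}\psi(0)\neq0$. Then every trajectory converges to the single ray $[\Pi_{j_0}\psi(0)]$, at an exponential rate set by the gap to the next occupied eigenspace. The exceptional set is identified exactly as the projective subspace $\mathbb P(\mathcal E_0^{\perp})$, of real codimension $2\dim\mathcal E_0$.

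The paper's route, in exchange, uses only compactness, analyticity, and the Morse--Bott index data of \cref{thm:morse_bott_landscapes}, so it does not depend on the particular metric or on linearity of the objective. Your closed form relies on the special fact that the Fubini--Study (QFIM) gradient flow of a linear expectation is normalized imaginary-time evolution. It has no analogue for other analytic metrics or for nonlinear losses with Morse--Bott critical sets, where the paper's argument still applies.
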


This almost-everywhere convergence combines analytic gradient-flow convergence on the compact orbit with the center-stable manifold theorem for the nonminimal critical manifolds \cite{AbsilMahonyAndrews2005,HirschPughShub1977,Shub1987}.

\begin{corollary}
\label{cor:full_control_stateprep}
Assume that $\ket{\psi_{\mathrm{tar}}}\in \cK_{\lambda}$ and that the restricted dynamical Lie algebra on $\cK_{\lambda}$ satisfies the full-control condition in \cref{eq:sector_core_assumptions}.
For the standard state-preparation loss
\begin{equation}
\cL_{\mathrm{sp}}(\psi):=1-\left|\braket{\psi_{\mathrm{tar}}}{\psi}\right|^{2},
\end{equation}
\begin{enumerate}
\item[(a)]
The landscape on $\bbC\mathrm{P}^{D-1}$ has the target ray $\bbC\ket{\psi_{\mathrm{tar}}}$ as its unique minimum and the orthogonal rays as its global maxima.
This is the rank-one special case of part~(a) of \cref{thm:morse_bott_landscapes} with $H_{\mathrm{obj}}=I-P_{\mathrm{tar}}$.
In generator coordinates centered at a minimizing representative $\ket{\psi_{\star}}$, the Euclidean Hessian satisfies
\begin{equation}
\mathbf{H}_{\star}=\frac{1}{2}F(\psi_{\star}).
\label{eq:hessian_half_qfim}
\end{equation}
Consequently, by \cref{eq:full_control_spectrum}, the full generator-coordinate Hessian has $(D-1)^{2}$ zero eigenvalues and $2D-2$ nonzero eigenvalues, all equal to $1$.
In particular, both the Euclidean and quotient-QNG local condition numbers are equal to $1$.
\item[(b)]
The continuous quotient flow is the $r=1$ specialization of \cref{thm:fidelity_integrability}.
For the continuous-time quotient-QNG flow
\begin{equation}
\dot\gamma(t)=-\eta\,\grad_{\bar F}\cL_{\mathrm{sp}}(\gamma(t)),
\qquad \eta>0,
\end{equation}
the loss satisfies the closed scalar equation
\begin{equation}
\dot\cL_{\mathrm{sp}}(t)=-\eta\,\cL_{\mathrm{sp}}(t)\bigl(1-\cL_{\mathrm{sp}}(t)\bigr),
\label{eq:stateprep_logistic_ode}
\end{equation}
and therefore
\begin{equation}
\cL_{\mathrm{sp}}(t)=\frac{\cL_{\mathrm{sp}}(0)e^{-\eta t}}{1-\cL_{\mathrm{sp}}(0)+\cL_{\mathrm{sp}}(0)e^{-\eta t}}.
\label{eq:stateprep_logistic_solution}
\end{equation}
In particular, the continuous-time quotient-QNG flow converges exponentially fast to the target ray for every initial ray not orthogonal to $\ket{\psi_{\mathrm{tar}}}$.
At the discrete level, the update $\Delta\theta=-\eta F(\theta)^{+}\nabla\cL_{\mathrm{sp}}(\theta)$ with $\eta=2$ coincides with Newton's method for the horizontal quadratic model at the optimum.
\item[(c)]
The discrete Lie-retracted quotient update is the $r=1$ specialization of \cref{cor:cominuscule_discrete}.
For the discrete quotient-QNG iteration with the Lie-group retraction,
\begin{equation}
\ket{\psi_{t+1}}
=\exp\!\Big({+i\frac{\eta}{2}\sum_{a=1}^{D^{2}-1}g^{(t)}_{a}T_{a}}\Big)\ket{\psi_{t}},
\qquad
g^{(t)}_{a}=\partial_{a}\cL_{\mathrm{sp}}(\psi_{t}),
\label{eq:discrete_retracted_update}
\end{equation}
let $d_{t}:=\arccos\left|\braket{\psi_{\mathrm{tar}}}{\psi_{t}}\right|\in[0,\pi/2]$ denote the
Fubini--Study distance from the target ray, so that $\cL_{\mathrm{sp}}(\psi_{t})=\sin^{2}d_{t}$.
The iteration closes exactly at the level of the distance:
\begin{equation}
d_{t+1}=\Bigl|\,d_{t}-\frac{\eta}{4}\sin(2d_{t})\,\Bigr|.
\label{eq:discrete_distance_recursion}
\end{equation}
Consequently, for every initial ray with $d_{0}<\pi/2$:
\begin{enumerate}
\item for $0<\eta\le 4$, the sequence $d_{t}$ is strictly decreasing and converges to $0$;
\item for $0<\eta\le 2$, the argument of the absolute value in
\cref{eq:discrete_distance_recursion} is nonnegative, so the iterates approach the target
monotonically along the connecting geodesic without overshooting;
\item for $0<\eta<4$ with $\eta\neq 2$, the local convergence is linear with exact asymptotic
ratio $|1-\eta/2|$, i.e.\ $d_{t+1}=|1-\eta/2|\,d_{t}+O(d_{t}^{3})$;
\item at the Newton step size $\eta=2$, the convergence is cubic:
$d_{t+1}=\tfrac{2}{3}d_{t}^{3}+O(d_{t}^{5})$, equivalently
$\cL_{\mathrm{sp}}(\psi_{t+1})=\tfrac{4}{9}\cL_{\mathrm{sp}}(\psi_{t})^{3}\bigl(1+O(\cL_{\mathrm{sp}}(\psi_{t}))\bigr)$;
\item at $\eta=4$, convergence is sublinear:
$d_{t+1}=d_t-\tfrac43d_t^3+O(d_t^5)$ and
$d_t\sim\sqrt{3/(8t)}$.
\end{enumerate}
For $\eta>4$ the target ray is locally repelling, so $\eta=4$ is the exact discrete stability
boundary.
\end{enumerate}
\end{corollary}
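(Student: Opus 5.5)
The plan is to reduce all three parts to rank-one geometry on $\bbC\mathrm{P}^{D-1}$ in a two-dimensional real slice spanned by the target and the current state. Throughout, I work in the trace-orthonormal generator frame, where \cref{cor:free_ng} gives $F^{+}\nabla\cL=\tfrac12\nabla\cL$ for ray-dependent objectives.

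\textbf{Part (a).} Write $\cL_{\mathrm{sp}}=\bra{\psi}(I-P_{\mathrm{tar}})\ket{\psi}$. This is the case $J=1$ of \cref{thm:morse_bott_landscapes}(a), with $E_0=0$ on $\bbC\ket{\psi_{\mathrm{tar}}}$ and $E_1=1$ on its orthogonal complement. The landscape claims then follow directly: the target ray is the unique minimum and the orthogonal rays are the global maxima.

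For the Hessian identity, I expand $\cL_{\mathrm{sp}}(e^{-i\sum t_aT_a}\psi_\star)$ to second order at $\psi_\star=\psi_{\mathrm{tar}}$. The first-order term vanishes. The second-order term is
\[
\bra{\psi_\star}X(I-P_\star)X\ket{\psi_\star}=\Var_{\psi_\star}(X),\qquad X=\textstyle\sum_a t_aT_a .
\]
By \cref{eq:QFIM_pure}, $t^{\mathsf T}Ft=4\Var(X)$. Since the quadratic form is $\tfrac12 t^{\mathsf T}\mathbf H_\star t$, this gives $\mathbf H_\star=\tfrac12F$. The spectrum then follows from \cref{eq:full_control_spectrum}, and $F^{+}\mathbf H_\star=\tfrac12P_{\mathrm H}$ gives both condition numbers equal to $1$.

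\textbf{Part (b).} Apply \cref{thm:fidelity_integrability} with $r=\ell=1$. Then $m=\sin^2\theta=\cL_{\mathrm{sp}}$ and $p=1-m$, so \cref{eq:cominuscule_defect_flow} is exactly \cref{eq:stateprep_logistic_ode}. Separating variables gives \cref{eq:stateprep_logistic_solution}.

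For the Newton remark, I use part (a). At the optimum $\mathbf H_\star=\tfrac12F$, so on horizontal directions the Newton step $-\mathbf H_\star^{+}\nabla\cL$ equals $-2F^{+}\nabla\cL$, which is the $\eta=2$ update.

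\textbf{Part (c).} First compute the gradient explicitly: $g_a=i\bra{\psi}[T_a,I-P_{\mathrm{tar}}]\ket{\psi}$. Hence $\sum_ag_aT_a$ is the trace-orthogonal projection onto $\mathfrak{su}(D)$ of $i[P_{\mathrm{tar}},P_\psi]$. This commutator is already traceless and anti-Hermitian times $i$, so the projection does nothing. Both $P_{\mathrm{tar}}$ and $P_\psi$ are supported on $\spanop\{\psi_{\mathrm{tar}},\psi\}$, so the exponential in \cref{eq:discrete_retracted_update} acts as a rotation in that plane.

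After fixing phases, write $\psi=\cos d\,\ket{\psi_{\mathrm{tar}}}+\sin d\,\ket{\perp}$. Then $i[P_{\mathrm{tar}},P_\psi]=\sin d\cos d\cdot\sigma$ for a Pauli-type generator $\sigma$ that rotates the angle $d$. Exponentiating with prefactor $\eta/2$ sends $d$ to $d-\tfrac{\eta}{2}\sin d\cos d$, up to sign, and taking the absolute value accounts for crossing the target. This is \cref{eq:discrete_distance_recursion}, and it agrees with \cref{cor:cominuscule_discrete} at $\ell=r=1$, which I can cite instead. The main obstacle is checking the sign and factor conventions in this step; I will verify them on the $D=2$ Bloch sphere.

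\textbf{Analysis of the scalar map.} Set $\phi(d)=d-\tfrac{\eta}{4}\sin 2d$.
\begin{enumerate}
\item[(1)] Since $\sin2d\le2d$, we have $\phi(d)<d$. Also $\phi(d)>-d$ whenever $\eta\sin2d<8d$, which holds for $\eta\le4$ because $\sin2d<2d$ for $d>0$. So $d_t$ is strictly decreasing and bounded below; its limit is a fixed point of $|\phi|$ in $[0,\pi/2)$, and the only such fixed point is $0$.
\item[(2)] Nonnegativity of $\phi$ for $\eta\le2$ follows from $\sin2d\le2d$.
\item[(3)] Taylor expansion gives $\phi(d)=(1-\eta/2)d+\tfrac{\eta}{3}d^3+O(d^5)$, which yields the linear ratio $|1-\eta/2|$.
\item[(4)] At $\eta=2$ the linear term cancels, leaving $\tfrac23d^3$. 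Since $\cL_{\mathrm{sp}}=\sin^2d\approx d^2$, this gives the $\tfrac49\cL^3$ relation.
\item[(5)] At $\eta=4$ we get $d_{t+1}=d_t-\tfrac43d_t^3+O(d_t^5)$. The standard comparison $1/d_{t+1}^2-1/d_t^2\to\tfrac83$ then gives $d_t\sim\sqrt{3/(8t)}$.
\end{enumerate}
Finally, for $\eta>4$ we have $|\phi'(0)|=\eta/2-1>1$, so the target ray is locally repelling.
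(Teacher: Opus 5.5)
Your parts (a) and (b) follow the paper's proof. The paper gets the Hessian from the second-order expansion $\cL_{\mathrm{sp}}=\Var_{\star}(X)+O(\|t\|^{3})$ compared with \cref{eq:qfim_suD}. It gets the flow as the $r=\ell=1$ case of \cref{thm:fidelity_integrability} followed by separation of variables, and the Newton remark from $\mathbf H_\star^{+}=2F^{+}$ on horizontal directions. The one small difference is that the paper proves the landscape claim by a direct Lagrange-multiplier argument rather than citing \cref{thm:morse_bott_landscapes}; the statement itself records that the two are equivalent.

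Part (c) is where your route genuinely differs. The paper sets $r=\ell=1$ in \cref{eq:cominuscule_discrete_map} and reads off the global window, the no-overshoot regime, the local rates and the $\eta=4$ law from \cref{cor:cominuscule_discrete}, \cref{eq:cominuscule_newton_cubic} and \cref{eq:embedding_rank_one_boundary}. You instead write $\sum_a g_aT_a$ as a commutator of the two projectors. Because it is supported on the plane of the target and the current state, the Lie exponential is an explicit planar rotation, and you then analyze $\phi(d)=d-\frac{\eta}{4}\sin 2d$ by hand. What your route buys:
\begin{itemize}
\item it is self-contained and needs no Harish-Chandra or polar-decomposition machinery;
\item it shows concretely that the iterates move along the connecting geodesic, instead of inheriting this from the totally-geodesic-flat remark in the proof of \cref{cor:cominuscule_discrete};
\item it actually derives $d_t\sim\sqrt{3/(8t)}$ via $d_{t+1}^{-2}-d_t^{-2}\to\frac{8}{3}$, which the paper only asserts.
\end{itemize}
The paper's route buys uniformity: the rank-one law appears as one face of the higher-rank cominuscule map.

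On the sign you flagged, your sign is wrong. Since $g_a=i\bra{\psi}[T_a,I-P_{\mathrm{tar}}]\ket{\psi}=\Tr\bigl(T_a\,(-i)[P_{\mathrm{tar}},P_\psi]\bigr)$, one gets $\sum_a g_aT_a=-i[P_{\mathrm{tar}},P_\psi]$, not $+i[P_{\mathrm{tar}},P_\psi]$. In the basis $(\ket{\psi_{\mathrm{tar}}},\ket{\perp})$ this equals $\sin d\cos d\,\sigma_y$. Then $\exp\bigl(i\frac{\eta}{2}\sin d\cos d\,\sigma_y\bigr)$ sends $(\cos d,\sin d)$ to $(\cos(d-\varphi),\sin(d-\varphi))$ with $\varphi=\frac{\eta}{4}\sin 2d$, which is exactly \cref{eq:discrete_distance_recursion}. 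With your sign the step would rotate away from the target.

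Two smaller points in your step (1). First, $\phi(d)<d$ follows from $\sin 2d>0$ on $(0,\pi/2)$, not from $\sin 2d\le 2d$. Second, the decrease is strict only while $d_t>0$. For $2<\eta\le 4$, the initial distance solving $\frac{\eta}{4}\sin 2d_0=d_0$ reaches the target in one step and then stays there. This caveat applies equally to the statement as written.
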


The discrete recursion \cref{eq:discrete_distance_recursion} is the rank-one case of the flat map \eqref{eq:cominuscule_discrete_map}; expanding to first order in $\eta$ recovers the continuous logistic law \cref{eq:stateprep_logistic_ode}.
The linear-stability interval $\eta<4$ matches the classical quadratic-model bound: \cref{cor:free_ng} makes the effective Euclidean step $\eta/2$, and the horizontal Hessian at the optimum is the identity by \cref{eq:hessian_half_qfim}, so strict linear contraction requires $\eta/2<2$.  The boundary value $\eta=4$ remains convergent through the cubic correction but loses a linear rate.
The cubic rather than quadratic rate at $\eta=2$ reflects the vanishing of the third-order term along the connecting geodesic.
From $\cL_{\mathrm{sp}}(0)=0.6$ the Newton-step iteration reaches $1.6\times10^{-3}$ after two steps and $1.9\times10^{-9}$ after three.

In the trace-orthonormal orbit frame of \cref{eq:full_control_spectrum}, the QNG vector field satisfies
\begin{equation}
-\eta F(\psi)^{+}\nabla\cL_{\mathrm{sp}}
=-\frac{\eta}{2}\nabla\cL_{\mathrm{sp}},
\end{equation}
so quotient QNG is exactly the Euclidean horizontal gradient field with learning-rate parameter $\eta/2$ in that frame.
For a general gate chart, the corresponding statement is instead the transfer law \cref{eq:isotropic_spectral_transfer}.

\section{Finite-shot quotient stability and numerical validation}
\label{sec:noise_trainability}

Finite measurements turn a small Fisher eigenvalue into two coupled costs: pseudoinverse amplification and uncertainty in the subspace on which the update is formed.
The decisive information is structural.
A known exact gauge can be projected out before estimation, whereas a statistically unresolved low-curvature direction may remain physical and requires soft regularization.
The Fisher spectral gap controls both costs and links pointwise kernel certification to iterative stability.
Proofs of the finite-shot bounds are collected in Appendix~\ref{app:finite_shot_proofs}.

\subsection{Fisher-gap amplification and kernel certification}
\label{sec:kernel-certification}

A small horizontal Fisher eigenvalue amplifies gradient-estimation noise through the pseudoinverse, while full-space damping converts vertical estimator noise into representative drift. Let $\widehat{g}$ be an unbiased estimator of $\nabla \cL$ with covariance matrix $\Sigma$ and consider the QNG update $\Delta\theta=-\eta F^{+}\widehat{g}$.

\begin{theorem}
\label{thm:finite_shot_amplification}
\begin{enumerate}
\item[(a)]
Assume the objective descends through the quotient, so that $P_{\cV}\nabla\cL(\theta)=0$, where $P_{\cV}$ is the Euclidean orthogonal projector onto the vertical gauge space
\begin{equation}
\cV_{\theta}\subseteq\ker F(\theta).
\end{equation}
Let $\widehat g=\nabla\cL(\theta)+\epsilon$ be an unbiased finite-shot gradient estimator with covariance
\begin{equation}
\mathbb E[\epsilon\epsilon^{T}]=\Sigma.
\end{equation}
If the update is formed with isotropic Tikhonov damping on the full ambient space,
\begin{equation}
\Delta\theta_{\gamma}=-\eta\,(F(\theta)+\gamma I)^{-1}\widehat g,
\qquad \gamma>0,
\end{equation}
then the expected squared vertical representative drift is exactly
\begin{equation}
\mathbb E\left\|P_{\cV}\Delta\theta_{\gamma}\right\|_{2}^{2}
=\frac{\eta^{2}}{\gamma^{2}}\Tr(P_{\cV}\Sigma P_{\cV}).
\label{eq:tikhonov_vertical_drift}
\end{equation}
\item[(b)]
For the undamped pseudoinverse update $\Delta\theta=-\eta F^+\widehat g$, if $F$ has rank $d_F$ and smallest nonzero eigenvalue $f_{\min}^{+}(F)$, then
\begin{equation}
\mathbb{E}\,\Vert \Delta\theta-\mathbb{E}\Delta\theta\Vert_{2}^{2}
= \eta^{2}\Tr(F^{+}\Sigma F^{+})
\le \eta^{2}\Vert \Sigma\Vert_{2}\,\Tr\!\bigl((F^{+})^{2}\bigr)
\le \eta^{2}\Vert \Sigma\Vert_{2}\,\frac{d_F}{\bigl(f_{\min}^{+}(F)\bigr)^{2}}.
\end{equation}
\end{enumerate}
\end{theorem}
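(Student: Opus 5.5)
Both parts reduce to linear algebra once $F$ and the projectors are written in a common eigenbasis. Here $F=F(\theta)$ is real symmetric positive semidefinite, and $\cV_\theta\subseteq\ker F$.

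For part~(a), the key observation is that $P_{\cV}$ commutes with $F$ and hence with $(F+\gamma I)^{-1}$. Indeed, $\cV_\theta\subseteq\ker F$ gives $FP_{\cV}=0$. Transposing and using symmetry of $F$ and $P_{\cV}$ gives $P_{\cV}F=0$, so
\begin{equation}
P_{\cV}(F+\gamma I)=\gamma P_{\cV}
\qquad\text{and}\qquad
P_{\cV}(F+\gamma I)^{-1}=\gamma^{-1}P_{\cV}.
\end{equation}
The second identity follows by multiplying the first on the right by $(F+\gamma I)^{-1}$, which exists since $\gamma>0$ and $F\succeq0$. It follows that
\begin{equation}
P_{\cV}\Delta\theta_\gamma=-\frac{\eta}{\gamma}P_{\cV}\widehat g
=-\frac{\eta}{\gamma}\bigl(P_{\cV}\nabla\cL(\theta)+P_{\cV}\epsilon\bigr)
=-\frac{\eta}{\gamma}P_{\cV}\epsilon,
\end{equation}
using the quotient hypothesis $P_{\cV}\nabla\cL(\theta)=0$. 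Then
\begin{equation}
\mathbb E\|P_{\cV}\epsilon\|^2=\mathbb E\Tr(P_{\cV}\epsilon\epsilon^{\mathsf T}P_{\cV})=\Tr(P_{\cV}\Sigma P_{\cV}),
\end{equation}
by linearity of trace and expectation, which gives \cref{eq:tikhonov_vertical_drift} exactly. I would remark that the mean part vanishes identically, so the whole drift is fluctuation-driven, and that the $1/\gamma^{2}$ dependence holds regardless of the spectrum of $F$.

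For part~(b), unbiasedness gives
\begin{equation}
\mathbb E\Delta\theta=-\eta F^+\nabla\cL
\qquad\text{and}\qquad
\Delta\theta-\mathbb E\Delta\theta=-\eta F^+\epsilon .
\end{equation}
Hence
\begin{equation}
\mathbb E\|\Delta\theta-\mathbb E\Delta\theta\|^2=\eta^2\Tr(F^+\Sigma F^+),
\end{equation}
using symmetry of $F^+$ and the same trace identity as in part~(a).

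For the first inequality, write $\Tr(F^+\Sigma F^+)=\Tr(\Sigma (F^+)^2)$ by cyclicity. Since $(F^+)^2\succeq0$ and $\Sigma\preceq\|\Sigma\|_2 I$, the standard bound $\Tr(AB)\le\|A\|_2\Tr B$ for $A,B\succeq0$ gives $\Tr(\Sigma(F^+)^2)\le\|\Sigma\|_2\Tr((F^+)^2)$. To justify that bound, I would diagonalize $B$ and note that each diagonal entry of $A$ in that basis is at most $\|A\|_2$.

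For the second inequality, $F^+$ has exactly $d_F$ nonzero eigenvalues, namely $1/f_j$ for the positive eigenvalues $f_j$ of $F$. Therefore $\Tr((F^+)^2)=\sum_j f_j^{-2}\le d_F/(f^+_{\min})^2$.

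I expect no real obstacle. The only point needing care is the commutation $P_{\cV}(F+\gamma I)^{-1}=\gamma^{-1}P_{\cV}$ in part~(a). It rests on $P_{\cV}$ being the Euclidean orthogonal projector and $F$ being symmetric in those same coordinates, so I would state this explicitly. I would also note that the argument never uses faithfulness: it only needs $\cV_\theta\subseteq\ker F$, which holds by \cref{eq:vertical_qfi_kernel}.
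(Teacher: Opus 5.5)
Your proposal is correct and matches the paper's proof essentially step for step. Part~(a) uses $P_{\cV}(F+\gamma I)^{-1}=\gamma^{-1}P_{\cV}$ from $\cV_\theta\subseteq\ker F$ together with $P_{\cV}\nabla\cL=0$. Part~(b) uses the covariance $\eta^2F^+\Sigma F^+$ followed by $\Sigma\preceq\|\Sigma\|_2 I$ and $\|F^+\|_2=1/f^+_{\min}$ on the rank-$d_F$ range. Your explicit justifications of the commutation step (via $P_{\cV}F=(FP_{\cV})^{\mathsf T}=0$) and of $\Tr(\Sigma(F^+)^2)\le\|\Sigma\|_2\Tr((F^+)^2)$ fill in details the paper leaves implicit.
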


Projection or pseudoinversion should remove known vertical directions before numerical stabilization is applied to the horizontal block.
Full-space damping instead amplifies vertical noise at the exact rate in \cref{eq:tikhonov_vertical_drift} along directions that carry no state information.
When $\Vert\Sigma\Vert_{2}$ scales as $1/M$ with the shot count, the bound predicts $M\propto d_F/(f_{\min}^{+})^{2}$ to keep update noise below a fixed threshold.

For a protocol-independent statement, let $r_F(N,\delta)$ be any valid operator-norm confidence radius,
\begin{equation}
\Pr\!\left[\|\widehat F-F\|_{\mathrm{op}}\le r_F(N,\delta)\right]\ge1-\delta.
\label{eq:operator_confidence_interface}
\end{equation}
This interface covers entrywise estimates, matrix-concentration protocols, stochastic approximations, and structured estimators.
Bias and calibration uncertainty may be included whenever the stated radius remains valid.

\begin{lemma}
\label{lem:psd_preprocessing}
Let $F\succeq0$, let $\|\widehat F-F\|\le \rho_F$, and let $\widehat F_+$ be obtained by symmetrizing $\widehat F$ and clipping its negative eigenvalues to zero.
Then
\begin{equation}
\|\widehat F_+-F\|\le2\rho_F.
\label{eq:psd_clip_bound}
\end{equation}
\end{lemma}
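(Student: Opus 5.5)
The plan is a two-step triangle-inequality argument: first control the effect of symmetrization, then control the effect of clipping. Throughout, $\|\cdot\|$ denotes the operator norm, which is the norm used in \cref{eq:operator_confidence_interface}.

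\emph{Step 1: symmetrization is a contraction.} Set $S:=\tfrac12(\widehat F+\widehat F^{\mathsf T})$. Because $F$ is symmetric,
\begin{equation}
S-F=\tfrac12\bigl((\widehat F-F)+(\widehat F-F)^{\mathsf T}\bigr).
\end{equation}
Transposition preserves the operator norm, so $\|S-F\|\le\rho_F$. If $\widehat F$ is already symmetric, this step is trivial.

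\emph{Step 2: the clipping error is bounded by the most negative eigenvalue.} Write the spectral decomposition $S=\sum_i s_i u_iu_i^{\mathsf T}$. Clipping gives $\widehat F_+=\sum_i \max(s_i,0)\,u_iu_i^{\mathsf T}$, and therefore
\begin{equation}
\widehat F_+-S=\sum_i \max(-s_i,0)\,u_iu_i^{\mathsf T},
\qquad
\|\widehat F_+-S\|=\max\bigl(0,-\lambda_{\min}(S)\bigr).
\end{equation}
Since $F\succeq0$, Weyl's inequality gives
\begin{equation}
\lambda_{\min}(S)\ge\lambda_{\min}(F)-\|S-F\|\ge-\rho_F,
\end{equation}
so $\|\widehat F_+-S\|\le\rho_F$. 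Equivalently, for any unit vector $x$ one has $x^{\mathsf T}Sx\ge x^{\mathsf T}Fx-\rho_F\ge-\rho_F$.

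\emph{Step 3: combine.} The triangle inequality now gives
\begin{equation}
\|\widehat F_+-F\|\le\|\widehat F_+-S\|+\|S-F\|\le 2\rho_F,
\end{equation}
which is \cref{eq:psd_clip_bound}.

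There is no real obstacle here. The only point needing care is Step 2: one must recognize that the clipping error is exactly the negative part of $S$, and that the negative part is bounded by the distance from $S$ to the PSD matrix $F$. A sharper constant is available, since the PSD projection is the Frobenius-nearest PSD matrix, but the operator-norm statement with constant $2$ is all that is required.
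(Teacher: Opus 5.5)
Your proof is correct and is the paper's argument: Weyl's inequality keeps the most negative eigenvalue of the symmetrized estimate above $-\rho_F$, so clipping moves it by at most $\rho_F$, and the triangle inequality then gives $2\rho_F$; your Step~1 also makes explicit the symmetrization contraction that the paper's proof passes over. Drop or restrict your closing aside, though: the Frobenius-nearest property makes clipping nonexpansive only in $\|\cdot\|_F$, and in the operator norm the constant cannot be lowered to $1$ (e.g.\ $\widehat F=\diag(3,-\tfrac12)$ and $F=\bigl(\begin{smallmatrix}2.2&-0.6\\-0.6&0.3\end{smallmatrix}\bigr)\succeq0$ give $\|\widehat F-F\|=1$ but $\|\widehat F_+-F\|\approx1.06$).
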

After this preprocessing, the effective confidence radius is $2r_F$ unless a sharper projection bound is available.

When a state-preserving action is known analytically, faithfulness can be tested relative to its vertical space without interpreting unexplained small eigenvalues as gauge.

\begin{theorem}
\label{thm:finite_shot_faithfulness}
Let $Q$ be the orthogonal projector onto a known state-preserving vertical space, let $P=I-Q$, and suppose $FQ=QF=0$.
On an event $\|\widehat F-F\|\le r_F$, define
\begin{equation}
\widehat f_{\min}:=\lambda_{\min}\!\left(P\widehat F P\big|_{\operatorname{im}P}\right).
\end{equation}
If $\widehat f_{\min}>r_F$, then
\begin{equation}
F\big|_{\operatorname{im}P}\succeq(\widehat f_{\min}-r_F)I,
\qquad
\ker F=\operatorname{im}Q.
\label{eq:faithfulness_certificate}
\end{equation}
Thus the prescribed gauge is faithful at the tested point with certified horizontal gap $f_{\mathrm{cert}}=\widehat f_{\min}-r_F$.
If
$\widehat f_{\max}=\lambda_{\max}(P\widehat F P|_{\operatorname{im}P})$, then
$F|_{\operatorname{im}P}\preceq(\widehat f_{\max}+r_F)I$.
\end{theorem}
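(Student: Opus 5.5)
The argument is Weyl's inequality applied to the compressions of $F$ and $\widehat F$ onto $\operatorname{im}P$, followed by a dimension count that identifies the kernel. We work on the event $\|\widehat F-F\|\le r_F$ throughout; the statement is deterministic on that event, so no probability enters.

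First, compression does not increase the operator norm. Since $P$ is an orthogonal projector, $\|P\widehat FP-PFP\|=\|P(\widehat F-F)P\|\le\|\widehat F-F\|\le r_F$. The same bound holds for the restrictions of these operators to $\operatorname{im}P$. By hypothesis $FQ=QF=0$, so $F=PFP$ and $F|_{\operatorname{im}P}=PFP|_{\operatorname{im}P}$.

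If $\widehat F$ is not already symmetric, we first symmetrize it. This can only reduce the distance to the symmetric matrix $F$, because symmetrization is an orthogonal projection onto symmetric matrices in Frobenius norm and is a contraction in operator norm (average of $X$ and $X^{\mathsf T}$). With that, $\lambda_{\min}$ and $\lambda_{\max}$ are well defined.

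Next, apply Weyl's inequality to the two symmetric operators $A=PFP|_{\operatorname{im}P}$ and $\widehat A=P\widehat FP|_{\operatorname{im}P}$ on $\operatorname{im}P$. Weyl gives $|\lambda_k(A)-\lambda_k(\widehat A)|\le\|A-\widehat A\|\le r_F$ for each $k$. Equivalently, in quadratic-form language, for every unit $x\in\operatorname{im}P$ we have $x^{\mathsf T}Fx\ge x^{\mathsf T}\widehat Fx-r_F\ge\widehat f_{\min}-r_F$. The upper bound $F|_{\operatorname{im}P}\preceq(\widehat f_{\max}+r_F)I$ follows in the same way.

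When $\widehat f_{\min}>r_F$, the lower bound shows that $F$ is positive definite on $\operatorname{im}P$. This gives the first half of \cref{eq:faithfulness_certificate}.

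For the kernel identity, one inclusion is immediate: $FQ=0$ gives $\operatorname{im}Q\subseteq\ker F$. For the reverse, take $v\in\ker F$ and decompose it as $v=Qv+Pv$. Then $0=Fv=FPv$, so $(Pv)^{\mathsf T}FPv=0$. Positive definiteness of $F$ on $\operatorname{im}P$ forces $Pv=0$, hence $v\in\operatorname{im}Q$.

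By \cref{def:basic_fisher}, faithfulness means exactly that the kernel equals the vertical space, so the prescribed gauge is faithful at the tested point. The certified horizontal gap is then $f_{\mathrm{cert}}=\widehat f_{\min}-r_F$.

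There is no real obstacle in this argument. The points needing care are that the confidence radius controls the compressed difference (handled by $\|P\|\le1$), and that the hypothesis $QF=0$ is what allows identifying $F$ with its compression. If the estimate has been PSD-clipped, \cref{lem:psd_preprocessing} says to use the radius $2r_F$ in place of $r_F$.
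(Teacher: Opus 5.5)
Your proposal is correct and follows essentially the same route as the paper: a Rayleigh-quotient (equivalently Weyl) comparison of $F=PFP$ with $\widehat F$ on $\operatorname{im}P$ gives the two-sided bounds, and positive definiteness on $\operatorname{im}P$ combined with the known inclusion $\operatorname{im}Q\subseteq\ker F$ yields $\ker F=\operatorname{im}Q$. Your remarks on symmetrizing $\widehat F$ and on using the radius $2r_F$ after PSD clipping are harmless additions that the paper's proof does not need.
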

An adaptive implementation may increase shots geometrically and stop at the first stage $j$ for which $\widehat f_{\min,j}>r_j$.
Choosing $\delta_j=6\delta/(\pi^2j^2)$ makes all stagewise confidence events jointly valid with failure probability at most $\delta$.
Failure to trigger is reported as an unresolved horizontal gap, not as evidence for additional exact gauge directions.

A known vertical space supports the direct faithfulness test in \cref{thm:finite_shot_faithfulness}.
Without such a space, a separated Fisher kernel can still be recovered from a finite estimate when the true rank and a nonzero Fisher gap are supplied independently.
If, in addition, $\ker F(\theta)=\cV_\theta$, the recovered kernel is the prescribed vertical space; identifying a global group action still requires independent structural information.

\begin{theorem}
\label{thm:certified-truncation}
Let $F\in\bbR^{p\times p}$ be symmetric positive semidefinite with rank $d_F$ and smallest nonzero eigenvalue $f^+_{\min}$, and let $\widehat F$ be symmetric with
\begin{equation}
\|\widehat F-F\|_2\le\varepsilon,
\qquad
12\varepsilon\le f^+_{\min}.
\end{equation}
Fix any threshold $\tau\in[2\varepsilon,f^+_{\min}-2\varepsilon]$ and let $\widehat F_\tau^{+}$ denote the pseudoinverse of $\widehat F$ truncated below $\tau$.
Then:
\begin{enumerate}
\item exactly $d_F$ eigenvalues of $\widehat F$ exceed $\tau$, so the truncation preserves the true rank $d_F$;
\item $\|\widehat P_{0}-P_{0}\|_2\le 2\varepsilon/f^+_{\min}$, where $\widehat P_{0}$ and $P_{0}$ are the kernel projectors of the truncation and of $F$;
\item $\|\widehat F_\tau^{+}-F^{+}\|_2\le 6\varepsilon/(f^+_{\min})^2$.
\end{enumerate}
\end{theorem}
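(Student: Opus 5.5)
The argument is standard eigenvalue and subspace perturbation theory for symmetric matrices, applied to the split of $F$ into its kernel and its positive part. Write $F=\sum_{j}f_j u_ju_j^{\mathsf T}$, with the $d_F$ positive eigenvalues at least $f^+_{\min}$ and the remaining $p-d_F$ eigenvalues equal to $0$. Let $E:=\widehat F-F$, so $\|E\|_2\le\varepsilon$.

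\emph{Rank.} Weyl's inequality moves each ordered eigenvalue of $\widehat F$ by at most $\varepsilon$ from the matching eigenvalue of $F$. So exactly $d_F$ eigenvalues of $\widehat F$ lie in $[f^+_{\min}-\varepsilon,\infty)$, and the other $p-d_F$ lie in $[-\varepsilon,\varepsilon]$. Because $\tau\in[2\varepsilon,f^+_{\min}-2\varepsilon]$, we have $\varepsilon<\tau<f^+_{\min}-\varepsilon$, and the two clusters sit strictly on opposite sides of the threshold. This proves item~1, and it shows that $\widehat P_0$ is the spectral projector of $\widehat F$ for the cluster near zero.

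\emph{Kernel projector.} This step is Davis--Kahan $\sin\Theta$. The relevant gap is between the spectrum of $F$ restricted to $\operatorname{im}F$ (which is $\ge f^+_{\min}$) and the small cluster of $\widehat F$ (which is $\le\varepsilon$); it equals $f^+_{\min}-\varepsilon$. Davis--Kahan then gives
\[
\|\widehat P_0-P_0\|_2=\|\sin\Theta\|_2\le\frac{\varepsilon}{f^+_{\min}-\varepsilon}.
\]
Since $12\varepsilon\le f^+_{\min}$, the right side is at most $\tfrac{12}{11}\,\varepsilon/f^+_{\min}\le 2\varepsilon/f^+_{\min}$. If one prefers not to cite Davis--Kahan, the same bound follows directly: write $\widehat P_0$ as a Riesz contour integral over a circle of radius about $f^+_{\min}/2$ centred at $0$ and expand the resolvent to first order in $E$. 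That route gives a constant near $2$.

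\emph{Truncated pseudoinverse.} Let $\widehat P:=I-\widehat P_0$ and $P:=I-P_0$. Then $\widehat F_\tau^{+}=(\widehat P\widehat F\widehat P)^{+}$, and its nonzero eigenvalues are at most $1/(f^+_{\min}-\varepsilon)$. Split
\[
\widehat F_\tau^{+}-F^{+}=\widehat F_\tau^{+}(P-\widehat P)+\widehat F_\tau^{+}(F-\widehat F)F^{+}+(\widehat P-P)F^{+}.
\]
To check this identity, use $\widehat F_\tau^{+}\widehat F=\widehat P$ and $FF^{+}=P$ (and the corresponding right-multiplication identities); the cross terms cancel. The middle term is at most $\varepsilon/\big((f^+_{\min}-\varepsilon)f^+_{\min}\big)$. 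Each outer term is a product of a norm at most $1/(f^+_{\min}-\varepsilon)$ with $\|\widehat P-P\|\le 2\varepsilon/f^+_{\min}$. Using $f^+_{\min}-\varepsilon\ge\tfrac{11}{12}f^+_{\min}$, the total is about $\tfrac{12}{11}(1+2+2)\,\varepsilon/(f^+_{\min})^2<6\varepsilon/(f^+_{\min})^2$, which is item~3.

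The main obstacle is this constant bookkeeping in item~3. The outer terms are where loss occurs. If it proves too tight, one can sharpen them: since $\widehat F_\tau^{+}P_0=\widehat F_\tau^{+}\widehat P P_0$ and $\|\widehat P P_0\|$ is itself a $\sin\Theta$ quantity, the Davis--Kahan bound $\varepsilon/(f^+_{\min}-\varepsilon)$ can be used there in place of the cruder $2\varepsilon/f^+_{\min}$. That recovers the stated constant with room to spare, given $12\varepsilon\le f^+_{\min}$.
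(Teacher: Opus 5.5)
Your argument follows the paper's route: Weyl for the rank count, Davis--Kahan for the kernel projector, and a rank-preserving pseudoinverse perturbation estimate for item~3. The paper cites Stewart's bound as a black box for item~3, substituting $\|F^+\|_2\le 1/f^+_{\min}$, $\|\widehat F^+_\tau\|_2\le 12/(11f^+_{\min})$ and the projector estimate. You instead write out the three-term decomposition that underlies such bounds, which makes the constant checkable. Your one-sided gap $f^+_{\min}-\varepsilon$ in Davis--Kahan also gives the sharper bound $\|\widehat P_0-P_0\|_2\le\varepsilon/(f^+_{\min}-\varepsilon)$, and equal ranks (from item~1) justify identifying this with $\|\sin\Theta\|_2$.

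One correction is needed. The identity you display has the wrong sign in its first term, so the cross terms do not cancel as you claim. Using $\widehat F^+_\tau\widehat P=\widehat F^+_\tau$, $FF^+=P$ and $\widehat F^+_\tau\widehat F=\widehat P$, your three terms sum to $\widehat F^+_\tau-F^+-2\widehat F^+_\tau P_0$ rather than $\widehat F^+_\tau-F^+$. The correct identity is
\[
\widehat F_\tau^{+}-F^{+}=\widehat F_\tau^{+}(\widehat P-P)+\widehat F_\tau^{+}(F-\widehat F)F^{+}+(\widehat P-P)F^{+},
\]
and its first term equals $\widehat F^+_\tau P_0$. Your estimate uses only $\|\widehat P-P\|_2$ for the outer terms, so fixing the sign leaves the bound $\tfrac{12}{11}(1+2+2)\,\varepsilon/(f^+_{\min})^2<6\varepsilon/(f^+_{\min})^2$ unchanged. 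This is a repairable slip, not a gap.
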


For bounded unbiased entrywise Fisher estimators, the same perturbation argument gives a sufficient shot count scaling as $p^2/(f^+_{\min})^2$ up to the estimator range and a logarithmic confidence factor; \cref{cor:shot-complexity} records the explicit constant and the corresponding vertical-leakage bound. Together with \cref{thm:finite_shot_amplification}, this shows that the Fisher gap controls both update amplification and subspace recovery.
Without independently specified state-preserving structure, recovering the local Fisher kernel does not identify it as a prescribed gauge action, as made explicit by \cref{prop:soft_indistinguishability}.
Matrix concentration can sharpen the entrywise union bound \cite{Tropp2012}; the stated elementary form exposes all constants.

\subsection{Horizontal implementation and directional gauge tests}

Once an exact vertical space is known, projection converts the singular ambient problem into a positive-definite horizontal solve.
Let the columns of $V_\theta$ span that vertical space and define
\begin{equation}
P_{\cV}=V_\theta(V_\theta^{\mathsf T}V_\theta)^+V_\theta^{\mathsf T},
\qquad
P_{\mathsf H}=I-P_{\cV}.
\end{equation}
For finite-shot estimates $\widehat F$ and $\widehat g$, the corresponding stabilized update is
\begin{equation}
\widehat\Delta_\gamma
=-\eta P_{\mathsf H}
\left(P_{\mathsf H}\widehat F P_{\mathsf H}+\gamma P_{\mathsf H}\right)^+
P_{\mathsf H}\widehat g.
\label{eq:stable_horizontal_update}
\end{equation}
Regularization therefore acts only on the identifiable block.
The Fisher estimator is otherwise unrestricted, and an iterative implementation needs projector applications and horizontal Fisher--vector products rather than an explicit basis for the full dynamical Lie algebra.

The horizontal system can be solved matrix-free by conjugate gradients using projector applications and horizontal Fisher--vector products. The standard condition-number dependence and classical-oracle cost are recorded in \cref{prop:matrix_free_horizontal}; the quantum measurement cost of a Fisher--vector product remains estimator dependent \cite{vanStraaten2021Measurement,Gacon2021SPSA,Kolotouros2024RandomNG,Halla2025SteinQFI,Rath2021RandomizedQFI}.

For a Slater state represented by $Q\in\operatorname{St}(k,n)$, the Grassmann-horizontal projection of a tangent $Z$ is
\begin{equation}
P_{\mathsf H,Q}(Z)=(I-QQ^\dagger)Z,
\end{equation}
with vertical component $Q(Q^\dagger Z)$.
This projector costs $O(nk^2)$ arithmetic operations and never constructs the $\binom nk$-dimensional many-body vector.
Combined with one-body orbital contractions, it gives a polynomial-orbital-dimension horizontal solve for this model class.

A proposed gauge generator can also be tested locally without reconstructing a global group action.

\begin{proposition}
\label{prop:directional_certificate}
Let $v\in T_\theta\Theta$ and $A_v=C_\theta v$. For a pure state,
\begin{equation}
F_\theta(v,v)=4\Var_\psi(A_v),
\qquad
F_\theta(v,v)=0\iff A_v\ket\psi=a\ket\psi
\end{equation}
for some $a\in\bbR$. For a fixed-rank mixed state on a unitary orbit,
\begin{equation}
F_\theta(v,v)=0\iff[A_v,\rho]=0.
\end{equation}
\end{proposition}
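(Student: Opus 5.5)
For the pure-state part we will start from the generator-covariance form of the QFIM in \cref{eq:QFIM_pure}, specialized to $i=j$ along the direction $v$. Write $A_v=C_\theta v=\sum_i v_iA_i$, so that $d\ket\psi(v)=-iA_v\ket\psi$. Then
\begin{equation}
F_\theta(v,v)=2\bra\psi\{\Delta A_v,\Delta A_v\}\ket\psi=4\bra\psi(\Delta A_v)^2\ket\psi=4\Var_\psi(A_v).
\end{equation}
For the equivalence, we use that $\Delta A_v$ is Hermitian, which gives $\Var_\psi(A_v)=\|\Delta A_v\ket\psi\|^2$. This vanishes iff $A_v\ket\psi=\langle A_v\rangle_\psi\ket\psi$. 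Conversely, any eigen-relation $A_v\ket\psi=a\ket\psi$ forces $a=\bra\psi A_v\ket\psi\in\bbR$, so the two conditions match exactly. No further input is needed beyond \cref{eq:QFIM_pure}.

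For the mixed-state part we will use the pullback definition \cref{eq:qfim_pullback} together with $d\rho_\theta(v)=-i[A_v,\rho]$ from \cref{eq:circuit_generator_map}. This gives $F_\theta(v,v)=g^{\mathrm{SLD}}_\rho(\dot\rho,\dot\rho)$ with $\dot\rho=-i[A_v,\rho]$.
\begin{itemize}
\item[] The implication $[A_v,\rho]=0\Rightarrow F_\theta(v,v)=0$ is immediate, since then $\dot\rho=0$.
\item[] For the converse we need that $g^{\mathrm{SLD}}_\rho$ is positive definite on the tangent space of the fixed-rank stratum, i.e.\ on the unitary-orbit tangents $T_\rho\cO$.
\end{itemize}

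We will establish this positivity explicitly in the eigenbasis $\rho=\sum_k p_k\ket k\bra k$. The SLD $L$ solving $\dot\rho=\tfrac12\{L,\rho\}$ has entries $L_{kl}=2\dot\rho_{kl}/(p_k+p_l)$ whenever $p_k+p_l>0$. The Fisher form is then
\begin{equation}
g^{\mathrm{SLD}}_\rho(\dot\rho,\dot\rho)=\sum_{p_k+p_l>0}\frac{2|\dot\rho_{kl}|^2}{p_k+p_l}.
\end{equation}
For $\dot\rho=-i[A_v,\rho]$ the entries are $\dot\rho_{kl}=-i(p_l-p_k)(A_v)_{kl}$. Every nonzero entry therefore has $p_k\neq p_l$, hence $p_k+p_l>0$, so it is counted with a strictly positive weight. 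Vanishing of the sum thus forces $\dot\rho=0$, i.e.\ $[A_v,\rho]=0$. Equivalently, we could invoke that $M_\rho$ is positive definite on $\mathfrak p_\rho$ and apply \cref{thm:circuit_orbit_factorization}: $F(v,v)=0$ gives $B_\theta v=0$, hence $P_{\mathfrak p_\rho}A_v=0$, so $A_v\in\mathfrak h_\rho$. This route works only when $A_v$ lies in the dynamical algebra.

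The one step needing care is the mixed-state converse. We must check that the SLD form has no degenerate directions on unitary-orbit tangents, in particular no support-kernel block $p_k=p_l=0$ contributing a hidden zero. The explicit sum above settles this, because commutator tangents have vanishing entries exactly on equal-eigenvalue blocks, and those include the $0$–$0$ block. The fixed-rank assumption ensures the orbit is smooth, so this tangent description applies.
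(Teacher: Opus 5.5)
Your proposal is correct and follows the paper's proof. The pure-state part is read off from the covariance form \cref{eq:QFIM_pure} together with $\Var_\psi(A_v)=\|\Delta A_v\ket\psi\|^2$, and the mixed-state part reduces to positive definiteness of the SLD form on orbit tangents $-i[A_v,\rho]$. The paper simply asserts that positivity, whereas you verify it from the spectral formula \cref{eq:sld_qfi_spectral} by noting that $\dot\rho_{kl}=-i(p_l-p_k)(A_v)_{kl}$ vanishes whenever $p_k+p_l=0$; this is a useful explicit check rather than a different argument.
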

A symmetric finite-difference fidelity test estimates this local null criterion without reconstructing a global group action; its $O(s^2)$ bias and finite-shot confidence radius are given in \cref{prop:directional_confidence_bound}. This establishes a local state-null test rather than a global redundancy action. Quantum symmetry tests and variational symmetry-learning methods can supply candidate actions upstream \cite{LaBorde2023TestingSymmetry,Lu2024LearningSymmetries}.

\subsection{Known and estimated horizontal projectors}

Replacing an exact horizontal projector by an estimate introduces a new error channel: leakage into the true gauge space. The exact-projector case isolates Fisher and gradient errors; the estimated-projector case prices this additional leakage separately.

\begin{theorem}
\label{thm:known_projector_update}
Assume $F=PFP$, $g=Pg$, and $F|_{\operatorname{im}P}\succeq f_{\min}I$ with $f_{\min}>0$.
Let
\begin{equation}
\|\widehat F-F\|\le\varepsilon_F,
\qquad
\|\widehat g-g\|\le\varepsilon_g,
\qquad a=f_{\min}+\gamma,
\end{equation}
and define
\begin{equation}
\Delta_\gamma=-\eta(F+\gamma P)^+g,
\qquad
\widehat\Delta_\gamma=-\eta(P\widehat F P+\gamma P)^+P\widehat g.
\end{equation}
If $\varepsilon_F<a$, then
\begin{equation}
\|\widehat\Delta_\gamma-\Delta_\gamma\|
\le\eta\left[
\frac{\varepsilon_g}{a-\varepsilon_F}
+\frac{\varepsilon_F\|g\|}{a(a-\varepsilon_F)}
\right],
\label{eq:known_projector_error}
\end{equation}
and $(I-P)\widehat\Delta_\gamma=0$ exactly.
\end{theorem}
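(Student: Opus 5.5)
Everything will be worked inside the Euclidean subspace $\operatorname{im}P$, where the problem becomes a perturbed positive-definite solve. Write $A:=(F+\gamma P)|_{\operatorname{im}P}$ and $\widehat A:=(P\widehat FP+\gamma P)|_{\operatorname{im}P}$, both viewed as symmetric operators on $\operatorname{im}P$. Since $F=PFP$ and $F|_{\operatorname{im}P}\succeq f_{\min}I$, we have $A\succeq aI$ with $a=f_{\min}+\gamma$. Because $P$ is an orthogonal projector, $\|P(\widehat F-F)P\|\le\varepsilon_F$, so $\widehat A\succeq(a-\varepsilon_F)I$. This is positive definite when $\varepsilon_F<a$. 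Both $F+\gamma P$ and $P\widehat FP+\gamma P$ vanish on $\ker P$ and map $\operatorname{im}P$ to itself. Hence their Moore--Penrose pseudoinverses are $A^{-1}\oplus 0$ and $\widehat A^{-1}\oplus 0$ in the splitting $\operatorname{im}P\oplus\ker P$. This gives $\Delta_\gamma=-\eta A^{-1}g$ and $\widehat\Delta_\gamma=-\eta\widehat A^{-1}P\widehat g$, with both lying in $\operatorname{im}P$. In particular $(I-P)\widehat\Delta_\gamma=0$ exactly.

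For the error bound I would use the triangle split
\[
\widehat A^{-1}P\widehat g-A^{-1}g=\widehat A^{-1}P(\widehat g-g)+(\widehat A^{-1}-A^{-1})g,
\]
where $Pg=g$ is used in the first term. The first term has norm at most $\|\widehat A^{-1}\|\,\|P(\widehat g-g)\|\le\varepsilon_g/(a-\varepsilon_F)$. For the second term, the resolvent identity gives $\widehat A^{-1}-A^{-1}=\widehat A^{-1}(A-\widehat A)A^{-1}$. Its norm is at most $\frac{1}{a-\varepsilon_F}\cdot\varepsilon_F\cdot\frac1a$, which yields the term $\varepsilon_F\|g\|/(a(a-\varepsilon_F))$. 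Multiplying by $\eta$ gives \cref{eq:known_projector_error}.

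The only point needing care is the identification of the pseudoinverses with the restricted inverses padded by zero. I would justify it by checking the four Penrose conditions: the candidate is symmetric and inverts the operator on its range $\operatorname{im}P$. The other delicate point is the eigenvalue lower bound $\widehat A\succeq(a-\varepsilon_F)I$, which follows from Weyl's inequality applied to the compressed perturbation. Everything else is routine.
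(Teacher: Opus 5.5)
Your proof is correct and follows essentially the same route as the paper's. Both restrict to $\operatorname{im}P$, split off the gradient error, apply the resolvent identity with $\|A^{-1}\|\le 1/a$ and $\|\widehat A^{-1}\|\le 1/(a-\varepsilon_F)$, and read off zero leakage from the fact that both pseudoinverses have range in $\operatorname{im}P$.

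Your Weyl step and your symmetry claim in the Penrose check tacitly assume $\widehat F$ is symmetric, which the statement does not require. If it is not symmetric, use $\|\widehat A x\|\ge\|Ax\|-\|(\widehat A-A)x\|\ge(a-\varepsilon_F)\|x\|$ instead of Weyl. The Penrose check still goes through, because $(P\widehat FP+\gamma P)(P\widehat FP+\gamma P)^{+}$ and the reverse product both equal $P$.
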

\begin{theorem}
\label{thm:finite_sample_iteration}
Let $P$ be a fixed orthogonal projector, let $\cL$ be $\beta$-smooth on the affine space $\theta_0+\operatorname{im}P$, and assume $\cL\ge\cL_\star$ there.  At iterate $\theta_t$, write
\begin{equation}
g_t=P\nabla\cL(\theta_t),
\qquad
f_{\min}P\preceq F_t=PF_tP\preceq f_{\max}P,
\end{equation}
where $0<f_{\min}\le f_{\max}$ uniformly.  Define
\begin{equation}
a=f_{\min}+\gamma,
\qquad
b=f_{\max}+\gamma,
\end{equation}
and form the estimated update
\begin{equation}
\theta_{t+1}=\theta_t+\widehat\Delta_t,
\qquad
\widehat\Delta_t
=-\eta(P\widehat F_tP+\gamma P)^+P\widehat g_t.
\label{eq:finite_sample_iteration}
\end{equation}
Suppose, simultaneously for $t=0,\ldots,T-1$,
\begin{equation}
\|P\widehat F_tP-F_t\|\le\varepsilon_F,
\qquad
\|P\widehat g_t-g_t\|\le\varepsilon_g,
\end{equation}
and
\begin{equation}
\eta\le\frac{a}{\beta},
\qquad
\varepsilon_F\le\frac{a^2}{64b},
\label{eq:finite_sample_iteration_conditions}
\end{equation}
with the first restriction omitted when $\beta=0$.  Then
\begin{equation}
\frac1T\sum_{t=0}^{T-1}\|g_t\|^2
\le
\frac{4b\bigl(\cL(\theta_0)-\cL_\star\bigr)}{\eta T}
+\frac{144b^2}{a^2}\varepsilon_g^2.
\label{eq:finite_sample_stationarity}
\end{equation}
Moreover,
\begin{equation}
(I-P)(\theta_t-\theta_0)=0
\qquad (t=0,\ldots,T),
\label{eq:finite_sample_zero_drift}
\end{equation}
so a fixed known gauge projector prevents cumulative representative drift exactly.
\end{theorem}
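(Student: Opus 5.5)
The plan is to combine a standard descent-lemma argument for $\beta$-smooth functions with a perturbation bound on the estimated update; \cref{thm:known_projector_update} already gives one. The zero-drift claim \cref{eq:finite_sample_zero_drift} comes first and is immediate. By construction $\widehat\Delta_t=-\eta(P\widehat F_tP+\gamma P)^+P\widehat g_t$, and the pseudoinverse of an operator supported on $\operatorname{im}P$ has range in $\operatorname{im}P$. Hence $(I-P)\widehat\Delta_t=0$, as in the last assertion of \cref{thm:known_projector_update}, and summing over $t$ gives $(I-P)(\theta_t-\theta_0)=0$. Every iterate therefore stays in $\theta_0+\operatorname{im}P$, where $\beta$-smoothness and the lower bound $\cL_\star$ are assumed.

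For the descent step, write $\widehat\Delta_t=\Delta_t+e_t$ with $\Delta_t=-\eta(F_t+\gamma P)^+g_t$. On $\operatorname{im}P$ the operator $(F_t+\gamma P)^+$ has spectrum in $[1/b,1/a]$. This gives $\langle g_t,\Delta_t\rangle\le-\eta\|g_t\|^2/b$ and $\|\Delta_t\|\le\eta\|g_t\|/a$. \Cref{thm:known_projector_update} with $\varepsilon_F<a$ bounds the error:
\[
\|e_t\|\le\eta\bigl[\varepsilon_g/(a-\varepsilon_F)+\varepsilon_F\|g_t\|/(a(a-\varepsilon_F))\bigr].
\]
The condition $\varepsilon_F\le a^2/(64b)\le a/64$ makes $a-\varepsilon_F\ge 63a/64$ and $\varepsilon_F/a\le a/(64b)$. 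So $\|e_t\|\le c_1\eta\varepsilon_g/a+c_2\eta\|g_t\|/(64b)$ with absolute constants $c_1,c_2$ close to $1$.

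Next apply the descent lemma, using that $\nabla\cL$ and $g_t$ agree along $\operatorname{im}P$:
\[
\cL(\theta_{t+1})\le\cL(\theta_t)+\langle g_t,\widehat\Delta_t\rangle+\tfrac{\beta}{2}\|\widehat\Delta_t\|^2 .
\]
Bound the three contributions as follows.
\begin{itemize}
\item The main term satisfies $\langle g_t,\Delta_t\rangle\le-\eta\|g_t\|^2/b$.
\item The cross term satisfies $|\langle g_t,e_t\rangle|\le\|g_t\|\|e_t\|$. Split it by Young's inequality, charging the $\|g_t\|^2$ part to a fraction such as $\eta\|g_t\|^2/(8b)$, and leaving a remainder of order $\eta b\varepsilon_g^2/a^2$.
\item The quadratic term satisfies $\tfrac\beta2\|\widehat\Delta_t\|^2\le\beta(\|\Delta_t\|^2+\|e_t\|^2)$. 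With $\eta\le a/\beta$ this is at most $\eta\|\Delta_t\|^2 a/\eta^2+\dots$; more simply, $\beta\eta^2\|g_t\|^2/a^2\le\eta\|g_t\|^2/a$.
\end{itemize}

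I expect this last bound to be the main obstacle, because $\eta\|g_t\|^2/a$ is not dominated by $\eta\|g_t\|^2/b$ when $b>a$. The plan is therefore to keep the quadratic term in the sharper form $\tfrac\beta2\eta^2\langle g_t,(F_t+\gamma P)^{+2}g_t\rangle$. Comparing it to $\eta\langle g_t,(F_t+\gamma P)^+g_t\rangle$ eigenvalue-wise gives a ratio $\beta\eta/(2\lambda)\le\beta\eta/(2a)\le1/2$. This leaves the net main decrease $-\tfrac\eta2\langle g_t,(F_t+\gamma P)^+g_t\rangle\le-\eta\|g_t\|^2/(2b)$.

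After absorbing the cross and error terms with the smallness of $\varepsilon_F$, the per-step inequality should take the form
\[
\cL(\theta_{t+1})\le\cL(\theta_t)-\frac{\eta}{4b}\|g_t\|^2+C\frac{\eta b}{a^2}\varepsilon_g^2 .
\]
The $\varepsilon_F$ contributions are bounded by $\eta\|g_t\|^2/(64b)$-type terms, and the $\varepsilon_g$ contributions scale as $\eta\varepsilon_g^2/a^2\cdot b$. I would then check that the constants can be chosen with $4C\le144$. Telescoping over $t=0,\ldots,T-1$, using $\cL\ge\cL_\star$ and multiplying by $4b/(\eta T)$, yields \cref{eq:finite_sample_stationarity}. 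When $\beta=0$ the quadratic term is absent, and the same bookkeeping goes through without the step-size restriction.
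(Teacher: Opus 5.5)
Your proposal is correct and is essentially the paper's proof. Like the paper, you split off the exact damped step $\Delta_t$ and bound the error by \cref{thm:known_projector_update} using $\varepsilon_F\le a/64$. You then use $\|\Delta_t\|^2\le\eta^2q_t/a$ and $\eta\le a/\beta$ to retain $-\tfrac{\eta}{2}q_t\le-\tfrac{\eta}{2b}\|g_t\|^2$. The $\varepsilon_F$ terms are absorbed via $\varepsilon_F\le a^2/(64b)$, the $\varepsilon_g$ cross term is handled by Young's inequality against $\|g_t\|^2/(8b)$, and you telescope; the paper obtains $C=36$ exactly.

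One bookkeeping fix is needed. Keeping the factor $\tfrac{\beta}{2}$ on $\|\Delta_t\|^2$ requires expanding $\|\Delta_t+e_t\|^2$ exactly. Your split $\beta(\|\Delta_t\|^2+\|e_t\|^2)$ would instead give $\beta\|\Delta_t\|^2\le\eta q_t$, which can cancel the entire main decrease. With the exact expansion, the mixed term $\beta\langle\Delta_t,e_t\rangle\le\|g_t\|\|e_t\|$ doubles the cross term to $2\|g_t\|\|e_t\|$, exactly as in the paper.
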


For $\gamma=0$, the intrinsic quotient-gradient norm satisfies
\begin{equation}
\|\grad_{\bar F}\bar\cL\|_{\bar F}^{2}
=g_t^{\mathsf T}F_t^+g_t
\in\left[\frac{\|g_t\|^2}{f_{\max}},
\frac{\|g_t\|^2}{f_{\min}}\right],
\end{equation}
so \cref{eq:finite_sample_stationarity} also controls the natural quotient stationarity measure under the same uniform Fisher-gap assumption.

For bounded entrywise Fisher estimators and simultaneous gradient confidence bounds, \cref{cor:finite_sample_shot_budget} converts \cref{thm:finite_sample_iteration} into explicit sufficient per-iteration shot budgets and an entrywise-protocol total repetition complexity $\widetilde O(\varepsilon^{-4})$.

The exact cumulative identity \cref{eq:finite_sample_zero_drift} uses a fixed linear projector.  For a parameter-dependent horizontal distribution $P_\theta$, each update is horizontal at its current iterate, but a gauge-fixing retraction is still needed to control the stored representative over finite steps.

When the Fisher and gradient confidence radii both scale as $N^{-1/2}$, the leading local error bound also gives an optimal Fisher-versus-gradient shot split; the explicit rule is \cref{eq:shot_allocation} in Appendix~\ref{app:finite_shot_aux}.

For the estimated-projector update $\widehat\Delta$ and the exact-projector target $\Delta_\gamma$, \cref{thm:estimated_projector_update} gives an explicit $O(\varepsilon_P)$ gauge-leakage term together with the Fisher, gradient, and resolvent errors. In its notation, set $a=f_{\min}+\gamma$, $q=g^{\mathsf T}(F+\gamma P)^+g$, and $\varepsilon_\Delta=\|\widehat\Delta-\Delta_\gamma\|$. If the quotient loss is $\beta$-smooth and $\eta\le a/\beta$, with no restriction on $\eta$ when $\beta=0$, the descent lemma gives
\begin{equation}
\cL(\theta+\widehat\Delta)-\cL(\theta)
\le-\frac{\eta}{2}q
+\left(\|g\|+\beta\eta\sqrt{q/a}\right)\varepsilon_\Delta
+\frac{\beta}{2}\varepsilon_\Delta^2.
\label{eq:finite_shot_descent}
\end{equation}
Together with \cref{eq:estimated_projector_error}, this converts subspace and estimator accuracy into a checkable sufficient condition for actual loss decrease.

\subsection{Exact gauge directions and physical soft modes}

Finite data refine the geometric decomposition asymmetrically: a known structural gauge is removable, a direction with a positive lower confidence bound is certified physical, and a confidence interval that overlaps zero leaves the direction unresolved. Spectral evidence alone cannot promote that third class to an exact gauge.

\begin{proposition}
\label{prop:soft_indistinguishability}
Suppose only $\|\widehat F-F\|\le \rho_F$ is known and
$\widehat F=\diag(\rho_F,f_2,\ldots,f_p)$ with $f_j>2\rho_F$.
Both
\begin{equation}
F_0=\diag(0,f_2,\ldots,f_p),
\qquad
F_1=\diag(2\rho_F,f_2,\ldots,f_p)
\end{equation}
belong to the same confidence ball.
No decision rule using only $\widehat F$ and $\rho_F$ can certify whether the first direction is an exact zero or a physical mode of size $2\rho_F$.
\end{proposition}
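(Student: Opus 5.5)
The argument has two parts: membership of both candidates in the confidence ball, then a two-point indistinguishability argument. Both are elementary once the matrices are written in the common diagonal basis.

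\emph{Membership.} Both $F_0$ and $F_1$ are symmetric and positive semidefinite, because $f_j>2\rho_F>0$. Each is therefore an admissible true Fisher matrix. Their differences from the estimate are
\begin{equation}
\widehat F-F_0=\diag(\rho_F,0,\ldots,0),
\qquad
\widehat F-F_1=\diag(-\rho_F,0,\ldots,0).
\end{equation}
Each difference is diagonal with a single nonzero entry of modulus $\rho_F$, so its operator norm equals $\rho_F$. Hence $\|\widehat F-F_i\|\le\rho_F$ for $i=0,1$, and both matrices lie in the same confidence ball.

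\emph{Structure of the two candidates.} The hypothesis $f_j>2\rho_F$ keeps the first coordinate separated from the others in both matrices:
\begin{enumerate}
\item $F_0$ has rank $p-1$ and kernel $\spanop\{e_1\}$, so $e_1$ is an exact null direction.
\item $F_1$ is positive definite, with smallest eigenvalue $2\rho_F$ attained at $e_1$, so $e_1$ is a physical mode.
\end{enumerate}
The separation also prevents a reordering of eigenvalues from blurring which direction is in question.

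\emph{Indistinguishability.} Let $\delta$ be any decision rule whose output depends only on $(\widehat F,\rho_F)$. Under both hypotheses, ``$F=F_0$'' and ``$F=F_1$'', the observed pair $(\widehat F,\rho_F)$ is identical and consistent with the stated guarantee. The rule therefore returns the same verdict in both cases. Consequently:
\begin{enumerate}
\item If the verdict is ``exact zero'', it is wrong when $F=F_1$.
\item If the verdict is ``physical'', it is wrong when $F=F_0$.
\end{enumerate}
So no rule is correct under both. In particular, hard projection onto $e_1^\perp$ is not licensed: under $F_1$ it discards a genuine direction of Fisher weight $2\rho_F$.

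The probabilistic version of the radius changes nothing. Both candidates are consistent with the same event, so the argument applies on that event.

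The only step needing care is making precise what ``decision rule using only $\widehat F$ and $\rho_F$'' means. I would formalize it as a measurable function of $(\widehat F,\rho_F)$ and then apply the two-point argument above. There is no genuine obstacle: the remaining steps are norm computations on diagonal matrices.
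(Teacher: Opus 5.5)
Your argument is correct and is essentially the paper's. The paper gives no separate proof beyond observing that $\widehat F-F_0$ and $\widehat F-F_1$ are the diagonal perturbations $\pm\rho_F e_1e_1^{\mathsf T}$ of norm $\rho_F$, so both candidates lie in the confidence ball and any rule depending only on $(\widehat F,\rho_F)$ must fail under one of them. Your restriction of the probabilistic case to the confidence event also matches the paper's framing: it calls this a confidence-set identifiability boundary, not a sampling-distribution (testing) lower bound.
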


The construction is a confidence-set identifiability boundary rather than a sampling-distribution lower bound. Using only $\widehat F$ and $\rho_F$, the actionable spectral classes are therefore ``certified nonzero'' and ``unresolved'' outside any independently known gauge space. If the true rank and a nonzero Fisher gap are supplied independently, \cref{thm:certified-truncation} recovers the local Fisher kernel and licenses projection onto that kernel. Identifying the recovered kernel with a prescribed or global gauge action requires independent structural information about the state-preserving action, as in the setup of \cref{thm:finite_shot_faithfulness}.

\begin{proposition}
\label{prop:near_symmetry_kernel}
Let $F_0\succeq0$ have an $s$-dimensional kernel $\cV$ and nonzero spectral gap $f_{\mathrm{gap}}$.
If $F=F_0+R\succeq0$, $\|R\|\le\varepsilon<f_{\mathrm{gap}}/2$, then the $s$ smallest eigenvalues of $F$ lie in $[0,\varepsilon]$, while all others are at least $f_{\mathrm{gap}}-\varepsilon$.
For $\varepsilon\le f_{\mathrm{gap}}/4$, the bottom-$s$ spectral projector obeys
\begin{equation}
\|P_{\mathrm{soft}}-P_{\cV}\|\le\frac{2\varepsilon}{f_{\mathrm{gap}}}.
\label{eq:soft_subspace_bound}
\end{equation}
\end{proposition}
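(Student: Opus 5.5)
The plan has three steps: locate the eigenvalues with Weyl's inequality, bound the subspace error with the Davis--Kahan $\sin\Theta$ theorem, and use the gap to align the bottom-$s$ spectral projector with $P_{\cV}$.

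\emph{Eigenvalue location.} Order the eigenvalues of $F_0$ as $0=\lambda_1=\cdots=\lambda_s<f_{\mathrm{gap}}\le\lambda_{s+1}\le\cdots$. Weyl's inequality gives $|\lambda_j(F)-\lambda_j(F_0)|\le\|R\|\le\varepsilon$ for every $j$. So the $s$ smallest eigenvalues of $F$ lie in $[-\varepsilon,\varepsilon]$, and positive semidefiniteness of $F$ sharpens this to $[0,\varepsilon]$. All remaining eigenvalues are at least $f_{\mathrm{gap}}-\varepsilon$. Since $\varepsilon<f_{\mathrm{gap}}/2$, we have $\varepsilon<f_{\mathrm{gap}}-\varepsilon$, so the bottom-$s$ cluster is strictly separated from the rest. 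This makes $P_{\mathrm{soft}}$ well defined as the spectral projector onto the eigenspaces of those $s$ eigenvalues, with rank exactly $s$.

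\emph{Projector bound.} I would use the Davis--Kahan $\sin\Theta$ theorem in the form that separates the spectrum of one operator from the other. Write $P_\perp=I-P_{\cV}$ and apply the residual identity
\[
F P_{\mathrm{soft}}-P_{\mathrm{soft}}FP_{\mathrm{soft}}=0,
\]
compressed by $P_\perp$. Since $P_\perp F_0=F_0P_\perp$ and $F_0P_{\cV}=0$, this yields the Sylvester equation
\[
(P_\perp F_0P_\perp)X-X\bigl(P_{\mathrm{soft}}FP_{\mathrm{soft}}\bigr)=-P_\perp RP_{\mathrm{soft}},
\qquad X:=P_\perp P_{\mathrm{soft}} .
\]
The spectrum of $P_\perp F_0P_\perp$ on $\operatorname{im}P_\perp$ lies in $[f_{\mathrm{gap}},\infty)$, and the spectrum of $P_{\mathrm{soft}}FP_{\mathrm{soft}}$ lies in $[0,\varepsilon]$, so the separation is $\delta\ge f_{\mathrm{gap}}-\varepsilon$. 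The standard Sylvester bound for separated Hermitian spectra gives
\[
\|P_\perp P_{\mathrm{soft}}\|\le\frac{\varepsilon}{f_{\mathrm{gap}}-\varepsilon}.
\]
Because both projectors have rank $s$, $\|P_{\mathrm{soft}}-P_{\cV}\|=\|P_\perp P_{\mathrm{soft}}\|$, the sine of the largest principal angle. Under $\varepsilon\le f_{\mathrm{gap}}/4$ we have $f_{\mathrm{gap}}-\varepsilon\ge\tfrac34 f_{\mathrm{gap}}$, so the bound becomes $\tfrac{4\varepsilon}{3f_{\mathrm{gap}}}\le\tfrac{2\varepsilon}{f_{\mathrm{gap}}}$, which is the claimed \cref{eq:soft_subspace_bound}.

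\emph{Main obstacle.} The only delicate step is justifying the equality of $\|P_{\mathrm{soft}}-P_{\cV}\|$ with the sine of the maximal principal angle. This requires equal ranks, which Step 1 supplies. As a fallback, the resolvent contour-integral argument gives $\|P_{\mathrm{soft}}-P_{\cV}\|\le 2\varepsilon/f_{\mathrm{gap}}$ directly when $\varepsilon\le f_{\mathrm{gap}}/4$. Take a circle of radius $f_{\mathrm{gap}}/2$ about $0$. On it, $\|(z-F_0)^{-1}\|\le 2/f_{\mathrm{gap}}$, and by a Neumann series $\|(z-F)^{-1}\|\le 4/f_{\mathrm{gap}}$. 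The contour integral then bounds the projector difference by
\[
\frac{f_{\mathrm{gap}}}{2}\cdot\frac{2}{f_{\mathrm{gap}}}\cdot\frac{4}{f_{\mathrm{gap}}}\,\varepsilon=\frac{4\varepsilon}{f_{\mathrm{gap}}},
\]
which is a factor of two weaker than the target. For that reason I would present the Davis--Kahan route as primary.
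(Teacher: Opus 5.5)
Your proposal is correct and takes the paper's route: Weyl's inequality for the eigenvalue localization and the Davis--Kahan $\sin\Theta$ theorem for the projector bound, which the paper simply cites. Your Sylvester-equation derivation spells that citation out and gives the slightly sharper $\|P_{\mathrm{soft}}-P_{\mathcal V}\|\le\varepsilon/(f_{\mathrm{gap}}-\varepsilon)$, which already implies $2\varepsilon/f_{\mathrm{gap}}$ whenever $\varepsilon\le f_{\mathrm{gap}}/2$.
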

Closeness to an exact gauge subspace does not justify deleting a soft mode that carries task gradient.
The interaction between a proposed soft sector and the remaining physical directions can be described exactly, without assuming invariance.
Work on the active space after known exact gauge directions have been projected out, let $P$ project onto a candidate soft subspace, set $Q=I-P$, and define
\begin{equation}
A=PFP+\gamma P,
\qquad
D=QFQ+\gamma Q,
\qquad
C=PFQ,
\label{eq:soft_block_operators}
\end{equation}
where $\gamma>0$ and every inverse is restricted to its indicated subspace.

\begin{proposition}
\label{prop:soft_block_coupling}
For the full active-space ridge update
\begin{equation}
\Delta_\gamma=-\eta(F+\gamma I)^{-1}g,
\end{equation}
write $g_S=Pg$, $g_Q=Qg$, and define the Schur complement
\begin{equation}
\mathcal S_\gamma=A-CD^{-1}C^*.
\end{equation}
Then the soft component is exactly
\begin{equation}
P\Delta_\gamma
=-\eta\mathcal S_\gamma^{-1}
\left(g_S-CD^{-1}g_Q\right).
\label{eq:soft_mode_schur}
\end{equation}
\end{proposition}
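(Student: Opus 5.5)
The plan is to treat the ridge system $(F+\gamma I)\Delta_\gamma=-\eta g$ as a $2\times2$ block system with respect to the orthogonal splitting $\operatorname{im}P\oplus\operatorname{im}Q$ of the active space, and then eliminate the $Q$-block by block Gaussian elimination.

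\textbf{Step 1: block form of the ridge operator.} Since $P+Q=I$, $P^2=P$, $Q^2=Q$ and $PQ=0$, we have
\[
F+\gamma I=(P+Q)(F+\gamma I)(P+Q)=A+C+C^*+D,
\]
where $C^*=QFP$ because $F$ is symmetric. Write $x=P\Delta_\gamma$ and $y=Q\Delta_\gamma$. Projecting the equation $(F+\gamma I)\Delta_\gamma=-\eta g$ onto the two subspaces gives the coupled system
\[
Ax+Cy=-\eta g_S,\qquad C^*x+Dy=-\eta g_Q.
\]

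\textbf{Step 2: invertibility.} Since $F\succeq0$ and $\gamma>0$, the restriction $D|_{\operatorname{im}Q}=QFQ+\gamma Q$ is positive definite, hence invertible on $\operatorname{im}Q$. The same holds for $A$ on $\operatorname{im}P$ and for $F+\gamma I$ on the whole active space, so $\Delta_\gamma$ is well defined.

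The Schur complement needs a separate argument. The block operator $F+\gamma I$ is positive definite, and the standard block-$LDL^*$ factorization gives the congruence
\[
F+\gamma I \;\simeq\; \mathcal S_\gamma\oplus D.
\]
It follows that $\mathcal S_\gamma=A-CD^{-1}C^*\succ0$ on $\operatorname{im}P$. An equivalent check: for $x\in\operatorname{im}P$, set $y=-D^{-1}C^*x$; then $\langle x,\mathcal S_\gamma x\rangle$ equals the quadratic form of $F+\gamma I$ evaluated at $x+y$, which is strictly positive for $x\neq0$.

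\textbf{Step 3: elimination.} Solve the second block equation for
\[
y=-D^{-1}\bigl(\eta g_Q+C^*x\bigr).
\]
Substituting into the first equation gives
\[
\bigl(A-CD^{-1}C^*\bigr)x=-\eta\bigl(g_S-CD^{-1}g_Q\bigr).
\]
Inverting $\mathcal S_\gamma$ on $\operatorname{im}P$ then yields \cref{eq:soft_mode_schur}.

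\textbf{Main obstacle.} The computation itself is routine. The only point needing care is bookkeeping. The inverses $D^{-1}$ and $\mathcal S_\gamma^{-1}$ act on subspaces, so I will check that $C$ maps $\operatorname{im}Q\to\operatorname{im}P$, so that $CD^{-1}C^*$ and $CD^{-1}g_Q$ land in $\operatorname{im}P$. I will also confirm that $\mathcal S_\gamma$ is invertible there, which is the positivity argument in Step 2.
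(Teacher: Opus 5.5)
Your proposal is correct and follows the same route as the paper, whose proof is a one-line block Gaussian elimination of $(F+\gamma I)\Delta_\gamma=-\eta g$ with respect to $\operatorname{im}P\oplus\operatorname{im}Q$. Your positivity arguments showing that $D$ and $\mathcal S_\gamma$ are invertible on their subspaces supply details the paper leaves implicit.
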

If $e=\|C\|$ and $a=\lambda_{\min}(A|_{\operatorname{im}P})$ satisfy $e^2<a\gamma$, \cref{prop:soft_coupling_bound} bounds the deviation from the decoupled soft-block update by explicit $e/\gamma$ and $e^2/\gamma$ terms.
If the soft subspace is $F$-invariant, then $C=0$ and the general formula reduces to
\begin{equation}
P\Delta_\gamma
=-\eta(PFP+\gamma P)^+g_S.
\label{eq:soft_mode_bias}
\end{equation}
Thus invariance is a zero-coupling special case rather than a prerequisite for analysis.
More generally, hard deletion has zero soft-block bias exactly when the effective gradient $g_S-CD^{-1}g_Q$ vanishes.
For one invariant eigenmode with Fisher value $f$ and gradient component $c$, the squared hard-deletion bias remains $\eta^2c^2/(f+\gamma)^2$.

\begin{proposition}
\label{prop:optimal_scalar_ridge}
Consider one physical Fisher eigenmode with known $f>0$ and true gradient component $c$.
Assume that only the gradient is noisy, $\widehat c=c+\xi$, with $\mathbb E\xi=0$ and $\Var(\xi)=s^2$, while $f$ and $c$ are treated as fixed oracle quantities for this scalar risk calculation.
For the target $u_*=-c/f$ and ridge update $\widehat u_\gamma=-(c+\xi)/(f+\gamma)$,
\begin{equation}
\mathbb E(\widehat u_\gamma-u_*)^2
=\frac{c^2\gamma^2/f^2+s^2}{(f+\gamma)^2}.
\label{eq:scalar_ridge_risk}
\end{equation}
If $c\ne0$, this risk is minimized by
\begin{equation}
\gamma_*=\frac{s^2f}{c^2}.
\label{eq:scalar_ridge_optimum}
\end{equation}
If $c=0$, complete suppression is optimal for this scalar model.
\end{proposition}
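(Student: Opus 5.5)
The plan is a direct bias--variance decomposition followed by one-variable calculus. First I will write the error as a sum of a deterministic bias term and a zero-mean noise term:
\begin{equation}
\widehat u_\gamma-u_*=-\frac{c+\xi}{f+\gamma}+\frac{c}{f}
=\frac{c\gamma}{f(f+\gamma)}-\frac{\xi}{f+\gamma}.
\end{equation}
The bias term uses $c/f-c/(f+\gamma)=c\gamma/(f(f+\gamma))$. When this is squared and the expectation taken, the cross term vanishes because $\mathbb E\xi=0$. The remaining noise term contributes $s^2/(f+\gamma)^2$, which gives \cref{eq:scalar_ridge_risk} directly.

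For the optimization over $\gamma\ge0$, I will write $R(\gamma)=(c^2\gamma^2/f^2+s^2)/(f+\gamma)^2$ and differentiate. The derivative has numerator proportional to
\begin{equation}
\frac{2c^2\gamma}{f^2}(f+\gamma)-2\Bigl(\frac{c^2\gamma^2}{f^2}+s^2\Bigr),
\end{equation}
which simplifies to $2\bigl(c^2\gamma/f-s^2\bigr)$. So $R'(\gamma)$ has the sign of $c^2\gamma/f-s^2$.

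When $c\neq0$, this sign is negative for $\gamma<s^2f/c^2$ and positive beyond that value. Hence $\gamma_*=s^2f/c^2$ is the unique global minimizer on $[0,\infty)$, which is \cref{eq:scalar_ridge_optimum}. This includes the degenerate case $s=0$, where $\gamma_*=0$.

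When $c=0$, the risk reduces to $s^2/(f+\gamma)^2$, which decreases monotonically toward its infimum $0$ as $\gamma\to\infty$. The limiting update $\widehat u=0$ attains the target $u_*=0$ exactly. Complete suppression is therefore optimal for this scalar model.

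No step is a genuine obstacle here. The only point needing care is the sign analysis of the derivative, which I will carry out as above so that the stationary point is shown to be a global minimum rather than merely a critical point.
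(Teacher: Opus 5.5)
Your proposal is correct and follows the paper's argument, which consists of the observation that the derivative of the risk has the sign of $c^2\gamma/f-s^2$. You additionally write out the bias--variance expansion behind the risk formula and the sign analysis that makes $\gamma_*=s^2f/c^2$ the global minimizer on $[0,\infty)$; the paper leaves both steps implicit.
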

The oracle optimum also yields a finite-sample rule once signal and noise uncertainty are separated by an independent pilot stage.

An independent pilot stage converts the oracle rule into a confidence-certified minimax scalar ridge: \cref{prop:confidence_scalar_ridge} uses a signal confidence radius and an independent variance bound to choose $\gamma_\delta$ and bound the conditional update risk.
For bounded pilot measurements, the signal radius can be obtained from Hoeffding's inequality or another valid bounded-data concentration bound, while the variance bound can come from a known measurement range or independent samples from exact gauge directions under a shared-noise model \cite{Hoeffding1963}. The numerical study evaluates a lower-cost proxy that reuses the observed exact-gauge block as an internal noise monitor; \cref{prop:confidence_scalar_ridge} supplies the certified independent-pilot counterpart.

\subsection{Mixed-state geometry and the depolarizing boundary}
\label{sec:mixed_extension}

Mixed states add a genuinely intrinsic source of small Fisher scales: population gaps. For $\rho=\sum_{j=1}^{D}p_j\ket j\!\bra j$, use the trace-orthonormal off-diagonal generators
\begin{equation}
X_{jk}=\frac{\ket j\!\bra k+\ket k\!\bra j}{\sqrt2},
\qquad
Y_{jk}=\frac{i(\ket j\!\bra k-\ket k\!\bra j)}{\sqrt2}.
\end{equation}
The standard symmetric-logarithmic-derivative (SLD) spectrum on the unitary orbit \cite{Hubner1992Bures,Liu2020QFIM,Koczor2022Noisy} is diagonal in these planes, with
\begin{equation}
f_{jk}=2\,\frac{(p_j-p_k)^2}{p_j+p_k},
\label{eq:standard_sld_orbit_spectrum}
\end{equation}
and $f_{jk}=0$ when $p_j+p_k=0$. Equal populations make the corresponding generator direction vertical, while a small nonzero population gap produces a genuinely physical soft direction whose Fisher scale collapses quadratically.

\begin{theorem}
\label{thm:mixed_qng_linearization}
Let
\begin{equation}
\rho_{\star}=\sum_{j=1}^{D}p_j\ket{j}\!\bra{j},
\qquad p_1>p_2>\cdots>p_D>0,
\end{equation}
and
\begin{equation}
H_{\mathrm{obj}}=\sum_{j=1}^{D}E_j\ket{j}\!\bra{j},
\qquad E_1<E_2<\cdots<E_D,
\end{equation}
so that $\rho_{\star}$ is a critical point of $\cL_H(\rho)=\Tr(H_{\mathrm{obj}}\rho)$ on its unitary orbit. For each pair $X_{jk},Y_{jk}$ with $j<k$, the Euclidean Hessian eigenvalue is
\begin{equation}
h_{jk}=(E_k-E_j)(p_j-p_k),
\end{equation}
and the linearized continuous-time SLD quotient-QNG flow has Jacobian $-\eta M_\star$ on the horizontal tangent space, with
\begin{equation}
M_\star=(F^{\mathrm{SLD}}_\star)^+H_{\mathrm{Eucl},\star}.
\end{equation}
Each pair $X_{jk},Y_{jk}$ carries the QNG eigenvalue
\begin{equation}
\mu_{jk}=\frac12(E_k-E_j)\frac{p_j+p_k}{p_j-p_k}.
\label{eq:mixed_qng_jacobian}
\end{equation}
\end{theorem}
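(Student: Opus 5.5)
Since $\rho_\star$ commutes with every $Z_{jk}=\ket{j}\!\bra{k}$-type generator's diagonal counterpart, the plan is to reduce everything to a single two-dimensional computation per off-diagonal pair. Because $\rho_\star$ and $H_{\mathrm{obj}}$ are both diagonal in $\{\ket j\}$, the Cartan (diagonal) generators commute with $\rho_\star$; with nondegenerate populations these are exactly the stabilizer directions $\mathfrak h_{\rho_\star}$. The horizontal space $\mathfrak p_{\rho_\star}$ is therefore spanned by the pairs $X_{jk},Y_{jk}$ with $j<k$. I would parametrize the orbit locally as $\rho(t)=e^{-iA(t)}\rho_\star e^{iA(t)}$ with $A(t)=\sum_{j<k}(x_{jk}X_{jk}+y_{jk}Y_{jk})$, so that the coordinates are trace-orthonormal. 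In these coordinates both the SLD Fisher matrix and the Euclidean Hessian are block diagonal, with $2\times2$ blocks indexed by pairs. Once both blocks are shown to be scalar, $M_\star$ is block diagonal and the ratio of scalars gives $\mu_{jk}$.

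\textbf{Step 1: Hessian.} Expand $\Tr(H_{\mathrm{obj}}\rho(t))$ to second order:
\[
\Tr(H\rho)=\Tr(H\rho_\star)-i\Tr(H[A,\rho_\star])-\tfrac12\Tr\bigl(H[A,[A,\rho_\star]]\bigr)+\cdots
\]
The linear term vanishes because $[H,\rho_\star]=0$, which confirms criticality. For the quadratic term, use cyclicity to write
\[
-\tfrac12\Tr\bigl(H[A,[A,\rho_\star]]\bigr)=\tfrac12\Tr\bigl([A,H][A,\rho_\star]\bigr).
\]
In the eigenbasis, $[A,\rho_\star]_{jk}=A_{jk}(p_k-p_j)$ and $[A,H]_{jk}=A_{jk}(E_k-E_j)$. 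The quadratic form is then
\[
\tfrac12\sum_{j\ne k}|A_{jk}|^2(E_k-E_j)(p_j-p_k)=\sum_{j<k}|A_{jk}|^2(E_k-E_j)(p_j-p_k).
\]
With $A_{jk}=(x_{jk}-iy_{jk})/\sqrt2$ we have $|A_{jk}|^2=(x_{jk}^2+y_{jk}^2)/2$. Since the Hessian matrix is twice the quadratic coefficient, its eigenvalue on both $X_{jk}$ and $Y_{jk}$ is $h_{jk}=(E_k-E_j)(p_j-p_k)$, with no cross terms between distinct pairs or between $X$ and $Y$. This matches the pure-state check (where $p_1=1$ and the gap is $E_k-E_1$).

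\textbf{Step 2: Fisher block.} Take the SLD spectrum \cref{eq:standard_sld_orbit_spectrum} as given. I would also verify its diagonality via the standard formula
\[
g^{\mathrm{SLD}}(\dot\rho,\dot\rho)=2\sum_{j,k}\frac{|\dot\rho_{jk}|^2}{p_j+p_k},\qquad \dot\rho_{jk}=-iA_{jk}(p_k-p_j).
\]
This gives the Fisher form $\sum_{j<k}4(p_j-p_k)^2|A_{jk}|^2/(p_j+p_k)$, hence the eigenvalue $f_{jk}$ on each of $X_{jk},Y_{jk}$, again with no cross terms. The normalization convention (the factor $2$ versus $4$ relative to \cref{eq:QFIM_pure}) is the main thing to double-check. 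I would pin it down with the pure-state limit, where $f_{1k}=2$ must match \cref{eq:full_control_spectrum}.

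\textbf{Step 3: Jacobian.} The QNG flow is $\dot x=-\eta F^+\nabla\cL$. Linearizing at the critical point kills the term $dF^+\cdot\nabla\cL$, leaving $-\eta F_\star^+H_\star$. This is the same argument as in \cref{thm:conditioning_similarity}, with $B_\star$ the identity in this orbit frame; the vertical Cartan directions are annihilated by $F^+$. On each pair, $\mu_{jk}=h_{jk}/f_{jk}$, and since $p_j>p_k$ the factor $(p_j-p_k)$ cancels once against the squared one, giving \cref{eq:mixed_qng_jacobian}. The main obstacle is Step 2's bookkeeping: showing that the SLD and Hessian forms share the same off-diagonal eigenframe, and matching constants so that the Hessian's factor $\tfrac12$ and the Fisher's factor $2$ combine to the stated $\tfrac12$.
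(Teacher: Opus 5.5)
Your proposal is correct and follows essentially the paper's route: a second-order expansion of $\Tr(H_{\mathrm{obj}}e^{-iA}\rho_\star e^{iA})$ that gives $h_{jk}$ on each off-diagonal plane, the standard SLD orbit spectrum that gives $f_{jk}$, and the ratio $h_{jk}/f_{jk}$ at a critical point, where the $d F^{+}\cdot\nabla\cL$ term drops out. Your full quadratic-form computation in the entries $A_{jk}$ also makes explicit that there are no cross terms between different pairs, which the paper's single-plane calculation leaves implicit. The sign of $\operatorname{Im}A_{jk}$ is flipped relative to the paper's $Y_{jk}$, but only $|A_{jk}|^2$ enters, so nothing changes.
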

As $p_j-p_k\to0$, the SLD Fisher scale collapses quadratically while the intrinsic QNG linearization grows as $(p_j+p_k)/(p_j-p_k)$. At exact degeneracy the orbit dimension drops and the direction becomes vertical, so the divergence marks approach to a lower stratum rather than a valid update.

Global depolarization isolates the statistical consequence of this intrinsic collapse because it preserves isotropy on the surviving full-control orbit.
\begin{theorem}
\label{thm:depolarizing_noise_geometry}
Let
\begin{equation}
\rho(q)=(1-q)\ket{0}\!\bra{0}+q\frac{I}{D},
\qquad 0\le q<1,
\end{equation}
on a $D$-dimensional full-control sector. Along its unitary orbit, the horizontal SLD-QFIM has rank $2D-2$, is isotropic, and has nonzero eigenvalue
\begin{equation}
f(q)=2\,\frac{(1-q)^2}{1-q+2q/D}.
\label{eq:depolarized_fisher_eigenvalue}
\end{equation}
For a linear objective diagonal in the basis containing $\ket0$, the deterministic horizontal quotient-QNG update is rescaled relative to the corresponding pure-state update by
\begin{equation}
\mu(q)=\frac{1-q+2q/D}{1-q}.
\label{eq:depolarized_qng_amplifier}
\end{equation}
More generally, in any scalar Fisher direction with gradient $g$, Fisher value $f$, and an unbiased $M$-shot gradient estimator of variance $\sigma_g^2/M$,
\begin{equation}
\operatorname{SNR}_{\mathrm{update}}^2
=\frac{(g/f)^2}{\sigma_g^2/(Mf^2)}
=\frac{Mg^2}{\sigma_g^2}.
\label{eq:qng_update_snr}
\end{equation}
Thus the inverse-Fisher factor cancels from the update signal-to-noise ratio. If $g(q)=(1-q)g_0$ while $\sigma_g^2$ remains bounded away from zero, then
\begin{equation}
\operatorname{SNR}_{\mathrm{update}}^2(q)=O((1-q)^2).
\label{eq:depolarized_update_snr}
\end{equation}
\end{theorem}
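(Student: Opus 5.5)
The proof has three parts: compute the spectrum of $\rho(q)$, apply the standard SLD orbit formula \cref{eq:standard_sld_orbit_spectrum} to get the Fisher scale, then do the update and SNR algebra.

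\textbf{Fisher spectrum.} Diagonalize $\rho(q)$ in a basis containing $\ket0$. The populations are
\begin{equation}
p_0=1-q+q/D,\qquad p_k=q/D\quad (k=1,\dots,D-1).
\end{equation}
Because the $p_k$ with $k\ge1$ coincide, every generator pair $(X_{jk},Y_{jk})$ with $j,k\ge1$ has $f_{jk}=0$. These directions are vertical: they commute with $\rho(q)$ and lie in the stabilizer.

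The remaining pairs $(X_{0k},Y_{0k})$ for $k=1,\dots,D-1$ span a space of real dimension $2D-2$, and they all share the same value. Here $p_0-p_k=1-q$ and $p_0+p_k=1-q+2q/D$, so \cref{eq:standard_sld_orbit_spectrum} gives exactly \cref{eq:depolarized_fisher_eigenvalue}.

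We also need the unitary orbit of $\rho(q)$ to be the orbit of the pure state $\ket0\!\bra0$. This holds because $\rho(q)$ is an affine image of $\ket0\!\bra0$ under a map that commutes with conjugation. It follows that the horizontal spaces coincide, the rank is $2D-2$, and the Fisher operator is isotropic.

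\textbf{Update rescaling.} Write $\cL=\Tr(H_{\mathrm{obj}}\rho)$. The gradient along the generator $T$ is $\Tr(H_{\mathrm{obj}}(-i)[T,\rho(q)])$. The identity part of $\rho(q)$ commutes with every $T$ and drops out of the commutator, so
\begin{equation}
\nabla\cL(\rho(q))=(1-q)\,\nabla\cL(\ket0\!\bra0).
\end{equation}
In the pure case $f(0)=2$, matching \cref{eq:full_control_spectrum}. Since both Fisher operators are isotropic on the same horizontal space, the QNG update at $q$ equals
\begin{equation}
\frac{(1-q)\,g_0}{f(q)}\quad\text{versus}\quad \frac{g_0}{2}\ \text{at } q=0.
\end{equation}
The ratio is $2(1-q)/f(q)$, which equals $\mu(q)$ in \cref{eq:depolarized_qng_amplifier} after substituting $f(q)$. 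Diagonality of $H_{\mathrm{obj}}$ in the $\ket0$ basis is used only so that $\rho(q)$ is a critical point or well aligned. It is not needed for the gradient proportionality itself; I would note this rather than rely on it.

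\textbf{SNR.} In a single Fisher eigendirection the update is $-\eta\widehat g/f$. Its mean is $-\eta g/f$ and its variance is $\eta^2\sigma_g^2/(Mf^2)$. The factors $\eta$ and $f$ cancel in the ratio, which gives \cref{eq:qng_update_snr}. Substituting $g(q)=(1-q)g_0$ with $\sigma_g^2\ge\sigma_{\min}^2>0$ yields
\begin{equation}
\operatorname{SNR}^2\le \frac{M g_0^2 (1-q)^2}{\sigma_{\min}^2},
\end{equation}
which is the claimed $O((1-q)^2)$ bound.

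\textbf{Main obstacle.} The only delicate step is justifying the use of \cref{eq:standard_sld_orbit_spectrum} on a degenerate-population orbit. There the formula must be applied to the horizontal complement, with zero pairs treated as vertical, consistent with the stated convention $f_{jk}=0$ for equal populations. Everything else is direct substitution.
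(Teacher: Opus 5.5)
Your proposal is correct and follows the paper's proof essentially step for step. You read off the populations $p_0=1-q+q/D$ and $p_k=q/D$, apply the standard SLD orbit formula to the $(0,k)$ planes to get rank $2D-2$ and the isotropic value $f(q)$, combine the $(1-q)$ scaling of the gradient with the $1/f$ Fisher inverse to obtain $\mu(q)=2(1-q)/f(q)$, and cancel $f$ in the SNR ratio. Your commutator argument (the $qI/D$ part drops out of $[T,\rho(q)]$, so the gradient is exactly $(1-q)$ times the pure-state gradient, with no need for $H_{\mathrm{obj}}$ to be diagonal) is a slightly sharper version of the paper's remark that the gradient scales with $p_0-p_k$.
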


The deterministic and statistical scales therefore move in opposite senses. In the many-body regime $D(1-q)\gg1$, $\mu(q)=1+o(1)$ can nearly restore the pure-state mean-update scale, while the inverse-Fisher contribution to update variance is
\begin{equation}
f(q)^{-2}=\frac{(1-q+2q/D)^2}{4(1-q)^4}.
\label{eq:depolarized_variance_amplifier}
\end{equation}
If a depth-$L$ circuit has a global depolarizing factor $1-p$ per layer, $1-q_L=(1-p)^L$ and
\begin{align}
\mu(q_L)&=1+\frac{2}{D}\bigl((1-p)^{-L}-1\bigr),\\
f(q_L)^{-2}&=\frac{\left((1-p)^L+2(1-(1-p)^L)/D\right)^2}{4(1-p)^{4L}}.
\end{align}
Inverse-Fisher scaling can therefore normalize a deterministic update while amplifying its fluctuations by the same preconditioning mechanism; it cannot restore signal-to-noise or noiseless shot complexity under this depolarizing model \cite{Wang2021Noise}.

\subsection{Slater/Givens numerical validation}
\label{subsec:slater_experiment}

The numerical study uses a classically simulable number-conserving fermionic model to test consequences of the transfer and finite-shot results rather than to redraw a closed formula.
Eight modes and four particles give a fixed-particle Hilbert space of dimension $\binom84=70$.
A 28-parameter real-Givens circuit explores the totally real slice
$\operatorname{Gr}_{\bbR}(4,8)\subset\operatorname{Gr}_{\bbC}(4,8)$.
A dense coordinate mixing hides 16 physical and 12 exact gauge combinations, while a prescribed physical chart introduces nontrivial coordinate conditioning.
\Cref{fig:slater_exact_bridge} reports the resulting exact circuit-to-orbit consistency checks.

\begin{figure}[t]
\centering
\includegraphics[width=0.82\linewidth,height=0.3143333333\linewidth]{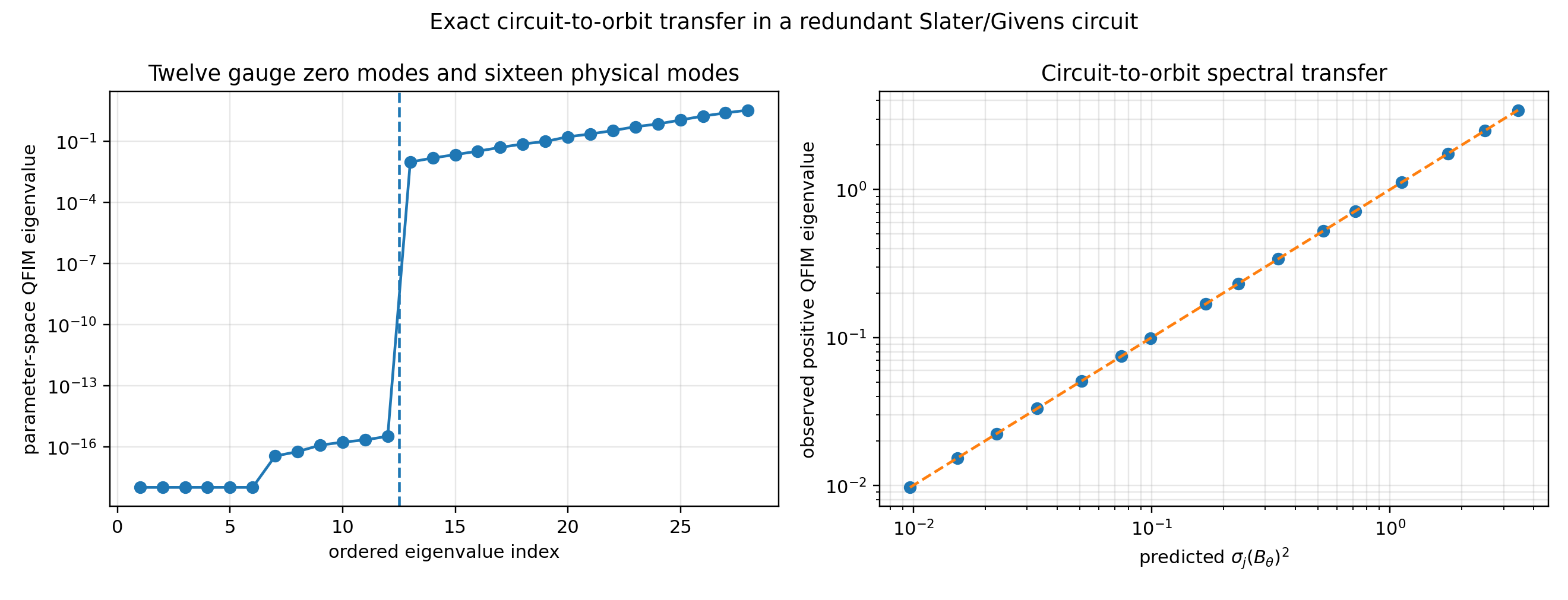}
\caption{Circuit-to-orbit Fisher transfer in the redundant Slater/Givens circuit.
The intrinsic real-Grassmann Fisher operator is isotropic, while the gate-coordinate QFIM has condition number $356.548$.
The factorization $F=B^{\mathsf T}B$ has relative error $1.273\times10^{-10}$, and the exact pseudoinverse update induces the intrinsic real-Grassmann gradient to angular error $1.207\times10^{-6}$ degrees.}
\label{fig:slater_exact_bridge}
\end{figure}

The exact calculation gives $\rank F=16$, $\dim\ker F=12$, and
\begin{equation}
\frac{\|F-B^{\mathsf T}B\|_{\mathrm F}}{\|F\|_{\mathrm F}}
=1.273\times10^{-10},
\qquad
\kappa^+(F)=\kappa(B)^2=356.548.
\end{equation}
It therefore verifies the chart-induced singular-value law in
\cref{thm:circuit_orbit_factorization,cor:isotropic_spectral_transfer} in a circuit with a hidden redundant mixing.

For the finite-shot comparison, QFIM entries are estimated from symmetric directional-overlap measurements with Bernoulli noise, and energy gradients from multinomial occupation measurements.
The ambient and quotient blocks require 868 and 304 distinct shifted-circuit settings, respectively, under this particular protocol.
They are compared at equal total budgets of $1.5$, $6$, and $24$ million shots.
Before reporting angle, energy change, or gauge displacement, every estimated update is rescaled to the same exact physical projector-tangent norm $0.08$.
Thus the comparison isolates direction quality and gauge leakage; it does not compare automatically selected step sizes.
The equal-budget results are summarized in \cref{fig:slater_finite_shot}.

\begin{figure}[t]
\centering
\includegraphics[width=0.88\linewidth,height=0.264\linewidth]{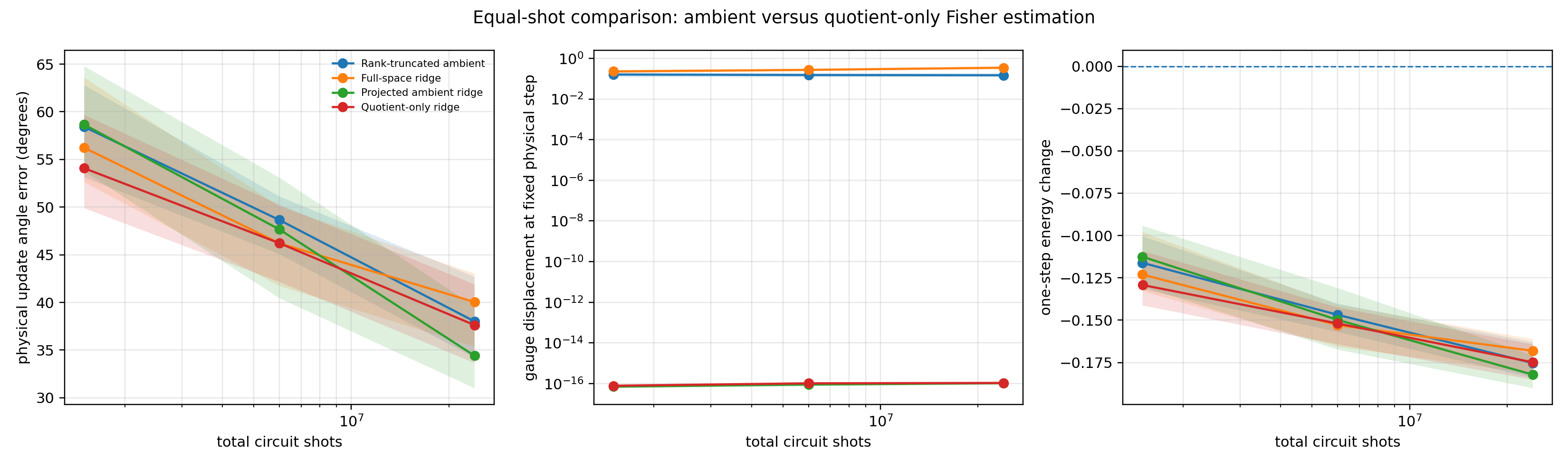}
\caption{Equal-total-shot comparison for ambient and quotient estimators.
Curves show medians and interquartile ranges over 64 replicates.
Structural projection eliminates gauge displacement, while quotient-only estimation spends the fixed budget on the smaller physical block.}
\label{fig:slater_finite_shot}
\end{figure}

At the largest budget, the projected ambient estimator has the smallest median physical-angle error, $34.40^\circ$.
The quotient-only estimator gives $37.62^\circ$, compared with $38.01^\circ$ for rank-truncated ambient estimation and $40.04^\circ$ for full-space ridge.
At the low and intermediate budgets, quotient-only estimation is more accurate because the smaller structural block receives more shots per setting.
Both structurally projected methods keep gauge displacement at numerical zero; full-space ridge has median gauge displacement $3.473\times10^{-1}$ at the largest budget.
The remaining $34$--$40^\circ$ angle errors show that even the largest budget is statistically noisy.

Quotient-only estimation is the lower-error method at the smallest budget and, at the two larger budgets, stays within $3.3^\circ$ of projected ambient estimation while using 304 rather than 868 shifted-circuit settings. The small separation at the larger budgets makes their fine ranking descriptive rather than structural; the stable conclusion is the resource--accuracy tradeoff together with exact removal of gauge displacement by structural projection.

The release experiment keeps six directions exactly redundant and opens six candidate directions continuously; \cref{fig:slater_soft_release} shows the resulting bias--leakage tradeoff.
\begin{figure}[t]
\centering
\includegraphics[width=0.88\linewidth,height=0.264\linewidth]{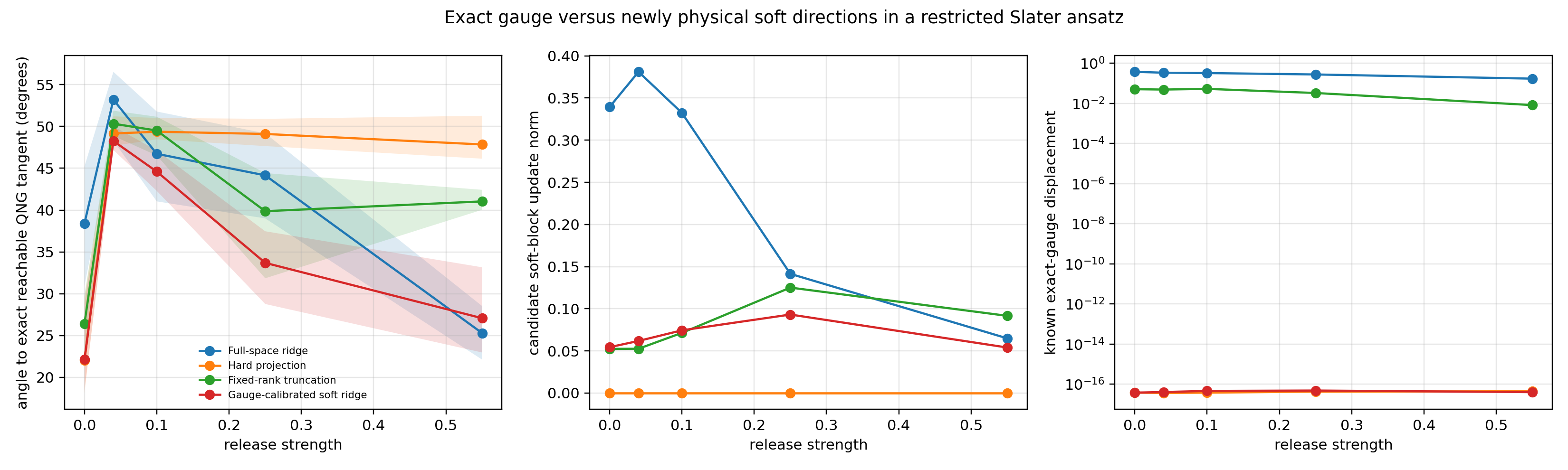}
\caption{Exact-gauge and physical-soft-mode tradeoff under continuous release.
Six candidate directions acquire physical Fisher weight while six known exact gauge directions remain redundant.
Hard projection becomes biased after release, whereas the gauge-calibrated anisotropic rule retains candidate motion and eliminates displacement in the known gauge block.
The plotted rule is the low-cost gauge-noise proxy motivated by \cref{prop:optimal_scalar_ridge}; \cref{prop:confidence_scalar_ridge} supplies the corresponding independent-pilot, confidence-certified scalar alternative.}
\label{fig:slater_soft_release}
\end{figure}

At the largest release strength, full-space ridge has median angle $25.29^\circ$ but incurs known-gauge displacement $1.698\times10^{-1}$.
Hard projection removes all candidate motion and reaches only $47.85^\circ$.
The gauge-calibrated soft rule reaches $27.07^\circ$, retains candidate-block norm $5.403\times10^{-2}$, and keeps known-gauge displacement at numerical zero.
At zero release the candidate block is also physically inactive, although the reported known-gauge diagnostic intentionally tracks only the six directions that remain redundant throughout the release path.

These experiments test the factorization and exact-gauge versus soft-mode mechanism on one engineered free-fermion instance with ideal finite-shot sampling and classically computed references. The quotient-only comparison exhibits a resource--accuracy tradeoff whose fine ranking is descriptive rather than structural; hardware scaling remains open, and the plotted soft rule is a mechanism-level proxy whose certified counterpart is \cref{prop:confidence_scalar_ridge}.

\section{Conclusion and outlook}

The main consequence of the quotient viewpoint is that singularity is not itself an optimization prescription. The state map determines first which parameter velocities are physically invisible and which orbit tangents the circuit can realize; the factorization $F_\theta=B_\theta^*M_\rho B_\theta$ then separates this structural information from coordinate distortion and intrinsic Fisher geometry. When the prescribed redundancy is faithful, the Moore--Penrose step is an intrinsic quotient gradient represented by its minimum-norm horizontal lift. When the quotient-to-orbit map is locally diffeomorphic, chart singular values cancel from the linearized QNG dynamics, leaving the intrinsic operator $M^{-1}H$ as the conditioning problem that cannot be removed by a change of circuit coordinates.

This separation exposes two complementary boundaries. The first is geometric: in the linearized critical-point dynamics, QNG removes faithful chart conditioning but does not erase anisotropy already present on the state orbit. In the trace-orthonormal full-control generator frame, the excitation-gap spread gives $\kappa_{\mathrm{QNG}}=\kappa_{\mathrm{Eucl}}$, while the mixed-state result shows that an individual closing population gap produces a divergent intrinsic QNG mode scale on approach to a lower orbit stratum. The second boundary is statistical: an estimated QFIM inside an operator-norm confidence ball cannot, by itself, distinguish an exact zero from a small physical mode. Known state-preserving structure licenses gauge projection directly; independently supplied rank-and-gap information can instead certify and recover the local Fisher kernel, while identifying that kernel with a global gauge action still requires structural information. Unresolved modes remain part of the physical optimization problem and should be regularized softly rather than deleted.

The representation-theoretic calculations show what becomes possible once the intrinsic factor is isolated. Total Fisher weight is tied to generalized entanglement through the moment map, root data resolve its distribution, and minuscule families collapse to one intrinsic scale. On cominuscule embeddings the same geometry makes fidelity QNG integrable: continuous trajectories preserve principal-defect ratios and reduce to one scalar contraction, while finite Lie-retracted steps reveal the discretization boundary through second-order ratio drift, cubic convergence at $\eta=2$, and stability edge $\eta=4$. Under depolarization, isotropy survives but inverse-Fisher scaling moves the deterministic update and its fluctuations together, so lost signal-to-noise is not recovered by preconditioning.

The Slater/Givens experiment tests these mechanisms in a redundant free-fermion chart and shows how the structural distinction changes finite-shot resource allocation. The next questions are correspondingly structural rather than merely numerical: which homogeneous or nonhomogeneous orbits admit comparable integrable fidelity laws, how learned redundancy projectors can acquire end-to-end confidence guarantees, and how the intrinsic mixed-state boundary changes under correlated or nonunital noise.

\appendix

\section{Quantum Fisher information and quotient-geometric preliminaries}
\label{app:qfi_geometry}

The ingredients used in the main text are standard, but they come from several neighboring subjects.
We follow Nielsen and Chuang for finite-dimensional quantum states, unitary dynamics, and observables \cite{NielsenChuang2010}; Amari and Nagaoka for Fisher geometry and natural gradients \cite{AmariNagaoka2000}; and Petz for quantum statistical models and SLD quantum Fisher information \cite{Petz2008}.
Smooth group actions, orbit spaces, and quotient manifolds are used in the sense of Lee \cite{Lee2013}, while horizontal lifts and optimization on matrix and quotient manifolds follow Absil, Mahony, and Sepulchre and Boumal \cite{AbsilMahonySepulchre2008,Boumal2023}.
This appendix fixes the conventions needed to connect these structures when a circuit parametrization contains state-preserving redundancy.

\subsection{State-space quantum Fisher information}
\label{app:sld_qfi}

Quantum Fisher information (QFI) quantifies the local statistical distinguishability of a smooth family of quantum states.
For a one-parameter variation, the SLD QFI is the supremum of the classical Fisher information over all quantum measurements \cite{BraunsteinCaves1994}.
The associated multiparameter object is most naturally expressed as a bilinear form on state-space tangent vectors before any circuit coordinates are chosen.

Let $X$ and $Y$ be Hermitian tangent variations at a density operator $\rho$ along a smooth fixed-rank family.
Symmetric logarithmic derivatives $L_X$ and $L_Y$ satisfy
\begin{equation}
X=\frac{1}{2}\bigl(\rho L_X+L_X\rho\bigr),
\qquad
Y=\frac{1}{2}\bigl(\rho L_Y+L_Y\rho\bigr).
\label{eq:sld_tangent_definition}
\end{equation}
The SLD quantum Fisher bilinear form is
\begin{equation}
g^{\mathrm{SLD}}_{\rho}(X,Y)
:=\frac{1}{2}\Tr\!\left[\rho\{L_X,L_Y\}\right]
=\mathrm{Re}\,\Tr\!\left(\rho L_XL_Y\right).
\label{eq:sld_qfi_bilinear}
\end{equation}
If $\rho=\sum_a p_a\ket{a}\!\bra{a}$, the same form can be written without choosing representatives of the SLDs as
\begin{equation}
g^{\mathrm{SLD}}_{\rho}(X,Y)
=2\!\sum_{a,b:\,p_a+p_b>0}
\frac{\mathrm{Re}\!\left(\bra{a}X\ket{b}\bra{b}Y\ket{a}\right)}{p_a+p_b}.
\label{eq:sld_qfi_spectral}
\end{equation}
For rank-deficient states, an SLD need not be unique on the kernel of $\rho$, but \cref{eq:sld_qfi_bilinear,eq:sld_qfi_spectral} define the same unambiguous pairing on tangent variations satisfying \cref{eq:sld_tangent_definition}.
Working on a fixed-rank stratum avoids changes in the tangent model at rank-changing boundary points; the mixed-state results in Section~\ref{sec:mixed_extension} are formulated on unitary orbits, where the spectrum and hence the rank are fixed.

\subsection{Pullback QFIM and coordinate formulas}
\label{app:pullback_geometry}

For a smooth state map $\theta\mapsto\rho(\theta)$, the parameter-space quantum Fisher tensor is the pullback $F=\rho^{*}g^{\mathrm{SLD}}$, as written in \cref{eq:qfim_pullback}.
In local coordinates, let $L_i$ solve the SLD equation for $\partial_i\rho$:
\begin{equation}
\partial_i\rho
=\frac{1}{2}\bigl(\rho L_i+L_i\rho\bigr).
\end{equation}
The matrix seen in quantum-machine-learning calculations is
\begin{equation}
F_{ij}(\theta)
=g^{\mathrm{SLD}}_{\rho(\theta)}(\partial_i\rho,\partial_j\rho)
=\frac{1}{2}\Tr\!\left[\rho\{L_i,L_j\}\right]
=\mathrm{Re}\,\Tr\!\left(\rho L_iL_j\right).
\label{eq:qfim_sld_matrix}
\end{equation}
Thus $F_{\theta}(v,w)=v^{\mathsf T}F(\theta)w$ in the chosen coordinate basis.
The term QFI refers to the intrinsic state-space pairing or to its value along a specified state direction, whereas QFIM refers to the coordinate matrix induced by a parametrized family.

Two elementary pullback properties drive the quotient construction.
First, if $d\rho_{\theta}(v)=0$, then bilinearity gives
\begin{equation}
F_{\theta}(v,w)
=g^{\mathrm{SLD}}_{\rho(\theta)}\!\left(0,d\rho_{\theta}(w)\right)=0
\qquad \forall w\in T_{\theta}\Theta.
\end{equation}
Hence $\ker d\rho_{\theta}\subseteq\ker F_{\theta}$.
Second, if a diffeomorphism $\Psi$ of the parameter manifold preserves the state map, $\rho\circ\Psi=\rho$, then functoriality of pullback gives
\begin{equation}
\Psi^{*}F
=\Psi^{*}\rho^{*}g^{\mathrm{SLD}}
=(\rho\circ\Psi)^{*}g^{\mathrm{SLD}}
=F.
\end{equation}
These are precisely the kernel and invariance mechanisms used in the main text.

Under a smooth change of coordinates $\theta=\theta(\varphi)$ with Jacobian $J=\partial\theta/\partial\varphi$, the coordinate matrix transforms as
\begin{equation}
F_{\varphi}=J^{\mathsf T}F_{\theta}J.
\end{equation}
This congruence law is the matrix expression of the coordinate invariance of the pullback tensor.

\subsection{Pure-state specialization}
\label{app:pure_qfi}

For a normalized pure state $\rho=\ket{\psi}\!\bra{\psi}$, differentiation of $\rho^{2}=\rho$ shows that one may choose the SLD of a tangent variation $d\rho$ as $L=2d\rho$.
Substitution into \cref{eq:sld_qfi_bilinear} yields four times the Fubini--Study metric:
\begin{equation}
g^{\mathrm{SLD}}_{\rho}(d\rho,d\rho')
=4\,\mathrm{Re}\!\left(
\braket{d\psi}{d\psi'}
-\braket{d\psi}{\psi}\braket{\psi}{d\psi'}
\right).
\end{equation}
For $\ket{\psi(\theta)}=U(\theta)\ket{\psi_0}$, the local generator
\begin{equation}
A_i=i(\partial_iU)U^{\dagger}
\end{equation}
is Hermitian and satisfies $\partial_i\ket{\psi}=-iA_i\ket{\psi}$.
Inserting this identity into the Fubini--Study expression gives the derivative-overlap and generator-covariance forms collected in \cref{eq:QFIM_pure}.
In particular, the convention used throughout the paper is
\begin{equation}
F_{ij}=4\,\mathrm{Re}\,\mathrm{Cov}_{\psi}(A_i,A_j)
=2\,\bra{\psi}\{\Delta A_i,\Delta A_j\}\ket{\psi}.
\end{equation}
This factor convention is responsible for the nonzero full-control eigenvalue~$2$ in \cref{eq:full_control_spectrum}; conventions based directly on the Fubini--Study tensor differ by an overall factor of four.

\subsection{Natural gradients and pseudoinverses}
\label{app:natural_gradient_background}

On a manifold with a positive-definite metric $h$, the natural or Riemannian gradient of a smooth function $f$ is defined by
\begin{equation}
h(\operatorname{grad}_{h}f,\cdot)=df.
\end{equation}
In local coordinates this becomes $\operatorname{grad}_{h}f=H^{-1}\nabla f$, where $H$ is the metric matrix and $\nabla f$ is the Euclidean coordinate gradient.
For the pulled-back QFI geometry this yields the nonsingular QNG formula in the main text.

If $F$ is symmetric positive semidefinite, the Moore--Penrose pseudoinverse acts as the inverse on $(\ker F)^{\perp}=\operatorname{im}F$ and vanishes on $\ker F$:
\begin{equation}
F^{+}=\bigl(F|_{(\ker F)^{\perp}}\bigr)^{-1}P_{(\ker F)^{\perp}}.
\end{equation}
Accordingly, $F^{+}b$ is the minimum-Euclidean-norm solution of $Fx=b$ when $b\in\operatorname{im}F$, and the minimum-norm least-squares solution otherwise.
The qualifier ``Euclidean'' is important: the selected representative depends on the background inner product of the parameter chart, whereas its projection to a faithful quotient is intrinsic.
Theorem~\ref{thm:quotient_geometry_update} identifies the conditions under which the pseudoinverse QNG step is exactly the horizontal lift of the quotient natural gradient.

For a background Riemannian metric $g$ with Levi--Civita connection $\nabla$, the Riemannian Hessian used in the gauge-fixed critical-point analysis is the bilinear form
\begin{equation}
\mathrm{Hess}\,f(v,w):=\langle \nabla_v(\mathrm{grad}\,f),w\rangle_g.
\end{equation}

\subsection{Group actions and quotient tensors}
\label{app:quotient_preliminaries}

Let a Lie group $\Gamma$ act smoothly on a manifold $U$, and assume that the orbit space $U/\Gamma$ is a smooth manifold for which the canonical projection $\pi:U\to U/\Gamma$ is a submersion.
The kernel of $d\pi_{\theta}$ is the tangent space to the orbit through $\theta$, called the vertical space.
A smooth covariant two-tensor $B$ on $U$ is the pullback of a tensor $\bar B$ on $U/\Gamma$ precisely when it is invariant under the action and annihilates vertical directions in each argument:
\begin{equation}
\Phi_g^{*}B=B,
\qquad
B_{\theta}(v,\cdot)=0\quad \text{for every }v\in\ker d\pi_{\theta}.
\end{equation}
Vertical annihilation makes the value independent of the chosen tangent lift, while invariance makes it independent of the chosen representative point on an orbit.
If $B$ is positive semidefinite and $\ker B_{\theta}=\ker d\pi_{\theta}$, then the descended tensor is positive definite and therefore defines a Riemannian metric on $U/\Gamma$.
These facts underlie the definitions of a $\Gamma$-basic and faithful QFIM in \cref{def:basic_fisher} \cite{Lee2013,AbsilMahonySepulchre2008,Boumal2023}.
No smoothness of the quotient is inferred merely from constant orbit type; it is imposed explicitly in the main results.

\section{Proofs for quotient geometry, gauge fixing, and sector reduction}

\subsection{Quotient metric and pseudoinverse lift}
\begin{proof}[Proof of \cref{thm:quotient_geometry_update}]
First consider descent of the QFIM tensor.
For $[\theta]\in U/\Gamma$ and tangent classes $[u],[v]\in T_{[\theta]}(U/\Gamma)$, choose representatives $u,v\in T_{\theta}U$ and define
\[
\bar F_{[\theta]}([u],[v]) := F_{\theta}(u,v).
\]
If $u$ is replaced by $u+w$ with $w\in\cV_{\theta}$, then $F_{\theta}(w,v)=0$ because $F$ is basic.
The same argument applies to the second argument.
If the representative point is changed from $\theta$ to $g\cdot\theta$, the $\Gamma$-invariance part of basicness gives
\[
F_{g\cdot\theta}\!\left((d\Phi_g)_{\theta}u,(d\Phi_g)_{\theta}v\right)=F_{\theta}(u,v).
\]
Thus $\bar F$ is well defined on the quotient.
Smoothness follows from the smoothness of $F$ and the assumed smooth quotient structure.
If $\ker F_{\theta}=\cV_{\theta}$, a nonzero tangent class has no representative lying entirely in $\ker F_{\theta}$, so the descended tensor is positive definite and hence a Riemannian metric.

For the pseudoinverse lift, the faithful regime gives the orthogonal decomposition
\[
T_{\theta}U=\cV_{\theta}\oplus \mathsf{H}_{\theta},
\qquad
\mathsf{H}_{\theta}:=(\ker F(\theta))^{\perp}=\operatorname{im}F(\theta),
\]
and $d\pi_{\theta}$ restricts to an isomorphism from $\mathsf H_{\theta}$ to $T_{[\theta]}(U/\Gamma)$.
Let $\delta[\theta]=-\eta\,\mathrm{grad}_{\bar F}\bar{\cL}([\theta])$.
Its unique horizontal lift $\Delta\theta\in\mathsf H_{\theta}$ is characterized by
\[
F(\theta)(\Delta\theta,v)
=-\eta\,d\bar{\cL}_{[\theta]}(d\pi_{\theta}v)
=-\eta\,d\cL_{\theta}(v)
\qquad \forall v\in\mathsf H_{\theta},
\]
where $\cL=\bar{\cL}\circ\pi$.
In the ambient Euclidean coordinates this becomes
\[
F(\theta)(\Delta\theta,v)=-\eta\,\langle \nabla\cL(\theta),v\rangle
\qquad \forall v\in\mathsf{H}_{\theta}.
\]
Since $F(\theta)|_{\mathsf H_{\theta}}$ is positive definite, the unique solution is
\[
\Delta\theta=-\eta\bigl(F(\theta)|_{\mathsf H_{\theta}}\bigr)^{-1}P_{\mathsf H}\nabla\cL(\theta),
\]
which is \cref{eq:quotient_horizontal}.
The Moore--Penrose pseudoinverse of a symmetric positive semidefinite matrix satisfies
\[
F(\theta)^{+}=\bigl(F(\theta)|_{\mathsf H_{\theta}}\bigr)^{-1}P_{\mathsf H},
\]
so \cref{eq:quotient_pinv} follows.
Since $d\pi_{\theta}\Delta\theta=\delta[\theta]$ by construction, \cref{eq:quotient_projection_gradient} follows as well.
\end{proof}

\subsection{Sector block structure}
\begin{proof}[Proof of \cref{prop:sector_blocks}]
For part~(a), let $\rho_{0}=\ket{\psi_{0}}\!\bra{\psi_{0}}$ be supported on a single isotypic component $\cH_{\lambda_{0}}$.
Under the commutant ansatz \cref{eq:block_ansatz}, each parameter block $\theta^{(\lambda)}$ acts only on $\cH_{\lambda}$ and is the identity elsewhere.
Therefore the evolved state $\ket{\psi(\theta)}$ remains inside $\cH_{\lambda_{0}}$.
If an index $i$ belongs to a block $\mu\neq\lambda_{0}$, then the corresponding tangent generator $A_i$ vanishes on $\cH_{\lambda_{0}}$.
Hence $A_i\ket{\psi}=0$, so $\bra{\psi}A_i\ket{\psi}=0$ and $\bra{\psi}A_iA_j\ket{\psi}=0$ for every $j$.
Substituting these identities into the covariance formula \cref{eq:QFIM_pure} shows that every matrix entry involving such an index vanishes.
Thus only the $\lambda_0$-block can be nonzero, which proves \cref{eq:qfim_block}.

For part~(b), write $\rho_0=\bigoplus_{\lambda}\rho_{0,\lambda}$ with respect to \cref{eq:isotypic}.
Because the ansatz \cref{eq:block_ansatz} is block diagonal, the evolved state remains block diagonal:
\[
\rho(\theta)=\bigoplus_{\lambda}\rho_{\lambda}(\theta^{(\lambda)}),\qquad
\rho_{\lambda}(\theta^{(\lambda)})=(I_{V_\lambda}\otimes U_{\lambda})\rho_{0,\lambda}(I_{V_\lambda}\otimes U_{\lambda})^{\dagger}.
\]
A tangent direction supported in the $\mu$-block has derivative $\partial_i\rho(\theta)$ supported in the same block.
The SLD equation
\[
\partial_i\rho(\theta)=\tfrac12\bigl(L_i\rho(\theta)+\rho(\theta)L_i\bigr)
\]
therefore decouples blockwise, so one may choose $L_i$ with support only on the $\mu$-block as well.
If $i$ and $j$ belong to different blocks, then $L_iL_j$ is block off-diagonal and its trace against the block-diagonal state $\rho(\theta)$ vanishes.
Hence $F^{\mathrm{SLD}}_{ij}=\operatorname{Re}\Tr(\rho L_iL_j)=0$ for cross-block indices, proving block diagonality.
\end{proof}

\subsection{Hessian zero modes and gauge-fixed representatives}
\begin{proof}[Proof of \cref{lem:zero_modes_transversality}]
For part~(a), let $\xi\in\mathfrak{r}$ and let $X_{\xi}$ be the induced vector field.
Invariance implies $df(X_{\xi})=0$ everywhere.
For any vector field $Y$, compute the covariant derivative:
\begin{equation}
(\nabla_{Y}df)(X_{\xi}) = Y(df(X_{\xi}))-df(\nabla_{Y}X_{\xi})=-df(\nabla_{Y}X_{\xi}).
\end{equation}
By definition $(\nabla_{Y}df)(X_{\xi})=\mathrm{Hess}\,f(Y,X_{\xi})$.
At a critical point, $df=0$, so the right-hand side vanishes, giving $\mathrm{Hess}\,f(Y,X_{\xi})=0$ for all $Y$.
Symmetry of the Hessian yields $\mathrm{Hess}\,f(X_{\xi},Y)=0$ for all $Y$.
Since $\cV_{\theta_{\star}}$ is spanned by $\{X_{\xi}(\theta_{\star})\}$, the claim follows.

For part~(b), let $H$ be the self-adjoint Hessian operator at $\theta_\star$ with respect to $g$.
By part~(a), $\cV_{\theta_\star}\subseteq \ker(H)$.
The assumption that the quotient Hessian at $[\theta_\star]$ is positive definite means exactly that the induced quadratic form on
$T_{\theta_\star}\Theta/\cV_{\theta_\star}$ is positive definite.
Using the $g$-orthogonal decomposition
\[
T_{\theta_\star}\Theta=\cV_{\theta_\star}\oplus \mathsf H_{\theta_\star},
\]
this is equivalent to positive definiteness of $H|_{\mathsf H_{\theta_\star}}$ and therefore to
$H\succeq0$ together with $\ker(H)=\cV_{\theta_\star}$.
Consequently $H$ has zero eigenvalues precisely along the vertical directions and strictly positive eigenvalues on the slice.
The displayed formula for $\kappa_{\mathrm{slice}}$ is then the ordinary spectral condition number of the positive-definite restriction.
Since $H$ has a nontrivial kernel in ambient coordinates, its ambient condition number is infinite.
\end{proof}

\begin{proposition}
\label{prop:soft_gauge}
In the setting of part~(b) of \cref{lem:zero_modes_transversality}, choose local coordinates around $\theta_{\star}$ and let $P_{\cV}$ be the fixed $g$-orthogonal projection onto $\cV_{\theta_{\star}}$.
For
\begin{equation}
f_{\beta}(\theta)=f(\theta)+\frac{\beta}{2}\,\Vert P_{\cV}(\theta-\theta_{\star})\Vert_{g}^{2},
\qquad \beta>0,
\end{equation}
the Hessian at $\theta_{\star}$ has eigenvalue $\beta$ along $\cV_{\theta_{\star}}$ and the original spectrum of $H\vert_{\mathsf{H}_{\theta_{\star}}}$ along $\mathsf{H}_{\theta_{\star}}$.
Its condition number is
\begin{equation}
\kappa(H_{\beta})=\frac{\max\{\lambda_{\max}(H\vert_{\mathsf{H}_{\theta_{\star}}}),\,\beta\}}
{\min\{\lambda_{\min}(H\vert_{\mathsf{H}_{\theta_{\star}}}),\,\beta\}}.
\end{equation}
\end{proposition}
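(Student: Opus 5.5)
The plan is to compute the Hessian of the penalized function $f_\beta$ at $\theta_\star$ directly in the chosen local coordinates. Then we show that the orthogonal splitting $\cV_{\theta_\star}\oplus\mathsf H_{\theta_\star}$ block-diagonalizes it, and read off the spectrum. First, the penalty term $\phi(\theta)=\frac{\beta}{2}\|P_{\cV}(\theta-\theta_\star)\|_g^2$ vanishes together with its differential at $\theta_\star$. Hence $\theta_\star$ remains a critical point of $f_\beta$. At a critical point the Levi--Civita connection terms in the Riemannian Hessian multiply the vanishing differential. The Hessian of $f_\beta$ at $\theta_\star$ is therefore the sum of the Hessian $H$ of $f$ and the Hessian of $\phi$, and the latter equals its coordinate second derivative. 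Here I will interpret $\|\cdot\|_g$ as the norm defined by $g_{\theta_\star}$ frozen at the base point, matching the fixed projector $P_{\cV}$. With that reading, $\phi$ is an exact quadratic form whose Hessian operator is $\beta P_{\cV}$, self-adjoint for $g_{\theta_\star}$ because $P_{\cV}$ is a $g$-orthogonal projection.

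Second, I assemble $H_\beta=H+\beta P_{\cV}$. Part~(b) of \cref{lem:zero_modes_transversality} gives $\ker H=\cV_{\theta_\star}$ and $H\succeq 0$. Since $H$ is $g$-self-adjoint, $\mathsf H_{\theta_\star}=\cV_{\theta_\star}^{\perp_g}$ is $H$-invariant. Both $H$ and $\beta P_{\cV}$ therefore preserve the decomposition $\cV_{\theta_\star}\oplus\mathsf H_{\theta_\star}$. On $\cV_{\theta_\star}$ we get $H_\beta=0+\beta I$, so every vertical direction is an eigenvector with eigenvalue $\beta$. On $\mathsf H_{\theta_\star}$ we get $H_\beta=H|_{\mathsf H_{\theta_\star}}$, which keeps the original positive spectrum. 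This proves the spectral claim.

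Third, $H_\beta$ is positive definite with eigenvalues $\{\beta\}\cup\operatorname{spec}(H|_{\mathsf H_{\theta_\star}})$, assuming $\cV_{\theta_\star}\neq 0$; if it is zero, the $\beta$ terms are absent. The condition number is the largest eigenvalue over the smallest, which is exactly the displayed max/min formula.

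The only delicate point is the first step. If $\|\cdot\|_g$ were taken with the position-dependent metric $g_\theta$, extra second-derivative terms would involve derivatives of $g$ contracted with $P_{\cV}(\theta-\theta_\star)$. These vanish at $\theta_\star$ because the displacement vanishes to first order there. I will note this so that the result holds under either reading.
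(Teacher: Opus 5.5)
Your proposal is correct and follows the paper's argument. The penalty's Hessian at $\theta_\star$ is $\beta P_{\cV}$, so $H_\beta=H+\beta P_{\cV}$ acts as $\beta$ on $\cV_{\theta_\star}$ and as $H|_{\mathsf H_{\theta_\star}}$ on $\mathsf H_{\theta_\star}$, and the spectrum and condition number are read off from that. Your extra checks are details the paper leaves implicit: the vanishing connection and metric-derivative terms at $\theta_\star$, the $H$-invariance of $\mathsf H_{\theta_\star}$, and the degenerate case $\cV_{\theta_\star}=0$.
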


\begin{proof}[Proof of \cref{prop:soft_gauge}]
In the chosen local coordinates, the penalty term is a fixed quadratic form,
\[
q(\theta):=\frac{\beta}{2}\,\|P_{\cV}(\theta-\theta_\star)\|_g^2.
\]
Its Hessian at $\theta_\star$ equals $\beta P_{\cV}$.
Therefore
\[
H_\beta = H + \beta P_{\cV}.
\]
Because $P_{\cV}$ vanishes on $\mathsf H_{\theta_\star}$ and is the identity on $\cV_{\theta_\star}$,
$H_\beta$ acts as $H|_{\mathsf H_{\theta_\star}}$ on horizontal directions and as multiplication by $\beta$ on vertical directions.
Thus the spectrum of $H_\beta$ is exactly the union of the spectrum of $H|_{\mathsf H_{\theta_\star}}$ and the repeated eigenvalue $\beta$ on $\cV_{\theta_\star}$.
The condition-number formula follows immediately from taking the ratio of the largest and smallest positive eigenvalues.
\end{proof}

\begin{proof}[Proof of \cref{thm:representative_convergence}]
For part~(a), let $\gamma:[0,\infty)\to K\subset V$ be as stated and define $\vartheta=s\circ\gamma$.
Because $s$ is smooth on the compact set $K$, its differential is uniformly bounded there:
\[
\|Ds_{x}\|\le C_{K}\qquad (x\in K)
\]
for some finite constant $C_{K}$, where the norm is taken from the quotient tangent norm induced by $\bar F$ to the ambient Euclidean norm on $S\subset\bbR^{p}$.
Since $\gamma$ is absolutely continuous, so is $\vartheta$, and for almost every $t$ one has
\[
\|\dot\vartheta(t)\|_{2}=\|Ds_{\gamma(t)}\dot\gamma(t)\|_{2}\le C_{K}\,\|\dot\gamma(t)\|_{\bar F}.
\]
Integrating gives
\[
\operatorname{Len}_{\bbR^{p}}(\vartheta)=\int_{0}^{\infty}\|\dot\vartheta(t)\|_{2}\,dt
\le C_{K}\int_{0}^{\infty}\|\dot\gamma(t)\|_{\bar F}\,dt
=C_{K}\,\operatorname{Len}_{\bar F}(\gamma),
\]
which proves the length bound.
If moreover $\gamma(t)\to [\theta_{\star}]$, then continuity of $s$ implies
\[
\vartheta(t)=s(\gamma(t))\to s([\theta_{\star}])=\theta_{\star},
\]
so the gauge-fixed representatives converge in ambient coordinates.

For part~(b), the analyticity of $V$ and $\bar F$ makes the negative gradient vector field of $\bar\cL$ real analytic.
The hypothesis that $\gamma([0,\infty))$ has compact closure implies that the trajectory is bounded and has a nonempty compact $\omega$-limit set.
Since
\[
\frac{d}{dt}\bar\cL(\gamma(t))= -\eta\,\|\grad_{\bar F}\bar\cL(\gamma(t))\|_{\bar F}^{2}\le 0,
\]
$\bar\cL(\gamma(t))$ is monotone decreasing and therefore converges to some limit value $L_\infty$.
By the Lojasiewicz--Simon gradient inequality for real-analytic functions on analytic Riemannian manifolds, every accumulation point $x_{\infty}$ of the trajectory admits a neighborhood $W$ and constants $C>0$ and $\alpha\in [1/2,1)$ such that
\[
|\bar\cL(x)-\bar\cL(x_{\infty})|^{\alpha}
\le C\,\|\grad_{\bar F}\bar\cL(x)\|_{\bar F}
\qquad (x\in W)
\]
\cite{Simon1983,AbsilMahonyAndrews2005}.
The standard Lojasiewicz argument then implies that once the trajectory enters $W$, the integral
\[
\int_{t_{0}}^{\infty}\|\dot\gamma(t)\|_{\bar F}\,dt
=\eta\int_{t_{0}}^{\infty}\|\grad_{\bar F}\bar\cL(\gamma(t))\|_{\bar F}\,dt
\]
is finite, so the trajectory has finite $\bar F$-length and converges to a single critical point $[\theta_{\infty}]$ of $\bar\cL$.
Applying part~(a) to the slice map $s:V\to S$ then shows that
$\vartheta(t)=s(\gamma(t))$ has finite Euclidean length and converges to $s([\theta_{\infty}])$.
\end{proof}

\section{Proofs for exact Fisher geometry and quotient-QNG dynamics}

\subsection{Highest-weight flag spectrum}
\begin{proof}[Proof of \cref{thm:flag_root_spectrum}]
The highest-weight ray has stabilizer containing the Cartan torus directions that act on $\ket{\Lambda}$ by phases.
Thus every Cartan generator has zero variance at $\ket{\Lambda}$ and gives a vertical Fisher direction.

For a positive root $\alpha$, the highest-weight condition gives $E_{\alpha}\ket{\Lambda}=0$.
With the normalization $[E_{\alpha},E_{-\alpha}]=H_{\alpha}$,
\[
\bra{\Lambda}E_{\alpha}E_{-\alpha}\ket{\Lambda}
=\bra{\Lambda}[E_{\alpha},E_{-\alpha}]\ket{\Lambda}
=\Lambda(H_{\alpha}),
\]
while $\bra{\Lambda}E_{-\alpha}E_{\alpha}\ket{\Lambda}=0$.
Terms changing the weight by $\pm 2\alpha$ have zero expectation against $\bra{\Lambda}$.
Therefore
\[
\bra{\Lambda}X_{\alpha}^{2}\ket{\Lambda}
=\bra{\Lambda}Y_{\alpha}^{2}\ket{\Lambda}
=\frac12\,\Lambda(H_{\alpha}),
\qquad
\bra{\Lambda}X_{\alpha}\ket{\Lambda}
=\bra{\Lambda}Y_{\alpha}\ket{\Lambda}=0.
\]
The pure-state QFIM equals four times the covariance, so after dividing by the trace normalization factor $n_{\alpha}$ one obtains
\[
F(\widehat X_{\alpha},\widehat X_{\alpha})
=F(\widehat Y_{\alpha},\widehat Y_{\alpha})
=2\,\frac{\Lambda(H_{\alpha})}{n_{\alpha}}.
\]
If $\Lambda(H_{\alpha})=0$, then $\|E_{-\alpha}\ket{\Lambda}\|^{2}=0$, so both root generators preserve the ray to first order and lie in the Fisher kernel.
Cross terms vanish because different root directions change the weight by different amounts, while the real symmetric $X_{\alpha},Y_{\alpha}$ cross term inside a single root plane has zero covariance.
The orbit of a highest-weight ray under the compact group is the corresponding generalized flag manifold, completing the proof.
\end{proof}

\subsection{Conditioning and landscape geometry}
\begin{proof}[Proof of \cref{thm:qng_limit}]
Let $\ket{0}:=\ket{\psi_{\star}}$ and let $H_{\mathrm{obj}}\ket{k}=E_{k}\ket{k}$ for $k=0,\dots,D-1$ with $E_{0}<E_{1}\le \cdots \le E_{D-1}$.
Choose the orthonormal horizontal generators
\begin{equation}
X_{k}:=\frac{1}{\sqrt{2}}\bigl(\ket{k}\!\bra{0}+\ket{0}\!\bra{k}\bigr),
\qquad
Y_{k}:=\frac{i}{\sqrt{2}}\bigl(\ket{k}\!\bra{0}-\ket{0}\!\bra{k}\bigr),
\qquad k=1,\dots,D-1,
\end{equation}
which span the horizontal tangent space at $\bbC\ket{0}$.
For a horizontal perturbation
\begin{equation}
V=\sum_{k=1}^{D-1}\bigl(x_{k}X_{k}+y_{k}Y_{k}\bigr),
\end{equation}
one has $\bra{0}V\ket{0}=0$ and
\begin{equation}
V\ket{0}=\frac{1}{\sqrt{2}}\sum_{k=1}^{D-1}(x_{k}+iy_{k})\ket{k}.
\label{eq:qnglimit_vk0}
\end{equation}

Consider the unitary curve $\ket{\psi(t)}=e^{-itV}\ket{0}$.
A second-order expansion gives
\begin{equation}
\cL_{H}(\psi(t))
=\bra{0}e^{itV}H_{\mathrm{obj}}e^{-itV}\ket{0}
=E_{0}+t^{2}\bra{0}V(H_{\mathrm{obj}}-E_{0})V\ket{0}+O(t^{3}),
\label{eq:qnglimit_taylor}
\end{equation}
because the linear term vanishes at the eigenstate $\ket{0}$.
Using \cref{eq:qnglimit_vk0}, we obtain
\begin{equation}
\bra{0}V(H_{\mathrm{obj}}-E_{0})V\ket{0}
=\frac{1}{2}\sum_{k=1}^{D-1}(E_{k}-E_{0})(x_{k}^{2}+y_{k}^{2}).
\label{eq:qnglimit_quadform}
\end{equation}
Comparing \cref{eq:qnglimit_taylor,eq:qnglimit_quadform} with the quadratic expansion
\begin{equation}
\cL_{H}(\psi(t))=E_{0}+\frac{t^{2}}{2}\,\langle v,\mathbf{H}_{\star}v\rangle+O(t^{3})
\end{equation}
shows that each $X_{k}$ and $Y_{k}$ is an eigenvector of $\mathbf{H}_{\star}$ with eigenvalue $E_{k}-E_{0}$.
Every stabilizer direction $K$ satisfies $K\ket{0}=c\ket{0}$ for some real $c$, so $e^{-itK}\ket{0}$ stays on the same ray and the objective is constant along that curve; hence these directions lie in the kernel of the full generator-coordinate Hessian.
This proves the full Hessian spectrum and the formula for $\kappa_{\mathrm{Eucl}}$.

For the QNG linearization, \cref{eq:full_control_spectrum} gives
\begin{equation}
F(\psi_{\star})^{+}=\frac{1}{2}P_{\mathrm{hor}},
\end{equation}
where $P_{\mathrm{hor}}$ is the orthogonal projector onto the horizontal tangent space.
Since $\mathbf{H}_{\star}$ preserves that space and is diagonal in the basis $\{X_{k},Y_{k}\}$, the matrix
\begin{equation}
\mathbf{M}_{\star}=F(\psi_{\star})^{+}\mathbf{H}_{\star}
\end{equation}
has the same eigenvectors and eigenvalues $\frac{1}{2}(E_{k}-E_{0})$, each with multiplicity two. The Jacobian of the linearized quotient-QNG flow is therefore $-\mathbf{M}_{\star}$ on horizontal directions.
Hence $\kappa_{\mathrm{QNG}}=\kappa_{\mathrm{Eucl}}$.
\end{proof}

\begin{proof}[Proof of \cref{thm:morse_bott_landscapes}]
For part~(a), let $\ket{\psi_{0}}\in\cK_{\lambda}$ be a unit vector and write $P_{0}=\ket{\psi_{0}}\!\bra{\psi_{0}}$.
Under the full-control condition in \cref{eq:sector_core_assumptions}, the connected group generated by the restricted DLA is $G_{\lambda}=SU(\cK_{\lambda})$ up to the central phase, so it acts transitively on rays in the sector.
The stabilizer of $P_{0}$ under conjugation is exactly $S(U(1)\times U(D-1))$.
Therefore
\[
\mathcal{O}_{\lambda}=G_{\lambda}\cdot [\psi_{0}]
=\{UP_{0}U^{\dagger}:U\in SU(D)\}
\cong SU(D)/S(U(1)\times U(D-1))
\cong \bbC\mathrm{P}^{D-1}.
\]
This is the standard coadjoint-orbit realization of rank-one projectors.

Let $H:=H_{\mathrm{obj}}\vert_{\cK_{\lambda}}$ and decompose the sector as an orthogonal sum of eigenspaces
\[
\cK_{\lambda}=\mathcal E_{0}\oplus \mathcal E_{1}\oplus \cdots \oplus \mathcal E_{J},
\qquad
H\vert_{\mathcal E_{j}}=E_{j} I,
\qquad
E_{0}<E_{1}<\cdots<E_{J}.
\]
On the unit sphere, the function $\cL_H(\psi)=\bra{\psi}H\ket{\psi}$ is constrained by $\braket{\psi}{\psi}=1$.
The Lagrange-multiplier equation for a critical point is
\[
H\ket{\psi}=\mu \ket{\psi},
\]
so the critical rays are exactly the rays contained in eigenspaces of $H$.
Passing to projective space gives
\[
\mathrm{Crit}(\cL_H)=\bigsqcup_{j=0}^{J} \mathbb{P}(\mathcal E_{j}).
\]

Fix $[\psi]\in \mathbb{P}(\mathcal E_{j})$ with $\ket{\psi}$ normalized.
The projective tangent space at $[\psi]$ is naturally identified with the complex orthogonal complement $\psi^{\perp}$.
Decompose a tangent vector as
\[
\ket{\delta}=\ket{\delta_{j}}+\sum_{k\neq j} \ket{\delta_{k}},
\qquad
\ket{\delta_{j}}\in \mathcal E_{j}\cap \psi^{\perp},
\quad
\ket{\delta_{k}}\in \mathcal E_{k}.
\]
Consider the normalized variation
\[
\ket{\phi(t)}=\frac{\ket{\psi}+t\ket{\delta}}{\|\ket{\psi}+t\ket{\delta}\|}.
\]
Since $H\ket{\psi}=E_{j}\ket{\psi}$ and $\braket{\psi}{\delta}=0$, a direct expansion gives
\[
\cL_H(\phi(t))
=E_{j}+t^{2}\Bigl(\bra{\delta}H\ket{\delta}-E_{j}\|\delta\|^{2}\Bigr)+O(t^{3})
=E_{j}+t^{2}\sum_{k\neq j}(E_{k}-E_{j})\|\delta_{k}\|^{2}+O(t^{3}).
\]
Hence
\[
\frac{d^{2}}{dt^{2}}\Big|_{t=0} \cL_H(\phi(t))
=2\sum_{k\neq j}(E_{k}-E_{j})\|\delta_{k}\|^{2}.
\]
This proves that the Hessian vanishes precisely when $\delta_{k}=0$ for all $k\neq j$, namely when $\delta\in \mathcal E_{j}\cap \psi^{\perp}=T_{[\psi]}\mathbb{P}(\mathcal E_{j})$.
Therefore the Hessian kernel equals the tangent space of the critical manifold, so $\cL_H$ is Morse--Bott.
It also shows that on each complex summand $\mathcal E_{k}$ with $k\neq j$, the Hessian acts in real coordinates by the scalar $2(E_{k}-E_{j})$ with multiplicity $2\dim \mathcal E_{k}$.

If $j=0$, all coefficients $E_{k}-E_{0}$ are positive, so $\mathbb{P}(\mathcal E_{0})$ is a nondegenerate minimum manifold.
If $j=J$, all coefficients are negative, so $\mathbb{P}(\mathcal E_{J})$ is a nondegenerate maximum manifold.
If $0<j<J$, there are both positive and negative coefficients, so $\mathbb{P}(\mathcal E_{j})$ is a saddle manifold.
Thus the only local minima are the global minima in $\mathbb{P}(\mathcal E_{0})$, and the only local maxima are the global maxima in $\mathbb{P}(\mathcal E_{J})$.
Hence there are no spurious local minima or local maxima.

For part~(b), the orbit of a highest-weight ray is a compact homogeneous K\"ahler manifold and a coadjoint orbit of the compact group generated by $\mathfrak g_{\lambda}$ \cite{Kirillov2004,FultonHarris1991}.
The Kirillov--Kostant--Souriau symplectic form makes the group action Hamiltonian.
If the traceless part of $H_{\mathrm{obj}}$ belongs to $\mathfrak g_{\lambda}$, then $\cL_H$ differs by an additive constant from the moment-map component associated with that Lie-algebra element.

Frankel's theorem identifies moment-map components for Hamiltonian circle actions on compact K\"ahler manifolds as Morse--Bott functions, and the Atiyah--Guillemin--Sternberg convexity theory gives perfection and connected extremal fibers for Hamiltonian torus actions \cite{Frankel1959,Atiyah1982,GuilleminSternberg1982}.
The negative normal bundle at a critical component is a complex vector bundle, so its real rank, the Morse index, is even.
The image of the moment map under the linear functional defined by $H_{\mathrm{obj}}$ is a compact interval.
A local minimum or maximum of the component must map to an endpoint of this interval, hence is a global extremum.
Therefore no critical component can be a nonglobal local minimum or maximum.
\end{proof}

\subsection{Full-control state-preparation dynamics}
\begin{proof}[Proof of \cref{cor:full_control_stateprep}, part~(a)]
Let $P_{\mathrm{tar}}:=\ket{\psi_{\mathrm{tar}}}\!\bra{\psi_{\mathrm{tar}}}$.
On the unit sphere of $\cK_{\lambda}$, minimizing $\cL_{\mathrm{sp}}(\psi)=1-\bra{\psi}P_{\mathrm{tar}}\ket{\psi}$ under the constraint $\braket{\psi}{\psi}=1$
is equivalent to maximizing the Rayleigh quotient of the rank-one projector $P_{\mathrm{tar}}$.
The constrained critical points therefore satisfy
\[
P_{\mathrm{tar}}\ket{\psi}=\mu\ket{\psi}
\]
for some Lagrange multiplier $\mu$.
Since $P_{\mathrm{tar}}$ has eigenvalue $1$ on the target ray and eigenvalue $0$ on its orthogonal complement, the only critical rays are the target ray $\bbC\ket{\psi_{\mathrm{tar}}}$ and rays orthogonal to it.
The first gives $\cL_{\mathrm{sp}}=0$ and is therefore the global minimum; the second gives $\cL_{\mathrm{sp}}=1$ and is the global maximum.
Hence the target ray is the unique local and global minimum on $\bbC\mathrm{P}^{D-1}$, and there are no spurious local minima.

For the Hessian identity, center generator coordinates at a minimizing representative $\ket{\psi_{\star}}$ of the target ray and write
\[
\ket{\psi(\theta)}=e^{-iX(\theta)}\ket{\psi_{\star}},
\qquad
X(\theta):=\sum_{a=1}^{D^{2}-1}\theta_{a}T_{a}.
\]
Then
\[
\cL_{\mathrm{sp}}(\theta)=1-\left|\bra{\psi_{\star}}e^{-iX(\theta)}\ket{\psi_{\star}}\right|^{2}.
\]
Using $e^{-iX}=I-iX-\tfrac12 X^{2}+O(\|\theta\|^{3})$, we obtain
\[
\bra{\psi_{\star}}e^{-iX}\ket{\psi_{\star}}
=1-i\langle X\rangle_{\star}-\frac12\langle X^{2}\rangle_{\star}+O(\|\theta\|^{3}),
\]
where $\langle A\rangle_{\star}:=\bra{\psi_{\star}}A\ket{\psi_{\star}}$.
Taking the squared modulus gives
\[
\left|\bra{\psi_{\star}}e^{-iX}\ket{\psi_{\star}}\right|^{2}
=1-\Bigl(\langle X^{2}\rangle_{\star}-\langle X\rangle_{\star}^{2}\Bigr)+O(\|\theta\|^{3})
=1-\Var_{\star}(X)+O(\|\theta\|^{3}).
\]
Therefore
\[
\cL_{\mathrm{sp}}(\theta)=\Var_{\star}(X)+O(\|\theta\|^{3}).
\]
Because $X(\theta)=\sum_{a}\theta_{a}T_{a}$,
\[
\Var_{\star}(X)=\sum_{a,b}\theta_{a}\theta_{b}\Cov_{\star}(T_{a},T_{b})
=\frac14\sum_{a,b}F_{ab}(\psi_{\star})\,\theta_{a}\theta_{b},
\]
where the last identity uses \cref{eq:qfim_suD}.
Comparing with the second-order expansion
\[
\cL_{\mathrm{sp}}(\theta)=\frac12\sum_{a,b}H_{ab}\theta_{a}\theta_{b}+O(\|\theta\|^{3})
\]
yields $H=F(\psi_{\star})/2$.
The eigenvalue statement now follows immediately from \cref{eq:full_control_spectrum}.
\end{proof}

\begin{proof}[Proof of \cref{lem:sector_flow_convergence}]
By part~(a) of \cref{thm:morse_bott_landscapes}, the quotient sector orbit is the compact manifold
$\mathcal O_{\lambda}\cong \bbC\mathrm{P}^{D-1}$ and $\cL_{H}$ is a Morse--Bott function whose critical manifolds are the projectivized eigenspaces $\mathbb P(\mathcal E_{j})$.
Because $\mathcal O_{\lambda}$ is compact and $\cL_H$ is real analytic, the analytic-gradient convergence theorem applied to the descended QFIM metric shows that every negative-gradient trajectory converges to a single critical point of $\cL_H$; in particular, every trajectory converges to some critical manifold.

Fix a nonminimal critical manifold $\mathbb P(\mathcal E_{j})$ with $j>0$.
For the negative gradient flow, the unstable directions are exactly the Hessian eigenspaces on which the Hessian of $\cL_H$ is negative.
By part~(a) of \cref{thm:morse_bott_landscapes}, those directions are the complex summands $\mathcal E_{k}$ with $k<j$, and the total real unstable dimension is
\[
2\sum_{k<j}\dim \mathcal E_{k}\ge 2.
\]
Hence the stable manifold of $\mathbb P(\mathcal E_{j})$ has codimension at least $2$ in $\mathcal O_{\lambda}$ by the center-stable manifold theorem for smooth flows near a normally hyperbolic invariant manifold \cite{HirschPughShub1977,Shub1987}.
Therefore each such stable manifold has Fubini--Study volume measure zero, and the finite union over $j=1,\dots,J$ still has measure zero.
Every initial ray outside that exceptional set must converge to the only remaining attractor, namely the ground-state manifold $\mathbb P(\mathcal E_{0})$.
This proves almost-everywhere convergence to the minimum manifold.
\end{proof}

\begin{proof}[Proof of \cref{cor:full_control_stateprep}, part~(b)]
Full-control projective space is the rank-one cominuscule orbit.
Its single principal angle is the Fubini--Study distance $d$ from the target ray, and the single defect is
\[
m=\sin^{2}d=\cL_{\mathrm{sp}}.
\]
Applying \cref{thm:fidelity_integrability} with $r=1$ gives
\[
\dot m=-\eta(1-m)m,
\]
which is exactly \cref{eq:stateprep_logistic_ode}.
Separating variables gives
\[
\log\frac{\cL_{\mathrm{sp}}(t)}{1-\cL_{\mathrm{sp}}(t)}
=\log\frac{\cL_{\mathrm{sp}}(0)}{1-\cL_{\mathrm{sp}}(0)}-\eta t,
\]
and hence \cref{eq:stateprep_logistic_solution}.
If the initial ray is not orthogonal to $\ket{\psi_{\mathrm{tar}}}$, then $0\le \cL_{\mathrm{sp}}(0)<1$ and the logistic formula implies $\cL_{\mathrm{sp}}(t)\to 0$ exponentially.
The final Newton statement is exactly the local quadratic-model statement already established in part~(a) of \cref{cor:full_control_stateprep}.
\end{proof}

\begin{proof}[Proof of \cref{cor:full_control_stateprep}, part~(c)]
For $r=1$, the flat coordinate in \cref{cor:cominuscule_discrete} is the distance $d_{t}$ from the target ray.
The product over $b\ne a$ is empty, so \eqref{eq:cominuscule_discrete_map} becomes exactly
\[
d_{t+1}=\left|d_{t}-\frac{\eta}{4}\sin(2d_{t})\right|,
\]
which is \cref{eq:discrete_distance_recursion}.
The global window and no-overshoot regime are the rank-one cases of \cref{cor:cominuscule_discrete}; the linear ratio and cubic Newton step follow from the local expansion of \cref{eq:cominuscule_discrete_map} stated in \cref{eq:cominuscule_newton_cubic}.
Since $\cL_{\mathrm{sp}}=\sin^{2}d$, the cubic loss statement follows from
\[
d_{t+1}=\frac23 d_{t}^{3}+O(d_{t}^{5})
\]
at $\eta=2$.
At $\eta=4$, \cref{eq:embedding_rank_one_boundary} with $\ell=1$ gives
$d_{t+1}=d_t-\frac43d_t^3+O(d_t^5)$ and the stated asymptotic.
For $\eta>4$, the derivative of the scalar distance map at $d=0$ has modulus $\eta/2-1>1$, so the target ray is locally repelling.
\end{proof}

\subsection{Cominuscule fidelity dynamics}
\label{app:cominuscule_fidelity}

\begin{proof}[Proof of \cref{lem:hc_frame}]
The compact Hermitian symmetric orbit can be written as $G_{\lambda}/K_{\Lambda}$, where $K_{\Lambda}$ is the stabilizer of the target ray.
Harish-Chandra's construction gives a maximal family of pairwise strongly orthogonal noncompact roots $\gamma_{1},\dots,\gamma_{r}$ whose real span is a maximal flat of the symmetric space \cite{Helgason1978,Moore1964}.
For irreducible Hermitian symmetric spaces these Harish-Chandra roots have the common long-root length.
If $S$ is a set of $k\ge2$ distinct indices, orthogonality gives
\[
\left\|\sum_{a\in S}\varepsilon_{a}\gamma_{a}\right\|^{2}
=k\,\|\gamma_{1}\|^{2},
\qquad \varepsilon_{a}\in\{\pm1\},
\]
which is strictly larger than the squared length of any root.
No such signed sum can therefore be a root.
For the minimal homogeneous embedding, the noncompact positive roots pair with the cominuscule fundamental weight by either $0$ or $1$.
Because the embedding in \cref{eq:embedding_index} has $\Lambda=\ell\omega_{\mathrm c}$ and the Harish-Chandra roots are active, linearity gives
$\langle\Lambda,\gamma_a^\vee\rangle=\ell$.

The polar decomposition for compact symmetric spaces gives
\[
G_{\lambda}=K_{\Lambda}\exp(\mathfrak a)K_{\Lambda},
\]
where $\mathfrak a$ is the real span of $E_{-\gamma_{a}}-E_{\gamma_{a}}$.
Since the left $K_{\Lambda}$ action fixes the target ray and the right action changes only the representative inside the target-stabilizer coordinates, every ray can be represented in the Weyl chamber by \eqref{eq:hc_flat_state}.
The root subalgebras generated by each $\gamma_{a}$ commute because the roots are strongly orthogonal.
On the highest-weight line each such $\mathfrak{su}(2)$ has spin $\ell/2$.
Strong orthogonality makes these subalgebras commute, and the cyclic module they generate has highest weight $(\ell,\ldots,\ell)$.
A spin-$\ell/2$ highest-weight rotation has amplitude $\cos^\ell\theta$, so their product gives \eqref{eq:hc_fidelity_product}.
\end{proof}

\begin{proof}[Proof of \cref{lem:flat_fisher_block}]
The fidelity loss is invariant under the target stabilizer $K_{\Lambda}$ and, by \cref{lem:hc_frame}, depends on a point of the orbit only through its principal angles $\theta_{1},\dots,\theta_{r}$.
Its differential therefore vanishes on the angular directions of the polar decomposition, which are precisely the directions orthogonal to the flat at regular points; the statement extends to singular flat points by continuity.

The pure-state QFIM is four times the Fubini--Study metric under the convention of \cref{eq:QFIM_pure}.  Let $A=\exp(\mathfrak a)$ be the abelian subgroup generated by the commuting Harish-Chandra rotations in \cref{eq:hc_flat_state}.  Because the Fubini--Study metric is $G_\lambda$-invariant, every $a(\theta_0)\in A$ acts isometrically on the orbit.  Commutativity implies that its differential sends the coordinate vector $\partial/\partial\theta_b$ at $\theta$ to the same coordinate vector at $\theta+\theta_0$.  Hence the entire flat metric block is translation invariant and may be computed at the target; an expansion only at the target is used to fix its scale, not to infer its global constancy.

At the target, the $a$th one-parameter factor has squared highest-weight overlap
$\cos^{2\ell}\theta_a=1-\ell\theta_a^2+O(\theta_a^4)$.  The Fubini--Study metric therefore contributes $\ell\,d\theta_a^2$ and the QFIM contributes $4\ell\,d\theta_a^2$.  Strong orthogonality makes distinct root planes orthogonal at the target, so invariance under $A$ gives the global flat block $4\ell I_r$.

Finally, the isotropy action of $K_\Lambda$ on this compact symmetric space is polar, with the chosen maximal flat as a section.  Its orbit directions are therefore orthogonal to the section at every regular flat point.  The same orthogonality follows in root coordinates from the absence of signed sums of distinct strongly orthogonal roots in \cref{lem:hc_frame}, and it extends to singular points by continuity.  Thus the flat--offflat block vanishes globally, proving \eqref{eq:hc_fisher_block}.
\end{proof}

\begin{proof}[Proof of \cref{thm:fidelity_integrability}]
By \cref{lem:flat_fisher_block}, the quotient-QNG vector field is tangent to the flat and in the principal-angle variables is
\[
\dot\theta_{a}
=-\eta\,\frac{1}{4\ell}\frac{\partial \cL_{\mathrm{fid}}}{\partial\theta_{a}}.
\]
Writing $p=\prod_{b}\cos^{2\ell}\theta_{b}$ gives
\[
\frac{\partial \cL_{\mathrm{fid}}}{\partial\theta_{a}}
=-\frac{\partial p}{\partial\theta_{a}}
=2\ell p\,\tan\theta_{a}.
\]
Hence
\[
\dot\theta_{a}=-\frac{\eta}{2}p\,\tan\theta_{a}.
\]
Since $m_{a}=\sin^{2}\theta_{a}$,
\[
\dot m_{a}
=\sin(2\theta_{a})\dot\theta_{a}
=-\eta\,p\,m_{a},
\]
which proves \eqref{eq:cominuscule_defect_flow}.
The right-hand side has the same scalar multiplier for every active $a$, so every ratio $m_{a}/m_{b}$ with $a,b\in A$ is constant and $m(t)=u(t)m(0)$.
Substitution into $p=\prod_{a}(1-m_{a})^\ell$ gives \eqref{eq:cominuscule_scalar_reduction}.
As $u\to0$, the product in \eqref{eq:cominuscule_scalar_reduction} is $1+O(u)$, so
\[
\frac{d}{dt}\log u=-\eta+O(u).
\]
This yields $u(t)=Ce^{-\eta t}(1+o(1))$.
The displayed loss asymptotic follows from
\[
1-\prod_{a}(1-\beta_{a}u)^\ell
=\ell u\sum_{a}\beta_{a}+O(u^{2}).
\]
If some $\cos\theta_{a}=0$, then $p=0$ and the same formula for $\dot\theta_{a}$, interpreted by continuity as
\[
\dot\theta_{a}
=-\frac{\eta}{2}\sin\theta_a\cos^{2\ell-1}\theta_a
\prod_{b\ne a}\cos^{2\ell}\theta_b,
\]
vanishes for every $a$.
\end{proof}

\begin{proof}[Proof of \cref{cor:cominuscule_discrete}]
The Lie-exponential retraction follows the tangent generator inside the totally geodesic maximal flat.
Using the expression for $\dot\theta_{a}$ from the preceding proof gives the exact finite-step flat map \eqref{eq:cominuscule_discrete_map}.

Let
\[
R_{a}(\theta):=\cos^{2\ell-2}\theta_a
\prod_{b\ne a}\cos^{2\ell}\theta_{b}\in(0,1]
\]
as long as the fidelity is positive.
For $\theta_{a}\in(0,\pi/2)$ and $0<\eta\le4$, the coordinate displacement
\[
\delta_{a}:=\frac{\eta}{4}\sin(2\theta_{a})R_{a}(\theta)
\]
satisfies $0<\delta_{a}<2\theta_{a}$, because $\sin(2x)<2x$ on $(0,\pi/2)$.
Therefore $|\theta_{a}-\delta_{a}|<\theta_{a}$ for every nonzero coordinate.
All principal angles are nonincreasing in absolute value, so every $R_{a}(\theta_{t})$ has the positive lower bound
\[
R_{a}(\theta_{t})\ge R_{a}(\theta_{0})>0.
\]
The coordinatewise monotone convergence then gives a limit $\theta_{\infty}$.
If some limiting coordinate $\theta_{\infty,a}$ were positive, the fixed-point equation
\[
\theta_{\infty,a}
=\left|\theta_{\infty,a}
-\frac{\eta}{4}\sin(2\theta_{\infty,a})R_{a}(\theta_{\infty})\right|
\]
would require either zero displacement or displacement $2\theta_{\infty,a}$.
Both are impossible for $\theta_{\infty,a}\in(0,\pi/2)$ and $0<\eta\le4$.
Hence $\theta_{\infty}=0$.
If $0<\eta\le2$, then
\[
\delta_{a}\le\frac12\sin(2\theta_{a})\le\theta_{a},
\]
so no coordinate overshoots.
\end{proof}

For the local rates, expand \cref{eq:cominuscule_discrete_map} at $\theta=0$:
\[
\frac{\eta}{2}\sin\theta_a\cos^{2\ell-1}\theta_a
\prod_{b\ne a}\cos^{2\ell}\theta_b
=\frac{\eta}{2}\theta_{a}
-\frac{\eta}{2}\left(\ell-\frac13\right)\theta_a^3
-\frac{\eta\ell}{2}\theta_a\sum_{b\ne a}\theta_b^2
+O(\|\theta\|^{5}).
\]
For $0<\eta<4$ and $\eta\ne2$, this gives the linear ratio $|1-\eta/2|$.
At $\eta=2$ the linear term cancels and
\[
\theta_{a}^{+}
=\left(\ell-\frac13\right)\theta_{a}^{3}
+\ell\theta_{a}\sum_{b\ne a}\theta_{b}^{2}
+O(\|\theta\|^{5}),
\]
which gives \cref{eq:cominuscule_newton_cubic}.

\section{Auxiliary finite-shot implementation bounds}
\label{app:finite_shot_aux}

\subsection{Kernel recovery and horizontal implementation}

\begin{corollary}
\label{cor:shot-complexity}
Suppose each entry of $F$ is estimated by an unbiased estimator with range bound $B$ and $N$ shots, so that
\begin{equation}
\Pr[|\widehat F_{ab}-F_{ab}|>t]\le2e^{-2Nt^2/B^2}
\end{equation}
for every $t>0$.
Then $\|\widehat F-F\|_2\le p\,t_N$ holds simultaneously for all entries with probability at least $1-\delta_{\mathrm f}$, where
\begin{equation}
t_N=B\sqrt{\frac{\log(2p^2/\delta_{\mathrm f})}{2N}}.
\end{equation}
The hypothesis of \cref{thm:certified-truncation} is met once
\begin{equation}
N\ge
\frac{72B^2p^2}{(f^+_{\min})^2}
\log\frac{2p^2}{\delta_{\mathrm f}}.
\label{eq:shot-complexity}
\end{equation}
Moreover, on the certification event and in the faithful regime, if $\widehat g=\nabla\cL(\theta)+\epsilon$ with $P_{\cV}\nabla\cL(\theta)=0$, the update
\begin{equation}
\Delta\widehat\theta=-\eta\,\widehat F_\tau^{+}\widehat g
\end{equation}
obeys the pointwise bound
\begin{equation}
\|P_{\cV}\Delta\widehat\theta\|_2^2
\le
\frac{36\eta^2(p\,t_N)^2}{(f^+_{\min})^4}
\|\widehat g\|_2^2.
\end{equation}
Isotropic damping, by contrast, retains the fixed vertical drift $\eta^2\Tr(P_{\cV}\Sigma P_{\cV})/\gamma^2$ from part~(a) of \cref{thm:finite_shot_amplification} regardless of how well $F$ is known.
\end{corollary}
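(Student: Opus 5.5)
The proof has three parts: an entrywise concentration bound combined with a union bound, a norm conversion that feeds \cref{thm:certified-truncation}, and a perturbation estimate for the vertical leakage. First, apply the stated Hoeffding-type tail to each of the $p^2$ entries at level $t_N$. This gives failure probability $2e^{-2Nt_N^2/B^2}=\delta_{\mathrm f}/p^2$ per entry, so by the union bound all entries satisfy $|\widehat F_{ab}-F_{ab}|\le t_N$ simultaneously with probability at least $1-\delta_{\mathrm f}$. On this event, the Frobenius bound gives $\|\widehat F-F\|_2\le\|\widehat F-F\|_{\mathrm F}\le p\,t_N$. If $\widehat F$ is not symmetric, I symmetrize it first; this does not increase the entrywise error.

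Next, set $\varepsilon=p\,t_N$ and require $12\varepsilon\le f^+_{\min}$, which is the hypothesis of \cref{thm:certified-truncation}. Squaring gives $144p^2B^2\log(2p^2/\delta_{\mathrm f})/(2N)\le (f^+_{\min})^2$. This is exactly \cref{eq:shot-complexity}, since $144/2=72$.

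For the leakage bound, work on the certification event, where \cref{thm:certified-truncation}(3) gives $\|\widehat F_\tau^+-F^+\|_2\le 6\varepsilon/(f^+_{\min})^2$. In the faithful regime $\ker F=\cV_\theta$, so $F^+$ vanishes on $\cV_\theta$. Since $F^+$ is symmetric, $\operatorname{im}F^+=(\ker F)^\perp$, and therefore $P_{\cV}F^+=0$. This yields
\[
P_{\cV}\Delta\widehat\theta=-\eta P_{\cV}\bigl(\widehat F_\tau^+-F^+\bigr)\widehat g .
\]
Because $\|P_{\cV}\|\le1$, we get $\|P_{\cV}\Delta\widehat\theta\|_2\le 6\eta\, p\,t_N\,\|\widehat g\|_2/(f^+_{\min})^2$, and squaring gives the stated constant $36$. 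The hypothesis $P_{\cV}\nabla\cL=0$ is not needed for this inequality; it only explains why the bound measures pure leakage, since the exact step has no vertical part.

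Finally, the damping comparison is a restatement of \cref{eq:tikhonov_vertical_drift}. That quantity is independent of how well $F$ is estimated, because $(F+\gamma I)^{-1}$ acts as $\gamma^{-1}$ on $\ker F\supseteq\cV_\theta$ no matter how accurately $F$ is known.

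The main point to check is the constant bookkeeping in the shot-complexity step, namely that passing from entrywise to operator norm costs exactly the factor $p$ and that the truncation hypothesis is $12\varepsilon\le f^+_{\min}$. Everything else is direct substitution into earlier results.
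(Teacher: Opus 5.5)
Your proof is correct and follows the paper's argument step for step. The steps are: Hoeffding plus a union bound over the $p^2$ entries; the Frobenius bound giving $\|\widehat F-F\|_2\le p\,t_N$; the observation that \eqref{eq:shot-complexity} is exactly $12p\,t_N\le f^+_{\min}$; and the identity $P_{\cV}\widehat F_\tau^+\widehat g=P_{\cV}(\widehat F_\tau^+-F^+)\widehat g$ combined with part~3 of \cref{thm:certified-truncation}. Your extra remarks are accurate details the paper leaves implicit: symmetrizing $\widehat F$ so the truncation theorem applies, and noting that $P_{\cV}\nabla\cL=0$ is not needed for the inequality itself.
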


\begin{proposition}
\label{prop:matrix_free_horizontal}
Assume that the eigenvalues of the true horizontal block lie in $[f_{\min},f_{\max}]$, with $f_{\min}>0$, and solve
\begin{equation}
(P_{\mathsf H}FP_{\mathsf H}+\gamma P_{\mathsf H})x=P_{\mathsf H}g
\end{equation}
by conjugate gradients in the horizontal subspace.
With $\kappa_\gamma=(f_{\max}+\gamma)/(f_{\min}+\gamma)$, the iterates satisfy
\begin{equation}
\|x_k-x_*\|_A\le
2\left(\frac{\sqrt{\kappa_\gamma}-1}{\sqrt{\kappa_\gamma}+1}\right)^k
\|x_0-x_*\|_A.
\end{equation}
Thus relative $A$-norm error $\varepsilon$ requires
$O(\sqrt{\kappa_\gamma}\log(1/\varepsilon))$ horizontal Fisher--vector products.
If projector and Fisher--vector applications cost $C_P$ and $C_F$, respectively, the classical-oracle cost is
\begin{equation}
O\!\left((C_F+C_P)\sqrt{\kappa_\gamma}\log\frac1\varepsilon\right).
\end{equation}
The quantum measurement cost of each Fisher--vector product remains estimator dependent \cite{vanStraaten2021Measurement,Gacon2021SPSA,Kolotouros2024RandomNG,Halla2025SteinQFI,Rath2021RandomizedQFI}.
\end{proposition}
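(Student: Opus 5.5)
The proof is the standard conjugate-gradient convergence bound, applied on the finite-dimensional horizontal subspace $\operatorname{im}P_{\mathsf H}$ with the inner product inherited from the ambient Euclidean structure.

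First, set $A:=P_{\mathsf H}FP_{\mathsf H}+\gamma P_{\mathsf H}$ restricted to $\operatorname{im}P_{\mathsf H}$. Because $P_{\mathsf H}$ is an orthogonal projector and $F$ is symmetric, $A$ is self-adjoint on this subspace. The right-hand side $P_{\mathsf H}g$ lies in the subspace, and so do all Krylov iterates started from $x_0\in\operatorname{im}P_{\mathsf H}$. By hypothesis the eigenvalues of the horizontal block lie in $[f_{\min},f_{\max}]$, so the spectrum of $A$ lies in $[f_{\min}+\gamma,f_{\max}+\gamma]$. Hence $A$ is positive definite, its $A$-norm is well defined, $x_*$ is unique, and its condition number is at most $\kappa_\gamma$.

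Second, invoke the classical Chebyshev argument. The CG iterate minimizes $\|x-x_*\|_A$ over $x_0+\mathcal K_k(A,r_0)$. This gives
\[
\|x_k-x_*\|_A\le\min_{\deg q\le k,\,q(0)=1}\max_{\lambda\in[a,b]}|q(\lambda)|\,\|x_0-x_*\|_A ,
\qquad a=f_{\min}+\gamma,\ b=f_{\max}+\gamma .
\]
Choosing the shifted and scaled Chebyshev polynomial $T_k\bigl((b+a-2\lambda)/(b-a)\bigr)/T_k\bigl((b+a)/(b-a)\bigr)$ and using the lower bound $T_k(z)\ge\tfrac12\bigl(z+\sqrt{z^2-1}\bigr)^k$ produces the factor $2\bigl((\sqrt{\kappa_\gamma}-1)/(\sqrt{\kappa_\gamma}+1)\bigr)^k$.

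Third, for the iteration count, use $\log\bigl((\sqrt\kappa+1)/(\sqrt\kappa-1)\bigr)\ge 2/\sqrt\kappa$. Requiring the contraction factor to be at most $\varepsilon$ then gives $k=O(\sqrt{\kappa_\gamma}\log(2/\varepsilon))=O(\sqrt{\kappa_\gamma}\log(1/\varepsilon))$.

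Fourth, count the oracle calls. Each CG step uses one application of $A$, which costs a projector, a Fisher--vector product and a second projector (and the projector can be folded), plus $O(1)$ vector operations absorbed into $C_P$. This gives the stated cost $O\bigl((C_F+C_P)\sqrt{\kappa_\gamma}\log(1/\varepsilon)\bigr)$.

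The only nonroutine point, and the main obstacle, is justifying that running CG on the singular ambient operator is the same as running it on the positive-definite restriction. I would handle this by noting that $A$ maps $\operatorname{im}P_{\mathsf H}$ into itself and vanishes on the vertical space. With $x_0$ chosen horizontal (for example $x_0=0$), every residual and search direction stays horizontal, so the iteration is literally CG for the restricted operator and the estimate above applies unchanged. The statement about quantum measurement cost is just the observation that $C_F$ depends on the estimator, so it needs no proof.
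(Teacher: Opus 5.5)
Your proposal is correct. It is the standard argument the paper relies on: keep CG in the invariant horizontal subspace with a horizontal starting point, note that the spectrum of $A$ lies in $[f_{\min}+\gamma,f_{\max}+\gamma]$, and apply the Chebyshev minimax bound together with $\log\bigl((\sqrt{\kappa_\gamma}+1)/(\sqrt{\kappa_\gamma}-1)\bigr)\ge 2/\sqrt{\kappa_\gamma}$. The paper states this proposition as standard conjugate-gradient theory and writes out no proof, so your steps supply the omitted details, including the check that the singular ambient operator reduces to its positive-definite horizontal restriction.
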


\begin{proposition}
\label{prop:directional_confidence_bound}
In the pure-state setting of \cref{prop:directional_certificate}, let $f_v(t)=|\braket{\psi(\theta)}{\psi(\theta+tv)}|^2$ and
\begin{equation}
q_s(v)=\frac{2\{2-f_v(s)-f_v(-s)\}}{s^2}.
\end{equation}
If $B_4\ge\sup_{|t|\le s}|f_v^{(4)}(t)|$, then
\begin{equation}
|q_s(v)-F_\theta(v,v)|\le \frac{B_4s^2}{6}.
\label{eq:directional_bias}
\end{equation}
If each fidelity is estimated from $M$ independent Bernoulli trials, then with probability at least $1-\delta$,
\begin{equation}
|\widehat q_s(v)-F_\theta(v,v)|
\le \frac{4}{s^2}\sqrt{\frac{\log(4/\delta)}{2M}}+\frac{B_4s^2}{6}.
\label{eq:directional_confidence}
\end{equation}
\end{proposition}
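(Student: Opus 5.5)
The plan is to separate the estimator error into a deterministic Taylor bias and a stochastic Hoeffding term, then combine the two with the triangle inequality.

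\textbf{Step 1: low-order derivatives of $f_v$ at zero.} Since the state map is smooth (indeed analytic for circuit models), $f_v$ is $C^4$ near $0$. Its values at zero are
\begin{equation}
f_v(0)=1,\qquad f_v'(0)=0,
\end{equation}
because $f_v\le 1$ with equality at $t=0$. The curvature is fixed by the second-order fidelity expansion already used in the proof of part~(a) of \cref{cor:full_control_stateprep}. With $\ket{\psi(\theta+tv)}=e^{-itA_v+O(t^2)}\ket{\psi}$ up to phase, one gets $|\braket{\psi(\theta)}{\psi(\theta+tv)}|^2=1-t^2\Var_\psi(A_v)+O(t^3)$. \Cref{prop:directional_certificate} gives $\Var_\psi(A_v)=F_\theta(v,v)/4$, hence
\begin{equation}
f_v''(0)=-\tfrac12 F_\theta(v,v).
\end{equation}
Rather than relying on the exponential form, I would verify this directly from the derivative-overlap formula \cref{eq:QFIM_pure}. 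The second-order term of the fidelity of a normalized curve is exactly $-\tfrac14$ of the pulled-back QFIM in the paper's normalization, which is four times Fubini--Study. This identification of $f_v''(0)$ with the QFIM, with the correct factor, is the only step needing care.

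\textbf{Step 2: bias bound.} Apply Taylor's theorem with Lagrange remainder to fourth order at $t=\pm s$:
\begin{equation}
f_v(\pm s)=1+f_v''(0)\tfrac{s^2}{2}\pm f_v'''(0)\tfrac{s^3}{6}+f_v^{(4)}(\xi_\pm)\tfrac{s^4}{24},\qquad |\xi_\pm|\le s .
\end{equation}
Adding the two expansions cancels the odd terms. This gives
\begin{equation}
2-f_v(s)-f_v(-s)=\tfrac{s^2}{4}F_\theta(v,v)-\tfrac{s^4}{24}\bigl(f_v^{(4)}(\xi_+)+f_v^{(4)}(\xi_-)\bigr).
\end{equation}
Multiplying by $2/s^2$ yields $q_s(v)=F_\theta(v,v)-R$ with $|R|\le \tfrac{2}{s^2}\cdot\tfrac{s^4}{24}\cdot 2B_4=B_4s^2/6$, which is \cref{eq:directional_bias}.

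\textbf{Step 3: confidence bound.} Each $\widehat f_v(\pm s)$ is an average of $M$ independent Bernoulli variables in $[0,1]$ with mean $f_v(\pm s)$. By Hoeffding's inequality at level $\delta/2$ for each estimate, a union bound gives, with probability at least $1-\delta$, both
\begin{equation}
|\widehat f_v(\pm s)-f_v(\pm s)|\le\sqrt{\frac{\log(4/\delta)}{2M}}.
\end{equation}
Define $\widehat q_s(v)$ by plugging the estimates into $q_s(v)$. Then $|\widehat q_s(v)-q_s(v)|\le\tfrac{2}{s^2}\bigl(|\widehat f_v(s)-f_v(s)|+|\widehat f_v(-s)-f_v(-s)|\bigr)$, which is at most $\tfrac{4}{s^2}\sqrt{\log(4/\delta)/(2M)}$ on this event. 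The triangle inequality with Step~2 gives \cref{eq:directional_confidence}.

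Beyond the curvature identification in Step~1, the rest is routine. The only other point to flag is that $\xi_+$ and $\xi_-$ are distinct points, which is why the fourth-derivative bound is stated as a supremum over $|t|\le s$.
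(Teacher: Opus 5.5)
Your proposal is correct and matches the paper's proof: the overlap expansion gives $f_v''(0)=-\tfrac12F_\theta(v,v)$, a symmetric fourth-order Taylor expansion cancels the odd terms and yields the $B_4s^2/6$ bias, and Hoeffding at level $\delta/2$ with a union bound over the two fidelity estimates gives the stochastic term. There is one arithmetic slip in Step~2: each of $1-f_v(s)$ and $1-f_v(-s)$ contributes $\tfrac{s^2}{4}F_\theta(v,v)$, so the sum is $2-f_v(s)-f_v(-s)=\tfrac{s^2}{2}F_\theta(v,v)-\tfrac{s^4}{24}\bigl(f_v^{(4)}(\xi_+)+f_v^{(4)}(\xi_-)\bigr)$, not $\tfrac{s^2}{4}F_\theta(v,v)$, and this corrected coefficient is the one your multiplication by $2/s^2$ needs in order to return $F_\theta(v,v)$.
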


\subsection{Iterative budgets and estimated projectors}

\begin{corollary}
\label{cor:finite_sample_shot_budget}
Under \cref{thm:finite_sample_iteration}, suppose each Fisher entry at each iteration is estimated independently with range bound $B$ and $N_F$ repetitions per entry, and suppose that, with probability at least $1-\delta/2$, the gradient estimators satisfy simultaneously
\begin{equation}
\|P\widehat g_t-g_t\|
\le c_g\sqrt{\frac{\log(4T/\delta)}{N_g}}
\end{equation}
for $t=0,\ldots,T-1$.  For a target
$\varepsilon>0$, the choices
\begin{align}
N_F&\ge
\frac{2048B^2p^2b^2}{a^4}
\log\frac{4p^2T}{\delta},
\label{eq:iterative_fisher_shots}\\
N_g&\ge
\frac{288c_g^2b^2}{a^2\varepsilon^2}
\log\frac{4T}{\delta},
\label{eq:iterative_gradient_shots}\\
T&\ge
\frac{8b\bigl(\cL(\theta_0)-\cL_\star\bigr)}{\eta\varepsilon^2}
\label{eq:iterative_iteration_count}
\end{align}
ensure, with probability at least $1-\delta$, that
\begin{equation}
\frac1T\sum_{t=0}^{T-1}\|g_t\|^2\le\varepsilon^2.
\end{equation}
For fixed dimension and a uniform Fisher gap, the required Fisher precision does not tighten polynomially with $\varepsilon$; the gradient precision does.  Counting fresh measurements at every iteration gives total repetition complexity
$\widetilde O(\varepsilon^{-4})$ for this entrywise protocol.  These are sufficient bounds for the stated estimators, not universal measurement lower bounds.
\end{corollary}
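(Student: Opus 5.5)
The plan is to instantiate \cref{thm:finite_sample_iteration} on a single high-probability event on which its two hypotheses $\|P\widehat F_tP-F_t\|\le\varepsilon_F$ and $\|P\widehat g_t-g_t\|\le\varepsilon_g$ hold for all $t<T$. The three displayed budgets are then chosen so that the right-hand side of \cref{eq:finite_sample_stationarity} is at most $\varepsilon^2$. The failure probability $\delta$ is split evenly: $\delta/2$ goes to the gradient event, which is assumed in the statement, and $\delta/2$ to the Fisher event. A final union bound gives overall confidence $1-\delta$.

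\emph{Fisher event.} At each iteration I apply the entrywise Hoeffding bound exactly as in \cref{cor:shot-complexity}, with per-iteration failure $\delta/(2T)$ spread over the $p^2$ entries. This gives
\[
|\widehat F_{t,ab}-F_{t,ab}|\le t_{N_F}:=B\sqrt{\log(4p^2T/\delta)/(2N_F)}
\]
for all $a,b,t$ simultaneously with probability at least $1-\delta/2$. The operator norm is bounded by the Frobenius norm, so $\|\widehat F_t-F_t\|\le p\,t_{N_F}$. Since $F_t=PF_tP$ and $P$ is a contraction,
\[
\|P\widehat F_tP-F_t\|=\|P(\widehat F_t-F_t)P\|\le p\,t_{N_F}.
\]
The second condition in \cref{eq:finite_sample_iteration_conditions} requires $p\,t_{N_F}\le a^2/(64b)$. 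Squaring this and solving for $N_F$ gives exactly \cref{eq:iterative_fisher_shots}, because $64^2/2=2048$. The step-size condition $\eta\le a/\beta$ is inherited unchanged.

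\emph{Balancing the error terms.} On the joint event, \cref{thm:finite_sample_iteration} yields \cref{eq:finite_sample_stationarity}. I require each of its two terms to be at most $\varepsilon^2/2$.
\begin{itemize}
\item For the noise term, substitute $\varepsilon_g=c_g\sqrt{\log(4T/\delta)/N_g}$. The requirement $(144b^2/a^2)\varepsilon_g^2\le\varepsilon^2/2$ rearranges to \cref{eq:iterative_gradient_shots}, with $288=2\cdot144$.
\item For the optimization term, $4b(\cL(\theta_0)-\cL_\star)/(\eta T)\le\varepsilon^2/2$ is exactly \cref{eq:iterative_iteration_count}.
\end{itemize}
Adding the two terms gives the stated average bound on $\|g_t\|^2$.

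\emph{Complexity count.} With $p$, $a$, and $b$ fixed, $N_F$ depends on $\varepsilon$ only through $\log T$, hence only logarithmically. By contrast, $N_g=\widetilde O(\varepsilon^{-2})$ and $T=O(\varepsilon^{-2})$. Counting fresh repetitions at every iteration gives a total of
\[
T\bigl(p^2N_F+O(p)N_g\bigr)=\widetilde O(\varepsilon^{-4}).
\]

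I expect the only delicate step to be bookkeeping rather than analysis. The Fisher union bound must run over all $p^2T$ entry-iteration pairs, which is what produces the factor $4p^2T/\delta$. One must also confirm that the projected estimate $P\widehat F_tP$ inherits the unprojected operator-norm bound, which the contraction property of $P$ supplies. Everything else reduces to substituting the chosen budgets into the conditions of \cref{thm:finite_sample_iteration}.
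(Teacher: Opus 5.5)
Your proposal is correct and follows the paper's proof. Hoeffding plus a union bound over the $p^2T$ entry--iteration pairs at total failure $\delta/2$ gives $\varepsilon_F=Bp\sqrt{\log(4p^2T/\delta)/(2N_F)}\le a^2/(64b)$, and the $N_g$ and $T$ budgets make each term of the stationarity bound at most $\varepsilon^2/2$; your explicit Frobenius-to-operator-norm step and the contraction bound $\|P(\widehat F_t-F_t)P\|\le\|\widehat F_t-F_t\|$ spell out details the paper leaves implicit.
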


If confidence radii satisfy $r_F\simeq c_F/\sqrt{N_F}$ and $r_g\simeq c_g/\sqrt{N_g}$ with $N_F+N_g=N$, minimization of the leading local error bound gives
\begin{equation}
\frac{N_F}{N_g}
=\left(\frac{c_F\|g\|}{(f_{\min}+\gamma)c_g}\right)^{2/3}.
\label{eq:shot_allocation}
\end{equation}
This allocation is specific to the stated local confidence-radius model.

\begin{theorem}
\label{thm:estimated_projector_update}
Let $P$ be the true horizontal projector, $Q=I-P$, and suppose
$f_{\min}P\preceq F=PFP\preceq f_{\max}P$.
Assume
\begin{equation}
\|\widehat P-P\|\le\varepsilon_P,
\quad
\|\widehat F-F\|\le\varepsilon_F,
\quad
\|\widehat g-g\|\le\varepsilon_g.
\end{equation}
Set $a=f_{\min}+\gamma$ and
\begin{align}
A&=F+\gamma P+aQ,\notag\\
\widehat A&=\widehat P\widehat F\widehat P
+\gamma\widehat P+a(I-\widehat P),\notag\\
\delta_A&=\varepsilon_F+(2f_{\max}+f_{\min})\varepsilon_P+f_{\max}\varepsilon_P^2.
\label{eq:estimated_projector_delta}
\end{align}
If $\delta_A<a$, then $\widehat A$ is positive definite and
\begin{equation}
\widehat\Delta=-\eta\widehat P\widehat A^{-1}\widehat P\widehat g
\end{equation}
satisfies
\begin{equation}
\|Q\widehat\Delta\|
\le\eta\frac{\varepsilon_P(\|g\|+\varepsilon_g)}{a-\delta_A},
\label{eq:estimated_projector_leakage}
\end{equation}
and
\begin{align}
\|\widehat\Delta-\Delta_\gamma\|
\le\eta\Bigg[&
\frac{\varepsilon_P(\|g\|+\varepsilon_g)}{a-\delta_A}
+\frac{\delta_A(\|g\|+\varepsilon_g)}{a(a-\delta_A)}\notag\\
&+\frac{\varepsilon_P\|g\|+\varepsilon_g}{a}
\Bigg].
\label{eq:estimated_projector_error}
\end{align}
\end{theorem}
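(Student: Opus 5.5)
The plan is to compare $\widehat A$ with the exact operator $A=F+\gamma P+aQ$, and then read off both bounds from a three-term telescoping of the update. I assume, as the notation suggests, that $\widehat P$ is an orthogonal projector, so $\|\widehat P\|\le1$.

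\emph{Step 1: structure of $A$.} Since $F=PFP$ and $F|_{\operatorname{im}P}\succeq f_{\min}$, the operator $A$ is block diagonal in $\operatorname{im}P\oplus\operatorname{im}Q$. It acts as $F+\gamma P\succeq a$ on the first block and as $a$ on the second. Hence $A\succeq aI$, $\|A^{-1}\|\le 1/a$, and $PA^{-1}P=(F+\gamma P)^{+}$. Consequently $\Delta_\gamma=-\eta PA^{-1}Pg$, with $g=Pg$ as in \cref{thm:known_projector_update}.

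\emph{Step 2: perturbation of $A$.} This is the bookkeeping step I expect to be the main obstacle, because the constant must come out exactly as $\delta_A$. Using $\gamma-a=-f_{\min}$,
\[
\widehat A-A=\widehat P(\widehat F-F)\widehat P+(\widehat P-P)F\widehat P+PF(\widehat P-P)-f_{\min}(\widehat P-P).
\]
The four terms are bounded as follows.
\begin{itemize}
\item The first is at most $\varepsilon_F$.
\item For the second, $\|F\widehat P\|\le\|FP\|+\|F(\widehat P-P)\|\le f_{\max}(1+\varepsilon_P)$, so the term is at most $f_{\max}\varepsilon_P(1+\varepsilon_P)$.
\item The third is at most $f_{\max}\varepsilon_P$.
\item The fourth is at most $f_{\min}\varepsilon_P$.
\end{itemize}
Summing gives $\|\widehat A-A\|\le\delta_A$. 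Weyl's inequality then gives $\widehat A\succeq(a-\delta_A)I\succ0$ and $\|\widehat A^{-1}\|\le1/(a-\delta_A)$.

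\emph{Step 3: leakage.} Since $QP=0$, we have $Q\widehat\Delta=-\eta\,Q(\widehat P-P)\widehat A^{-1}\widehat P\widehat g$. Combining $\|\widehat P\widehat g\|\le\|g\|+\varepsilon_g$ with Step 2 yields \cref{eq:estimated_projector_leakage}.

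\emph{Step 4: total error.} I will write $-\eta^{-1}(\widehat\Delta-\Delta_\gamma)$ as the sum of three terms and bound each:
\begin{itemize}
\item $(\widehat P-P)\widehat A^{-1}\widehat P\widehat g$, which is bounded by $\varepsilon_P(\|g\|+\varepsilon_g)/(a-\delta_A)$;
\item $P(\widehat A^{-1}-A^{-1})\widehat P\widehat g$, which by the resolvent identity $\widehat A^{-1}-A^{-1}=A^{-1}(A-\widehat A)\widehat A^{-1}$ is bounded by $\delta_A(\|g\|+\varepsilon_g)/\bigl(a(a-\delta_A)\bigr)$;
\item $PA^{-1}(\widehat P\widehat g-Pg)$, where $\widehat P\widehat g-Pg=(\widehat P-P)g+\widehat P(\widehat g-g)$, giving the bound $(\varepsilon_P\|g\|+\varepsilon_g)/a$.
\end{itemize}
Adding the three bounds reproduces \cref{eq:estimated_projector_error}.
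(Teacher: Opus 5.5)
Your proposal is correct and follows essentially the same argument as the paper's proof. You bound $\|\widehat A-A\|\le\delta_A$ term by term and apply Weyl's inequality, then use $Q\widehat P=Q(\widehat P-P)$ for the leakage bound, and finally add and subtract the exact-projector and exact-resolvent terms to get the total error. The paper obtains the $f_{\max}\varepsilon_P^2$ term by expanding $\widehat PF\widehat P-PFP=R_PFP+PFR_P+R_PFR_P$ with $R_P=\widehat P-P$, which is equivalent to your grouping.
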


\subsection{Coupled and statistically unresolved soft modes}

\begin{proposition}
\label{prop:soft_coupling_bound}
Under the notation of \cref{prop:soft_block_coupling}, set
\begin{equation}
a=\lambda_{\min}\!\left(A|_{\operatorname{im}P}\right),
\qquad e=\|C\|,
\end{equation}
and let $\Delta_S^{(0)}=-\eta A^{-1}g_S$. If $e^2<a\gamma$, then
\begin{align}
\|P\Delta_\gamma-\Delta_S^{(0)}\|
\le \eta\bigg[
&\frac{e^2\|g_S\|}{\gamma a(a-e^2/\gamma)}\notag\\
&+\frac{e\|g_Q\|}{\gamma(a-e^2/\gamma)}
\bigg].
\label{eq:soft_coupling_error}
\end{align}
\end{proposition}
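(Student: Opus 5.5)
The plan is to start from the exact Schur-complement formula of \cref{prop:soft_block_coupling} and compare it term by term with the decoupled update $\Delta_S^{(0)}=-\eta A^{-1}g_S$. Everything is read on $\operatorname{im}P$ and $\operatorname{im}Q$ with inverses restricted to those subspaces. Subtracting gives
\begin{equation}
P\Delta_\gamma-\Delta_S^{(0)}
=-\eta\Bigl[\bigl(\mathcal S_\gamma^{-1}-A^{-1}\bigr)g_S-\mathcal S_\gamma^{-1}CD^{-1}g_Q\Bigr].
\end{equation}
The triangle inequality then reduces the claim to separate operator-norm bounds on the two pieces.

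First I will record three elementary bounds.

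(i) $D=QFQ+\gamma Q\succeq\gamma Q$ because $F\succeq0$. Hence $\|D^{-1}\|\le1/\gamma$ on $\operatorname{im}Q$.

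(ii) $\|CD^{-1}C^*\|\le\|C\|^2\|D^{-1}\|\le e^2/\gamma$. Moreover $CD^{-1}C^*\succeq0$ is self-adjoint on $\operatorname{im}P$.

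(iii) $A\succeq aP$ by the definition of $a$. Combined with (ii) in Loewner order, this gives $\mathcal S_\gamma=A-CD^{-1}C^*\succeq(a-e^2/\gamma)P$. The hypothesis $e^2<a\gamma$ makes this lower bound strictly positive, so $\mathcal S_\gamma$ is invertible on $\operatorname{im}P$ and $\|\mathcal S_\gamma^{-1}\|\le 1/(a-e^2/\gamma)$.

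This positivity step is the only point needing care. It is where the hypothesis $e^2<a\gamma$ enters, and it relies on $\mathcal S_\gamma$ being self-adjoint, which holds because $C^*=QFP$.

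For the $g_Q$ term, submultiplicativity together with (i) and (iii) gives
\begin{equation}
\|\mathcal S_\gamma^{-1}CD^{-1}g_Q\|\le\frac{e\|g_Q\|}{\gamma(a-e^2/\gamma)}.
\end{equation}
For the $g_S$ term, I will use the resolvent identity $\mathcal S_\gamma^{-1}-A^{-1}=\mathcal S_\gamma^{-1}(A-\mathcal S_\gamma)A^{-1}=\mathcal S_\gamma^{-1}CD^{-1}C^*A^{-1}$. Bounding the factors by (iii), (ii) and $\|A^{-1}\|\le1/a$ gives
\begin{equation}
\bigl\|\bigl(\mathcal S_\gamma^{-1}-A^{-1}\bigr)g_S\bigr\|\le\frac{e^2\|g_S\|}{\gamma a(a-e^2/\gamma)}.
\end{equation}
Adding the two bounds and multiplying by $\eta$ yields \cref{eq:soft_coupling_error}. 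As a sanity check, setting $e=0$ recovers the invariant case \cref{eq:soft_mode_bias}.
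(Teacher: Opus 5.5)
Your proposal is correct and follows the paper's own argument: the same bounds $\|D^{-1}\|\le1/\gamma$, $\|CD^{-1}C^*\|\le e^2/\gamma$ and $\|\mathcal S_\gamma^{-1}\|\le 1/(a-e^2/\gamma)$, combined with the same resolvent identity $\mathcal S_\gamma^{-1}-A^{-1}=\mathcal S_\gamma^{-1}CD^{-1}C^*A^{-1}$ and the triangle inequality. The one difference is that you write out the Loewner-order step $\mathcal S_\gamma\succeq(a-e^2/\gamma)P$, which the paper leaves implicit; note that $\mathcal S_\gamma$ is invertible anyway because $F+\gamma I\succ0$, so the hypothesis $e^2<a\gamma$ is needed only for the quantitative bound.
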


\begin{proposition}
\label{prop:confidence_scalar_ridge}
Let $f>0$ be known and let $\widetilde c$ be a pilot estimate, independent of the fresh update estimate $\widehat c=c+\xi$.
Assume that, with probability at least $1-\delta$ over the pilot stage,
\begin{equation}
|\widetilde c-c|\le r_\delta,
\label{eq:pilot_signal_radius}
\end{equation}
and that conditionally on the pilot data,
\begin{equation}
\mathbb E[\xi\mid\widetilde c]=0,
\qquad
\Var(\xi\mid\widetilde c)\le s_+^2.
\end{equation}
Set
\begin{equation}
c_+=|\widetilde c|+r_\delta,
\qquad
\gamma_\delta=\frac{s_+^2f}{c_+^2}
\label{eq:confidence_ridge_rule}
\end{equation}
when $c_+>0$, with $\gamma_\delta=+\infty$ when $c_+=0$.
On the event \cref{eq:pilot_signal_radius}, this choice minimizes the worst-case conditional risk over every signal and noise level consistent with the bounds:
\begin{align}
\gamma_\delta
&=\operatorname*{arg\,min}_{\gamma\ge0}
\sup_{|c'|\le c_+,\,0\le s'^2\le s_+^2}
\frac{c'^2\gamma^2/f^2+s'^2}{(f+\gamma)^2},
\label{eq:confidence_minimax_ridge}\\
\mathbb E\!\left[(\widehat u_{\gamma_\delta}-u_*)^2
\,\middle|\,\widetilde c\right]
&\le
\frac{s_+^2c_+^2}{f^2(c_+^2+s_+^2)}.
\label{eq:confidence_ridge_risk}
\end{align}
\end{proposition}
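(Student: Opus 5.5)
The plan is to reduce the minimax problem to the oracle calculation of \cref{prop:optimal_scalar_ridge}. The key fact is that the scalar risk is monotone in both nuisance parameters. Write
\[
R(\gamma;c',s'^2)=\frac{c'^2\gamma^2/f^2+s'^2}{(f+\gamma)^2}.
\]
For each fixed $\gamma\ge0$, this is nondecreasing in $c'^2$ and in $s'^2$. So the supremum over $|c'|\le c_+$ and $0\le s'^2\le s_+^2$ is attained at the corner $c'^2=c_+^2$, $s'^2=s_+^2$. The minimax problem \cref{eq:confidence_minimax_ridge} therefore becomes the minimization of $R(\gamma;c_+,s_+^2)$ over $\gamma\ge0$.

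I will settle that minimization by differentiating directly, rather than quoting \cref{prop:optimal_scalar_ridge}, because the direct computation also gives uniqueness. A short computation gives
\[
\partial_\gamma R(\gamma;c,s^2)=\frac{2\,(c^2\gamma/f-s^2)}{(f+\gamma)^3}.
\]
For $c_+>0$ this derivative is negative for $\gamma<s_+^2f/c_+^2$ and positive afterwards. Hence $\gamma_\delta=s_+^2f/c_+^2$ is the unique minimizer, which includes the degenerate case $s_+=0$, where it gives $\gamma_\delta=0$. When $c_+=0$, the worst-case risk is $s_+^2/(f+\gamma)^2$, which is decreasing in $\gamma$. Its infimum is then approached as $\gamma\to\infty$, which justifies the convention $\gamma_\delta=+\infty$ (zero update).

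Substituting $\gamma_\delta/f=s_+^2/c_+^2$ gives the optimal value. The numerator becomes $s_+^2(c_+^2+s_+^2)/c_+^2$ and the denominator becomes $f^2(c_+^2+s_+^2)^2/c_+^4$. Their ratio is the right-hand side of \cref{eq:confidence_ridge_risk}.

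It remains to connect this deterministic minimax value to the conditional risk of the actual estimator. This is the step needing the most care, because $\gamma_\delta$ is random. Conditioning on the pilot data fixes $\gamma_\delta$, since it is a function of $\widetilde c$ and the fixed quantities $r_\delta$ and $s_+$. The fresh noise $\xi$ satisfies $\mathbb E[\xi\mid\widetilde c]=0$, so the cross term vanishes in the bias--variance expansion of $\widehat u_{\gamma}-u_*=-(c\gamma/f+\xi)/(f+\gamma)$. This yields
\[
\mathbb E\!\left[(\widehat u_{\gamma_\delta}-u_*)^2\,\middle|\,\widetilde c\right]=R\bigl(\gamma_\delta;c,\Var(\xi\mid\widetilde c)\bigr).
\]
On the event \cref{eq:pilot_signal_radius}, the triangle inequality gives $|c|\le|\widetilde c|+r_\delta=c_+$. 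Also $\Var(\xi\mid\widetilde c)\le s_+^2$ by assumption. Monotonicity then bounds this conditional risk by $R(\gamma_\delta;c_+,s_+^2)$, which is the value computed above. This proves \cref{eq:confidence_ridge_risk}. The independence of the pilot and fresh stages is exactly what allows $\gamma_\delta$ to be treated as deterministic inside the conditional expectation.
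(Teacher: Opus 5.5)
Your proposal is correct and follows the paper's proof: monotonicity of the risk in $c'^2$ and $s'^2$ pushes the supremum to the corner $(c_+^2,s_+^2)$, differentiation identifies $\gamma_\delta$, and substitution gives the bound. Your conditioning argument spells out what the paper leaves implicit, and you also treat the degenerate cases $s_+=0$ and $c_+=0$ explicitly. One harmless sign slip: $\widehat u_\gamma-u_*=(c\gamma/f-\xi)/(f+\gamma)$, not $-(c\gamma/f+\xi)/(f+\gamma)$; this does not matter because the cross term vanishes after squaring.
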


\section{Proofs for finite-shot statistics and mixed-state geometry}

\subsection{Finite-shot quotient bounds}
\label{app:finite_shot_proofs}

\begin{proof}[Proof of \cref{thm:finite_shot_amplification}]
Because $\cV_{\theta}\subseteq\ker F(\theta)$, the damped matrix restricts to $\gamma I$ on $\cV_{\theta}$.
Thus $P_{\cV}(F+\gamma I)^{-1}=\gamma^{-1}P_{\cV}$.
Since the objective descends to the quotient, $P_{\cV}\nabla\cL(\theta)=0$, and therefore
\begin{equation}
P_{\cV}\Delta\theta_\gamma=-\frac{\eta}{\gamma}P_{\cV}\epsilon.
\end{equation}
Taking squared Euclidean norm and expectation gives \cref{eq:tikhonov_vertical_drift}.
For the pseudoinverse update, the covariance of $\Delta\theta$ is $\eta^2F^+\Sigma F^+$.
Taking the trace gives the stated identity; the two inequalities follow from $\|\Sigma\|_2$ and $\|F^+\|_2=1/f_{\min}^+(F)$ on the rank-$d_F$ range of $F$.
\end{proof}

\begin{proof}[Proof of \cref{lem:psd_preprocessing}]
Weyl's inequality gives $\lambda_{\min}(\widehat F)\ge-\rho_F$, so clipping changes the estimator by at most $\rho_F$.
The triangle inequality proves \cref{eq:psd_clip_bound}.
\end{proof}

\begin{proof}[Proof of \cref{thm:finite_shot_faithfulness}]
For each unit $x\in\operatorname{im}P$,
$x^{\mathsf T}Fx\ge x^{\mathsf T}\widehat F x-r_F\ge\widehat f_{\min}-r_F$.
Positive definiteness on $\operatorname{im}P$, together with the known vertical inclusion, proves the kernel identity; the upper bound follows from the analogous Rayleigh-quotient estimate.
\end{proof}

\begin{proof}[Proof of \cref{thm:certified-truncation}]
Weyl's inequality places the zero cluster of $\widehat F$ inside $[-\varepsilon,\varepsilon]$ and its nonzero cluster above $f_{\min}^+-\varepsilon$, so every threshold in the stated interval separates the clusters and selects exactly the true nonzero rank.
The Davis--Kahan sin-theta theorem gives
\begin{equation}
\|\widehat P_{0}-P_{0}\|_2\le \frac{2\varepsilon}{f_{\min}^+}
\end{equation}
for the invariant subspace at the isolated zero cluster \cite{DavisKahan1970}.
For the truncated pseudoinverse, write both operators on the corresponding nonzero spectral subspaces.
On that subspace the inverse map has derivative bounded by $(f_{\min}^+-2\varepsilon)^{-2}$, while the projector mismatch contributes the same order.
Moreover,
\begin{equation}
\|F^+\|_2\le\frac{1}{f_{\min}^+},
\qquad
\|\widehat F_\tau^+\|_2
\le\frac{1}{f_{\min}^+-\varepsilon}
\le\frac{12}{11f_{\min}^+}.
\end{equation}
Substituting these estimates and the projector bound into Stewart's rank-preserving perturbation bound for pseudoinverses \cite{Stewart1977} gives $6\varepsilon/(f_{\min}^+)^2$.
\end{proof}

\begin{proof}[Proof of \cref{cor:shot-complexity}]
Hoeffding's inequality and a union bound over $p^2$ entries give $|\widehat F_{ab}-F_{ab}|\le t_N$ simultaneously with probability at least $1-\delta_{\mathrm f}$ \cite{Hoeffding1963}.
The Frobenius bound yields $\|\widehat F-F\|_2\le\|\widehat F-F\|_F\le p\,t_N$, and \cref{eq:shot-complexity} is exactly $12p\,t_N\le f_{\min}^+$.
For the vertical update, $P_{\cV}F^+=0$ in the faithful regime, hence
\begin{equation}
P_{\cV}\widehat F_\tau^+\widehat g
=P_{\cV}(\widehat F_\tau^+-F^+)\widehat g.
\end{equation}
Applying \cref{thm:certified-truncation} with $\varepsilon=p\,t_N$ gives the stated pointwise bound.
\end{proof}

\begin{proof}[Proof of \cref{prop:directional_certificate}]
For a pure state, \cref{eq:QFIM_pure} gives $F_\theta(v,v)=4\Var_\psi(A_v)$, and zero variance for a Hermitian operator is equivalent to $A_v\ket\psi=a\ket\psi$. For a fixed-rank mixed-state unitary orbit, the SLD metric is positive definite on orbit tangents, so $F_\theta(v,v)=0$ exactly when the tangent variation $-i[A_v,\rho]$ vanishes.
\end{proof}

\begin{proof}[Proof of \cref{prop:directional_confidence_bound}]
The overlap expansion gives $f_v''(0)=-F_\theta(v,v)/2$. Symmetric Taylor expansion cancels odd terms and yields \cref{eq:directional_bias}; Hoeffding's inequality applied to the two fidelity estimates gives \cref{eq:directional_confidence}.
\end{proof}

\begin{proof}[Proof of \cref{thm:known_projector_update}]
Restrict to $\operatorname{im}P$ and apply the resolvent identity with
$A=F+\gamma I$ and $\widehat A=P\widehat F P+\gamma I$.
The bounds $\|A^{-1}\|\le1/a$ and
$\|\widehat A^{-1}\|\le1/(a-\varepsilon_F)$ give \cref{eq:known_projector_error}; the outer projectors give zero leakage.
\end{proof}

\begin{proof}[Proof of \cref{thm:finite_sample_iteration}]
On $\operatorname{im}P$, set $A_t=F_t+\gamma I$ and
$\Delta_t=-\eta A_t^{-1}g_t$.
The resolvent estimate in \cref{thm:known_projector_update}, together with $\varepsilon_F\le a/64$, gives
\begin{equation}
E_t:=\|\widehat\Delta_t-\Delta_t\|
\le\frac{2\eta}{a}
\left(\varepsilon_g+\frac{\varepsilon_F}{a}\|g_t\|\right).
\label{eq:finite_sample_step_error}
\end{equation}
Writing $q_t=g_t^{\mathsf T}A_t^{-1}g_t$, one has
$q_t\ge\|g_t\|^2/b$ and $\|\Delta_t\|^2\le\eta^2q_t/a$.
Smoothness and $\eta\le a/\beta$ therefore imply
\begin{equation}
\cL(\theta_{t+1})-\cL(\theta_t)
\le-\frac{\eta}{2}q_t+2\|g_t\|E_t+\frac{\beta}{2}E_t^2.
\end{equation}
Substituting \cref{eq:finite_sample_step_error}, using $\varepsilon_F\le a^2/(64b)$, and applying
\begin{equation}
\frac{4\varepsilon_g}{a}\|g_t\|
\le\frac{\|g_t\|^2}{8b}+\frac{32b}{a^2}\varepsilon_g^2
\end{equation}
gives
\begin{equation}
\cL(\theta_{t+1})-\cL(\theta_t)
\le-\frac{\eta}{4b}\|g_t\|^2
+\frac{36\eta b}{a^2}\varepsilon_g^2.
\end{equation}
Summing and using $\cL(\theta_T)\ge\cL_\star$ proves \cref{eq:finite_sample_stationarity}.
Every update in \cref{eq:finite_sample_iteration} lies in $\operatorname{im}P$, so induction proves \cref{eq:finite_sample_zero_drift}.
\end{proof}

\begin{proof}[Proof of \cref{cor:finite_sample_shot_budget}]
Hoeffding's inequality and a union bound over $p^2T$ Fisher entries give
\begin{equation}
\varepsilon_F
=Bp\sqrt{\frac{\log(4p^2T/\delta)}{2N_F}}.
\end{equation}
The first shot bound makes this at most $a^2/(64b)$.
The bounds on $N_g$ and $T$ make the two terms on the right of \cref{eq:finite_sample_stationarity} at most $\varepsilon^2/2$ each.
\end{proof}

\begin{proof}[Proof of \cref{thm:estimated_projector_update}]
Writing $R_P=\widehat P-P$ gives
$\|\widehat P F\widehat P-PFP\|\le2f_{\max}\varepsilon_P+f_{\max}\varepsilon_P^2$.
The Fisher estimator contributes $\varepsilon_F$, while the two regularizer terms contribute $f_{\min}\varepsilon_P$, proving $\|\widehat A-A\|\le\delta_A$.
Weyl's inequality gives $\widehat A\succeq(a-\delta_A)I$.
The estimate $\|Q\widehat P\|\le\varepsilon_P$ proves \cref{eq:estimated_projector_leakage}; adding and subtracting the exact-projector and exact-resolvent terms gives \cref{eq:estimated_projector_error}.
\end{proof}

\begin{proof}[Proof of \cref{prop:near_symmetry_kernel}]
Weyl's inequality gives the eigenvalue bounds, and Davis--Kahan perturbation theory gives the projector bound \cite{DavisKahan1970}.
\end{proof}

\begin{proof}[Proof of \cref{prop:soft_block_coupling}]
Block Gaussian elimination of $(F+\gamma I)\Delta_\gamma=-\eta g$ gives \cref{eq:soft_mode_schur}.
\end{proof}

\begin{proof}[Proof of \cref{prop:soft_coupling_bound}]
Since $D\succeq\gamma Q$,
\begin{equation}
\|CD^{-1}C^*\|\le e^2/\gamma,
\qquad
\|\mathcal S_\gamma^{-1}\|\le\frac{1}{a-e^2/\gamma}.
\end{equation}
The resolvent identity
\begin{equation}
\mathcal S_\gamma^{-1}-A^{-1}=\mathcal S_\gamma^{-1}CD^{-1}C^*A^{-1}
\end{equation}
and $\|D^{-1}\|\le1/\gamma$ give \cref{eq:soft_coupling_error}.
\end{proof}

\begin{proof}[Proof of \cref{prop:optimal_scalar_ridge}]
Differentiating \cref{eq:scalar_ridge_risk} shows that its derivative has the sign of $c^2\gamma/f-s^2$.
\end{proof}

\begin{proof}[Proof of \cref{prop:confidence_scalar_ridge}]
On \cref{eq:pilot_signal_radius}, $|c|\le c_+$.
For fixed $\gamma$, the scalar risk is nondecreasing in both $c'^2$ and $s'^2$, so the supremum is attained at $c'^2=c_+^2$ and $s'^2=s_+^2$.
Differentiation gives \cref{eq:confidence_ridge_rule}, and substitution gives \cref{eq:confidence_ridge_risk}.
\end{proof}

\subsection{Defect-ratio fluctuations}
\begin{proposition}
\label{prop:defect_ratio_martingale}
Consider the setting of \cref{thm:fidelity_integrability} and suppose that, after a certified pseudoinverse step and reselection of the Harish-Chandra frame, the stochastic flat-coordinate increment is
\begin{equation}
\Delta\theta_a=-s\frac{\eta}{2}p\tan\theta_a+s\xi_a+O(s^2),
\label{eq:stochastic_flat_update}
\end{equation}
where $\mathbb E_t\xi=0$ and
$\mathbb E_t[\xi_a\xi_b]=(\Sigma_t)_{ab}$.
For active defects set $X_t^{ab}=\log(m_a(t)/m_b(t))$.
Assume that active angles remain positive, conditional third moments are uniformly bounded, and
$s\|\xi\|=o(\min_{a\in A}\theta_a)$ in conditional probability.
Then
\begin{equation}
\mathbb E_t[\Delta X_t^{ab}]=O(s^2)
\end{equation}
and
\begin{align}
\Var_t(\Delta X_t^{ab})
=4s^2\bigl(&\cot^2\theta_a(\Sigma_t)_{aa}
+\cot^2\theta_b(\Sigma_t)_{bb}\notag\\
&-2\cot\theta_a\cot\theta_b(\Sigma_t)_{ab}\bigr)+O(s^3).
\label{eq:defect_ratio_variance}
\end{align}
Off-flat perturbations enter the radial variables only through the $O(s^2)$ remainder after the frame is re-adapted.
Near convergence, bounded covariance gives the scale
\begin{equation}
\min_{a\in A}m_a\gtrsim\frac{s\|\Sigma_{\mathrm{flat}}\|}{\eta},
\label{eq:defect_noise_floor}
\end{equation}
with a constant determined by the active-channel covariance profile.
\end{proposition}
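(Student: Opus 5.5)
Work one step at a time, conditionally on the past ($\mathbb E_t$), and expand $\Delta X^{ab}_t=\Delta\log m_a-\Delta\log m_b$ to second order in $s$. Since $m_a=\sin^2\theta_a$, we have $\log m_a=2\log\sin\theta_a$. Its derivatives are $\partial_{\theta_a}\log m_a=2\cot\theta_a$ and $\partial^2_{\theta_a}\log m_a=-2\csc^2\theta_a$. Hence
\[
\Delta\log m_a=2\cot\theta_a\,\Delta\theta_a-\csc^2\theta_a(\Delta\theta_a)^2+R_a,
\]
where $R_a$ is a third-order Taylor remainder. The hypothesis $s\|\xi\|=o(\min_{a\in A}\theta_a)$ keeps the expansion point in the region where $\cot$ and $\csc$ are controlled. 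The bounded conditional third moments give $\mathbb E_t|R_a|=O(s^3)$.

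\textbf{Mean.} Substitute \cref{eq:stochastic_flat_update}. The deterministic first-order term is
\[
2\cot\theta_a\cdot\Bigl(-s\tfrac{\eta}{2}p\tan\theta_a\Bigr)=-s\eta p .
\]
This is the same for $a$ and $b$, so it cancels in the difference, exactly as in \cref{thm:fidelity_integrability}. The noise term $2s\cot\theta_a\xi_a$ has conditional mean zero. What remains are the quadratic terms $-\csc^2\theta_a\,\mathbb E_t(\Delta\theta_a)^2=O(s^2)$ and the $O(s^2)$ drift remainder multiplied by $2\cot\theta_a$. Together these give $\mathbb E_t[\Delta X^{ab}_t]=O(s^2)$.

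\textbf{Variance.} The random part of $\Delta X^{ab}_t$ at order $s$ is $2s(\cot\theta_a\xi_a-\cot\theta_b\xi_b)$. Its conditional variance is precisely \cref{eq:defect_ratio_variance} at leading order. Every other contribution is either the product of this $O(s)$ fluctuation with an $O(s^2)$ term, or the square of the mean. Bounding these uses the third moments and Cauchy--Schwarz, so they are $O(s^3)$.

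\textbf{Off-flat perturbations.} The increment is defined after a certified pseudoinverse step and re-selection of the Harish-Chandra frame. By \cref{lem:flat_fisher_block}, the flat--offflat Fisher block vanishes and the fidelity differential has no off-flat component. An off-flat tangent displacement therefore changes the principal angles only at second order. This is the standard fact that the polar-coordinate radial variables are critical along the $K_\Lambda$-orbit directions. After the frame is re-adapted, such a displacement thus enters only through the $O(s^2)$ remainder.

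\textbf{Noise floor.} This is heuristic and will be the least rigorous step. Near convergence take $p\approx1$, so the drift is $\mathbb E_t\Delta m_a\approx-s\eta m_a$. The noise contributes $+\mathbb E_t(\Delta\theta_a)^2\cdot\cos2\theta_a\approx s^2\Sigma_{aa}$, since $\partial^2_{\theta}\sin^2\theta=2\cos2\theta$ and $\tfrac12\cdot 2\cos 2\theta\approx 1$. Balancing the contraction $s\eta m_a$ against this injection gives a stationary scale $m_a\sim s\Sigma_{aa}/\eta$. Bounding $\Sigma_{aa}$ by $\|\Sigma_{\mathrm{flat}}\|$ up to a profile-dependent constant then yields \cref{eq:defect_noise_floor}.

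The main obstacle is making this step an honest lower bound rather than an order-of-magnitude balance. I would state it as the scale at which the expected drift and the noise injection coincide, since that is what the $\gtrsim$ with an unspecified constant asserts. A secondary obstacle is uniform control of $\cot\theta_a$ as active angles shrink; this is handled by the $o(\min\theta_a)$ hypothesis.
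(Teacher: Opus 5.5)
Your mean, variance, and off-flat steps are the paper's argument. You use the same expansion of $\log m_a=2\log\sin\theta_a$ and the same cancellation of the common first-order drift $-s\eta p$ in $\Delta X^{ab}_t$. You also rely on the same polar-decomposition fact: the $K_\Lambda$-invariant principal angles are first-order blind to orbit directions, and at regular flat points these are exactly the off-flat ones. Keeping the quadratic term explicit, as you do, also identifies the $O(s^2)$ drift as $-s^2\bigl[\tfrac{\eta^2p^2}{4}(\sec^2\theta_a-\sec^2\theta_b)+\csc^2\theta_a(\Sigma_t)_{aa}-\csc^2\theta_b(\Sigma_t)_{bb}\bigr]$, up to the unspecified step remainder. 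Its deterministic part is \eqref{eq:discrete_defect_ratio_drift} with $\eta\to s\eta$. One small omission: your list of variance corrections leaves out the $O(s^4)$ variance of that quadratic term, which strictly needs fourth rather than third moments; the paper is equally silent on this.

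The genuine difference is the noise floor. The paper inserts $\cot^2\theta_a\simeq1/m_a$ into the variance formula it has just proved and sums along the deterministic decay $m_a\propto e^{-\eta t}$. This gives an accumulated log-ratio variance $\lesssim s\|\Sigma_{\mathrm{flat}}\|/(\eta\min_a m_a)$, and the floor is where the conserved ratios would be washed out. You instead balance the contraction $-s\eta p\,m_a$ against the second-order injection $s^2(\Sigma_t)_{aa}\cos2\theta_a$ in $\mathbb E_t\Delta m_a$, which gives a mean-reverting level $m_a\approx s(\Sigma_t)_{aa}/(\eta p)$.

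The two heuristics are consistent. At your level the per-step log variance is $4s^2(\Sigma_t)_{aa}/m_a\approx4s\eta$, which accumulates to $O(1)$ over the $1/(s\eta)$ steps of one relaxation time. They buy different things, though. The paper's version reuses the martingale structure and ties the floor to the loss of integrability, but an upper bound on accumulated variance only shows that the ratio picture survives above the floor. Yours shows that the expected defect is pushed back up below the floor, which is closer to the stated lower bound.

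For that reading, check the direction of your last step. You need $(\Sigma_t)_{aa}\ge c\,\|\Sigma_{\mathrm{flat}}\|$, so the profile constant is $\min_{a\in A}(\Sigma_t)_{aa}/\|\Sigma_{\mathrm{flat}}\|$, which degenerates if some active channel is noise-free. The automatic bound $(\Sigma_t)_{aa}\le\|\Sigma_{\mathrm{flat}}\|$ goes the wrong way.
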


\begin{proof}[Proof of \cref{prop:defect_ratio_martingale}]
Let $m_{a}=\sin^{2}\theta_{a}$.
For an active channel,
\[
\frac{d}{d\theta_{a}}\log m_{a}=2\cot\theta_{a},
\qquad
\frac{d^{2}}{d\theta_{a}^{2}}\log m_{a}=-2\csc^{2}\theta_{a}.
\]
Taylor expansion of one step gives
\[
\Delta\log m_{a}
=2\cot\theta_{a}\,\Delta\theta_{a}
+O\!\left(\csc^{2}\theta_{a}\,(\Delta\theta_{a})^{2}\right).
\]
Substituting \eqref{eq:stochastic_flat_update} yields
\[
\Delta\log m_{a}
=-s\eta p+2s\cot\theta_{a}\,\xi_{a}+O(s^{2}),
\]
where the $O(s^{2})$ term is uniform in the stated small-step regime.
The deterministic contribution $-s\eta p$ is independent of $a$, so it cancels in
\[
\Delta X_{t}^{ab}
=\Delta\log m_{a}-\Delta\log m_{b}
=2s(\cot\theta_{a}\xi_{a}-\cot\theta_{b}\xi_{b})+O(s^{2}).
\]
Taking conditional expectation gives the $O(s^{2})$ drift.
Taking conditional variance and using $\mathbb E_{t}[\xi_{a}\xi_{b}]=(\Sigma_{t})_{ab}$ gives \eqref{eq:defect_ratio_variance}.

The polar decomposition used in \cref{lem:hc_frame} makes the principal angles first-order insensitive to angular, off-flat perturbations at the adapted flat.
After each noisy step, reselecting the Harish-Chandra frame therefore sends off-flat noise into the radial variables only through the second-order Taylor remainder.
Finally, near the target $\cot^{2}\theta_{a}\simeq 1/m_{a}$.
Along the deterministic flow $m_{a}(t)$ decays at rate $\eta$ to leading order, so over $O(1)$ flow time the cumulative leading conditional variance has size bounded by a constant multiple of
\[
\frac{s\|\Sigma_{\mathrm{flat}}\|}{\eta\,\min_{a\in A}m_{a}},
\]
which gives \eqref{eq:defect_noise_floor}.
\end{proof}

\subsection{Mixed-state SLD geometry}
\label{app:mixed_sld}
\begin{proof}[Proof of \cref{thm:mixed_qng_linearization}]
It suffices to compute the second variation in one two-level plane.
Let $K=X_{jk}$ or $Y_{jk}$ and consider
\[
\rho(t)=e^{-itK}\rho_{\star}e^{itK}.
\]
Then
\[
\dot\rho(0)=-i[K,\rho_{\star}],\qquad
\ddot\rho(0)=-[K,[K,\rho_{\star}]].
\]
Since $H_{\mathrm{obj}}$ and $\rho_{\star}$ are diagonal in the same basis, the first derivative of $\Tr(H_{\mathrm{obj}}\rho(t))$ vanishes.
A direct two-dimensional calculation using $\Tr(K^{2}\ket{j}\!\bra{j})=\Tr(K^{2}\ket{k}\!\bra{k})=1/2$ gives
\[
\frac{d^{2}}{dt^{2}}\Big|_{t=0}\Tr(H_{\mathrm{obj}}\rho(t))
=(E_k-E_j)(p_j-p_k).
\]
Thus the Euclidean Hessian eigenvalue along both $X_{jk}$ and $Y_{jk}$ is $h_{jk}=(E_k-E_j)(p_j-p_k)$.
By the standard SLD spectrum \cref{eq:standard_sld_orbit_spectrum}, the corresponding Fisher eigenvalue is
\[
f_{jk}=2\,\frac{(p_j-p_k)^2}{p_j+p_k}.
\]
On the horizontal tangent space, the linearized QNG matrix is $M_{\star}=(F^{\mathrm{SLD}}_{\star})^{+}H_{\mathrm{Eucl},\star}$, so
\[
\mu_{jk}=\frac{h_{jk}}{f_{jk}}
=\frac12(E_k-E_j)\frac{p_j+p_k}{p_j-p_k}.
\]
The Jacobian of the negative-gradient flow is $-\eta M_{\star}$.
\end{proof}

\begin{proof}[Proof of \cref{thm:depolarizing_noise_geometry}]
The eigenvalues of $\rho(q)$ are $p_0=1-q+q/D$ and $p_k=q/D$ for $k>0$. The only nonzero unitary-orbit directions connect $\ket0$ to the orthogonal subspace, so the real rank is $2D-2$. Applying \cref{eq:standard_sld_orbit_spectrum} to each pair $(0,k)$ gives
\begin{equation}
f(q)=2\,\frac{(p_0-p_k)^2}{p_0+p_k}=2\,\frac{(1-q)^2}{1-q+2q/D},
\end{equation}
independent of $k$, proving isotropy. The Euclidean gradient of a diagonal linear objective scales with $p_0-p_k=1-q$, while the SLD inverse contributes the reciprocal Fisher scale; comparison with the pure-state value gives \cref{eq:depolarized_qng_amplifier}. A gradient perturbation is multiplied by the same scalar $f^{-1}$ as the mean update, so that factor cancels in \cref{eq:qng_update_snr}; substituting $g(q)=(1-q)g_0$ gives \cref{eq:depolarized_update_snr}.
\end{proof}

\end{document}